\documentclass{article}
\usepackage[british]{babel}
\usepackage{amsmath,amssymb,graphicx,xcolor,latexsym,theorem}
\usepackage[a4paper,margin=1.8cm]{geometry}
\usepackage{hyperref}

\newcommand{\Kappa}{\mathrm{K}}
\newcommand{\assign}{:=}
\newcommand{\backassign}{=:}
\newcommand{\barsuchthat}{|}
\newcommand{\cdummy}{\cdot}
\newcommand{\longdownarrow}{{\mbox{\rotatebox[origin=c]{-90}{$\longrightarrow$}}}}
\newcommand{\mathd}{\mathrm{d}}
\newcommand{\nin}{\not\in}
\newcommand{\nobracket}{}
\newcommand{\nospace}{}

\newcommand{\tmcolor}[2]{{\color{#1}{#2}}}

\newcommand{\tmmathbf}[1]{\ensuremath{\boldsymbol{#1}}}
\newcommand{\tmop}[1]{\ensuremath{\operatorname{#1}}}
\newcommand{\tmscript}[1]{\text{\scriptsize{$#1$}}}
\newcommand{\tmtextbf}[1]{\text{{\bfseries{#1}}}}
\newcommand{\tmtextit}[1]{\text{{\itshape{#1}}}}

\newenvironment{proof}{\noindent\textbf{Proof\ }}{\hspace*{\fill}$\Box$\medskip}
\newcounter{nnacknowledgments}

{\theorembodyfont{\rmfamily}\newtheorem{acknowledgments*}[nnacknowledgments]{Acknowledgments}}
\newtheorem{theorem}{Theorem}[section]
\newtheorem{corollary}[theorem]{Corollary}
\newtheorem{definition}[theorem]{Definition}
{\theorembodyfont{\rmfamily}\newtheorem{example}[theorem]{Example}}
\newtheorem{lemma}[theorem]{Lemma}
\newtheorem{proposition}[theorem]{Proposition}
{\theorembodyfont{\rmfamily}\newtheorem{remark}[theorem]{Remark}}

\numberwithin{equation}{section}
\definecolor{amber}{rgb}{0.0, 0.26, 0.15}

\newcommand{\paragraphtoc}[1]{}
\title{Schauder estimates for germs of distributions on smooth manifolds}
\author{Beatrice Costeri \thanks{BC: Dipartimento di Fisica "Alessandro Volta'',
		Universit\`a degli Studi di Pavia \& INFN, Sezione di Pavia, 
		Via Bassi 6,
		I-27100 Pavia,
		Italia; Istituto Nazionale di Alta Matematica, Sezione di Pavia, via Ferrata 5, 27100 Pavia, Italia
		\mbox{beatrice.costeri01@universitadipavia.it}
	}
	\and Claudio Dappiaggi \thanks{CD: Dipartimento di Fisica ''Alessandro Volta'',
		Universit\`a degli Studi di Pavia \& INFN, Sezione di Pavia, 
		Via Bassi 6,
		I-27100 Pavia,
		Italia; Istituto Nazionale di Alta Matematica, Sezione di Pavia, via Ferrata 5, 27100 Pavia, Italia
		\mbox{claudio.dappiaggi@unipv.it}
	}
	\and
	Paolo Rinaldi \thanks{PR: Dipartimento di Fisica ''Alessandro Volta'',
		Universit\`a degli Studi di Pavia \& INFN, Sezione di Pavia, 
		Via Bassi 6,
		I-27100 Pavia,
		Italia; Istituto Nazionale di Alta Matematica, Sezione di Pavia, via Ferrata 5, 27100 Pavia, Italia
		\mbox{paolo.rinaldi@unipv.it}
	}
	\and
	Matteo Savasta \thanks{MS: Department of Mathematics, Heriot-Watt University, Edinburgh and Dipartimento di Matematica ''Federigo Enriques'', Università degli Studi di Milano, Via Saldini 50, 20123 Milano, Italia \mbox{ matteo.savasta@unimi.it, ms3031@hw.ac.uk}
}}

\begin{document}

\date{February 23, 2026}

\maketitle

\begin{abstract}
 We discuss germs of distributions on $d-$dimensional smooth Riemannian manifolds and, in particular, we derive \emph{multi-level Schauder estimates} without making any further assumptions on the underlying geometry. As a preliminary step, we define the notions of coherence and homogeneity for germs of distributions on open subsets of $\mathbb{R}^d$, $d \ge 1$. Subsequently, we formulate both the \emph{reconstruction theorem}, \textit{cf.}, \cite{CZ20}, and the Schauder estimates, \textit{cf.}, \cite{BCZ}, in this setting. Leveraging the properties of the exponential map, we extend these results to Riemannian manifolds. Specifically, we devise a counterpart of the reconstruction theorem previously established in the literature \cite{RS21}, while additionally proving the regularity of the reconstructed distribution in suitable H\"older-Zygmund spaces. Finally, by introducing a novel concept of $\beta$-regularizing kernels on Riemannian manifolds, we establish Schauder estimates for coherent and homogeneous germs in this context.
\end{abstract}

\

{\tableofcontents}

\section{Introduction}
\label{Introduction}

A major advance in the solution theory for singular stochastic partial differential equations (SPDEs) on the Euclidean space $\mathbb{R}^d$, $d \ge 1$, is the \emph{theory of regularity structures}, see \cite{Hai}. In this seminal paper, Hairer develops a highly abstract framework which allows to establish local existence and uniqueness of solutions to a broad class of stochastic PDEs, including the Kardar–Parisi–Zhang equation, {\it cf.}, \cite{KPZ}. Two cornerstones of Hairer’s regularity structures are the \emph{reconstruction theorem} and the \emph{multi-level Schauder estimates}, see \cite[Thms 1.3 \& 1.6]{Hai}. The latter, in particular, play a crucial r\^ole in the analysis of elliptic and parabolic PDEs, as they serve as an essential tool in fixed-point arguments. \\

Yet, despite its high level of generality, the theory of regularity structures does not incorporate the degree of concreteness required for establishing a close connection to several physically inspired models. Most notably, in sharp contrast to quantum field theory, {\emph{cf.}}, \cite{BFDY, BF09}, this approach does not provide a concrete mean to construct explicitly the solution and the correlation functions of a nonlinear singular SPDE. This motivated the development of a complementary approach, inspired by algebraic quantum field theory, {\it cf.}, \cite{FR} and designed to perform explicit, perturbative computations even when the underlying background is not flat, see \cite{DDRZ, BDR23, BDR24, BCD24}. Yet this novel method suffers of a severe limitation, namely the impossibility of proving convergence of the underlying perturbative series. Hence, it is highly desirable to establish a connection with the theory of regularity structures and, to this end, to develop counterparts of both the reconstruction theorem and Schauder estimates on generic backgrounds.

In order to achieve this goal, it is necessary to translate these two key results in a language better suited to a formulation on manifolds. This issue is naturally addressed by working with germs of distributions. Since these are intrinsically local objects, this formulation yields a framework that can be naturally extended to smooth manifolds. 
Specifically, given a {\it germ of distributions} on $\mathbb{R}^d$, namely, a family $(F_x)_{x\in \mathbb{R}^d} \subset \mathcal{D}'(\mathbb{R}^d)$, one might wonder whether there exists a globally defined distribution $f \in \mathcal{D}'(\mathbb{R}^d)$, such that, for all $x \in \mathbb{R}^d$,
\begin{equation*}
	f \approx F_x \;\;\text{in a neighborhood of}\;\; x,
\end{equation*}
where the approximation is to be understood in a suitable sense, see \cite{CZ20}. In this instance, the germ $(F_x)_{x\in \mathbb{R}^d}$ is said to be a \emph{local approximation} of $f$. For instance, if $f$ is generated by a $C^r$ function, $r \ge 0$, the associated germ is the Taylor polynomial of degree $r$ centered at a point $x$, see Example \ref{Taylor}.  
However, under milder regularity assumptions on the global distribution $f$, the existence and uniqueness of a reconstruction in terms of $(F_x)_{x \in \mathbb{R}^d}$ is far less trivial. Indeed, in \cite{CZ20}, it was proven that, under the additional requirement that the germ is {\emph{coherent}}, such a reconstruction exists. The counterpart of this result on smooth manifolds has been already discussed in \cite{RS21}, yet in this work we shall prove a version of the reconstruction theorem on Riemannian manifolds, see Theorem \ref{Thm: Reconstruction Theorem}, tailored to a slightly more general notion of germ. This will allow us to assess in addition the continuity of the reconstruction operator in suitable H\"older-Zygmund spaces, see Section \ref{ReconstructionisholderManifold}. 

More recently, in \cite{BCZ}, also multi-level Schauder estimates have been established within this distributional framework. These stem from the observation that the convolution between a distribution and a kernel that has an integrable singularity at the origin improves the regularity of the resulting object. Examples of kernels abiding by this property include the heat kernel as well as the Green's functions of several linear differential operators. Most notably, such a regularization property emerges also when working at the level of germs of distributions, {\it cf.}, \cite[Thm. 1.6]{Hai}
In this context, the main idea of Schauder estimates, is that, given a singular kernel $\mathsf{K}$ associated with a differential operator, there exists an integration map  $\mathcal{K}$ between germs such that, locally, the following approximation property holds true in a suitable sense: 
\begin{equation*}
	f \approx F_x \;\;\text{at}\;\; x\in \mathbb{R}^d \Rightarrow \mathsf{K}*f \approx \mathcal{K}F_x \;\;\text{at}\;\; x \in \mathbb{R}^d.
\end{equation*}

Expanding on these concepts, we establish the main results of this paper. These are codified in the following two theorems, concerning Schauder estimates for germs of distributions on Riemannian manifolds, see Theorems \ref{maintheoremonmanifolds} and \ref{maintheoremonmanifolds2} as well as Section \ref{sectionSchaudermanifold} for notations and conventions. 

\begin{theorem}[Main Theorem 1]
Let $\{\Omega_n\}_{n \in \mathbb{N}}$ be an
  exhaustion by compact sets of $M$ as per Definition \ref{Def: Compact Exhaustion} and let $\gamma, \beta > 0$. Let $F$ be an
  $(\tmmathbf{\alpha}, \gamma)$-coherent germ of distributions of order
  $\tmmathbf{r}$ and range $\tmmathbf{R}$, as per Definition \ref{Def:
  Coherence on M} and $\Kappa$ a $\beta$-regularising kernel of order $(m, r)$
  and range $\rho_n$ subordinated to $\{\Omega_n\}_{n \in \mathbb{N}}$ as per Definition
  \ref{regularisingkernelonmanifold}. Assume, moreover, that, $\rho_n$ is small enough, see Equation~\eqref{Eq:rhoN}, and that, for
  every compact set $K \subset M$,
  \begin{equation*}
    \alpha_K + \beta > 0,
    \hspace{0.27em} \hspace{0.27em} \hspace{0.27em} \gamma + \beta \nin
    \mathbb{N}_0, \hspace{0.27em} \hspace{0.27em} \hspace{0.27em} m > \gamma +
    \beta, \hspace{0.27em} \hspace{0.27em} \hspace{0.27em} r > - \alpha_K .
  \end{equation*}
  Then, there exists a germ of distributions, denoted $\mathcal{K}^{\gamma,
  \beta} F$, such that
  \begin{equation*}
     \mathcal{K}^{\gamma, \beta} F -
    \mathsf{K} (\mathcal{R}^{\gamma} F) \in \mathcal{G}^{\gamma + \beta,
    \tmmathbf{\alpha}'_{\gamma + \beta}}_{\bar{\tmmathbf{r}},
    \bar{\tmmathbf{R}}} (M),
  \end{equation*}
  \textit{i.e.}, $\mathcal{K}^{\gamma, \beta} F - \mathsf{K} (\mathcal{R}^{\gamma} F)$
  is $(\gamma + \beta)$-homogeneous and $(\tmmathbf{\alpha}', \gamma +
  \beta)$-coherent on $M$, of order $\tmmathbf{\bar{r}}$, with $\bar{r}_K
  \assign \lfloor - \alpha_K \rfloor$, $\alpha_K' \assign \alpha_K - d$. Here $\bar{R}_K$ is a suitable parameter codifying the range of the new
  germ. Furthermore, the following estimate holds
  \begin{equation*}
    \| \mathcal{K}F - \mathsf{K} (\mathcal{R}^{\gamma} F)
    \|_{\mathcal{G}^{\gamma + \beta, \tmmathbf{\alpha}'_{\gamma +
    \beta}}_{\bar{\tmmathbf{r}}, \bar{\tmmathbf{R}}} (M), K} \lesssim_n \| F
    \|_{\mathcal{G}^{\tmmathbf{\alpha }_{\gamma}}_{\tmmathbf{r },
    \tmmathbf{R}} (M), \overline{\Omega_{\tilde{n}}}} .   
  \end{equation*}
\end{theorem}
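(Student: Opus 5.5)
The plan is to reduce the statement to the Euclidean multi-level Schauder estimate for germs on open subsets of $\mathbb{R}^d$, in the version of \cite{BCZ} formulated earlier in the paper. Both the coherence and homogeneity seminorms of $\mathcal{G}^{\gamma+\beta,\boldsymbol{\alpha}'_{\gamma+\beta}}_{\bar{\boldsymbol r},\bar{\boldsymbol R}}(M)$ are defined compact set by compact set, through normal coordinates. It therefore suffices to prove the estimate on a fixed compact $K \subset M$, with $K$ contained in some $\Omega_n$, and then assemble the result.

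First I will use the Riemannian reconstruction theorem (Theorem~\ref{Thm: Reconstruction Theorem}) to produce $\mathcal{R}^\gamma F$. It satisfies the local bound $|(\mathcal{R}^\gamma F - F_x)(\varphi^\lambda_x)| \lesssim \lambda^\gamma \|F\|$ uniformly for $x \in \overline{\Omega_{\tilde n}}$. Next I will decompose the $\beta$-regularising kernel of Definition~\ref{regularisingkernelonmanifold} as $\mathsf{K} = \sum_{k \ge 0} \mathsf{K}_k$. Each piece $\mathsf{K}_k$ is supported at geodesic distance $\lesssim 2^{-k}\rho_n$ from the diagonal, carries the scaling bounds $2^{k(d-\beta)}$ up to order $m$, and annihilates polynomials (in normal coordinates) up to degree $r$. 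With this decomposition I define, in analogy with \cite{BCZ},
\begin{equation*}
(\mathcal{K}^{\gamma,\beta}F)_x \assign \mathsf{K}(F_x) - \sum_{k\ge0}\sum_{|\ell| < \gamma+\beta} \frac{(\cdot - x)^\ell}{\ell!}\,\partial^\ell \big(\mathsf{K}_k(\mathcal{R}^\gamma F - F_x)\big)(x),
\end{equation*}
where $(\cdot-x)^\ell$ is computed in normal coordinates $\exp_x^{-1}$. The smallness condition \eqref{Eq:rhoN} on $\rho_n$ guarantees that every support involved lies inside a single normal chart, namely within the injectivity radius on $\overline{\Omega_{\tilde n}}$. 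With this definition, $\mathcal{K}^{\gamma,\beta}F - \mathsf{K}(\mathcal{R}^\gamma F)$ is minus the remainder of a Taylor expansion of $\mathsf{K}(\mathcal{R}^\gamma F - F_x)$ at $x$.

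The core estimate tests this remainder against $\varphi^\lambda_x$ and splits the sum over $k$ at $2^{-k} \approx \lambda$. For $2^{-k} \le \lambda$ the polynomial terms vanish or are bounded directly. The main term is estimated by $\lambda^{\gamma+\beta}$ using the reconstruction bound at scale $2^{-k}$ together with the kernel's $2^{k(d-\beta)}$ scaling. For $2^{-k} > \lambda$ I Taylor expand $\mathsf{K}_k(\cdot,y)$ to order $\lceil\gamma+\beta\rceil$; the requirement $m > \gamma+\beta$ makes this possible. The sums then converge geometrically because $\gamma+\beta \notin \mathbb{N}_0$. This gives $(\gamma+\beta)$-homogeneity. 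Coherence with exponent $\alpha_K' = \alpha_K - d$ comes from estimating $(\mathcal{K}F)_x - (\mathcal{K}F)_y$ in the same way, using the coherence of $F$ with $\alpha_K + \beta > 0$. Here the test-function order $\bar r_K = \lfloor -\alpha_K\rfloor$ appears because $r > -\alpha_K$, which controls the moment cancellation of the pieces $\mathsf{K}_k$.

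The main obstacle is that on $M$ the kernel is not translation invariant, and the Taylor polynomials centred at different points live in different normal charts. I will therefore pull everything back by $\exp_x$ and treat the transition maps $\exp_y^{-1}\circ\exp_x$ as smooth diffeomorphisms with bounds uniform on $\overline{\Omega_{\tilde n}}$. Pulled back this way, $\mathsf{K}_k$ satisfies the Euclidean hypotheses of the Euclidean Schauder theorem up to constants depending on $n$. Test functions pulled back along these maps remain, after rescaling, in a bounded family of order $\bar r$. This is where the $\lesssim_n$ dependence and the enlarged set $\overline{\Omega_{\tilde n}}$ enter, and it is the step I expect to require the most care.
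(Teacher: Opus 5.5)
\paragraph{Homogeneity: same as the paper.} Your construction and your homogeneity argument are, in substance, the paper's. The paper fixes $p$, localises $F$ with a cutoff $\psi_p$, pulls back by $\exp_p$, and applies the Euclidean Theorem~\ref{Main Result 1} to $\exp_p^{\ast}\mathsf{K}$ on a ball of $\mathbb{R}^d$. It then sets $(\mathcal{K}F)_p\assign(\tilde{\mathcal{K}}_pF)_p$, so the Taylor polynomial at $p$ is taken in normal coordinates at $p$, exactly as in your definition. The uniformity in $p\in\Omega_n$ of the pulled-back kernel constants, which you derive from the smooth dependence of $\exp_y^{-1}\circ\exp_x$ on the base points, is Part~1 of the proof (Proposition~\ref{Prop:kernel-and-exp} and Lemma~\ref{Lem:smoothness of diffeomorphism}). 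Proposition~\ref{rinaldiimpliesmine} then transfers the chart bound back to $M$.

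\paragraph{Coherence: where you diverge.} You plan to estimate $(\mathcal{K}F)_x-(\mathcal{K}F)_y$ directly. That forces you to compare Taylor polynomials written in the charts $\exp_x^{-1}$ and $\exp_y^{-1}$, which are no longer polynomial after the transition map, and you leave this step open. The paper avoids it entirely. Since $H\assign\mathcal{K}F-\mathsf{K}(\mathcal{R}^\gamma F)$ is $(\gamma+\beta)$-homogeneous of order $\bar r_K$ with $\gamma+\beta>0$, it invokes Proposition~\ref{homocoherenceonmanifolds}:
\begin{itemize}
\item bound $H_x$, tested against a function centred at $x$, by homogeneity;
\item bound $H_y$, tested against the same function, by recentring it at $y$ at scale $\lambda+d_g(x,y)$ (Lemma~\ref{recenteringonmanifolds}), at the cost of a factor $\bigl(\lambda/(\lambda+d_g(x,y))\bigr)^{-\bar r_K-d}$.
\end{itemize}
No cross-chart Taylor bookkeeping is needed. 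This is also exactly where the exponent $\alpha_K'=\alpha_K-d$, the order $\bar r_K$ and the range $\bar R_K=\widetilde{\mathrm{cst}}_K$ in the statement come from; your ``in the same way'' sketch accounts for none of them. A direct BCZ-type coherence estimate might give a better exponent, but only after the chart-transition work you have not done. For the statement as given, the recentring argument is the missing, and much cheaper, idea.

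\paragraph{Smaller points.}
\begin{itemize}
\item Your displayed formula has the wrong sign on the polynomial part. With $\mathsf{K}_k(\mathcal{R}^\gamma F-F_x)$ and a minus sign, $\mathcal{K}F-\mathsf{K}(\mathcal{R}^\gamma F)$ equals $\mathsf{K}(F_x-\mathcal{R}^\gamma F)$ \emph{plus} its Taylor polynomial at $x$, whose zeroth-order term is $O(1)$. The correct definition is \eqref{maintheoremequationeq}, which is what your ``Taylor remainder'' sentence presupposes.
\item Item~4 of Definition~\ref{regularisingkernelonmanifold} only bounds the moments of $\partial_2^k\mathsf{K}_n$ by $c\,2^{-\beta n}$; it does not make them vanish.
\item For $2^{-k}\le\lambda$ you cannot apply the reconstruction bound at scale $2^{-k}$ directly: $\mathsf{K}_k(z,\cdot)$ is centred at $z\neq x$, while $\mathcal{R}^\gamma F-F_x$ is controlled only at $x$. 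You have two options:
 \begin{itemize}
 \item use the moment bound, with up to $r\ge\bar r_K$ derivatives in the second variable, to see that $y\mapsto\int\varphi^\lambda_x(z)\mathsf{K}_k(z,y)\,\mathrm{d}z$ is $2^{-\beta k}$ times a scale-$\lambda$ test function centred at $x$;
 \item or switch the base point from $x$ to $z$ via the coherence of $F$, which is where $\alpha_K+\beta>0$ enters.
 \end{itemize}
\item Condition \eqref{Eq:rhoN} also keeps all points within the range of $F$, through $\mathfrak{R}_n\le R_{\overline{\Omega_{\tilde n}}}/2$. Beyond that range coherence on $M$ says nothing.
\item The reconstruction theorem on $M$ is Theorem~\ref{reconstructiononmanifolds}, not Theorem~\ref{Thm: Reconstruction Theorem}.
\end{itemize}
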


\begin{theorem}[Main Theorem 2]
Under the same assumptions of the previous Theorem, if $\mathsf{K} (\mathcal{R}^{\gamma} F)$ is a
  well-defined distribution in $\mathcal{D}' (M)$, then $\mathcal{K}^{\gamma,
  \beta} F$ is a well defined germ on $M$ and it holds that:
  \begin{itemize}
    \item[(i)] $\mathcal{K}^{\gamma, \beta} F \in \mathcal{G}^{\gamma + \beta,
    \tmmathbf{\alpha'_{\gamma + \beta}}}_{\tmmathbf{\bar{r}}, \tmmathbf{R'}}
    (M)$ and
    \begin{equation*}
      \mathcal{R}^{\gamma + \beta}
      (\mathcal{K}^{\gamma, \beta} F) = \mathsf{K} (\mathcal{R}^{\gamma} F),
    \end{equation*}
    on $M$, {\it i.e.}, $\mathsf{K} (\mathcal{R}^{\gamma} F)$ is the $(\gamma +
    \beta)-$reconstruction of the germ of distributions $\mathcal{K}^{\gamma,
    \beta} F$.
    
    \item[(ii)] If $F$ is $\bar{\tmmathbf{\alpha}}$-homogeneous, with
    \[ \bar{\alpha}_K \leqslant \gamma, \hspace{0.27em} \hspace{0.27em}
       \hspace{0.27em} \bar{\alpha}_K + \beta \neq 0, \]
    then $\mathcal{K}^{\gamma, \beta} F$ is $((\bar{\tmmathbf{\alpha}} +
    \beta) \wedge 0)-$homogeneous of a suitable order $\bar{\tmmathbf{r}}$,
    where $(\bar{\alpha}_K + \beta) \wedge 0 \assign \min \{ \bar{\alpha}_K +
    \beta, 0\}$.
    
    \item[(iii)] The map $F \mapsto \mathcal{K}^{\gamma, \beta} F$ is linear and
    continuous, both as
    \[ \mathcal{K}^{\gamma, \beta} :
       \mathcal{G}^{\tmmathbf{\alpha}_{\gamma}}_{\tmmathbf{r}, \tmmathbf{R}}
       (M) \to \mathcal{G}^{\tmmathbf{\alpha}'_{\gamma +
       \beta}}_{\bar{\tmmathbf{r}}, \overline{\tmmathbf{R}}} (M), \qquad
       \tmop{and} \qquad \mathcal{K}^{\gamma, \beta} :
       \mathcal{G}^{\tmmathbf{\bar{\alpha}},
       \tmmathbf{\alpha}_{\gamma}}_{\tmmathbf{r}} (M) \to
       \mathcal{G}^{(\tmmathbf{\bar{\alpha}} + \beta \wedge 0),
       \tmmathbf{\alpha}'_{\gamma + \beta}}_{\bar{\tmmathbf{r}},
       \overline{\tmmathbf{R}}} (M), \]
    abiding by the following continuity estimates: For a suitable compact $K'$
    such that $K \subset K'$ and for suitable parameters $\tmmathbf{\alpha}' 
    = \{ \alpha_K' \}_K$,
    \[ \lvert \lvert \mathcal{K}^{\gamma, \beta} F \rvert
       \rvert_{\mathcal{G}^{\tmmathbf{\alpha}'_{\gamma +
       \beta}}_{\bar{\tmmathbf{r}}, \overline{\tmmathbf{R}}} (M), K}
       \lesssim_n \hspace{0.17em} \lvert \lvert F \rvert
       \rvert_{\mathcal{G}^{\tmmathbf{\alpha}_{\gamma}}_{\tmmathbf{r},
       \tmmathbf{R}} (M), K'}, \qquad \lvert \lvert \mathcal{K}^{\gamma,
       \beta} F \rvert \rvert_{\mathcal{G}^{(\tmmathbf{\bar{\alpha}} + \beta
       \wedge 0), \tmmathbf{\alpha}'_{\gamma + \beta}}_{\bar{\tmmathbf{r}},
       \overline{\tmmathbf{R}}} (M), K} \lesssim_n \lvert \lvert F \rvert
       \rvert_{\mathcal{G}^{\tmmathbf{\bar{\alpha}},
       \tmmathbf{\alpha}_{\gamma}}_{\tmmathbf{r}} (M), K'}, \]
    where $n$ is the smallest integer such that $K \subset \Omega_n$. As a
    consequence the following diagram commutes
    \[ \begin{array}{ccc}
         \mathcal{G}^{\tmmathbf{\alpha}_{\gamma}}_{\tmmathbf{r}, \tmmathbf{R}}
         (M) & \xrightarrow{\mathcal{K}^{\gamma, \beta}} &
         \mathcal{G}^{\tmmathbf{\alpha}'_{\gamma +
         \beta}}_{\bar{\tmmathbf{r}}, \overline{\tmmathbf{R}}} (M)\\
         \longdownarrow \mathcal{R}^{\gamma} &  & \longdownarrow
         \mathcal{R}^{\gamma + \beta}\\
         \mathcal{D}' (M) & \xrightarrow{\Kappa} & \mathcal{D}' (M)
       \end{array} \]
  \end{itemize}
\end{theorem}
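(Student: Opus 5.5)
The plan is to reduce Main Theorem 2 to the Euclidean multi-level Schauder estimates (the counterpart of \cite{BCZ} formulated earlier for germs on open subsets of $\mathbb{R}^d$) by passing to normal coordinates, and then to glue the local statements along the exhaustion $\{\Omega_n\}_{n\in\mathbb{N}}$. By Main Theorem 1, the difference $G \assign \mathcal{K}^{\gamma,\beta}F - \mathsf{K}(\mathcal{R}^\gamma F)$ is already known to be $(\gamma+\beta)$-homogeneous and $(\tmmathbf{\alpha}',\gamma+\beta)$-coherent, with norm controlled by $\|F\|$ on $\overline{\Omega_{\tilde n}}$. Once $\mathsf{K}(\mathcal{R}^\gamma F)\in\mathcal{D}'(M)$ is assumed, the constant germ $x\mapsto \mathsf{K}(\mathcal{R}^\gamma F)$ is well defined and trivially coherent, since its differences vanish. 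Hence $\mathcal{K}^{\gamma,\beta}F = G + \mathsf{K}(\mathcal{R}^\gamma F)$ is a well-defined germ whose coherence is inherited from $G$. This gives the membership claimed in (i), with the range $\tmmathbf{R}'$ taken as the minimum of $\bar{\tmmathbf{R}}$ and the injectivity radii entering Definition \ref{Def: Coherence on M}.

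For the reconstruction identity in (i), I would use the uniqueness part of the reconstruction theorem on manifolds, Theorem \ref{Thm: Reconstruction Theorem}, which applies because $\gamma+\beta>0$. It suffices to show that $f\assign\mathsf{K}(\mathcal{R}^\gamma F)$ satisfies the local approximation bound
\[ |(\mathcal{K}^{\gamma,\beta}F_x - f)(\varphi^\lambda_x)| \lesssim \lambda^{\gamma+\beta}. \]
This is exactly the $(\gamma+\beta)$-homogeneity of $G$ from Main Theorem 1, since $\mathcal{K}^{\gamma,\beta}F_x - f = G_x$. Uniqueness for $\gamma+\beta>0$ then forces $\mathcal{R}^{\gamma+\beta}(\mathcal{K}^{\gamma,\beta}F)=f$, and this is precisely the commutativity of the diagram.

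For (ii), I would work chart by chart. Pulling back via $\exp_p$ on a geodesically convex neighbourhood inside $\Omega_n$ turns $\Kappa$ into a Euclidean $\beta$-regularising kernel, using Definition \ref{regularisingkernelonmanifold} together with the smallness condition \eqref{Eq:rhoN} on $\rho_n$. It also turns $F$ into an $\bar\alpha$-homogeneous germ on an open subset of $\mathbb{R}^d$. I would then write $\mathcal{K}^{\gamma,\beta}F_x = \Kappa F_x - (\text{Taylor polynomial of order }\gamma+\beta \text{ of } \Kappa(F_x-\mathcal{R}^\gamma F) \text{ at }x)$ and follow the Euclidean argument. First decompose $\Kappa=\sum_k \Kappa_k$ into dyadic pieces at scales $2^{-k}$, using the order-$m$ and order-$r$ conditions. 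For the term $\Kappa F_x$ tested at scale $\lambda$, sum over $k$: the scales $2^{-k}\ge\lambda$ contribute $\lambda^{(\bar\alpha+\beta)\wedge0}$ by homogeneity, together with $m>\gamma+\beta$ and vanishing moments, and the scales $2^{-k}<\lambda$ give a geometric series. The polynomial part is bounded term by term in the same way. The exceptional case $\bar\alpha+\beta=0$ is excluded, since it would produce a logarithm; otherwise the minimum with $0$ arises from the constant Taylor terms.

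The main obstacle will be the change of coordinates. Homogeneity and coherence estimates on $M$ are expressed through test functions transported by $\exp_x$ for varying base points $x$. Comparing $\exp_x$-charts with a fixed $\exp_p$-chart introduces smooth Jacobian factors and re-centres the test functions. I would first show that, uniformly on compacts, such transported test functions are bounded linear combinations of Euclidean scaled test functions of order $\bar r$, relying on the smallness of $\rho_n$ for uniformity. After that, the Euclidean estimates transfer with constants depending on $n$. Item (iii) then follows: linearity is clear from the construction, and the continuity estimates follow by adding the bound on $G$ from Main Theorem 1 to the local homogeneity bound from (ii), on a compact $K'=\overline{\Omega_{\tilde n}}\supset K$.
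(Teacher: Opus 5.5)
Your argument for (i) and for the commutative diagram is the paper's. Coherence of $\mathcal{K}^{\gamma,\beta}F$ is inherited from $\mathcal{K}^{\gamma,\beta}F-\mathsf{K}(\mathcal{R}^{\gamma}F)$, because a constant germ leaves the coherence seminorm unchanged; so $\overline{\mathbf{R}}$ already works as the range and no injectivity radii are needed. The identity then follows from the $(\gamma+\beta)$-homogeneity of that difference plus uniqueness for $\gamma+\beta>0$. The relevant uniqueness is Theorem~\ref{reconstructiononmanifolds}, not the open-set Theorem~\ref{Thm: Reconstruction Theorem} you cite.

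For (ii)--(iii) you take a genuinely different route. The paper never estimates $\mathcal{K}^{\gamma,\beta}F$ directly. It splits it as $(\mathcal{K}^{\gamma,\beta}F-\mathsf{K}\mathcal{R}^{\gamma}F)+\mathsf{K}\mathcal{R}^{\gamma}F$. In each normal chart it identifies $\exp_p^{\ast}(\mathsf{K}\mathcal{R}^{\gamma}F)=\mathcal{R}^{\gamma+\beta}\exp_p^{\ast}(\mathcal{K}F)$. It gets the $((\bar{\alpha}+\beta)\wedge 0)$-homogeneity of this constant germ by chaining the Euclidean results \eqref{reconstructionishomogeneouseq1} and Proposition~\ref{Main Result 3}, and lifts back to $M$ via Propositions~\ref{mineimplyrs} and~\ref{rinaldiimpliesmine}.

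You instead split $\mathcal{K}^{\gamma,\beta}F_x$ as $\mathsf{K}F_x$ minus the Taylor part and redo the dyadic estimates by hand. Your route is more self-contained. It uses only the $\bar{\alpha}$-homogeneity of $F$ and the $\gamma$-homogeneity of $F-\mathcal{R}^{\gamma}F$. It never needs $\mathcal{R}^{\gamma}F$ itself to be $\bar{\alpha}$-homogeneous, so the hypothesis $\bar{\alpha}_K\le\gamma$ plays no role, and it bypasses the reconstruction identity. The paper's route buys economy: all kernel analysis stays in the open-set section (Theorems~\ref{Thm: Integration of Germs} and~\ref{Thm: Positive Renormalization}), and the manifold step is pure transfer.

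When you write yours out, note that the kernel hypotheses enter differently from what you state:
\begin{itemize}
\item For $2^{-k}\ge\lambda$, only the size bounds on $\partial_2^{l}\mathsf{K}_k$ and the homogeneity of $F_x$ at scale $2^{-k}$ are used.
\item The small scales $2^{-k}<\lambda$ are where the moment bound (item~3 of Definition~\ref{Def: Regularizing Kernel}) is needed. It lets you move derivatives onto $\varphi^{\lambda}_x$ and obtain $2^{-\beta k}\lambda^{\bar{\alpha}}$; without it the series does not close.
\item The Taylor coefficients $D^k\mathsf{K}(F_x-\mathcal{R}^{\gamma}F)(x)$ are bounded by summing $2^{(|k|-\beta-\gamma)n}$. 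This uses the reconstruction bound and $|k|\le\lfloor\gamma+\beta\rfloor<\gamma+\beta$; $m>\gamma+\beta$ only ensures these derivatives exist.
\end{itemize}
Finally, homogeneity at $p$ involves only the chart $\exp_p$. The real issue behind your stated obstacle is therefore uniformity of the constants in $p$, which the paper settles in Part~1 of the proof of Theorem~\ref{maintheoremonmanifolds}, rather than re-centring test functions.
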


\noindent Both the above theorems can be proven as a consequence of Schauder estimates for germs on open subsets of $\mathbb{R}^d$. More precisely, exploiting the fact that the exponential map is a local diffeomorphism, one can work on open subsets of the tangent space and, then, pull back to the corresponding geodesically convex sets on the Riemannian manifold. To this avail, we develop novel tools to scale and re-center test functions on Riemannian manifolds, see Appendix \ref{AppendixC}. We mention that a result analogous to the two Main Theorems of our work has been obtained in \cite[Thm. 7.22]{HS23} in the context of {\it modeled distributions}. Yet, with respect to \cite{HS23}, we do not make any assumption on the structure of the kernel near the singular points of the manifold, {\it cf.}, \cite[Ass. 7.1, item 2.]{HS23}, and we do not impose any additional constraint on the underlying geometry. \\

\noindent \textbf{Outline of the paper}. The paper is organized as follows. In Section \ref{Sec: Germs of
distributions on open sets} we define germs of distributions on open subsets $U$ of $\mathbb{R}^d$, $d \ge 1$ and we introduce the notions of \emph{coherence} and \emph{homogeinity}, together with their weak counterparts. Section \ref{reconstruction theorem section} is devoted to the formulation of the \emph{reconstruction theorem} for $(\tmmathbf{\alpha}, \gamma)-$coherent germs on open sets, see Theorem \ref{Thm: Reconstruction Theorem}. \\
\noindent In Section \ref{Sec: Schauder Estimates Open Subsets}, we derive \emph{multi-level Schauder estimates} for germs on open sets, formulating one of the main results of this work, see Theorem \ref{Main Result 1}. \\
\noindent In Section \ref{Sec:Germs-of-distributions-manifold} we extend the theory of germs of distributions to Riemannian smooth
manifolds. Specifically, we show the equivalence of our definitions to the ones already established in the literature.\\
\noindent Goal of Section \ref{sectionReconstructionManifold} is to establish a \emph{reconstruction theorem} on Riemannian manifolds, see Theorem \ref{reconstructiononmanifolds}. In addition to existing literature, we prove the continuity of the
reconstruction map, assessing at the same time the regularity of the reconstructed
distribution in the sense of negative H{\"o}lder-Zygmund spaces on
manifolds, see Definition \ref{holderzyngmundManifold}. \\
\noindent In Section \ref{sectionSchaudermanifold}, we derive the two main results of our work, see Theorems \ref{maintheoremonmanifolds} and \ref{maintheoremonmanifolds2}, namely we establish Schauder estimates for germs on Riemannian manifolds, by merging the local
formulation of Section \ref{Sec: Schauder Estimates Open Subsets} and the
construction detailed in Sections~\ref{Sec:Germs-of-distributions-manifold} and \ref{sectionSchaudermanifold}.\\
\noindent In Appendix \ref{Sec: App A}, we discuss additional properties of homogeneous and coherent germs and we prove that these notions are actually connected under suitable assumptions.\\
\noindent Appendix \ref{appendixB} contains some technical results showing the topological equivalence in a local coordinate patch between the Riemannian and the Euclidean distance, see in particular Proposition \ref{topologicalequivalencenormalcoordinates}. \\
\noindent Finally, in Appendix \ref{AppendixC} we develop the tools needed for the scaling and re-centering of test functions on a Riemannian manifold.\\

\noindent \textbf{Notation and preliminaries}. In the first part of this work, we shall focus on connected, open subsets $U
\subseteq \mathbb{R}^d$ with $d \geqslant 1$ endowed with the standard
Euclidean coordinates $(x_1, ..., x_d)$. On top of $U$ we shall
consider notable standard functions spaces whose definitions are here
recollected to fix notations and conventions. \\
\noindent We denote by $\mathcal{D}
(U)$ the space of smooth and compactly supported test functions
endowed with the standard, locally convex Fr{\'e}chet topology and by
$\mathcal{D}' (U)$ its topological dual. Given any $\varphi \in \mathcal{D}
(U)$ and any $\lambda > 0$, we call \tmtextit{rescaled} test-function,
centered at $x \in U$,
\begin{equation}
  \label{Eq: Rescaled Test-Function} \varphi^{\lambda}_x (y) \assign
  \lambda^{- d} \varphi (\lambda^{- 1} (y - x)) . \quad \lambda \in (0, 1]
\end{equation}
In addition, we will consider $C^m (U)$, the space of $m$-times
differentiable functions supported in $U$, $m \in [0, \infty]$, regarded as a
Fr{\'e}chet space with respect to the topology of uniform convergence on
compact sets. If $m < \infty$ this is generated by the seminorms
\begin{equation}
  \label{Eq: C^m norm} \| \phi \|_{C^m (K)} \assign
  \max_{\tmscript{\begin{array}{c}
    \alpha \in \mathbb{N}^d_0\\
    | \alpha | \le m
  \end{array}}} \sup_{x \in K} | \partial^{\alpha} \phi (x) |,
\end{equation}
where $K \subset U$ is a compact subset, while $\mathbb{N}_0 \assign
\mathbb{N} \cup \{0\}$. In addition, denoting by $B (0, 1)$ the open ball in
$\mathbb{R}^d$ centered at the origin and of unit radius, we can identify the
following notable subspaces:

\begin{flalign}
  \mathfrak{B}^m & \assign \{\phi \in \mathcal{D}(B (0, 1)) \hspace{0.17em}
  \barsuchthat \hspace{0.17em} \| \phi \|_{C^m} \leqslant 1\} ;  \label{Eq:
  B^m}\\
  \mathfrak{B}_{\ell} & \assign \left\{ \phi \in \mathcal{D}(B (0, 1))
  \hspace{0.17em} \barsuchthat \hspace{0.17em} \int_{\mathbb{R}^d} x^k \phi
  (x) \mathd x = 0, \nospace \forall \hspace{0.17em} 0 \leqslant |k| \leqslant
  \ell \right\} ;  \label{Eq: B_l}\\
  \mathfrak{B}_{\ell}^m & \assign \left\{\begin{array}{l}
    \mathfrak{B}^m \cap \mathfrak{B}_{\ell}, \hspace{0.17em} \hspace{0.17em}
    \hspace{0.17em} \text{for} \hspace{0.17em} \hspace{0.17em} \ell \geqslant
    0\\
    \mathfrak{B}^m, \hspace{0.17em} \hspace{0.17em} \hspace{0.17em} \text{for}
    \hspace{0.17em} \hspace{0.17em} \ell < 0.
  \end{array}\right.  \label{Eq: B^m_l}
\end{flalign}

where $|| \cdot ||_{C^m}$ is the seminorm as per Equation \eqref{Eq: C^m norm}
setting $U = B (0, 1)$, while the role of $K$ is played by the support of
the test-function. At the same time $x^k$ is a shorthand notation for
$x_1^{k_1} \ldots x_d^{k_d}$ with $k_1 + \ldots + k_d = k$. Note that Equation
\eqref{Eq: B^m_l} enters in the definition of the H{\"o}lder-Zygmund
spaces, which we recall here for the reader's convenience -- see
{\cite[Def. 2.1]{BCZ}}, though we refer also to {\cite{Bahouri}} for additional
information. 

\begin{definition}[H{\"o}lder-Zygmund spaces]
  \label{Def: Holder-Zygmund}Given an open subset $U \subseteq \mathbb{R}^d$,
  we say that $f \in \mathcal{Z}^{\gamma} (U) \subset \mathcal{D}' (U)$ if,
  for all compact subsets $K \subset U$, there exists $\bar{\lambda} \in (0,
  D_K^U)$, $D^U_K \assign \mathrm{dist} (K, \partial U)$, such that $\| f
  \|_{\mathcal{Z}^{\gamma}_{K, \bar{\lambda}}} < \infty$, where
  \begin{equation}
    \label{holderboundeq12openset} \| f \|_{\mathcal{Z}^{\gamma}_{K,
    \bar{\lambda}}} = \left\{\begin{array}{ll}
      \sup_{\tmscript{\begin{array}{c}
        \lambda \in (0, \bar{\lambda}], x \in K\\
        \varphi \in \mathcal{B}^{\lfloor - \gamma + 1 \rfloor}
      \end{array}}} \frac{\lvert f (\varphi^{\lambda}_x)
      \rvert}{\lambda^{\gamma}} & \text{if } \gamma < 0\\
      \sup_{\tmscript{\begin{array}{c}
        x \in K\\
        \psi \in \mathcal{B}^0
      \end{array}}} \lvert f (\psi^{D_K^U}_x) \rvert +
      \sup_{\tmscript{\begin{array}{c}
        \lambda\in(0, \bar{\lambda}],x\in K\\
        \varphi\in\mathcal{B}^0_{\gamma}
      \end{array}}} \frac{\lvert f (\varphi^{\lambda}_x)
      \rvert}{\lambda^{\gamma}} & \text{if } \gamma \geq 0
    \end{array}\right. .
  \end{equation}
\end{definition}

\begin{remark}
Observe that, whenever $\gamma \nin \mathbb{Z}$, the H\"older-Zygmund space
$\mathcal{Z}^{\gamma} (U)$ coincides with that of H{\"o}lder-Besov
functions $C^{\alpha} (U) \equiv B^{\gamma}_{\infty, \infty} (U)$. 
\end{remark}

In the second part of this work, we shall confine our attention to a $d-$dimensional, connected  Riemannian manifold $M$, equipped with a smooth Riemannian metric $g$, which will be left implicitly understood. In the following, we shall assume several standard notions in differential geometry and refer the reader to {\cite{LeeS,LeeR}} for additional details. For the reader's convenience, we recall here only those key notions which will be used
extensively in our discussion.

\begin{definition}
\label{Rem: Good Covering} 
We say that $M$ admits a \textbf{finite good cover} if there exists a cover of $M$ made of open sets $\{U_i\}_{i
  \in I}$, where $I$ is a collection of indices, such that any non-empty
  finite intersection $U_{i_1} \cap \ldots \cap U_{i_n}$,
  $\{i_j \}_{j = 1, \ldots, n} \subset I$, is diffeomorphic to a
  contractible open subset of $\mathbb{R}^d$, $d = \dim M$. 
\end{definition}

\noindent We stress that every smooth
  manifold $M$ admits a \tmtextit{finite good cover} -- see {\cite[Prop. 13.3]{LeeS}}. Without loss of
  generality, we can choose each $U_{\alpha}$ in such a cover to be a geodesically convex open
  subset of $M$.

\begin{definition}
  \label{Def: Compact Exhaustion}Given a Riemannian manifold $M$, we call an
  \textbf{exhaustion by compact sets of $M$} a sequence of open sets $\{\Omega_n\}_{n \in
  \mathbb{N}} \subset M$, such that the following three conditions are met: 
  \begin{itemize}
      \item[(i)] the closure
  $\overline{\Omega_n}$ is a compact set for all $n \in \mathbb{N}$,
  \item[(ii)] $\bigcup_{n \in \mathbb{N}} \Omega_n
  = M$,
  \item[(iii)] $\Omega_n \subset \Omega_{n + 1}$ for all $ n \in \mathbb{N}$. 
  \end{itemize}
\end{definition}

\begin{definition}
  \label{Def: Convexity Radius}
Given a Riemannian manifold $M$, consider a subset $U \subseteq M$ and fix a point $p \in U$. Denoting by $B (0, R)$ the open ball of radius $R$ centered at the
  origin, define
  \[ \mathcal{R}_C (p) = \sup \{R > 0 \hspace{0.27em} | \hspace{0.27em} \exp_p
     (B (0, R)) \hspace{0.27em} \textrm{is a geodesically convex set} \} , \]
    where $\exp_p(\cdot)$ denotes the exponential map at $p$. 
   We call \tmtextbf{convexity radius} of $U$
  \[ \mathcal{R}_C (U) = \inf_{p \in U} \mathcal{R}_C (p). \]
\end{definition}

We stress that throughout the paper we shall use extensively
the notation $\lesssim$ to denote an inequality which holds true up to a
multiplicative constant depending on a set of underlying parameters which do
not play a role in the construction. When these constants might change
depending on the choice of a varying underlying compact subset $K$, we shall
employ the symbol $\lesssim_K$. More generally, whenever it is important to
highlight the dependence of an inequality on certain given parameters, we
shall attach to the symbol $\lesssim$ the corresponding subscripts.

\section{Germs of distributions on open sets}\label{Sec: Germs of
distributions on open sets}

In this section we review the construction first outlined in {\cite{RS21}},
based in turn on {\cite{CZ20}}, aimed at defining the notion of germs of
distributions on open subsets of $\mathbb{R}^d$, $d \geqslant 1$. First and foremost, we shall devise the counterpart in this setting of the notions of coherence and homogeinity of a germ, see Definitions \ref{Def: Coherence} and \ref{Def: Homogeneity}, then combined in Definition \ref{Def: Coherent and Homogeneous Germs}. In Section \ref{reconstruction theorem section}, we shall state one of the crucial results, namely the \emph{reconstruction theorem} for germs on open sets, see Theorem \ref{Thm: Reconstruction Theorem}. We shall conclude the section by investigating the regularity properties of the reconstruction, see Corollary \ref{propertiesreconstruction}. 

First of all, we will show that all relevant
structures and notions introduced in {\cite{CZ20}} are intrinsically local and
well-behaved under pull-back. 

\begin{definition}
  \label{Def: germ}A \tmtextbf{germ of distributions on $U \subset
  \mathbb{R}^d$} is a family $F \equiv (F_x)_{x \in U}$ of distributions $F_x
  \in \mathcal{D}' (U)$ for any $x \in U$, such that the map $x
  \mapsto F_x$ is measurable, \textit{i.e.}, 
  \[ \forall \varphi \in \mathcal{D} (U), \hspace{0.27em} \hspace{0.27em}
     \hspace{0.27em} x \to F_x (\varphi) \hspace{0.27em} \hspace{0.27em}
     \hspace{0.27em} \text{is measurable} . \]
\end{definition}

{\noindent}The rationale behind Definition \ref{Def: germ} is to read $F_x$ as
the local approximation at each $x \in U$ of a globally defined distribution
$\mathcal{R} (F)$ which can be recovered by means of the
\tmtextit{reconstruction theorem} -- see Section \ref{sectionReconstructionManifold}. Yet, this requires a further
specialization.

\begin{definition}[Coherence]
  \label{Def: Coherence}Under the hypotheses of Definition \ref{Def: germ},
  let $\gamma \in \mathbb{R}$ and consider two families
  $\tmmathbf{\alpha}_{\gamma} \assign (\alpha_K)_{K \subset U} \subset
  \mathbb{R}$, $\tmmathbf{r} \assign (r_K)_{K \subset U} \subset
  \mathbb{N}_0$, indexed by compact sets $K \subset U$, with $\alpha_K
  \leqslant \min \{0, \gamma\}, \forall K$. F is called
  \begin{itemize}
    \item {\underline{$(\tmmathbf{\alpha}, \gamma)$-coherent of order
    $\tmmathbf{r}$}} if, for any compact $K \subset U$, and for any
    $\bar{\lambda} \in (0, D^U_K)$, $D_K^U \assign \mathrm{dist} (K, \partial
    U)$,
    \begin{equation}
      \label{Eq: Coherent} \lvert (F_x - F_y) (\varphi^{\lambda}_x) \rvert
      \lesssim \lambda^{\alpha_K}  (\lvert x - y \rvert + \lambda)^{\gamma -
      \alpha_K},
    \end{equation}
    uniformly over $x, y \in K$, $\lambda \in (0, \bar{\lambda}]$, $\varphi \in
    \mathcal{B}^{r_K}$. Here $\varphi^{\lambda}_x$ and $\mathcal{B}^{r_K}$ are
    defined as per Equations \eqref{Eq: Rescaled Test-Function} and \eqref{Eq:
    B^m}, respectively. The space of $(\tmmathbf{\alpha}, \gamma)$-coherent germs of order
    $\tmmathbf{r}$, is denoted by
    $\mathcal{G}^{\tmmathbf{\alpha}_{\gamma}}_{\tmmathbf{r}} (U)$ and it is
    endowed with the family of seminorms
    \begin{equation}
      \label{Eq: Coherence Seminorm}
      \mathcal{G}^{\tmmathbf{\alpha}_{\gamma}}_{\tmmathbf{r}} (U) \ni F
      \mapsto \| F \|_{\mathcal{G}^{\tmmathbf{\alpha}_{\gamma}}_{\tmmathbf{r}}
      (U), K} \doteq \sup_{\tmscript{\begin{array}{c}
        \lambda \in (0, \bar{\lambda}]\\
        x, y \in K\\
        \varphi \in \mathcal{B}^{r_K}
      \end{array}}}  \frac{\lvert (F_x - F_y) (\varphi^{\lambda}_x)
      \rvert}{\lambda^{\alpha_K}  (\lambda + \lvert x - y \rvert)^{\gamma -
      \alpha_K}},
    \end{equation}
    \item {\underline{weakly $(\tmmathbf{\alpha}, \gamma)$-coherent of order
    $\tmmathbf{r}$}}, denoted by $F \in
    \check{\mathcal{G}}^{\tmmathbf{\alpha}_{\gamma}}_{\tmmathbf{r}} (U)$, if
    for any compact subset $K \subset U$ and for any $\bar{\lambda} \in (0,
    D^U_K)$, 
    \begin{equation}
      \label{Eq: Weakly Coherent} \lvert (F_x - F_y) (\psi_x^{D^U_K}) \rvert
      \lesssim 1 \quad \mathrm{and} \quad \lvert (F_x - F_y)
      (\varphi^{\lambda}_x) \rvert \lesssim \lambda^{\alpha_K} (\lvert x - y
      \rvert + \lambda)^{\gamma - \alpha_K},
    \end{equation}
    uniformly over $x, y \in K$, $\psi \in \mathcal{B}^{r_K}$, $\varphi \in
    \mathcal{B}^{r_K}_{\gamma}$ and $\lambda \in (0, \bar{\lambda}]$. Here
    $\psi_x^{D^U_K}$ and $\varphi^{\lambda}_x$ are as in Equation \eqref{Eq:
    Rescaled Test-Function}, whilst for $r \in \mathbb{N}_0$, $l \in
    \mathbb{R}$, the spaces $\mathcal{B}^r_l$ are as per Equation \eqref{Eq:
    B^m_l}. The set
    $\check{\mathcal{G}}^{\tmmathbf{\alpha}_{\gamma}}_{\tmmathbf{r}} (U)$ is
    equipped with the following family of seminorms
    \begin{equation}
      \label{Eq: Weak Coherence Seminorm}
      \check{\mathcal{G}}^{\tmmathbf{\alpha}_{\gamma}}_{\tmmathbf{r}} (U) \ni
      F \mapsto \| F
      \|_{\check{\mathcal{G}}^{\tmmathbf{\alpha}_{\gamma}}_{\tmmathbf{r}} (U),
      K} \doteq \sup_{\tmscript{\begin{array}{c}
        x, y \in K\\
        \psi \in \mathcal{B}^{r_K}_{\gamma}
      \end{array}}} \lvert (F_x - F_y) (\psi_x^{D^U_K}) \rvert +
      \sup_{\tmscript{\begin{array}{c}
        \lambda \in (0, \bar{\lambda}], \hspace{0.27em} x, y \in K\\
        \varphi \in \mathcal{B}^{r_K}
      \end{array}}}  \frac{\lvert (F_x - F_y) (\varphi^{\lambda}_x)
      \rvert}{\lambda^{\alpha_K}  (\lambda + \lvert x - y \rvert)^{\gamma -
      \alpha_K}} .
    \end{equation}
  \end{itemize}
\end{definition}

\begin{remark}
  As discussed in {\cite{RS21}}, the introduction of the parameter $D_K^U
  \assign \mathrm{dist} (K, \partial U)$ is necessary since, after the rescaling of the test-function,
  $\tmop{supp} (\varphi^{\lambda}_x) \subset B (x, \lambda)$. In order for $F_x (\varphi_x^{\lambda})$ to be meaningful we need that
  $\tmop{supp} (\varphi^{\lambda}_x) \subset B (x, \lambda) \subset U$, which is
  granted by the above construction.
\end{remark}

\begin{remark}
  Given a germ $F$ as per Definition \ref{Def: germ}, it is equivalent to
  state that it is $(\tmmathbf{\alpha}, \gamma)$-coherent of order
  $\tmmathbf{r}$ as per Definition \ref{Def: Coherence} or that, for every compact set, $K \subset U$, $\bar{\lambda}
  \in (0, D_K^U)$,
  \begin{equation}
    \label{coherencereformulation} \lvert (F_x - F_y) (\varphi^{\lambda}_x)
    \rvert \lesssim_K \left\{\begin{array}{ll}
      \lambda^{\alpha_K} \hspace{0.27em} \hspace{0.27em} \hspace{0.27em} &
      \text{if} \hspace{0.27em} \hspace{0.27em} \lambda \leqslant \lvert x - y
      \rvert\\
      \lambda^{\gamma} \hspace{0.27em} \hspace{0.27em} \hspace{0.27em} &
      \text{if} \hspace{0.27em} \hspace{0.27em} \lambda > \lvert x - y \rvert
    \end{array}\right.,
  \end{equation}
  uniformly for $x, y \in K$, $\varphi \in \mathcal{B}^{r_K}$, $\lambda \in (0, \bar{\lambda}]$.
\end{remark}

{\noindent}In addition to coherence it is convenient to introduce a second property of germs which goes under the name of \textit{homogeneity}.

\begin{definition}[Homogeneity]
  \label{Def: Homogeneity} In the same setting of Definition \ref{Def: germ},
  given two families $\bar{\tmmathbf{\alpha}} \doteq (\bar{\alpha}_K)_{K
  \subset U} \subset \mathbb{R}$, and $\tmmathbf{r} \doteq (r_K)_{K \subset U}
  \subset \mathbb{N}_0$, indexed by compact sets $K \subset U$, we say that $F$ is
  \begin{itemize}
    \item {\underline{$\tmmathbf{\bar{\alpha}}$-homogeneous of order
    $\tmmathbf{r}$}}, if for any $K \subset U$, and for any $\bar{\lambda} \in
    (0, D_K)$ with $D_K^U \assign \mathrm{dist} (K, \partial U)$,
    \begin{equation}
      \label{Eq: Homogeneity} \lvert F_x (\varphi^{\lambda}_x) \rvert \lesssim
      \lambda^{\bar{\alpha}_K},
    \end{equation}
    uniformly over $x \in K$, $\varphi \in \mathcal{B}^{r_K}$ and $\lambda \in
    (0, \bar{\lambda}]$. The space of $\tmmathbf{\bar{\alpha}}$-homogeneous
    germs of order $\tmmathbf{r}$ over $U$ is denoted by
    $\mathsf{G}^{\tmmathbf{\bar{\alpha}}}_{\tmmathbf{r}} (U)$ and it is
    endowed with the family of seminorms
    \begin{equation}
      \label{Eq: Homogeneity norm}
      \mathsf{G}^{\tmmathbf{\bar{\alpha}}}_{\tmmathbf{r}} (U) \ni F \mapsto \|
      F \|_{\mathsf{G}^{\tmmathbf{\bar{\alpha}}}_{\tmmathbf{r}} (U), K}
      \assign \sup_{\tmscript{\begin{array}{c}
        \lambda \in (0, \bar{\lambda}] \hspace{0.27em}\\
        x \in K\\
        \varphi \in \mathcal{B}^{r_K}
      \end{array}}}  \frac{\lvert F_x (\varphi^{\lambda}_x)
      \rvert}{\lambda^{\bar{\alpha}_K}},
    \end{equation}
    \item {\underline{weakly $\tmmathbf{\bar{\alpha}}$-homogeneous of order
    $\tmmathbf{r}$}}, denoted by $F \in
    \check{\mathsf{G}}^{\tmmathbf{\bar{\alpha}}}_{\tmmathbf{r}} (U)$, if, for
    any compact subset $K \subset U$ and for any $\bar{\lambda} \in (0,
    D^U_K)$,
    \begin{equation}
      \label{Eq: Weakly Homogeneous} \lvert F_x (\psi_x^{D^U_K}) \rvert
      \lesssim 1 \quad \mathrm{and} \quad \lvert F_x (\varphi^{\lambda}_x)
      \rvert \lesssim \lambda^{\bar{\alpha}_K},
    \end{equation}
    uniformly over $x \in K$, $\psi \in \mathcal{B}^{r_K}$, $\varphi \in
    \mathcal{B}^{r_K}_{\bar{\alpha}_K}$ and $\lambda \in (0, \bar{\lambda}]$.
    Here $\psi_x^{D^U_K}$ and $\varphi^{\lambda}_x$ are as per Equation
    \eqref{Eq: Rescaled Test-Function}, while the spaces $\mathcal{B}^{r_K}$  and $\mathcal{B}^{r_K}_{\bar{\alpha}_K}$ are as per Equations \eqref{Eq: B^m} and \eqref{Eq:
    B^m_l}, respectively. We equip
    $\check{\mathsf{G}}^{\tmmathbf{\bar{\alpha}}}_{\tmmathbf{r}} (U)$ with the
    family of seminorms
    \begin{equation}
      \label{Eq: Weak Homogeneity norm}
      \check{\mathsf{G}}^{\tmmathbf{\bar{\alpha}}}_{\tmmathbf{r}} (U) \ni F
      \mapsto \| F
      \|_{\check{\mathsf{G}}^{\tmmathbf{\bar{\alpha}}}_{\tmmathbf{r}} (U), K}
      \assign \sup_{\tmscript{\begin{array}{c}
        \hspace{0.27em} x \in K\\
        \psi \in \mathcal{B}^{r_K}
      \end{array}}} \lvert F_x (\psi_x^{D^U_K}) \rvert +
      \sup_{\tmscript{\begin{array}{c}
        \lambda \in (0, \bar{\lambda}], \hspace{0.27em} x \in K\\
        \varphi \in \mathcal{B}^{r_K}_{\bar{\alpha}_K}
      \end{array}}}  \frac{\lvert F_x (\varphi^{\lambda}_x)
      \rvert}{\lambda^{\bar{\alpha}_K}} .
    \end{equation}
  \end{itemize}
\end{definition}

\begin{remark}
  \label{negativehomogeneity}Note that, under the hypotheses of Definition
  \ref{Def: Homogeneity}, if $F$ has positive homogeneity, that is if $\bar{\alpha}_K > 0$ for all $K \subset U$, then
  \[ \lim_{\lambda \to 0^+} \lvert F_x (\varphi^{\lambda}_x) \rvert = 0. \]
  This scenario is not of interest for the forthcoming construction and,
  therefore, we shall focus mainly on the cases where $\bar{\alpha}_K
  \leqslant 0$.
\end{remark}

{\noindent}At this point, we can merge together Definitions \ref{Def: Coherence} and \ref{Def: Homogeneity} to devise a notion of coherent and homogeneous germs on $U$. 

\begin{definition}[Coherent and homogeneous germs]
  \label{Def: Coherent and Homogeneous Germs}Under the hypotheses of
  Definition \ref{Def: germ}, a germ $F$ is called
  \begin{itemize}
    \item \underline{$(\tmmathbf{\alpha}, \gamma)$-coherent with homogeneity
    $\tmmathbf{\bar{\alpha}}$ of order $\tmmathbf{r}$} if Equations \eqref{Eq:
    Coherent} and \eqref{Eq: Homogeneity} hold true. The space of such germs
    is denoted $\mathcal{G}^{\tmmathbf{\bar{\alpha}},
    \tmmathbf{\alpha}_{\gamma}}_{\tmmathbf{r}} (U) \assign
    \mathsf{G}^{\tmmathbf{\bar{\alpha}}}_{\tmmathbf{r}} (U) \cap
    \mathcal{G}^{\tmmathbf{\alpha}_{\gamma}}_{\tmmathbf{r}} (U)$. This space
    is endowed with the family of seminorms
    \begin{equation}
      \label{Eq: Full Seminorms} \mathcal{G}^{\tmmathbf{\bar{\alpha}},
      \tmmathbf{\alpha}_{\gamma}}_{\tmmathbf{r}} (U) \ni F \mapsto \| F
      \|_{\mathcal{G}^{\tmmathbf{\bar{\alpha}},
      \tmmathbf{\alpha}_{\gamma}}_{\tmmathbf{r}} (U), K} \assign \| F
      \|_{\mathcal{G}^{\tmmathbf{\alpha}_{\gamma}}_{\tmmathbf{r}} (U), K} + \|
      F \|_{\mathsf{G}^{\tmmathbf{\bar{\alpha}}}_{\tmmathbf{r}} (U), K} .
    \end{equation}
    \item \underline{weakly $(\tmmathbf{\alpha}, \gamma)$-coherent with homogeneity
    $\tmmathbf{\bar{\alpha}}$ of order $\tmmathbf{r}$} if Equations \eqref{Eq:
    Weakly Coherent} and \eqref{Eq: Weakly Homogeneous} hold true. The space of
    such germs is denoted $\check{\mathcal{G}}^{\tmmathbf{\bar{\alpha}},
    \tmmathbf{\alpha}_{\gamma}}_{\tmmathbf{r}} (U) =
    \check{\mathcal{G}}^{\tmmathbf{\alpha}_{\gamma}}_{\tmmathbf{r}} (U) \cap
    \check{\mathsf{G}}^{\tmmathbf{\bar{\alpha}}}_{\tmmathbf{r}} (U)$ and it is
    endowed with the family of seminorms
    \begin{equation}
      \label{Eq: Full Weakly Seminorms}
      \check{\mathcal{G}}^{\tmmathbf{\bar{\alpha}},
      \tmmathbf{\alpha}_{\gamma}}_{\tmmathbf{r}} (U) \ni F \mapsto \| F
      \|_{\check{\mathcal{G}}^{\tmmathbf{\bar{\alpha}},
      \tmmathbf{\alpha}_{\gamma}}_{\tmmathbf{r}} (U), K} \assign \| F
      \|_{\check{\mathsf{G}}^{\tmmathbf{\bar{\alpha}}}_{\tmmathbf{r}} (U), K}
      + \| F
      \|_{\check{\mathcal{G}}^{\tmmathbf{\alpha}_{\gamma}}_{\tmmathbf{r}} (U),
      K} .
    \end{equation}
  \end{itemize}
\end{definition}

\begin{remark}
  \label{weakarestrong} Observing that the difference between coherence (\textit{resp}. homogeneity) and weak
  coherence (\textit{resp}. weak homogeneity) only concerns the classes of test functions
  involved, if for all compact sets $K \subset U$, it holds that
  $\bar{\alpha}_K < 0$, then
  $\check{\mathsf{G}}^{\tmmathbf{\bar{\alpha}}}_{\tmmathbf{r}} (U) =
  \mathsf{G}^{\tmmathbf{\bar{\alpha}}}_{\tmmathbf{r}} (U)$ (\textit{resp}.,  whenever
  $\gamma < 0$,
  $\check{\mathcal{G}}^{\tmmathbf{\alpha}_{\gamma}}_{\tmmathbf{r}} (U)
  =\mathcal{G}^{\tmmathbf{\alpha}_{\gamma}}_{\tmmathbf{r}} (U)$). This entails
  that the notions of weak homogeneity and weak coherence become relevant only if $\bar{\alpha}_K
  \geqslant 0$ and $\gamma \geqslant 0$, respectively.
\end{remark}

{\noindent}It is worth discussing two notable examples of germs. Of particular
relevance is the second one since, heuristically speaking, germs could be
thought as the counterpart at the level of distributions of what Taylor series
are for functions of positive regularity.

\begin{example}[Constant germ]
  Let $U \subseteq \mathbb{R}^d$ be an open set and consider a distribution $f \in
  \mathcal{Z}^{\bar{\alpha}} (U)$, as per Definition \ref{Def:
  Holder-Zygmund}. Set $F_x \assign f$ for any $x \in U$. Then, for any $\alpha,
  \gamma \in \mathbb{R}$, the constant germ $F$ is $(\alpha, \gamma)$-coherent as per
  Definition \ref{Def: Coherence} and $\bar{\alpha}$-homogeneous as per
  Definition \ref{Def: Homogeneity}.
\end{example}

\begin{example}[Taylor Polynomials]
  \label{Taylor}Let $U \subseteq \mathbb{R}^d$, and let $\gamma > 0$. Given a
  function $f \in \mathcal{C}^{\gamma} (U)$, for all compact sets $K \subset
  U$,
  \[ \lvert f (y) -\mathcal{T}_x^{\gamma} f (y) \rvert \lesssim_K \lvert x - y
     \rvert^{\gamma}, \]
  where $\mathcal{T}_x^{\gamma} f$ is the Taylor polynomial of $f$ at $x$, \textit{i.e.}, 
  \begin{equation}
    \label{Eq: Taylor1} \mathcal{T}^{\gamma}_x f (y) \assign \sum_{0 \leqslant
    \lvert k \rvert < \lfloor \gamma \rfloor} (\partial^k f) (x)  \frac{(y -
    x)^k}{k!},
  \end{equation}
  where $k = (k_1 \ldots k_d)$ is a multi-index with $|k| = \sum_{i = 1}^d
  k_i$. We prove that $\mathcal{T}_x^{\gamma} f$ is a $(0, \gamma)$-coherent
  germ of order $0$ and that, for all compact set $K \subset U$,
  \[ \| \mathcal{T}^{\gamma}_x f
     \|_{\mathcal{G}^{\tmmathbf{0}_{\gamma}}_{\tmmathbf{0}} (U), K} \lesssim
     \| f \|_{\mathcal{C}^{\gamma} (U)}, \]
  where the seminorms above are defined in Equations \eqref{Eq: C^m norm} and \eqref{Eq: Coherence Seminorm}. Fixing an arbitrary compact set $K \subset U$, for $0
  \leqslant \lvert k \rvert \leqslant \gamma$, $\partial^k f \in
  \mathcal{C}^{\gamma - \lvert k \rvert} (U)$. If we consider the Taylor
  expansion of $\partial^k f (x)$ on the right hand side of Equation
  \eqref{Eq: Taylor1} around a point $z \in K$
  \begin{equation}
    \label{Taylor12} \mathcal{T}_x^{\gamma} f (y) = \sum_{\lvert k \rvert <
    \gamma} \left( \sum_{\lvert l \rvert \leqslant \gamma - \lvert k \rvert}
    \partial^{k + l} f (z) \frac{(x - z)^l}{l!} + R^k (x, z) \right) \frac{(y
    - x)^k}{k!},
  \end{equation}
  where $R^k (x, z)$ is the Lagrange remainder. Standard results in calculus
  entail that
  \begin{equation}
    \label{Eq: Taylor13} \partial^k f (x) -\mathcal{T}_z^{\gamma - \lvert k
    \rvert} \partial^k f (x) = R^k (x, z) \hspace{0.27em} \hspace{0.27em}
    \hspace{0.27em} \text{with} \hspace{0.27em} \hspace{0.27em}
    \hspace{0.27em} \lvert R^k (x, z) \rvert \leqslant \| f
    \|_{\mathcal{C}^{\gamma} (U)}  \lvert x - z \rvert^{\gamma - \lvert k
    \rvert},
  \end{equation}
  uniformly for $x,z\in K$. Switching from $l$ to $k' \assign k + l$ we can
  rewrite Equation \eqref{Taylor12} as
  \begin{equation}
       \mathcal{T}_x^{\gamma} f (y)  = \sum_{\lvert k' \rvert < \gamma}
       \partial^{k'} f (z) \left( \sum_{k \leqslant k'} \frac{(x - z)^{k' -
       k}}{(k' - k) !}  \frac{(y - x)^k}{k!} \right) +\\ \sum_{\lvert k \rvert <
       \gamma} R^k (x, z) \frac{(y - x)^k}{k!}\\
        =\mathcal{T}_z^{\gamma} f (y) + \sum_{\lvert k \rvert < \gamma} R^k
       (x, z) \frac{(y - x)^k}{k!} .
     \end{equation}
  As a consequence,
  \begin{equation}
    \label{Eq: Taylor14} \mathcal{T}_z^{\gamma} f (y) -\mathcal{T}_x^{\gamma}
    f (y) = - \sum_{\lvert k \rvert < \gamma} R^k (x, z) \frac{(y - x)^k}{k!},
  \end{equation}
  and, exploiting Equation \eqref{Eq: Taylor13}, it follows that
  \[ \lvert \mathcal{T}_z^{\gamma} f (y) -\mathcal{T}_x^{\gamma} f (y) \rvert
     \leqslant \| f \|_{\mathcal{C}^{\gamma} (U)}  \sum_{\lvert k \rvert <
     \gamma} \lvert y - x \rvert^{\lvert k \rvert}  \lvert x - z
     \rvert^{\gamma - \lvert k \rvert} . \]
  Therefore, for any test function $\varphi \in \mathcal{B}^0$, see Equation
  \eqref{Eq: B^m}, it descends that
  \begin{equation}
    \label{Eq: Taylor2} \left| \int_{\mathbb{R}^d} (\mathcal{T}_z^{\gamma} f
    (y) -\mathcal{T}_x^{\gamma} f (y)) \varphi^{\lambda}_x (y) \mathd y
    \right| \leqslant \| f \|_{\mathcal{C}^{\gamma} (U)} \| \varphi \|_{C^0
    (U)}  \sum_{n < \gamma} \lvert z - x \rvert^{\gamma - n} \lambda^n,
  \end{equation}
  uniformly for $x, z \in K$ with $\lambda \in (0, D_K^U)$, see Definition
  \ref{Def: Homogeneity}. This entails the sought conclusion.
\end{example}

\subsection{The reconstruction theorem}\label{reconstruction theorem section}

Goal of this section is to present one of the main results concerning the theory of germs
of distributions, the \emph{reconstruction theorem}, here specialized to open sets $U$ of $\mathbb{R}^d$, $d \ge 1$. We refer the reader to {\cite{CZ20}} for the statement of this theorem on $\mathbb{R}^d$, $d \geqslant 1$ and to {\cite{RS21}} for its generalization to Riemannian manifolds.
Henceforth, for the sake of conciseness, we shall omit all proofs which can be found in the aforementioned references. 

For later convenience, we give the following preliminary definitions. Given a compact subset $K \subset U$ and $\epsilon > 0$ we call
\tmtextit{$\epsilon$-enlargement} of $K$ the set
\begin{equation}
  \label{Eq: K-enlargement} K_{\epsilon} \assign K + \overline{B (0,
  \epsilon)},
\end{equation}
where $B (0, \epsilon)$ is the open ball centered at the origin of radius
$\epsilon$. In addition, we set
\begin{equation}
  \label{Eq: Diameter of K} \mathrm{Diam} (K) = \max_{x, y \in K} |x - y| .
\end{equation}

We can now state the crucial result of this section. 

\begin{theorem}[Reconstruction on open sets]
  \label{Thm: Reconstruction Theorem}
  Let $U \subseteq \mathbb{R}^d$, $d\ge1$ be an open set and let $F \in
  \mathcal{G}^{\tmmathbf{\alpha}_{\gamma}}_{\tmmathbf{r}} (U)$ be an $(\tmmathbf{\alpha}, \gamma)-$coherent germ of order $\tmmathbf{r}$ on $U$ as per
  Definition \ref{Def: Coherence}. Given a compact set $K \subset
  U$, we denote by
  \begin{equation}
    \label{mkkk} \tilde{K} \assign K_{3 D_K^U / 4}, \hspace{0.27em}
    \hspace{0.27em} \hspace{0.27em} m_K \assign \max \{\lfloor -
    \alpha_{\tilde{K}} + 1 \rfloor, r_{\tilde{K}} + d + 1\},
  \end{equation}
  where $K_{3 D_K^U / 4}$ is as per Equation \eqref{Eq: K-enlargement}. There
  exists $\mathcal{R}^{\gamma} F \in \mathcal{D}' (U)$, called the
  \textbf{reconstruction of $F$}, such that, for every compact set $K \subset U$ and for
  every $\bar{\lambda} \in (0, D_K^U)$,
  \begin{equation}
    \label{Eq: Reconstruction Theorem} \lvert (\mathcal{R}^{\gamma} F - F_z)
    (\varphi^{\lambda}_z) \rvert \lesssim_{\alpha_{\tilde{K}}, r_{\tilde{K}},
    \bar{\lambda}} \| F
    \|_{\mathcal{G}^{\tmmathbf{\alpha}_{\gamma}}_{\tmmathbf{r}} (U),
    \tilde{K}} \left\{\begin{array}{ll}
      \lambda^{\gamma} \hspace{0.27em} \hspace{0.27em} \hspace{0.27em} &
      \text{for} \hspace{0.27em} \hspace{0.27em} \hspace{0.27em} \gamma \neq
      0\\
      \log (1 + \frac{1}{\lambda}) \hspace{0.27em} \hspace{0.27em}
      \hspace{0.27em} & \text{for} \hspace{0.27em} \hspace{0.27em}
      \hspace{0.27em} \gamma = 0
    \end{array}\right.
  \end{equation}
  uniformly for $x \in K$, $\lambda \in (0, \bar{\lambda})$, $\varphi \in
  \mathcal{B}^{r_K}$, see Equations \eqref{Eq: Rescaled
  Test-Function} and \eqref{Eq: B^m}. In addition,
  \begin{enumerate}
    \item if $\gamma > 0$, the reconstruction is the unique distribution in
    $\mathcal{D}' (U)$, such that
    \[ \lim_{\lambda \to 0^+} \lvert (\mathcal{R}^{\gamma} F - F_z)
       (\varphi^{\lambda}_z) \rvert = 0. \]
    \item for any $\gamma \in \mathbb{R}$, the
    reconstruction can always be chosen in such a way that $\mathcal{R}^{\gamma} :
    \mathcal{G}^{\tmmathbf{\alpha}_{\gamma}}_{\tmmathbf{r}} (U) \to
    \mathcal{D}' (U)$ is linear.
  \end{enumerate}
\end{theorem}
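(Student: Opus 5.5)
The plan is to adapt the Caravenna–Zambotti construction \cite{CZ20} to the open set $U$ by localizing with a partition of unity, working compact by compact. First, fix a compact $K\subset U$. By the choice of $\tilde K = K_{3D^U_K/4}$, every point of $\tilde K$ lies at distance at least $D^U_K/4$ from $\partial U$. Any test function $\varphi^\lambda_z$ with $z\in K$ and $\lambda<\bar\lambda<D^U_K$ has support in $\tilde K$ (after possibly shrinking $\bar\lambda$ and paying a constant depending on $\bar\lambda$).

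I will use a mollifier $\rho\in\mathcal{D}(B(0,1))$ with $\int\rho=1$ and vanishing moments up to order $m_K$. It is built as in \cite{CZ20}, so that $\check\rho_N := \rho^{2^{-N}}$ satisfies the required cancellation; this is where $m_K\ge\lfloor-\alpha_{\tilde K}+1\rfloor$ enters. The quantity $r_{\tilde K}+d+1$ guarantees enough smoothness so that $\rho$, $\rho*\rho$ and their recentred versions lie in a fixed multiple of $\mathcal{B}^{r_{\tilde K}}$.

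Next, define approximants
\[ f_N(\psi) := \int_U F_x(\rho^{2^{-N}}_x)\,\psi(x)\,\mathd x , \]
for $\psi$ supported in $\tilde K$ and $N$ large, so that $2^{-N}<D^U_K/4$. To make this a distribution on all of $U$, I glue with a locally finite partition of unity subordinate to an exhaustion of $U$ by compacts. The key telescoping estimate is
\[ (f_{N+1}-f_N)(\psi)=\int\!\!\int (F_y - F_x)(\rho^{2^{-N-1}}_y)\,\rho^{2^{-N}}_x(\cdot)\cdots , \]
written with the semigroup-type identity $\hat\rho=\rho^{1/2}*\rho$ trick of \cite{CZ20}. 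Coherence \eqref{Eq: Coherent} with $|x-y|\lesssim 2^{-N}$ gives
\[ |(f_{N+1}-f_N)(\psi)|\lesssim \|F\|_{\mathcal{G},\tilde K}\,2^{-N\gamma}\,\|\psi\|_{L^1}\ \text{(suitably)} . \]
For test functions $\varphi^\lambda_z$, I compare the quantity $(f_N - F_z)(\varphi^\lambda_z)$ at the scale $2^{-N}\approx\lambda$. Using coherence once more, it is bounded by $\lambda^\gamma$.

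Finally, I sum the dyadic series. When $\gamma>0$, $f_N$ converges to $\mathcal{R}^\gamma F$ and the tail $\sum_{2^{-n}\le\lambda}2^{-n\gamma}\lesssim\lambda^\gamma$. When $\gamma<0$, I do not take a limit but set $\mathcal{R}^\gamma F := f_{N_0}$ for a fixed $N_0$, plus the sum over the coarse scales $2^{-n}\in(\lambda,\bar\lambda]$, which gives $\lambda^\gamma$. When $\gamma=0$, the number of scales is $\log(1/\lambda)$. Linearity is evident because $f_N$ depends linearly on $F$, and the construction with $\gamma\le0$ is made linear by fixing $\rho$ and $N_0$ independently of $F$. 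For uniqueness when $\gamma>0$, suppose two distributions $g, g'$ satisfy the limit. Then $(g-g')(\varphi^\lambda_z)\to0$ uniformly on $K$. Testing $(g-g')(\psi)=\lim_\lambda\int (g-g')(\rho^\lambda_z)\psi(z)\,\mathd z$ gives $g=g'$.

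The main obstacle is the localization. The dyadic estimates of \cite{CZ20} must be checked to remain inside $\tilde K$ uniformly, and the constants may depend only on $\alpha_{\tilde K}, r_{\tilde K}, \bar\lambda$. The gluing across different compacts must also preserve consistency; here I would follow \cite{RS21}, using that on overlaps two local reconstructions agree up to the smooth coarse-scale piece. When $\gamma\le0$, I would instead fix a single global partition of unity first, so that no compatibility issue arises.
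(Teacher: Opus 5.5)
The paper gives no argument beyond deferring to \cite[Thm.~5.1]{CZ20}, so localizing that construction is the intended route, and your uniqueness argument for $\gamma>0$ is fine. The construction you propose for $\gamma\le 0$, however, does not work, and this is a genuine gap.

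Setting $\mathcal{R}^\gamma F:=f_{N_0}$ for a fixed $N_0$ discards everything $F$ does below the scale $2^{-N_0}$. For $\gamma\le0$, though, a reconstruction must still track $F_z$ at every small scale. The cleanest test is a constant germ $F_x\equiv g$:
\begin{itemize}
\item Its coherence seminorm vanishes, so the estimate in the statement forces $\mathcal{R}^\gamma F=g$.
\item Your $f_{N_0}$ is instead the mollification of $g$ at the fixed scale $2^{-N_0}$.
\item Concretely, take $0\in U$, $g=\delta_0$ and $\gamma\in(-d,0]$. Then $f_{N_0}$ is a bounded function, so $|(f_{N_0}-\delta_0)(\varphi^\lambda_0)|\ge\lambda^{-d}|\varphi(0)|-C$. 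This is neither $O(\lambda^\gamma)$ nor $O(\log(1+1/\lambda))$.
\end{itemize}
Telescoping over the coarse scales $2^{-n}\in(\lambda,\bar\lambda]$ cannot repair this. The increments $f_{n+1}-f_n$, tested at a finer scale, are controlled by the homogeneity of $F$, not by $2^{-n\gamma}$. Nor can you just take $\lim_N f_N$: it need not exist for $\gamma<0$, because the scale-$2^{-n}$ coherence errors of size $2^{-n\gamma}$ can pile up.

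The missing idea is a genuinely multiscale definition that keeps every dyadic scale of $F$. It should pair the germ evaluated at scale $2^{-n}$ only with the part of the test function living at scale $2^{-n}$, i.e.\ a vanishing-moment piece from a reproducing decomposition of $\delta$. The coarsest scale should be adapted to $\mathrm{dist}(\cdot,\partial U)$ via a partition of unity. Such a series converges in $\mathcal{D}'(U)$ and reproduces constant germs. In the reconstruction bound, the scales $2^{-n}\le\lambda$ are then summable thanks to $m_K$. The scales $2^{-n}\in(\lambda,1]$ give $\sum 2^{-n\gamma}$, i.e.\ $\lambda^\gamma$ for $\gamma<0$ and $\log(1+1/\lambda)$ for $\gamma=0$.

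The same example shows that the intermediate bounds in your $\gamma>0$ sketch are false as written. For a constant germ, $f_{N+1}-f_N$ and $(f_N-F_z)(\varphi^\lambda_z)$ are in general nonzero while $\|F\|_{\mathcal{G}}=0$. So none of the following can hold: the displayed pure-coherence identity for $f_{N+1}-f_N$, the bound $2^{-N\gamma}\|\psi\|_{L^1}$, or the single-scale bound ``by coherence once more''. Only the full difference $(f-F_z)(\varphi^\lambda_z)=\lim_N\int\varphi^\lambda_z(y)\,(F_y-F_z)(\rho^{2^{-N}}_y)\,\mathd y$ is a pure coherence quantity. You must therefore expand $F_z(\varphi^\lambda_z)$ along the same dyadic decomposition as $f$, so that the non-coherent parts cancel exactly.

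Even then, the termwise coherence bound $2^{-n\alpha}\lambda^{\gamma-\alpha}$ is not summable over $2^{-n}\le\lambda$ when $\alpha<0$. You need the moments of $\rho$ and the smoothness of the test function to gain powers of $2^{-n}/\lambda$. This is the real role of $m_K$: it exceeds $\max\{-\alpha_{\tilde K},r_{\tilde K}+d\}=-\min\{\alpha_{\tilde K},-r_{\tilde K}-d\}$, which is minus the local homogeneity exponent of a coherent germ (cf.\ Proposition~\ref{homcoherence}). Consistently, Theorem~\ref{reconstructiononmanifolds} tests against $\mathcal{B}^{m_K}_p(K)$. By contrast, membership of $\rho$ and $\rho*\rho$ in a fixed multiple of $\mathcal{B}^{r_{\tilde K}}$ is automatic for a fixed smooth $\rho$, so that is not what $r_{\tilde K}+d+1$ is for.
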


\noindent Barring minor modifications,
the proof of Theorem \ref{Thm: Reconstruction Theorem} follows slavishly from that of {\cite[Thm. 5.1]{CZ20}}, hence we omit it.

\begin{remark}
  \label{reconstructionishomogeneous1}A direct consequence of Theorem \ref{Thm:
  Reconstruction Theorem} is that, if $\gamma \neq 0$, on a compact set
  $K \subset U$, the germ $F -\mathcal{R}^{\gamma} F$ is $\gamma$-homogeneous
  of order $m_K$, see Equation \eqref{mkkk}, as well as $(\tmmathbf{\alpha},
  \gamma)$-coherent of order $r_K$. On account of {\cite[Prop. B.1]{BCZ}}, it
  follows that $F -\mathcal{R}^{\gamma} F$ is coherent and homogeneous of
  order $\bar{r}_K \assign \lfloor \max \{- \alpha_K, - \gamma\}+ 1 \rfloor =
  \lfloor - \alpha_K \rfloor$. Comparing Equation \eqref{Eq: Reconstruction
  Theorem} with Equation \eqref{Eq: Homogeneity}, it descends that
  \begin{equation}
    \| F -\mathcal{R}^{\gamma} F
    \|_{\mathsf{G}^{\tmmathbf{\bar{\alpha}}}_{\tmmathbf{r}} (U), K}
    \lesssim_{\alpha_{\tilde{K}}, r_{\tilde{K}}, \bar{\lambda}} \| F
    \|_{\mathcal{G}^{\tmmathbf{\alpha}_{\gamma}}_{\tmmathbf{r}} (U),
    \tilde{K}}, \label{reconstructionishomogeneouseq}
  \end{equation}
  where $\tilde{K}$ is as per Equation \eqref{mkkk}. Suppose now that
  $F$ is $\bar{\tmmathbf{\alpha}}$-homogeneous, with $\bar{\alpha}_K \leqslant
  \gamma$ for all compact sets $K \subset U$. Putting together Equations
  \eqref{Eq: Homogeneity} and \eqref{reconstructionishomogeneouseq}, and using
  the triangle inequality, it follows that the constant germ
  $\mathcal{R}^{\gamma} F$ satisfies the following homogeneity estimate
  \begin{equation}
    \| \mathcal{R}^{\gamma} F
    \|_{\mathsf{G}^{\tmmathbf{\bar{\alpha}}}_{\bar{\tmmathbf{r}}} (U), K}
    \lesssim_{\alpha_{\tilde{K}}, r_{\tilde{K}}, \bar{\lambda}} \| F
    \|_{\mathcal{G}^{\tmmathbf{\alpha}_{\gamma}}_{\tmmathbf{r}} (U),
    \tilde{K}} + \| F \|_{\mathsf{G}^{\tmmathbf{\bar{\alpha}}}_{\tmmathbf{r}}
    (U), K} \leqslant \| F \|_{\mathcal{G}^{\tmmathbf{\bar{\alpha}},
    \tmmathbf{\alpha}_{\gamma}}_{\tmmathbf{r}} (U), \tilde{K}} .
    \label{reconstructionishomogeneouseq1}
  \end{equation}
\end{remark}

{\noindent}At last, we establish a result on the H{\"o}lder regularity
of the reconstruction defined as per Theorem \ref{Thm: Reconstruction Theorem}.

\begin{corollary}[H{\"o}lder regularity of the reconstruction]
  \label{propertiesreconstruction}Let $U \subseteq \mathbb{R}^d$ and let $F =
  (F_x)_{x \in U}$ be a germ as per Definition \ref{Def: germ}.
  \begin{enumerate}
    \item If $F \in \mathcal{G}^{\tmmathbf{\bar{\alpha}},
    \tmmathbf{\alpha}_{\gamma}}_{\tmmathbf{r}} (U)$, as per Definition
    \ref{Def: Coherent and Homogeneous Germs}, then, fixed a compact set $K
    \subset U$, for every open set $U_K \subset K$, there exists $\tilde{\alpha}_{\tilde{K}}$ such that
    \begin{equation}
      \label{propertiesreconstructioneq2} \mathcal{R}^{\gamma} F \in
      \mathcal{Z}^{\tilde{\alpha}_{\tilde{K}}} (U_K), \hspace{0.27em}
      \hspace{0.27em} \hspace{0.27em} \text{and} \hspace{0.27em}
      \hspace{0.27em} \hspace{0.27em} \| \mathcal{R}^{\gamma} F
      \|_{\mathcal{Z}^{\tilde{\alpha}_{\tilde{K}}}_{\tilde{K}, \bar{\lambda}}}
      \leqslant \| F
      \|_{\mathcal{G}^{\tmmathbf{\alpha}_{\gamma}}_{\tmmathbf{r}} (U),
      \tilde{K}} + \| F
      \|_{\mathcal{G}^{\tmmathbf{\alpha}_{\gamma}}_{\tmmathbf{r}} (U),
      \tilde{K}},
    \end{equation}
   where $\mathcal{R}^{\gamma} F$ is as per Theorem \ref{Thm:
    Reconstruction Theorem} while $\mathcal{Z}^{\tilde{\alpha}_K} (U_K)$ is as per
    Definition \ref{Def: Holder-Zygmund}.

    \item If $F \in \mathcal{G}^{\tmmathbf{\alpha}_{\gamma}}_{\tmmathbf{r}}
    (U)$, with $\gamma \leqslant 0$, and $f, f' \in \mathcal{D}' (U)$ are two
    reconstructions of $F$, namely Equation \eqref{Eq: Reconstruction Theorem} holds
    true for $f$ and $f'$ in place of $\mathcal{R}^{\gamma} F$, then
    \begin{equation}
      \label{propertiesreconstructioneq3} f - f' \in \mathcal{Z}^{\gamma} (U) 
      \hspace{0.27em} \hspace{0.27em} \hspace{0.27em} \text{and}
      \hspace{0.27em} \hspace{0.27em} \hspace{0.27em} \| f - f'
      \|_{\mathcal{Z}^{\gamma}_{K, \bar{\lambda}}} \leqslant \| F
      \|_{\mathcal{G}^{\tmmathbf{\alpha}_{\gamma}}_{\tmmathbf{r}} (U),
      \tilde{K}} .
    \end{equation}
  \end{enumerate}
\end{corollary}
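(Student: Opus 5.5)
For item 1, I plan to split $\mathcal{R}^{\gamma}F$ against a localised test function as
\[
\mathcal{R}^{\gamma}F(\varphi^{\lambda}_x) = (\mathcal{R}^{\gamma}F - F_x)(\varphi^{\lambda}_x) + F_x(\varphi^{\lambda}_x),
\]
for $x \in K$ and $\lambda \in (0,\bar{\lambda}]$. The first term is controlled by the reconstruction bound in Theorem \ref{Thm: Reconstruction Theorem}. It is of size $\lambda^{\gamma}$ times $\| F\|_{\mathcal{G}^{\tmmathbf{\alpha}_{\gamma}}_{\tmmathbf{r}}(U),\tilde{K}}$, with a logarithmic factor when $\gamma=0$. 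The second term is controlled by the homogeneity bound \eqref{Eq: Homogeneity}, giving $\lambda^{\bar{\alpha}_K}$ times the homogeneity seminorm. This is exactly the estimate \eqref{reconstructionishomogeneouseq1} of Remark \ref{reconstructionishomogeneous1}, now read at the level of the single distribution $\mathcal{R}^\gamma F$.

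I will then choose $\tilde{\alpha}_{\tilde K} := \min\{\bar{\alpha}_{\tilde K}, \gamma\}$. If $\gamma=0$, I take $\tilde\alpha$ slightly smaller to absorb the logarithm. Since $\lambda\le\bar\lambda<1$, both terms are then bounded by $\lambda^{\tilde\alpha}$ up to constants. In the relevant case $\tilde{\alpha}<0$, this is precisely the condition defining $\mathcal{Z}^{\tilde\alpha}$ in Definition \ref{Def: Holder-Zygmund}, restricted to points of $U_K$.

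One point needs care. The Zygmund norm tests against $\mathcal{B}^{\lfloor-\tilde\alpha+1\rfloor}$, while the two bounds above hold for $\varphi\in\mathcal{B}^{r_K}$, respectively of order $\bar r_K$. I would handle this with the order-change result cited in Remark \ref{reconstructionishomogeneous1} (\cite[Prop. B.1]{BCZ}). If instead $\tilde\alpha\ge0$, I would bound the extra term $\sup|f(\psi^{D}_x)|$ the same way at the fixed scale $D_K^U$, since this is a single-scale estimate. On the restricted class $\mathcal{B}^0_{\tilde\alpha}$ the same splitting gives the scaling.

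For item 2, the argument is shorter. If $f$ and $f'$ both satisfy \eqref{Eq: Reconstruction Theorem}, then the triangle inequality gives
\[
|(f-f')(\varphi^\lambda_x)| \le |(f-F_x)(\varphi^\lambda_x)| + |(F_x-f')(\varphi^\lambda_x)| \lesssim \|F\|_{\mathcal{G}^{\tmmathbf{\alpha}_{\gamma}}_{\tmmathbf{r}}(U),\tilde K}\,\lambda^\gamma .
\]
For $\gamma<0$ this is directly the $\mathcal{Z}^\gamma$ condition with $\gamma<0$, again after adjusting the order of the test functions. For $\gamma=0$, one must deal with the log factor, and I would check against the $\gamma\ge0$ branch of Definition \ref{Def: Holder-Zygmund}: its first summand is a fixed-scale bound, and its second summand uses test functions of vanishing mean.

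The main obstacle I expect is the mismatch in test-function order between the coherence/homogeneity classes and the Zygmund class. Together with it comes the borderline $\gamma=0$ logarithm. First, I would try to reduce to higher-regularity test classes via the decomposition argument of \cite[Prop. B.1]{BCZ}. Failing that, I would accept the order as part of the constant.
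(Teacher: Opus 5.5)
Your route is the one the paper has in mind when it defers to \cite[Thm.~12.7 \& Rem.~12.9]{CZ20}: split by the triangle inequality, control the pieces with \eqref{Eq: Reconstruction Theorem} and \eqref{Eq: Homogeneity}, then lower the test-function order to $\lfloor-\tilde\alpha+1\rfloor$. It is the same estimate that gives \eqref{reconstructionishomogeneouseq1} in Remark~\ref{reconstructionishomogeneous1}. It proves item~1, and your $\varepsilon$-loss at $\gamma=0$ is legitimate because only \emph{some} $\tilde\alpha_{\tilde K}$ is claimed. It also proves item~2 for $\gamma<0$. One small repair in item~1: the case $\tilde\alpha\geq 0$ is not handled by ``the same splitting'', since the splitting only controls test functions of order $r_K$. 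The case is degenerate, though. If $\tilde\alpha>0$, the bound holds for test functions with non-zero mean, which forces $\mathcal{R}^\gamma F=0$ on $U_K$. If $\tilde\alpha=0$, the bound makes $\mathcal{R}^\gamma F$ locally bounded, and that controls $\mathcal{B}^0$ test functions directly.

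The step that fails is item~2 at $\gamma=0$. Passing to the $\gamma\geq0$ branch of Definition~\ref{Def: Holder-Zygmund} does not remove the logarithm. Equation \eqref{Eq: Reconstruction Theorem} gives the same $\log(1+1/\lambda)$ bound for every $\varphi\in\mathcal{B}^{r_K}$, with or without vanishing mean. So the triangle inequality only yields $|(f-f')(\varphi^\lambda_x)|\lesssim\log(1+1/\lambda)$, and nothing better is true. The $\gamma=0$ reconstruction bound survives (up to constants) adding to $f$ any $g$ with $|g(\varphi^\lambda_x)|\lesssim\log(1+1/\lambda)$, and such $g$ need not lie in $\mathcal{Z}^0$. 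For example, take $0\in U$ and $g(y)=\sum_{k\geq1}2^k\cos(2^{2^k}y_1)$. For $\varphi\in\mathcal{B}^{r}$ with $r\geq1$, the terms with $2^{2^k}\lambda\leq1$ contribute at most $\sum 2^k\leq 2\log_2(1/\lambda)$. The terms with $2^{2^k}\lambda>1$ are $O(1+\log(1/\lambda))$ after integrating by parts, so the log bound holds. Now fix $\varphi\in\mathcal{B}^0_0$ with $c:=\int\cos(z_1)\varphi(z)\,\mathrm{d}z\neq0$ and set $\lambda=2^{-2^m}$. The $k=m$ term of $g(\varphi^\lambda_0)$ equals $2^m c$. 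The terms $k<m$ are tiny because $\int\varphi=0$, and the terms $k>m$ are tiny by integration by parts. Hence $|g(\varphi^\lambda_0)|\gtrsim 2^m\to\infty$. So at $\gamma=0$ you can only conclude $f-f'\in\mathcal{Z}^{-\varepsilon}(U)$ for every $\varepsilon>0$, by the same absorption you used in item~1. The conclusion $f-f'\in\mathcal{Z}^\gamma(U)$ holds only for $\gamma<0$.
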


{\noindent}The proof of the above corollary is a minor adaptation of that on $\mathbb{R}^d$ discussed in {\cite[Thm. 12.7 \& Rem. 12.9]{CZ20}}, thus we omit it.

\section{Schauder Estimates for Germs on Open Subsets of
$\mathbb{R}^d$}\label{Sec: Schauder Estimates Open Subsets}

In this section, we shall discuss multi-level Schauder estimates for germs of distributions on
open subsets $U$ of $\mathbb{R}^d$, $d \ge 1$. The crucial result of this part is Theorem \ref{Main Result 1}, whose counterpart on $\mathbb{R}^d$ has been discussed in {\cite[Thm. 3.17]{BCZ}}. 

As a preliminary step, we need to recall the notion of \emph{regularizing kernel}, which will prove essential for the ensuing investigation. 

\begin{definition}[$\beta-$ regularizing kernel]
  \label{Def: Regularizing Kernel}Let $U \subseteq \mathbb{R}^d$, $d \geqslant
  1$, $m, r \in \mathbb{N}$ and $\beta, \rho > 0$. We say that $\Kappa : U
  \times U \rightarrow \mathbb{R}$ is a \textbf{$\beta$-regularizing kernel of order
  $(m, r)$ with range $\rho$} if
  \[ \Kappa (x, y) = \sum_{n = 0}^{\infty} \Kappa_n (x, y), \qquad \tmop{for}
     a.e. \quad x, y \in U. \]
  For any $n \in \mathbb{N}$, $\Kappa_n \in C^{m, r}  (U \times U)$,
  \textit{i.e.}, it is $m$-times differentiable with respect to the first variable and
  $r$-times with respect to the second one. In addition each $\Kappa_n$ is such that, for any compact set
  $K \subset U$,
  \begin{enumerate}
    \item $\tmop{supp} (\Kappa_n) \subset \{(x, y) \in U \times U \barsuchthat
    \hspace{0.27em} |x - y| \leqslant \rho 2^{- n} \}$.
    
    \item for any $k, \ell \in \mathbb{N}^d$ such
    that $|k| \leqslant m$ and $| \ell | \leqslant r$ and for any $x, y \in
    K$, there exists $c_K > 0$ such that
    \[ | \partial_1^k \partial_2^{\ell} \Kappa_n (x, y) | \leqslant c_K 2^{(d
       - \beta + | \ell | + |k|) n} \]
    where $\partial_1, \partial_2$ denote the partial derivatives with respect to the variables $x$
    and $y$, respectively.    
    \item for all $k, \ell \in \mathbb{N}^d$ such that
    $|k|, | \ell | \leqslant r$ and for any $x, y \in K$, there exists $c_K >
    0$ such that
    \[ \left| \int_U (y - x)^{\ell} \partial_2^k \Kappa_n (x, y) \mathd x
       \right| \leqslant c_K 2^{- \beta n}, \]
    where $(y - x)^{\ell} \assign (y_1 - x_1)^{\ell_1} \ldots (y_d - x_d)^{\ell_d}$.
  \end{enumerate}
\end{definition}

Given a $\beta$-regularizing kernel $\mathrm{K}$ as per previous definition, a natural question concerns the
possibility of establishing an interplay between its action and the theory of
germs on open sets of $\mathbb{R}^d$ outlined in Section \ref{Sec: Germs of
distributions on open sets}. Heuristically, let $(F_x)_{x \in U}$ be an $(\tmmathbf{\alpha}, \gamma)$-coherent germ with $\gamma > 0$ as per
Definition \ref{Def: germ} abiding by the hypotheses of Theorem \ref{Thm:
Reconstruction Theorem}. Hence, item $1.$ in Theorem \ref{Thm: Reconstruction Theorem} entails that the reconstruction $\mathcal{R}^{\gamma} F \in \mathcal{D}' (U)$ is
unique. Under the assumption that the convolution $\Kappa \star
\mathcal{R}^{\gamma} F$ is well-defined, one might wonder whether there exists
a corresponding map $\mathcal{K}$ acting on the space of germs and yielding a
novel germ of distributions such that $\Kappa \star \mathcal{R}^{\gamma} F
=\mathcal{R}^{\gamma'} \mathcal{K} (F)$. Here we use the symbol $\gamma'$ to
highlight that, \tmtextit{a priori}, the coherence parameter of the new germ
might change. Yet, the formalization of this intuitive argument within our setting is not as straightforward as one might think. 

\begin{remark}
  \label{Rem: Worse Alpha}In {\cite{BCZ}} this issue is addressed and solved
  on $\mathbb{R}^d$ and, under the same assumptions taken by these authors, our
  result might appear to be weaker, \tmtextit{cf.} {\cite[Thm. 3.17]{BCZ}} but this is not entirely true. On the one hand, this is because their proof relies on a large scale
  decomposition of test functions, which does not admit a straightforward
  counterpart on bounded open sets, \tmtextit{cf.} {\cite[Thm. 5.6 \& Lem.
  5.11]{BCZ}}. On the other hand, we shall consider homogeneous and coherent
  germs as per Definitions \ref{Def: Coherence} and \ref{Def: Homogeneity},
  where we allow the parameters $\alpha_K$ and $\bar{\alpha}_K$ to depend on
  an underlying compact subset $K$, contrary to {\cite{BCZ}} where the
  counterparts are taken to be constant. This allows to bypass the issue
  mentioned above with reference to the large scale decomposition.
  Furthermore, when working on a Riemannian manifold $M$, the absence of a
  global chart leads naturally to considering parameters $\alpha_K$ and
  $\bar{\alpha}_K$ depending on an underlying compact subset and, unless the
  manifold $M$ is compact itself, we cannot work with $\inf_{K \subset M}
  \alpha_K$ or with $\inf_{K \subset M}  \bar{\alpha}_K$ as they might not be
  finite.
\end{remark}

{\noindent}In view of this remark, we feel that, although the proof
of the following results is similar in many aspects to that given in
{\cite{BCZ}}, the differences shall become so relevant in the following
sections that delving into the details is justified. To avoid a lengthy and not
so user friendly statement we shall divide it in several propositions. Furthermore, we set most of the notation and of the conventions before the main
statements. More precisely, for all $\rho > 0$ and given a bounded open subset
$U \subseteq \mathbb{R}^d$, we introduce the restriction
\begin{equation}
  \label{Eq: Notation} U^{\rho} \assign \{ x \in U \, \barsuchthat \, \text{dist} (x, \partial
  U) > \rho \} .
\end{equation}
With $K$ we shall denote instead a compact subset such that $K \subset U^{2
\rho}$, while $K'$ is the smallest convex compact set containing $K$. In addition, let
$\tilde{K}^{\prime} \assign K^{\prime}_{3 D_K^{U^{\rho}} / 4}$, see Equation
\eqref{Eq: K-enlargement} with $D_K^{U^{\rho}} = \mathrm{dist} (K, \partial
U^{\rho})$. At last, given $\alpha_K \in \mathbb{R}$ and $r_K \in \mathbb{N}$, we define
$\bar{\alpha}_K \assign \min \{\alpha_K, - r_K - d\}$. 

We start by
stating two preliminary results which will play a role in the proof of the
main results of this section.

\noindent The first of these states that under suitable assumptions the integration of a
coherent (\emph{resp.} homogeneous) germ against a $\beta$-regularising kernel produces a
weakly coherent (\emph{resp.} weakly homogeneous) germ with coherence (\emph{resp.} homeogeneity) increased by
$\beta$. As the proof follows the steps of {\cite[Thm. 5.4]{BCZ}} we omit it.

\begin{theorem}[Integration of germs]
  \label{Thm: Integration of Germs}Let $U \subseteq \mathbb{R}^d$ be an open
  set. Consider $\mathsf{K}$, a $\beta$-regularising kernel of order $(m, r)$
  over $U$ with range $\rho$ as per Definition \ref{Def: Regularizing Kernel} and let $\gamma \in \mathbb{R}$. For any families
  of real numbers $\bar{\alpha}_K, \alpha_K$ and $r_K$, indexed by the compact
  sets $K \subset U$ with $\bar{\alpha}_K, \alpha_K \leqslant \gamma
  \hspace{0.17em} \hspace{0.17em}, \forall K \subset U$, it holds that
  \begin{itemize}
    \item if $m > \bar{\alpha}_K + \beta, \hspace{0.17em} \hspace{0.17em}
    \forall K \subset U$, the map $\mathsf{K} :
    \mathsf{G}^{\tmmathbf{\bar{\alpha}}}_{\tmmathbf{r}} (U) \to
    \check{\mathsf{G}}^{\tmmathbf{\bar{\alpha}} + \beta}_{\tmmathbf{r}} (U^{2
    \rho})$ is continuous, \textit{i.e.},
    \begin{equation}
      \label{kernelhom} \| \Kappa F
      \|_{\check{\mathsf{G}}^{\tmmathbf{\bar{\alpha}} + \beta}_{\tmmathbf{r}}
      (U^{2 \rho}), K} \lesssim_{K, \rho, \bar{\lambda}} \| F
      \|_{\mathsf{G}^{\tmmathbf{\bar{\alpha}}}_{\tmmathbf{r}} (U), K} .
    \end{equation}
    \item if $m > \gamma + \beta$, the map $\mathsf{K} :
    \mathcal{G}^{\tmmathbf{\alpha}_{\gamma}}_{\tmmathbf{r}} (U) \to
    \check{\mathcal{G}}^{(\tmmathbf{\alpha}+ \beta)_{\gamma +
    \beta}}_{\tmmathbf{r}} (U^{2 \rho})$ is continuous, \textit{i.e.},
    \begin{equation}
      \label{kernelcoh} \| \Kappa F
      \|_{\check{\mathcal{G}}^{(\tmmathbf{\alpha}+ \beta)_{\gamma +
      \beta}}_{\tmmathbf{r}} (U^{2 \rho}), K} \lesssim_{K, \rho,
      \bar{\lambda}} \| F
      \|_{\mathcal{G}^{\tmmathbf{\alpha}_{\gamma}}_{\tmmathbf{r}} (U), K} .
    \end{equation}
  \end{itemize}
  In addition, assuming that $r > \max \{ \bar{\alpha}_K, \alpha_K \}, \forall
  K \subset U$ and that $m > \gamma + \beta$, the map $\mathsf{K} :
  \mathcal{G}^{\tmmathbf{\bar{\alpha}},
  \tmmathbf{\alpha}_{\gamma}}_{\tmmathbf{r}} (U) \to
  \check{\mathcal{G}}^{\tmmathbf{\bar{\alpha}} + \beta, (\tmmathbf{\alpha}+
  \beta)_{\gamma + \beta}}_{\tmmathbf{r}} (U^{2 \rho})$ is continuous, \textit{i.e.},
  \begin{equation}
    \label{kernelcohom} \| \Kappa F
    \|_{\check{\mathcal{G}}^{\tmmathbf{\bar{\alpha}} + \beta,
    (\tmmathbf{\alpha}+ \beta)_{\gamma + \beta}}_{\tmmathbf{r}} (U^{2 \rho}),
    K} \lesssim_{K, \rho, \bar{\lambda}} \| F
    \|_{\mathcal{G}^{\tmmathbf{\bar{\alpha}},
    \tmmathbf{\alpha}_{\gamma}}_{\tmmathbf{r}} (U), K} .
  \end{equation}
\end{theorem}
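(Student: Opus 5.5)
We define $(\Kappa F)_x := \int_U \Kappa(\cdot,y)F_x(y)\,\mathd y$ pointwise in $x$, where the pairing with a test function $\psi$ is understood as
\[
F_x\Bigl(\int_U \Kappa(z,\cdot)\psi(z)\,\mathd z\Bigr)=\sum_{n\ge 0} F_x(\Kappa_n^*\psi), \qquad \Kappa_n^*\psi(y):=\int_U \Kappa_n(z,y)\psi(z)\,\mathd z .
\]
The plan follows \cite[Thm. 5.4]{BCZ}, localised to $U^{2\rho}$. For $x\in K\subset U^{2\rho}$ and $\lambda\le\bar\lambda$, the support of $\Kappa_n^*\varphi^\lambda_x$ lies in $B(x,\lambda+\rho 2^{-n})\subset B(x,2\rho)\subset U$, so every pairing makes sense. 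This is exactly why the target space is $U^{2\rho}$ and why the constants depend on $K,\rho,\bar\lambda$.

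The key step is to write each $\Kappa_n^*\varphi^\lambda_x$ as a rescaled test function. We split the sum into the two regimes $2^{-n}\le\lambda$ and $2^{-n}>\lambda$.

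In the regime $2^{-n}\le\lambda$, we use the bounds in item 2 of Definition \ref{Def: Regularizing Kernel} on $\partial_2^\ell\Kappa_n$, with $|\ell|\le r$. These show that $\Kappa_n^*\varphi^\lambda_x$ equals $c\,2^{-\beta n}\,\tilde\varphi^{\,\lambda+\rho2^{-n}}_x$ with $\tilde\varphi$ uniformly in $\mathfrak B^{r}$, up to a harmless rescaling by $1+\rho$. Homogeneity then gives a bound of $2^{-\beta n}\lambda^{\bar\alpha_K}$, which sums to $\lambda^{\bar\alpha_K+\beta}$.

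In the regime $2^{-n}>\lambda$ we need cancellations, and only for test functions $\varphi\in\mathfrak B^{r_K}_{\bar\alpha_K+\beta}$, as the weak norms require. We Taylor expand $\Kappa_n(z,y)$ in $z$ around $x$ up to order $\lfloor\bar\alpha_K+\beta\rfloor<m$. The polynomial part is killed by the vanishing moments of $\varphi$. The remainder is bounded through the $\partial_1^k$ estimates by $2^{(d-\beta+m')n}\lambda^{m'}$ with $m'>\bar\alpha_K+\beta$. This yields a function of scale $2^{-n}$ whose contribution is bounded by $(2^{n}\lambda)^{m'}2^{-\beta n}2^{-n\bar\alpha_K}$. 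Summing over $2^{-n}>\lambda$ converges to $\lambda^{\bar\alpha_K+\beta}$ precisely because $m>\bar\alpha_K+\beta$.

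The large-scale term $\psi^{D_K^{U^{2\rho}}}_x$ in the weak norm is handled the same way with $\lambda$ of order one, and it gives an $O(1)$ bound.

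The coherence bound is obtained by the same decomposition applied to $F_x-F_y$. In the regime where the scale is at most $|x-y|$, we use the $\lambda^{\alpha_K}$ branch of \eqref{coherencereformulation}. Otherwise we use the $\lambda^\gamma$ branch, again with the Taylor cancellation, which requires $m>\gamma+\beta$.

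The main obstacle is the re-centering problem. $F_x-F_y$ is tested against functions centred at $x$, but after convolving with $\Kappa_n$ the result is a function centred at $x$ at a larger scale $\lambda+\rho2^{-n}$. When this scale exceeds $|x-y|$ we must pass between the two branches of coherence. We resolve this with the equivalent form \eqref{coherencereformulation}, checking the four cases $2^{-n}\lessgtr\lambda$ and $\lambda\lessgtr|x-y|$. In every case the sum of the geometric series produces $\lambda^{\alpha_K+\beta}(\lambda+|x-y|)^{\gamma-\alpha_K}$.

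Item 3 of Definition \ref{Def: Regularizing Kernel}, the moment bounds on $\int(y-x)^\ell\partial_2^k\Kappa_n\,\mathd x$, is used for the joint statement \eqref{kernelcohom}. There the hypothesis $r>\max\{\bar\alpha_K,\alpha_K\}$ guarantees that the rescaled functions $\tilde\varphi$ carry enough derivatives to belong to $\mathfrak B^{r_K}$. Adding the two bounds gives \eqref{kernelcohom}.
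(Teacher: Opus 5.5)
Your architecture is that of \cite[Thm.~5.4]{BCZ}, to which the paper defers: split $\sum_n$ at $2^{-n}\approx\lambda$, Taylor-expand $\Kappa_n$ in its first variable at small scales to use the vanishing moments of $\varphi$, and localise to $U^{2\rho}$. Your small-scale regime is fine. The large-scale regime, however, does not work as you justify it. For $2^{-n}\le\lambda$, item~2 of Definition~\ref{Def: Regularizing Kernel} and the support condition only give
\[
\bigl|\partial_y^{\ell}\,\Kappa_n^*\varphi^\lambda_x(y)\bigr|\le\int\bigl|\partial_2^{\ell}\Kappa_n(z,y)\bigr|\,\bigl|\varphi^\lambda_x(z)\bigr|\,\mathd z\lesssim 2^{(|\ell|-\beta)n}\lambda^{-d}.
\]
Being $2^{-\beta n}$ times a scale-$\lambda$ element of $\mathfrak B^{r_K}$ instead requires $2^{-\beta n}\lambda^{-d-|\ell|}$. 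You lose a factor $(2^n\lambda)^{|\ell|}\ge 1$, and the resulting estimate $\sum_{2^{-n}\le\lambda}2^{-\beta n}(2^n\lambda)^{r_K}\lambda^{\bar\alpha_K}$ diverges as soon as $r_K\ge\beta$.

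The missing idea is precisely item~3. Expand $\varphi^\lambda_x(z)$ in $z$ around $y$ to order $|\ell|$:
\[
\partial_y^{\ell}\,\Kappa_n^*\varphi^\lambda_x(y)=\sum_{|j|<|\ell|}\frac{\partial^j\varphi^\lambda_x(y)}{j!}\int(z-y)^j\,\partial_2^{\ell}\Kappa_n(z,y)\,\mathd z+\int R_\ell(z,y)\,\partial_2^{\ell}\Kappa_n(z,y)\,\mathd z .
\]
Item~3 bounds the moments by $c_K2^{-\beta n}$, so the Taylor terms are at most $2^{-\beta n}\lambda^{-d-|j|}\le 2^{-\beta n}\lambda^{-d-|\ell|}$. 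The remainder satisfies $|R_\ell(z,y)|\lesssim\lambda^{-d-|\ell|}|z-y|^{|\ell|}$, and item~2 then bounds its contribution by $2^{(d-\beta+|\ell|)n}\,2^{-dn}\,2^{-|\ell|n}\lambda^{-d-|\ell|}=2^{-\beta n}\lambda^{-d-|\ell|}$.

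Item~3 is not a mere convenience here. The kernel $\Kappa_n(z,y)=2^{(d-\beta)n}\chi\bigl(2^n(z-y)/\rho\bigr)\cos(2^ny_1)$ satisfies items~1--2 but not item~3. Take the constant $\bar\alpha$-homogeneous germ $F_x\equiv\sum_N 2^{-N\bar\alpha}\cos(2^Ny_1)$ and $\int\varphi\neq0$. One finds $F_x(\Kappa_n^*\varphi^\lambda_x)\sim c\,2^{-n(\bar\alpha+\beta)}$ with $c\neq0$, so the series defining $(\Kappa F)_x(\varphi^\lambda_x)$ diverges when $\bar\alpha+\beta\le0$.

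Consequently item~3 is needed in both bullets, including the $\psi_x^{D_K^{U^{2\rho}}}$ term and the coherence estimate. It is not needed for the joint statement: \eqref{kernelcohom} is just the sum of \eqref{kernelhom} and \eqref{kernelcoh}, since $m>\gamma+\beta\ge\bar\alpha_K+\beta$. Likewise, what lets $\tilde\varphi$ land in $\mathfrak B^{r_K}$ is $r\ge r_K$, so that $\partial_2^\ell\Kappa_n$ is available for $|\ell|\le r_K$ in both regimes. The hypothesis $r>\max\{\bar\alpha_K,\alpha_K\}$ does not imply this.

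Finally, there is no re-centering issue in the coherence part. Both seminorms test at $\varphi^\lambda_x$, and $\Kappa_n^*\varphi^\lambda_x$ stays centred at $x$. Your four cases therefore amount to inserting the scale $\mu\in\{\lambda,2^{-n}\}$ into $\mu^{\alpha_K}(\mu+|x-y|)^{\gamma-\alpha_K}$.
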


The second result concerns the possibility of turning a weakly homogeneous
germ in a homogeneous one by means of the subtraction of a suitable Taylor
polynomial. This procedure is also known in the literature as
\tmtextit{renormalization}, although we shall refrain from adopting this
nomenclature since it carries different meanings both in quantum field theory
and in stochastic partial differential equations. We prefer to avoid
possible sources of confusion. For the proof, see {\cite[Thm. 5.10]{BCZ}}.

\begin{theorem}
  \label{Thm: Positive Renormalization}Let $U \subseteq \mathbb{R}^d$ and let
  $F \in \check{\mathsf{G}}^{\tmmathbf{\bar{\alpha}}}_{\tmmathbf{r}} (U)$,
  see Definition \ref{Def: Coherence}.
  \begin{itemize}
    \item For any compact $K \subset U$, it holds that
    \begin{equation}
      \label{pointwisederivativesbound1} \sup_{x \in K} |D^k F_x (x) |
      \lesssim \| F
      \|_{\check{\mathsf{G}}^{\tmmathbf{\bar{\alpha}}}_{\tmmathbf{r}} (U), K},
      \hspace{0.27em} \hspace{0.27em} \hspace{0.27em} \hspace{0.27em}
      \text{for} \hspace{0.27em} \hspace{0.27em} 0 \leqslant |k| <
      \bar{\alpha}_K,
    \end{equation}
    where $D^k \equiv \partial_{x^1}^{k_1} \ldots \partial_{x^n}^{k_n}$ with
    $|k| = k_1 + \ldots + k_n$.
    
    \item Setting $\mathbb{X}^k_x (y) \assign (y - x)^k$ and assuming that,
    $\forall K \subset U, \bar{\alpha}_K \nin \mathbb{N}_0$, then
    \begin{equation}
      \label{renormalization} G_x \assign F_x -\mathcal{T}^{\gamma}_x (F_x) =
      F_x - \sum_{0 \leqslant \lvert k \rvert < \gamma} D^k F_x (x)
      \frac{\mathbb{X}^k_x}{k!},
    \end{equation}
    is an $\tmmathbf{\bar{\alpha}}$-homogeneous germ of distributions of order
    $\tmmathbf{r}$, \tmtextit{i.e.}, $G \in
    \mathsf{G}^{\tmmathbf{\bar{\alpha}}}_{\tmmathbf{r}} (U)$, and
    \begin{equation}
      \label{positiverenormbound2} \| G
      \|_{\mathsf{G}^{\tmmathbf{\bar{\alpha}}}_{\tmmathbf{r}} (U), K} \lesssim
      \| F \|_{\check{\mathsf{G}}^{\tmmathbf{\bar{\alpha}}}_{\tmmathbf{r}}
      (U), K} .
    \end{equation}
    \item If, for all compact $K \subset U$ such that $\bar{\alpha}_K \in
    \mathbb{N}$ and for all multi-indices $k$ with $\lvert k \rvert =
    \bar{\alpha}_K$, $D^k F \in \mathsf{G}^0_{r_K} (U)$, then $G \in
    \mathsf{G}^{\tmmathbf{\bar{\alpha}}}_{\tmmathbf{r}} (U)$ and,
    \begin{equation}
      \label{positiverenormbound3} \| G
      \|_{\mathsf{G}^{\tmmathbf{\bar{\alpha}}}_{\tmmathbf{r}} (U), K} \lesssim
      \| F \|_{\check{\mathsf{G}}^{\tmmathbf{\bar{\alpha}}}_{\tmmathbf{r}}
      (U), K} + \sum_{\lvert k \rvert = \bar{\alpha}_K} \| D^k F
      \|_{\check{\mathsf{G}}^0_{\tmmathbf{r}} (U), K} .
    \end{equation}
  \end{itemize}
\end{theorem}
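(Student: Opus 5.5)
The plan is a Littlewood--Paley-type telescoping argument in the scale $\lambda$, run at the base point $x$ itself. On a fixed compact $K$ we read the Taylor cut-off $\gamma$ in \eqref{renormalization} as $\bar{\alpha}_K$.

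\emph{Reference test function.} Fix once and for all $\rho\in\mathcal{D}(B(0,1))$, normalised so that $\rho\in\mathcal{B}^{r_K}$ up to a constant. Choose it with $\int\rho=1$ and $\int y^j\rho(y)\,\mathd y=0$ for $1\le|j|\le\bar{\alpha}_K$. Then the functions $\eta_k\assign(-1)^{|k|}\partial^k\rho/k!$ form a dual system to the monomials: $\int y^j\eta_k(y)\,\mathd y=\delta_{jk}$ for $|j|\le\bar{\alpha}_K$. For $|k|<\bar{\alpha}_K$ we set
\[ a^k_x(\lambda)\assign k!\,\lambda^{-|k|}F_x\big((\eta_k)^\lambda_x\big), \]
which formally equals $\partial^k$ of $F_x$ mollified at scale $\lambda$ and evaluated at $x$.

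\emph{Existence of $D^kF_x(x)$.} Compare the scales $\lambda$ and $\lambda/2$. The function $\zeta^\lambda_x\assign(\eta_k)^\lambda_x-2^{-|k|}(\eta_k)^{\lambda/2}_x$ is, by the choice of moments, orthogonal to all $(y-x)^j$ with $|j|\le\bar{\alpha}_K$. Hence $\zeta^\lambda_x$ is $\lambda$-rescaled from a fixed multiple of an element of $\mathcal{B}^{r_K}_{\bar{\alpha}_K}$, and the second bound in \eqref{Eq: Weakly Homogeneous} gives
\[ |a^k_x(\lambda)-a^k_x(\lambda/2)|\lesssim \lambda^{\bar{\alpha}_K-|k|}\| F\|_{\check{\mathsf{G}}^{\tmmathbf{\bar{\alpha}}}_{\tmmathbf{r}} (U), K}. \]
This is summable along $\lambda=2^{-n}\lambda_0$ because $|k|<\bar{\alpha}_K$. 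The initial term at a fixed scale $\lambda_0\simeq D^U_K$ is controlled by the first bound in \eqref{Eq: Weakly Homogeneous}. Intermediate scales are handled by a finite cover and rescaling between $\lambda$ and $\lambda_0$, which only costs constants depending on $K$. So $D^kF_x(x)\assign\lim_{\lambda\to0}a^k_x(\lambda)$ exists uniformly on $K$, and \eqref{pointwisederivativesbound1} holds. Measurability in $x$ is inherited as a pointwise limit of measurable maps. The same argument yields the quantitative tail bound
\[ |a^k_x(\lambda)-D^kF_x(x)|\lesssim\lambda^{\bar{\alpha}_K-|k|}. \]

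\emph{Homogeneity of $G$.} Take $\varphi\in\mathcal{B}^{r_K}$ and split $\varphi=\big(\varphi-\sum_{|k|\le\bar{\alpha}_K}c_k\eta_k\big)+\sum_k c_k\eta_k$, where $c_k=\int y^k\varphi$ satisfies $|c_k|\lesssim1$.
The first piece lies in a bounded multiple of $\mathcal{B}^{r_K}_{\bar{\alpha}_K}$. On it $\mathcal{T}_x(F_x)$ vanishes, since it is a polynomial in $y-x$ of degree $<\bar{\alpha}_K$, and $F_x$ gives $\lambda^{\bar{\alpha}_K}$.
For the second piece, $\mathbb{X}^j_x\big((\eta_k)^\lambda_x\big)=\lambda^{|k|}\delta_{jk}$. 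Therefore
\[ G_x\big((\eta_k)^\lambda_x\big)=\frac{\lambda^{|k|}}{k!}\big(a^k_x(\lambda)-D^kF_x(x)\big)\quad\text{for }|k|<\bar{\alpha}_K. \]
By the tail bound this is $O(\lambda^{\bar{\alpha}_K})$.
For $\bar{\alpha}_K<|k|\le\bar{\alpha}_K$ there is nothing to check when $\bar{\alpha}_K\notin\mathbb{N}_0$; the indices with $|k|=\lfloor\bar{\alpha}_K\rfloor<\bar{\alpha}_K$ are covered by the previous case. Altogether this gives \eqref{positiverenormbound2}.

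\emph{Integer case (main obstacle).} When $\bar{\alpha}_K\in\mathbb{N}$, the terms with $|k|=\bar{\alpha}_K$ are the difficulty. The Taylor polynomial does not subtract them, while $F_x\big((\eta_k)^\lambda_x\big)=\lambda^{|k|}a^k_x(\lambda)/k!$ is only bounded by $\lambda^{\bar{\alpha}_K}|\log\lambda|$ through the non-summable telescoping. Here I would use the extra hypothesis $D^kF\in\mathsf{G}^0_{r_K}(U)$. Integrating by parts, $a^k_x(\lambda)=(D^kF_x)\big(\rho^\lambda_x\big)$, which is $O(1)$ by $0$-homogeneity. This removes the logarithm and produces the additional term in \eqref{positiverenormbound3}. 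The remaining delicate bookkeeping is that derivatives of $\rho$ must stay within order $r_K$, which is handled by the normalising constants, and that all constants are uniform in $x\in K$.
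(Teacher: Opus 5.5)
The paper does not prove this statement itself; it defers entirely to \cite[Thm.~5.10]{BCZ}, which is an $\mathbb{R}^d$ result, and tacitly assumes that argument localises to $U$. Your proposal is a correct, self-contained proof. It uses the standard mechanism: a mollifier $\rho$ with vanishing moments of orders $1,\dots,\lfloor\bar{\alpha}_K\rfloor$, the dual family $\eta_k$, dyadic telescoping at the base point, and the splitting of $\varphi$ into a moment-free part plus $\sum_k c_k\eta_k$.

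Compared with the bare citation, your route is visibly local. Every test function is centred at $x$, and the telescoping only uses scales $\lambda\leqslant\bar{\lambda}<D^U_K$. The only large scale needed is the anchor supplied by the first condition in \eqref{Eq: Weakly Homogeneous}. So the passage from $\mathbb{R}^d$ to an open set, and the dependence of the constants on $K$ and $\bar{\lambda}$, are explicit rather than assumed. This matters because the paper itself warns in Remark~\ref{Rem: Worse Alpha} that some decompositions of \cite{BCZ} do not transfer verbatim to bounded sets. The citation buys only brevity.

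A few corrections are needed.

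\begin{enumerate}
\item The increment is $a^k_x(\lambda)-a^k_x(\lambda/2)=k!\,\lambda^{-|k|}F_x\big((\eta_k)^\lambda_x-2^{|k|}(\eta_k)^{\lambda/2}_x\big)$. With your factor $2^{-|k|}$ the $k$-th moment would not cancel.

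\item For measurability, $x\mapsto F_x((\eta_k)^\lambda_x)$ involves an $x$-dependent test function, which Definition~\ref{Def: germ} does not cover directly. You need the usual Carath\'eodory step before passing to the limit: $F_x$ applied to a test function with fixed centre is measurable in $x$, and it is continuous in the centre for fixed $x$, so the map is jointly measurable.

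\item In the integer case, your argument controls the $|k|=\bar{\alpha}_K$ terms by $\|D^kF\|_{\mathsf{G}^0_{\tmmathbf{r}}(U),K}$. This is the strong norm that the hypothesis provides, and you should say so rather than claim the weak norm printed in \eqref{positiverenormbound3}. The weak-norm version cannot hold. Take $d=1$, $\bar{\alpha}_K=1$ and $F^\epsilon_x(y)=(y-x)\log(|y-x|+\epsilon)$. Then $\|F^\epsilon\|_{\check{\mathsf{G}}^1_{\tmmathbf{r}}(U),K}$ and $\|DF^\epsilon\|_{\check{\mathsf{G}}^0_{\tmmathbf{r}}(U),K}$ are bounded uniformly in $\epsilon$, and $G^\epsilon=F^\epsilon$. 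Yet for $\varphi\in\mathcal{B}^{r_K}$ one has $\epsilon^{-1}G^\epsilon_x(\varphi^\epsilon_x)=\log\epsilon\int z\varphi(z)\,\mathd z+O(1)$, which is unbounded whenever $\int z\varphi\neq 0$.

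\item Your concern about derivatives of $\rho$ exceeding order $r_K$ is moot. Since $\rho$ is fixed and smooth, only $\varphi$ has limited regularity.
\end{enumerate}
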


{\noindent}As a Corollary of Theorem \ref{Thm: Integration of Germs} we
establish how an element lying in $\check{\mathcal{G}}^{\gamma,
\tmmathbf{\alpha}_{\gamma}}_{\tmmathbf{r}} (U)$, namely a weakly coherent and
weakly homogeneous germs, can be turned into a homogeneous and coherent germ
by subtracting a suitable Taylor polynomial, see {\cite[Thm. 5.10]{BCZ}}. In
the following we shall use extensively the notation and the results listed in 
Appendix \ref{Sec: App A}.

\begin{corollary}
  \label{Cor: Postive Renormalization}Let $U \subseteq \mathbb{R}^d$ be a
  convex open set and let $F = (F_x)_{x \in U} \in
  \check{\mathcal{G}}^{\gamma, \tmmathbf{\alpha}_{\gamma}}_{\tmmathbf{r}}
  (U)$, with $\alpha_K \neq 0 \hspace{0.17em}, \forall K \subset U$, $\gamma
  \nin \mathbb{N}_0$. Then for all multi-indices $k$ such that $0 \leqslant
  \lvert k \rvert \leqslant \gamma$, the pointwise derivatives $D^k F_x (x)$
  are well-defined. Setting
  \begin{equation}
    \label{Eq: renormalization} G_x \assign F_x -\mathcal{T}^{\gamma}_x (F_x)
    \equiv F_x - \sum_{0 \leqslant \lvert k \rvert < \gamma} D^k F_x (x)
    \frac{\mathbb{X}^k_x}{k!},
  \end{equation}
  it holds that
  \begin{itemize}
    \item[(i)] there exists a family $\{ \tilde{\alpha}_K \}_{\Kappa \subset U}
    \subset \mathbb{R}$ with $\tilde{\alpha}_K \leqslant \min \{0, \gamma\},
    \forall K$ such that $G \in \mathcal{G}^{\gamma,
    \tilde{\tmmathbf{\alpha}}_{\gamma}}_{\tmmathbf{r}} (U)$.
    
    \item[(ii)] the map $F_x \mapsto G_x$ is linear, and for all compact sets $K
    \subset U$, one has that
    \begin{align}
      & \| G \|_{\mathsf{G}^{\gamma}_{\tmmathbf{r}} (U), K} \lesssim
      \hspace{0.27em} \| F \|_{\check{\mathsf{G}}^{\gamma}_{\tmmathbf{r}} (U),
      K}  \label{Eq: Postive Renormalization Estimate 1}\\
      & \| G
      \|_{\mathcal{G}^{\tilde{\tmmathbf{\alpha}}_{\gamma}}_{\tmmathbf{r}} (U),
      K} \lesssim \hspace{0.27em} \| F
      \|_{\check{\mathsf{G}}^{\gamma}_{\tmmathbf{r}} (U), K'} .  \label{Eq:
      Postive Renormalization Estimate 2}
    \end{align}
  \end{itemize}
\end{corollary}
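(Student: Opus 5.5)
Since $\gamma > 0$ and $\gamma\notin\mathbb{N}_0$, the germ $F \in \check{\mathcal{G}}^{\gamma,\tmmathbf{\alpha}_{\gamma}}_{\tmmathbf{r}}(U)$ is in particular weakly $\gamma$-homogeneous. The plan is first to apply Theorem \ref{Thm: Positive Renormalization} with $\bar{\tmmathbf{\alpha}} \equiv \gamma$. Its first item gives that the pointwise derivatives $D^kF_x(x)$, $|k|<\gamma$, are well defined and bounded by $\|F\|_{\check{\mathsf{G}}^{\gamma}_{\tmmathbf{r}}(U),K}$ uniformly in $x\in K$. Its second item shows that $G_x = F_x - \mathcal{T}^\gamma_x(F_x)$ is $\gamma$-homogeneous of order $\tmmathbf{r}$, with the bound \eqref{Eq: Postive Renormalization Estimate 1}. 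Linearity of $F\mapsto G$ is immediate, because $F_x\mapsto D^kF_x(x)$ is linear. This settles the homogeneity half of (i) and the first estimate of (ii).

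The remaining task is coherence of $G$ for some exponents $\tilde\alpha_K\le\min\{0,\gamma\}=0$. Homogeneity and coherence should be linked through the results of Appendix \ref{Sec: App A}. Indeed, for $\lambda>|x-y|$ a recentring argument gives the coherence bound directly from homogeneity. The test function $\varphi^\lambda_x$ can be rewritten as $\tilde\varphi^{2\lambda}_y$, with $\tilde\varphi(z)=2^d\varphi(2z+(y-x)/\lambda)$. Its $C^{r_K}$ norm is bounded by a constant, and it is supported in $B(0,1)$ once $|x-y|<\lambda$. Homogeneity at $y$ then gives $|G_y(\varphi^\lambda_x)|\lesssim\lambda^\gamma$. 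Together with $|G_x(\varphi^\lambda_x)|\lesssim\lambda^\gamma$, this yields $|(G_x-G_y)(\varphi^\lambda_x)|\lesssim\lambda^\gamma$, the second line of \eqref{coherencereformulation}. Here the centre must stay in a compact set where homogeneity holds; this is why the right-hand side of \eqref{Eq: Postive Renormalization Estimate 2} involves the convex hull $K'$, and convexity of $U$ guarantees that the segment joining $x$ and $y$ stays inside.

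The regime $\lambda\le|x-y|$ is the main obstacle. Here I will write
\[
(G_x-G_y)(\varphi^\lambda_x)=(F_x-F_y)(\varphi^\lambda_x)-\sum_{|k|<\gamma}\Big(D^kF_x(x)\,\mathbb{X}^k_x-D^kF_y(y)\,\mathbb{X}^k_y\Big)(\varphi^\lambda_x)/k!.
\]
The first term is controlled by the weak coherence of $F$, using Remark \ref{weakarestrong} and the appendix results that pass from test functions in $\mathcal{B}^{r_K}_\gamma$ to general $\mathcal{B}^{r_K}$ at the cost of lowering the exponent to some $\tilde\alpha_K$. For the polynomial part, I expand $\mathbb{X}^k_y$ around $x$ and use the derivative bounds from the first step, together with $|\mathbb{X}^k_x(\varphi^\lambda_x)|\lesssim\lambda^{|k|}$. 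Each term is then bounded by $\lambda^{j}|x-y|^{\ell}$ with $j+\ell\le\gamma$ and with $|x-y|$ bounded by $\mathrm{Diam}(K')$. Since $\lambda\le|x-y|$, all these bounds are dominated by $\lambda^{\tilde\alpha_K}(\lambda+|x-y|)^{\gamma-\tilde\alpha_K}$ once $\tilde\alpha_K\le\min\{0,\alpha_K\}$ is chosen small enough.

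The choice of $\tilde\alpha_K$ is dictated by the worst power of $\lambda$ that appears. The hypothesis $\alpha_K\neq0$ is needed so that the appendix comparison between coherence on $\mathcal{B}_\gamma$ and coherence on all of $\mathcal{B}^{r_K}$ applies without logarithmic corrections. Collecting constants, all of which depend only on the $\check{\mathsf{G}}^\gamma$ seminorm of $F$ on $K'$, should give \eqref{Eq: Postive Renormalization Estimate 2}.
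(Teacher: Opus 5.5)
Your first step is exactly what the paper does: you apply Theorem \ref{Thm: Positive Renormalization} with $\bar{\alpha}_K \equiv \gamma$ to get the pointwise derivatives, linearity, the $\gamma$-homogeneity of $G$ and \eqref{Eq: Postive Renormalization Estimate 1}. Your recentring for $\lambda > |x-y|$ is also correct.

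The gap is in the regime $\lambda \le |x-y|$. There you bound $(F_x - F_y)(\varphi^\lambda_x)$ and the polynomial differences \emph{separately}, and neither piece is small. The $k=0$ polynomial term is $(F_x(x) - F_y(y))\int\varphi$, which the derivative bounds from your first step only control by a constant. In general you produce terms $\lambda^j |x-y|^\ell$ with $j+\ell < \gamma$, and these cannot be dominated by $\lambda^{\tilde\alpha_K}(\lambda + |x-y|)^{\gamma - \tilde\alpha_K}$. At $\lambda = |x-y| = \epsilon$ the latter equals $2^{\gamma-\tilde\alpha_K}\epsilon^\gamma$ whatever $\tilde\alpha_K$ is, while $\epsilon^{j+\ell} \gg \epsilon^\gamma$ as $\epsilon \to 0$. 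Lowering $\tilde\alpha_K$ only helps when $\lambda \ll |x-y|$.

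Nor is there a result in Appendix \ref{Sec: App A} turning weak coherence into coherence on all of $\mathcal{B}^{r_K}$; Remark \ref{weakarestrong} only covers $\gamma<0$. No such result can hold for $\gamma > 0$. Take $0<\gamma<1$, $c$ bounded, measurable and discontinuous, and $F_x$ the constant function $c(x)$. Every quantity tested against $\mathcal{B}^{r}_\gamma$ vanishes because $\int\varphi = 0$. So $F$ is weakly $\gamma$-homogeneous and weakly $(\alpha,\gamma)$-coherent for every admissible $\alpha$. Yet $(F_x - F_y)(\varphi^\lambda_x) = (c(x)-c(y))\int\varphi$ stays of order one for $x,y$ arbitrarily close across the jump. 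In this example $G \equiv 0$: coherence of $G$ comes entirely from the cancellation between $F_x - F_y$ and the Taylor polynomials, which your splitting discards. Your route would also make the constant depend on the weak coherence seminorm of $F$, which is absent from the right-hand side of \eqref{Eq: Postive Renormalization Estimate 2}.

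The missing idea is that coherence of $G$ follows from its homogeneity alone, in both regimes. You use the same recentring as for $\lambda > |x-y|$, now at scale $\lambda_1 := \lambda + |x-y|$.
\begin{itemize}
\item Write $|(G_x - G_y)(\varphi^\lambda_x)| \le |G_x(\varphi^\lambda_x)| + |G_y(\varphi^\lambda_x)|$.
\item Write $\varphi^\lambda_x = \xi^{\lambda_1}_y$ with $\xi := \varphi^{\lambda/\lambda_1}_{(x-y)/\lambda_1}$, so that $\|\xi\|_{C^{r_K}} \le (\lambda/\lambda_1)^{-r_K-d}$.
\item Homogeneity of $G$ then gives the bound $\lambda^{-r_K-d}(\lambda + |x-y|)^{\gamma + r_K + d}$. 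This is coherence with exponent $-r_K-d$ for $|x-y|$ small (Theorem \ref{Prop: Necessity of Coherence} with $f = 0$). The recentring cost is exactly what the lowered exponent absorbs.
\item Extend to all $x,y \in K$ by chaining along segments (Proposition \ref{relaxingbound2}). This is where convexity of $U$, the hull $K'$ and the exponent \eqref{alphak} enter.
\item Extend to all $\bar\lambda$ by Proposition \ref{relaxingbound11}.
\end{itemize}
This is Proposition \ref{homcoherence}(2), which the paper invokes to get $\|G\|_{\mathcal{G}^{\tilde{\alpha}_\gamma}_{r}(U),K} \lesssim \|G\|_{\mathsf{G}^{\gamma}_{r}(U),K'} \lesssim \|F\|_{\check{\mathsf{G}}^{\gamma}_{r}(U),K'}$. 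The weak coherence of $F$ is not used in this step.
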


\begin{proof}
  We focus only on Equation \eqref{Eq: Postive Renormalization Estimate 2}
  since all other statements are a direct consequence of {\cite[Thm.
  5.10]{BCZ}} replacing $\mathbb{R}^d$ with $U \subset \mathbb{R}^d$. Theorem
  \ref{Thm: Positive Renormalization} entails that, given $F \in
  \check{\mathcal{G}}^{\gamma, \tmmathbf{\alpha}_{\gamma}}_{\tmmathbf{r}} (U)$
  as per Definition \ref{Def: Coherent and Homogeneous Germs}
  \[ G_x \assign F_x -\mathcal{T}^{\gamma}_x (F_x) = F_x - \sum_{0 \leqslant
     \lvert k \rvert < \gamma} D^k F_x (x) \mathbb{X}^k_x \in
     \mathsf{G}^{\gamma}_{\tmmathbf{r}} (U), \qquad \| G
     \|_{\mathsf{G}^{\gamma}_{\tmmathbf{r}} (U), K} \lesssim \| F
     \|_{\check{\mathsf{G}}^{\gamma}_{\tmmathbf{r}} (U), K} . \]
  To prove coherence of $G$, we use Proposition \ref{homcoherence}, item $(2)$
  yielding the existence of $\tilde{\alpha}_K$, \tmtextit{cf.} \ Equation
  \eqref{alphak}, such that $G$ is $(\tilde{\alpha}_k, \gamma)$-coherent.
  Additionally, again by Proposition \ref{homcoherence}, we have that
  \[ \| G \|_{\mathcal{G}^{\tilde{\tmmathbf{\alpha}}_{\gamma}}_{\tmmathbf{r}}
     (U), K} \lesssim \| G \|_{\mathsf{G}^{\gamma}_{\tmmathbf{r}} (U), K'}
     \lesssim \| F \|_{\check{\mathsf{G}}^{\gamma}_{\tmmathbf{r}} (U), K'}, \]
  where $K'$ is the smallest convex compact set containing $K$.
\end{proof}

{\noindent} At this stage, we have all the ingredients to establish the main result of this section, namely 
Schauder estimates for germs of distributions on an open convex set.

\begin{theorem}[Schauder estimates on open convex sets]
  \label{Main Result 1}Let $U \subseteq \mathbb{R}^d$ be open, convex and
  bounded. Let $F \in \mathcal{G}^{\tmmathbf{\alpha}_{\gamma}}_{\tmmathbf{r}}
  (U)$, as per Definition \ref{Def: Coherence}, with $\alpha_K \leqslant
  \gamma$, $\forall K \subset U$ and $\gamma \neq 0$. We denote by
  $\mathcal{R}^{\gamma} F$ the reconstruction of $F$, as per Theorem \ref{Thm:
  Reconstruction Theorem}. Furthermore, let $\beta > 0$ satisfy
  \[ \alpha_K + \beta \neq 0 \hspace{0.27em} \hspace{0.27em} \hspace{0.27em}
     \forall K \subset U, \hspace{0.27em} \hspace{0.27em} \hspace{0.27em}
     \gamma + \beta \nin \mathbb{N}, \]
  and consider a $\beta$-regularizing kernel $\mathsf{K}$ on $U$ with range
  $\rho$ of order $(m, r)$, as per Definition \ref{Def: Regularizing Kernel},
  to be such that
  \begin{equation}
    \label{regularitykernel} m > \gamma + \beta, \qquad \qquad r > - \alpha_K,
    \qquad \qquad \forall K \subset U.
  \end{equation}
  Defining the germ $(\mathcal{K}^{\gamma, \beta} F)_x$ as
  \begin{equation}
    \mathcal{K}^{\gamma, \beta} F \assign \mathsf{K} F -\mathcal{T}^{\gamma +
    \beta} (\mathsf{K} \{F -\mathcal{R}^{\gamma} F\}) = \mathsf{K} F - \sum_{0
    \leqslant \lvert k \rvert \leqslant \lfloor \gamma + \beta \rfloor} D^k
    (\mathsf{K} \{F -\mathcal{R}^{\gamma} F\}) (x) \frac{\mathbb{X}^k}{k!},
    \label{maintheoremequationeq}
  \end{equation}
  where $\mathbb{X}^k_x (y) \assign (y - x)^k$, it holds that, with $\bar{r}_K
  \assign - \alpha_K$,
  \begin{equation}
    \label{maintheoremequation1} \mathcal{K}^{\gamma, \beta} F - \mathsf{K}
    (\mathcal{R}^{\gamma} F) \in \mathcal{G}^{\gamma + \beta,
    \tilde{\tmmathbf{\alpha}}_{\gamma + \beta}}_{\bar{\tmmathbf{r}}} (U^{2
    \rho}),
  \end{equation}
  where $\tilde{\tmmathbf{\alpha}}$ is as per Corollary~\ref{Cor: Postive
  Renormalization}.
  In addition, the following continuity estimate holds true: There exists a
  compact $\tilde{K}' \subset U$ such that
  \begin{equation}
    \begin{array}{ll}
      \| \mathcal{K}^{\gamma, \beta} F - \mathsf{K} (\mathcal{R}^{\gamma} F)
      \|_{\mathcal{G}^{\gamma + \beta, \tilde{\tmmathbf{\alpha}}_{\gamma +
      \beta}}_{\bar{\tmmathbf{r}}} (U^{2 \rho}), K} & \lesssim \hspace{0.17em}
      \| F \|_{\mathcal{G}^{\tmmathbf{\alpha}_{\gamma}}_{\tmmathbf{r}} (U),
      \tilde{K}'},
    \end{array} \label{maintheoremequation2}
  \end{equation}
  where here $\lesssim$ encodes a proportionality constant depending on $K,
  \rho, \gamma + \beta, \alpha_{\tilde{K}'}, \bar{r}_{\tilde{K}'},
  \bar{\lambda}$, \tmtextit{cf.} Definition \ref{Def: Coherence}.
\end{theorem}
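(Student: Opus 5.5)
The key observation is that
\[
\mathcal{K}^{\gamma,\beta}F - \mathsf{K}(\mathcal{R}^{\gamma}F) = \mathsf{K}\{F-\mathcal{R}^{\gamma}F\} - \mathcal{T}^{\gamma+\beta}\big(\mathsf{K}\{F-\mathcal{R}^{\gamma}F\}\big),
\]
because $\mathsf{K}(\mathcal{R}^\gamma F)$ is a constant germ and $\mathsf{K}$ acts linearly on germs. Setting $H \assign F - \mathcal{R}^{\gamma}F$, the claim therefore reduces to showing that $G \assign \mathsf{K}H - \mathcal{T}^{\gamma+\beta}(\mathsf{K}H)$ lies in $\mathcal{G}^{\gamma+\beta,\tilde{\tmmathbf{\alpha}}_{\gamma+\beta}}_{\bar{\tmmathbf{r}}}(U^{2\rho})$, with the stated bound. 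This is precisely the output format of Corollary~\ref{Cor: Postive Renormalization}, applied with $\gamma+\beta$ in place of $\gamma$ and $U^{2\rho}$ in place of $U$. The open set $U^{2\rho}$ is convex since $U$ is convex and bounded, and we have $\gamma+\beta\notin\mathbb{N}_0$ because $\gamma+\beta>0$ is not an integer.

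\emph{Step 1: properties of $H$.} By Remark~\ref{reconstructionishomogeneous1} (this uses $\gamma\neq 0$), $H$ is $\gamma$-homogeneous and $(\tmmathbf{\alpha},\gamma)$-coherent, of order $\lfloor -\alpha_K\rfloor$ after invoking \cite[Prop. B.1]{BCZ}. The bounds are
\[
\|H\|_{\mathsf{G}^{\gamma}_{\bar{\tmmathbf r}}(U),K}\lesssim \|F\|_{\mathcal{G}^{\tmmathbf{\alpha}_\gamma}_{\tmmathbf r}(U),\tilde K}
\quad\text{and}\quad
\|H\|_{\mathcal{G}^{\tmmathbf{\alpha}_\gamma}}=\|F\|_{\mathcal{G}^{\tmmathbf{\alpha}_\gamma}},
\]
where the second holds because subtracting a constant germ leaves differences $H_x-H_y=F_x-F_y$ unchanged. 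Hence $H\in\mathcal{G}^{\gamma,\tmmathbf{\alpha}_\gamma}_{\bar{\tmmathbf r}}(U)$.

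\emph{Step 2: integration.} Apply the last part of Theorem~\ref{Thm: Integration of Germs} with homogeneity exponent $\bar\alpha_K=\gamma$. The hypotheses are $m>\gamma+\beta$, from \eqref{regularitykernel}, and $r>\max\{\gamma,\alpha_K\}$. The condition $r>\gamma$ is not literally among the assumptions, so I would also need to check it. If needed, I would instead apply the first two bullets of that theorem separately: the homogeneity bullet only requires $m>\gamma+\beta$, and the coherence bullet gives coherence with $\alpha_K+\beta$. The output is that $\mathsf{K}H\in\check{\mathcal{G}}^{\gamma+\beta,(\tmmathbf{\alpha}+\beta)_{\gamma+\beta}}_{\bar{\tmmathbf r}}(U^{2\rho})$, with seminorm controlled by that of $H$ on the corresponding compact set. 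The hypothesis $\alpha_K+\beta\neq0$ supplies exactly the condition ``$\alpha_K\neq 0$'' needed in Corollary~\ref{Cor: Postive Renormalization}.

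\emph{Step 3: renormalisation.} Corollary~\ref{Cor: Postive Renormalization} now says that $G$ is well defined, since the pointwise derivatives $D^k(\mathsf{K}H)_x(x)$ exist. It also gives $G\in\mathcal{G}^{\gamma+\beta,\tilde{\tmmathbf\alpha}_{\gamma+\beta}}_{\bar{\tmmathbf r}}(U^{2\rho})$, with
\[
\|G\|\lesssim\|\mathsf{K}H\|_{\check{\mathsf G}^{\gamma+\beta},K'}.
\]
Chaining this with Steps 1 and 2 yields \eqref{maintheoremequation2}. The compact set $\tilde K'$ arises as follows: first pass to the convex hull $K'$ (coming from Proposition~\ref{homcoherence}), then take the enlargement $\tilde{K}'=K'_{3D^{U^\rho}_K/4}$ coming from the reconstruction estimate. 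One must check that $\tilde K'\subset U$; this holds because $K\subset U^{2\rho}$ and $U$ is convex.

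I expect the main obstacle to be the bookkeeping of orders and domains rather than any single estimate. Concretely, three things have to be reconciled. First, the order $r_K$ of test functions must be compatible with the order $\bar r_K$ of $H$. Second, the condition on $r$ in Theorem~\ref{Thm: Integration of Germs} must be matched against the hypothesis $r>-\alpha_K$. Third, enlarging $K$ must not leave $U$ when the kernel's range $\rho$ is used; this is why the statement is restricted to $U^{2\rho}$.
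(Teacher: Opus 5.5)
Your proposal is correct and follows the paper's proof essentially step by step. You isolate $H=F-\mathcal{R}^{\gamma}F$ via Remark~\ref{reconstructionishomogeneous1}, push it through Theorem~\ref{Thm: Integration of Germs} (with \eqref{kernelhom} driving the estimate), renormalise with Corollary~\ref{Cor: Postive Renormalization} on the convex set $U^{2\rho}$, and chain the bounds. Your fallback of using the two bullets of Theorem~\ref{Thm: Integration of Germs} separately is, if anything, more careful than the paper about the condition on $r$.

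The one slip is the claim $\gamma+\beta>0$, which is not assumed here (only $\gamma\neq 0$ is). For $\gamma+\beta<0$ the requirement $\gamma+\beta\notin\mathbb{N}_0$ holds anyway and the Taylor term is empty, so nothing breaks. The only case left uncovered is the borderline $\gamma+\beta=0$, which neither the statement nor the paper's own argument excludes.
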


\begin{proof}
  Given $F \in \mathcal{G}^{\tmmathbf{\alpha}_{\gamma}}_{\tmmathbf{r}} (U)$ as
  per hypothesis, $F -\mathcal{R}^{\gamma} F$ is both $\gamma$-homogeneous and
  $(\tmmathbf{\alpha}, \gamma)$-coherent of order $\bar{\tmmathbf{r}} =
  \lfloor -\tmmathbf{\alpha} \rfloor$, as shown in Remark
  \ref{reconstructionishomogeneous1}. By Theorem \ref{Thm: Integration of
  Germs}, $\mathsf{K} (F -\mathcal{R}^{\gamma} F) \in
  \check{\mathcal{G}}^{\gamma + \beta, (\tmmathbf{\alpha}+ \beta)_{\gamma +
  \beta}}_{\tmmathbf{r}} (U^{2 \rho})$. On account of Equation
  \eqref{maintheoremequationeq}, we have
  \[ \mathcal{K}^{\gamma, \beta} F - \mathsf{K} (\mathcal{R}^{\gamma} F) =
     \mathsf{K} (F -\mathcal{R}^{\gamma} F) - \sum_{0 \leqslant \lvert k
     \rvert \leqslant \lfloor \gamma + \beta \rfloor} D^k (\mathsf{K} \{F
     -\mathcal{R}^{\gamma} F\}) (x) \frac{\mathbb{X}^k}{k!} . \]
  Applying Corollary \ref{Cor: Postive Renormalization}, it follows that there
  exists $\tilde{\tmmathbf{\alpha}}$ with $\tilde{\alpha}_K \leqslant
  \min \{ 0, \gamma + \beta \}$ for any $K$, such that it holds that
  $\mathcal{K}^{\gamma, \beta} F - \mathsf{K} (\mathcal{R}^{\gamma} F) \in
  \mathcal{G}^{\gamma + \beta, \tilde{\tmmathbf{\alpha}}_{\gamma +
  \beta}}_{\bar{\tmmathbf{r}}} (U)$, with
  \begin{equation}
    \label{proofofmainresult1} \| \mathcal{K}^{\gamma, \beta} F - \mathsf{K}
    (\mathcal{R}^{\gamma} F)
    \|_{\mathcal{G}^{\tilde{\tmmathbf{\alpha}}_{\gamma +
    \beta}}_{\bar{\tmmathbf{r}}} (U^{2 \rho}), K} \lesssim_{K, \gamma + \beta,
    \bar{r}_K, d} \| \mathsf{K} (F -\mathcal{R}^{\gamma} F)
    \|_{\check{\mathsf{G}}^{\gamma + \beta}_{\tmmathbf{r}} (U^{2 \rho}), K'},
  \end{equation}
  and $K' \supset K$ is a convex compact set contained in $U$. By Equation
  \eqref{kernelhom} it descends that, for a suitable enlargement $\tilde{K}'$
  of $K'$
  \begin{equation}
    \begin{array}{lll}
      \| \mathsf{K} (F -\mathcal{R}^{\gamma} F) \|_{\check{\mathsf{G}}^{\gamma
      + \beta}_{\tmmathbf{r}} (U^{2 \rho}), K'} & \lesssim_{\rho,
      \bar{\lambda}} & \| F -\mathcal{R}^{\gamma} F
      \|_{\check{\mathsf{G}}^{\gamma}_{\tmmathbf{r}} (U), K'}\\
      & \lesssim_{\alpha_{\tilde{K}'}, r_{\tilde{K}'}, d} & \| F
      \|_{\mathcal{G}^{\tilde{\tmmathbf{\alpha}}_{\gamma}}_{\bar{\tmmathbf{r}}}
      (U^{2 \rho}), \tilde{K}'}
    \end{array} \label{proofofmainresult3}
  \end{equation}
  where we used Equation \eqref{reconstructionishomogeneouseq}. Equations
  \eqref{proofofmainresult1} and \eqref{proofofmainresult3} together give
  Equation \eqref{maintheoremequation2}.
\end{proof}

\begin{proposition}
  \label{Main Result 2}{\noindent}Under the assumptions and notations of
  Theorem \ref{Main Result 1}, denoting by $\bar{\tmmathbf{r}}' \assign
  (r_{K'})_{K' \subset U}$, and assuming $\mathsf{K}
  (\mathcal{R}^{\gamma} F) \in \mathcal{D}' (U^{2 \rho})$, the map
  \[ \mathcal{K}^{\gamma, \beta} :
     \mathcal{G}^{\tmmathbf{\alpha}_{\gamma}}_{\tmmathbf{r}} (U) \to
     \mathcal{G}^{\tilde{\tmmathbf{\alpha}}_{\gamma +
     \beta}}_{\bar{\tmmathbf{r}}'} (U^{2 \rho}), \qquad \qquad F \mapsto
     \mathcal{K}^{\gamma, \beta} F, \]
  is linear and continuous, \textit{i.e.}, 
  \begin{equation}
    \label{Eq: Main Result coherence} \| \mathcal{K}^{\gamma, \beta} F
    \|_{\mathcal{G}^{\tilde{\tmmathbf{\alpha}}_{\gamma +
    \beta}}_{\bar{\tmmathbf{r}}'} (U^{2 \rho}), K} \lesssim \| F
    \|_{\mathcal{G}^{\tmmathbf{\alpha}_{\gamma}}_{\tmmathbf{r}} (U),
    \tilde{K}'} .
  \end{equation}
  Furthermore
  \begin{equation}
    \label{maintheoremequation3} \mathcal{R}^{\gamma + \beta}
    (\mathcal{K}^{\gamma, \beta} F) = \mathsf{K} (\mathcal{R}^{\gamma} F),
  \end{equation}
  on $U^{2 \rho}$, i.e., $\mathsf{K} (\mathcal{R}^{\gamma} F)$ is a  $(\gamma +
  \beta)-$reconstruction of the germ of distributions $\mathcal{K}^{\gamma,
  \beta} F$.
\end{proposition}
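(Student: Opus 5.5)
The plan is to reduce everything to Theorem \ref{Main Result 1} by writing
\[
\mathcal{K}^{\gamma,\beta}F=\bigl(\mathcal{K}^{\gamma,\beta}F-\mathsf{K}(\mathcal{R}^{\gamma}F)\bigr)+\mathsf{K}(\mathcal{R}^{\gamma}F),
\]
where the second summand is read as a constant germ. The first step is linearity. Item 2 of Theorem \ref{Thm: Reconstruction Theorem} lets us choose $\mathcal{R}^{\gamma}$ linear, and $\gamma\neq 0$ is assumed (so uniqueness holds anyway when $\gamma>0$). The kernel action $F\mapsto \mathsf{K}F$ is linear. The pointwise derivatives $D^k(\cdot)(x)$ defining $\mathcal{T}^{\gamma+\beta}$ are linear as well, by Corollary \ref{Cor: Postive Renormalization}(ii). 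Hence $F\mapsto\mathcal{K}^{\gamma,\beta}F$ is linear.

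For coherence, note that a constant germ $F_x\equiv f$ satisfies $F_x-F_y=0$. So the difference seminorm in Equation \eqref{Eq: Coherence Seminorm} of $\mathsf{K}(\mathcal{R}^{\gamma}F)$ vanishes identically, for any $\tilde{\boldsymbol\alpha}$ and any order. Here we only need $\mathsf{K}(\mathcal{R}^\gamma F)\in\mathcal{D}'(U^{2\rho})$, which is assumed, so that the constant germ is well defined. Consequently $\bigl(\mathcal{K}^{\gamma,\beta}F\bigr)_x-\bigl(\mathcal{K}^{\gamma,\beta}F\bigr)_y$ coincides with the corresponding difference of the germ $\mathcal{K}^{\gamma,\beta}F-\mathsf{K}(\mathcal{R}^\gamma F)$. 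The coherence seminorm of $\mathcal{K}^{\gamma,\beta}F$ on $K$ therefore equals that of this difference germ. That seminorm is bounded by the homogeneity-plus-coherence norm in \eqref{maintheoremequation2}, which gives \eqref{Eq: Main Result coherence}.

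The only care needed concerns the order. Theorem \ref{Main Result 1} gives order $\bar r_K$, while the statement uses $\bar{\mathbf r}'=(r_{K'})$. Since $\mathcal{B}^{r}\supset\mathcal{B}^{r'}$ for $r\le r'$, coherence of order $r$ implies coherence of any larger order with a smaller seminorm. So I would check that $r_{K'}\ge\bar r_K$, or else simply restrict the test-function class, adjusting $\tilde K'$ if necessary.

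For the reconstruction identity \eqref{maintheoremequation3}, set $f:=\mathsf{K}(\mathcal{R}^\gamma F)$. Theorem \ref{Main Result 1} says $G:=\mathcal{K}^{\gamma,\beta}F-f$ is $(\gamma+\beta)$-homogeneous on $U^{2\rho}$, that is,
\[
|(\mathcal{K}^{\gamma,\beta}F_x-f)(\varphi^\lambda_x)|\lesssim\lambda^{\gamma+\beta},
\]
uniformly over $x\in K$, $\lambda\in(0,\bar\lambda]$ and $\varphi\in\mathcal{B}^{\bar r_K}$. This is precisely the reconstruction bound \eqref{Eq: Reconstruction Theorem} with $\gamma+\beta$ in place of $\gamma$. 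Since $\gamma+\beta>0$ (because $\beta>0$ and $\alpha_K\le\gamma$; if $\gamma<0$ I would additionally invoke $\gamma+\beta>0$ through the hypothesis $\alpha_K+\beta>0$ inherited from the main setting, or else accept $f$ as \emph{a} reconstruction), item 1 of Theorem \ref{Thm: Reconstruction Theorem} applies: the right-hand side tends to $0$ as $\lambda\to0^+$, and uniqueness yields $\mathcal{R}^{\gamma+\beta}(\mathcal{K}^{\gamma,\beta}F)=f$ on $U^{2\rho}$.

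The main obstacle is this last step. We must ensure that $\mathcal{K}^{\gamma,\beta}F$ is itself coherent, so that Theorem \ref{Thm: Reconstruction Theorem} is applicable, and that the test-function orders match between the homogeneity estimate and the order at which uniqueness is stated. When $\gamma+\beta\le 0$, the identity must be read in the non-unique sense of Corollary \ref{propertiesreconstruction}.
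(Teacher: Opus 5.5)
Your proposal is correct and follows the paper's proof. You write $\mathcal{K}^{\gamma,\beta}F$ as the difference germ controlled by Theorem~\ref{Main Result 1} plus the constant germ $\mathsf{K}(\mathcal{R}^{\gamma}F)$, whose coherence seminorm vanishes, and you read the $(\gamma+\beta)$-homogeneity of the difference as the reconstruction bound \eqref{Eq: Reconstruction Theorem} with $\gamma+\beta$ in place of $\gamma$.

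One slip: $\beta>0$ and $\alpha_K\le\gamma$ do not force $\gamma+\beta>0$, and $\alpha_K+\beta>0$ is not among the hypotheses of Theorem~\ref{Main Result 1}. The paper only uses $\gamma+\beta\neq0$, so in general $\mathsf{K}(\mathcal{R}^{\gamma}F)$ is \emph{a} $(\gamma+\beta)$-reconstruction, unique only when $\gamma+\beta>0$ — which is the reading you fall back on anyway.
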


\begin{proof}
  If $\mathsf{K} (\mathcal{R}^{\gamma} F) \in \mathcal{D}' (U^{2 \rho})$, by
  triangular inequality
  \begin{equation}
    \| \mathcal{K}^{\gamma, \beta} F
    \|_{\mathcal{G}^{\tilde{\tmmathbf{\alpha}}_{\gamma +
    \beta}}_{\bar{\tmmathbf{r}}'} (U^{2 \rho}), K} \leqslant \|
    \mathcal{K}^{\gamma, \beta} F - \mathsf{K} (\mathcal{R}^{\gamma} F)
    \|_{\mathcal{G}^{\tilde{\tmmathbf{\alpha}}_{\gamma +
    \beta}}_{\bar{\tmmathbf{r}}'} (U^{2 \rho}), K} + \| \mathsf{K}
    (\mathcal{R}^{\gamma} F)
    \|_{\mathcal{G}^{\tilde{\tmmathbf{\alpha}}_{\gamma +
    \beta}}_{\bar{\tmmathbf{r}}'} (U^{2 \rho}), K} \lesssim \| F
    \|_{\mathcal{G}^{\tmmathbf{\alpha}_{\gamma}}_{\tmmathbf{r}} (U),
    \tilde{K}'}, \label{proofofmainresult5}
  \end{equation}
  where in the last inequality we used Equation \eqref{maintheoremequation2}
  together with
  \[ \| \mathsf{K} (\mathcal{R}^{\gamma} F)
     \|_{\mathcal{G}^{\tilde{\tmmathbf{\alpha}}_{\gamma +
     \beta}}_{\bar{\tmmathbf{r}}'} (U^{2 \rho}), K} = 0, \]
  as the coherence norm of a constant germ is vanishing. Note that in the last
  step of Equation \eqref{proofofmainresult5}, the constant of the estimate
  depends on $\bar{r}_{\tilde{K}'}$ and not on $r_{\tilde{K}'}$ since we can
  always set $\bar{r} = \max \{- \alpha_K, - \gamma\} = - \alpha_K \geqslant
  r$. Equation \eqref{maintheoremequation3} follows from Equation
  \eqref{proofofmainresult1}. As a matter of fact, it is sufficient to compare
  Equations \eqref{Eq: Homogeneity} and \eqref{Eq: Reconstruction Theorem}
  noting that, by assumptions, $\gamma + \beta \neq 0$.
\end{proof}

{\noindent}At last we establish also a bound reading all the germs above as
being both homogeneous and coherent as per Definition \ref{Def: Coherent and
Homogeneous Germs}.

\begin{proposition}
  \label{Main Result 3}Under the same assumptions and notations of Theorem
  \ref{Main Result 1} and of Proposition \ref{Main Result 2}, the map
  \[ \mathcal{K}^{\gamma, \beta} : \mathcal{G}^{\tmmathbf{\bar{\alpha}},
     \tmmathbf{\alpha}_{\gamma}}_{\tmmathbf{r}} (U) \to
     \mathcal{G}^{(\tmmathbf{\bar{\alpha}} + \beta \wedge 0),
     \tilde{\tmmathbf{\alpha}}_{\gamma + \beta}}_{\bar{\tmmathbf{r}}} (U^{2
     \rho}), \]
  is linear and continuous, namely there exists a compact set $\tilde{K} \subset
  U$ such that the following bound holds true
  \begin{equation}
    \label{Eq: Homogeneous and coherent norm reconstruction} \|
    \mathcal{K}^{\gamma, \beta} F \|_{\mathcal{G}^{(\tmmathbf{\bar{\alpha}} +
    \beta \wedge 0), \tilde{\tmmathbf{\alpha}}_{\gamma +
    \beta}}_{\bar{\tmmathbf{r}}} (U^{2 \rho}), K} \lesssim \| F
    \|_{\mathcal{G}^{\tmmathbf{\bar{\alpha}},
    \tmmathbf{\alpha}_{\gamma}}_{\tmmathbf{r}} (U), \tilde{K}} .
  \end{equation}
\end{proposition}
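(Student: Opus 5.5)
The coherence part of the norm on the left of \eqref{Eq: Homogeneous and coherent norm reconstruction} is already controlled by Proposition \ref{Main Result 2}. Indeed, \eqref{Eq: Main Result coherence} bounds $\| \mathcal{K}^{\gamma, \beta} F \|_{\mathcal{G}^{\tilde{\tmmathbf{\alpha}}_{\gamma+\beta}}_{\bar{\tmmathbf{r}}} (U^{2\rho}), K}$ by $\| F \|_{\mathcal{G}^{\tmmathbf{\alpha}_{\gamma}}_{\tmmathbf{r}} (U), \tilde{K}'}$, which is trivially at most the full norm $\| F \|_{\mathcal{G}^{\tmmathbf{\bar{\alpha}}, \tmmathbf{\alpha}_{\gamma}}_{\tmmathbf{r}} (U), \tilde{K}'}$. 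So the plan reduces to proving the homogeneity bound
\[
\| \mathcal{K}^{\gamma, \beta} F \|_{\mathsf{G}^{(\tmmathbf{\bar{\alpha}}+\beta)\wedge 0}_{\bar{\tmmathbf{r}}} (U^{2\rho}), K} \lesssim \| F \|_{\mathcal{G}^{\tmmathbf{\bar{\alpha}}, \tmmathbf{\alpha}_{\gamma}}_{\tmmathbf{r}} (U), \tilde{K}}
\]
and then taking $\tilde{K}$ to be the larger of the compacts that appear. Linearity is immediate, since $\mathcal{R}^{\gamma}$ is linear by Theorem \ref{Thm: Reconstruction Theorem}, item 2, and $\mathsf{K}$, $D^k$ and the Taylor map are all linear.

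To get the homogeneity bound, I split $\mathcal{K}^{\gamma,\beta}F$ into two pieces:
\[
(\mathcal{K}^{\gamma,\beta}F)_x = \bigl(\mathcal{K}^{\gamma,\beta}F - \mathsf{K}(\mathcal{R}^{\gamma}F)\bigr)_x + \mathsf{K}(\mathcal{R}^{\gamma}F).
\]

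For the first piece, Theorem \ref{Main Result 1} shows it is $(\gamma+\beta)$-homogeneous, with norm bounded by \eqref{maintheoremequation2}. Since $\bar\alpha_K \le \gamma$, we have $\gamma+\beta \ge (\bar\alpha_K+\beta)\wedge 0$. Because $\lambda \le \bar\lambda$ is bounded, this gives $\lambda^{\gamma+\beta} \lesssim_{\bar\lambda} \lambda^{(\bar\alpha_K+\beta)\wedge0}$. So this piece is $((\bar{\tmmathbf{\alpha}}+\beta)\wedge 0)$-homogeneous with the right bound.

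For the second piece, the constant germ $\mathsf{K}(\mathcal{R}^{\gamma}F)$, I first use Remark \ref{reconstructionishomogeneous1}, eq. \eqref{reconstructionishomogeneouseq1}. It says the constant germ $\mathcal{R}^{\gamma}F$ is $\bar{\tmmathbf{\alpha}}$-homogeneous, with norm bounded by the full norm of $F$ on an enlargement. I then apply the first item of Theorem \ref{Thm: Integration of Germs}, whose hypothesis $m > \bar\alpha_K+\beta$ holds because $m > \gamma+\beta$. This gives that $\mathsf{K}(\mathcal{R}^{\gamma}F)$ is weakly $(\bar{\tmmathbf{\alpha}}+\beta)$-homogeneous on $U^{2\rho}$, with a bound by $\|F\|_{\mathcal{G}^{\tmmathbf{\bar{\alpha}},\tmmathbf{\alpha}_\gamma}_{\tmmathbf{r}}(U),\tilde K}$.

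The main obstacle is upgrading weak homogeneity to genuine homogeneity of exponent $(\bar\alpha_K+\beta)\wedge 0$. I split into two cases.

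If $\bar\alpha_K+\beta<0$, Remark \ref{weakarestrong} makes weak and strong homogeneity coincide, and we are done.

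If $\bar\alpha_K+\beta>0$ (the value $0$ is excluded by hypothesis), the target exponent is $0$. I need $|\mathsf{K}(\mathcal{R}^\gamma F)(\varphi^\lambda_x)| \lesssim 1$ uniformly over all $\varphi\in\mathcal{B}^{\bar r_K}$, not only over those with vanishing moments. I would get this by writing $\varphi^\lambda_x$ as $\psi^{D}_x$ plus a sum over dyadic scales $2^{-n}$, between $\lambda$ and the fixed scale $D$, of test functions with vanishing moments up to order $\lfloor \bar\alpha_K+\beta\rfloor$. This is a standard annulus / Taylor-type decomposition; the $\psi$-part is handled by the first bound in \eqref{Eq: Weakly Homogeneous}. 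Each vanishing-moment term contributes $\lesssim 2^{-n(\bar\alpha_K+\beta)}$, and this series is summable precisely because $\bar\alpha_K+\beta>0$.

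An alternative for this case is to invoke the first bullet of Theorem \ref{Thm: Positive Renormalization}: the pointwise values $D^k\mathsf{K}(\mathcal{R}^\gamma F)(x)$ for $|k|<\bar\alpha_K+\beta$ are bounded. One can then check that the Taylor part tested against $\varphi^\lambda_x$ is $O(1)$, while the remainder is $O(\lambda^{\bar\alpha_K+\beta}) = O(1)$.

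Collecting the two pieces with the triangle inequality, and taking $\tilde K$ to contain both $\tilde K'$ and the enlargement of Remark \ref{reconstructionishomogeneous1}, yields \eqref{Eq: Homogeneous and coherent norm reconstruction}.
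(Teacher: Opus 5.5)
Your proposal is correct and follows the paper's proof. The ingredients are the same: coherence from Proposition~\ref{Main Result 2}, the same splitting $\mathcal{K}^{\gamma,\beta}F=(\mathcal{K}^{\gamma,\beta}F-\mathsf{K}(\mathcal{R}^{\gamma}F))+\mathsf{K}(\mathcal{R}^{\gamma}F)$, weak $(\bar{\tmmathbf{\alpha}}+\beta)$-homogeneity of $\mathsf{K}(\mathcal{R}^{\gamma}F)$ from \eqref{reconstructionishomogeneouseq1} and \eqref{kernelhom}, and Remark~\ref{weakarestrong} when $\bar\alpha_K+\beta<0$.

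For $\bar\alpha_K+\beta>0$ the paper uses exactly your ``alternative'': it subtracts the Taylor polynomial via Corollary~\ref{Cor: Postive Renormalization} and adds it back using \eqref{pointwisederivativesbound1}. Your primary dyadic vanishing-moment decomposition is more self-contained and, unlike that route, does not need $\bar\alpha_K+\beta\notin\mathbb{N}$. Note that $\bar\alpha_K+\beta\neq 0$ is not literally among the open-set hypotheses (only the manifold version states it), although it is genuinely needed.
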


\begin{proof}
  Under the above assumptions, Equation \eqref{reconstructionishomogeneouseq1}
  entails that, given $F \in \mathcal{G}^{\tmmathbf{\bar{\alpha}},
  \tmmathbf{\alpha}_{\gamma}}_{\tmmathbf{r}} (U)$,
  \begin{equation}
    \label{reconstructionishomogeneouseq11} \| \mathcal{R}^{\gamma} F
    \|_{\mathsf{G}^{\tmmathbf{\bar{\alpha}}}_{\bar{\tmmathbf{r}}} (U), K}
    \lesssim \| F \|_{\mathcal{G}^{\tmmathbf{\alpha}_{\gamma}}_{\tmmathbf{r}}
    (U), \tilde{K}} + \| F
    \|_{\mathsf{G}^{\tmmathbf{\bar{\alpha}}}_{\tmmathbf{r}} (U), K} \leqslant
    \| F \|_{\mathcal{G}^{\tmmathbf{\bar{\alpha}},
    \tmmathbf{\alpha}_{\gamma}}_{\tmmathbf{r}} (U), \tilde{K}},
  \end{equation}
  while Equation~\eqref{kernelhom} gives
  \begin{equation}
    \label{ciaociao} \| \mathsf{K} (\mathcal{R}^{\gamma} F)
    \|_{\check{\mathsf{G}}^{\tmmathbf{\bar{\alpha}} +
    \beta}_{\bar{\tmmathbf{r}}} (U^{2 \rho}), K} \lesssim \|
    \mathcal{R}^{\gamma} F
    \|_{\mathsf{G}^{\tmmathbf{\bar{\alpha}}}_{\bar{\tmmathbf{r}}} (U), K} .
  \end{equation}
  Merging these two estimates, we obtain
  \begin{equation}
    \label{ciaociaociao} \| \mathsf{K} (\mathcal{R}^{\gamma} F)
    \|_{\check{\mathsf{G}}^{\tmmathbf{\bar{\alpha}} +
    \beta}_{\bar{\tmmathbf{r}}} (U^{2 \rho}), K} \lesssim \| F
    \|_{\mathcal{G}^{\tmmathbf{\bar{\alpha}},
    \tmmathbf{\alpha}_{\gamma}}_{\tmmathbf{r}} (U), \tilde{K}} .
  \end{equation}
  Notice once again that the germ $\mathcal{K}^{\gamma, \beta} F =
  (\mathcal{K}^{\gamma, \beta} F - \mathsf{K} (\mathcal{R}^{\gamma} F)) +
  \mathsf{K} (\mathcal{R}^{\gamma} F)$ is the sum of an
  $\bar{\tmmathbf{\alpha}}$-homogeneous germ of order $\bar{\tmmathbf{r}}$,
  and a weakly $\bar{\tmmathbf{\alpha}}$-homogeneous germ of order
  $\bar{\tmmathbf{r}}$. As a consequence of Remark \ref{weakarestrong},
  $\mathsf{K} (\mathcal{R}^{\gamma} F)$ is $(\bar{\alpha}_K +
  \beta)$-homogeneous on $K$ provided $\bar{\alpha}_K + \beta < 0$. If instead
  $\bar{\alpha}_K + \beta \geqslant 0$, since $\mathsf{K}
  (\mathcal{R}^{\gamma} F)$ is a constant germ, it is weakly $(\alpha,
  \gamma)$-coherent, for any $\alpha, \gamma \in \mathbb{R}$. Thus we can
  apply Corollary \ref{Cor: Postive Renormalization} to $\mathsf{K}
  (\mathcal{R}^{\gamma} F)$, obtaining that
  \begin{equation}
    \label{Gschauder} G \assign \mathsf{K} (\mathcal{R}^{\gamma} F)
    -\mathcal{T}^{\bar{\tmmathbf{\alpha}}_K + \beta} \mathsf{K}
    (\mathcal{R}^{\gamma} F),
  \end{equation}
  is $(\bar{\tmmathbf{\alpha}} + \beta)-$homogeneous, of order $\tmmathbf{r}$.
  The Taylor polynomials $\mathcal{T}^{\bar{\tmmathbf{\alpha}}_K + \beta}
  \mathsf{K} (\mathcal{R}^{\gamma} F)$ are homogeneous of degree $0$, for any
  order $r \in \mathbb{N}_0$. Additionally, Equation
  \eqref{pointwisederivativesbound1} entails that, for every compact set $K
  \subset U^{2 \rho}$,
  \begin{equation}
    \sup_{x \in K} \lvert D^k \mathsf{K} (\mathcal{R}^{\gamma} F) (x) \rvert
    \lesssim \| \mathsf{K} (\mathcal{R}^{\gamma} F)
    \|_{\check{\mathsf{G}}^{\tmmathbf{\bar{\alpha}}}_{\bar{\tmmathbf{r}}} (U),
    K}, \label{pointwisederivativesbound1ciao}
  \end{equation}
  for $0 \leqslant |k| < \bar{\alpha}_K$. Therefore, by adding the Taylor
  polynomial $\mathcal{T}^{\alpha_K + \beta} \mathsf{K} (\mathcal{R}^{\gamma}
  F)$ to $G$, we obtain that, on the compact set $K$, $\mathsf{K}
  (\mathcal{R}^{\gamma} F)$ is $0$-homogeneous of order $\bar{r}_K$. Bearing
  in mind Equation \eqref{Gschauder}, we end up with
  \begin{equation}
    \begin{array}{lll}
      \| \mathcal{K}^{\gamma, \beta} F \|_{\mathsf{G}^0_{\bar{\tmmathbf{r}}}
      (U^{2 \rho}), K} \leqslant \| G \|_{\mathsf{G}^0_{\bar{\tmmathbf{r}}}
      (U^{2 \rho}), K} + \| \mathcal{T}^{\alpha + \beta} \mathsf{K}
      (\mathcal{R}^{\gamma} F) \|_{\mathsf{G}^0_{\bar{\tmmathbf{r}}} (U^{2
      \rho}), K} & \lesssim & \| \mathsf{K} (\mathcal{R}^{\gamma} F)
      \|_{\check{\mathsf{G}}^{\tmmathbf{\bar{\alpha}}}_{\bar{\tmmathbf{r}}}
      (U^{2 \rho}), K}\\
      & \lesssim & \| \mathcal{R}^{\gamma} F
      \|_{\check{\mathsf{G}}^{\tmmathbf{\bar{\alpha}}}_{\bar{\tmmathbf{r}}}
      (U), K} \lesssim \| F \|_{\mathcal{G}^{\tmmathbf{\bar{\alpha}},
      \tmmathbf{\alpha}_{\gamma}}_{\tmmathbf{r}} (U), \tilde{K}'},
    \end{array} \label{proofofmainresult6}
  \end{equation}
  where $K'$ is the smallest convex compact set containing $K$, while
  $\tilde{K}' \assign K_{3 D_{K'}^{U^{2 \rho}} / 4}$. Here in the second
  inequality we exploited Equations \eqref{Eq: Postive Renormalization
  Estimate 1} while in the third one we used~\eqref{ciaociaociao}.
  Finally, in the last one we used Equation
  \eqref{reconstructionishomogeneouseq}. To conclude, putting together
  Equations \eqref{proofofmainresult5} and \eqref{proofofmainresult6}, and
  using the triangle inequality, it descends that
  \[ \mathcal{K}F \in \mathcal{G}^{\bar{\tmmathbf{\alpha}} + \beta \wedge 0,
     \tilde{\tmmathbf{\alpha}}_{\gamma}}_{U^{2 \rho}, \bar{\tmmathbf{r}}}, \]
  where $\bar{r}_K \assign - \alpha_K$, and $\tilde{\alpha}_K$ is as per
  Equation \eqref{alphak}. Once more the triangle inequality entails that
  \[ \begin{array}{lll}
       \| \mathcal{K}^{\gamma, \beta} F
       \|_{\mathcal{G}^{(\tmmathbf{\bar{\alpha}} + \beta \wedge 0),
       \tilde{\tmmathbf{\alpha}}_{\gamma + \beta}}_{\bar{\tmmathbf{r}}} (U^{2
       \rho}), K} & \leqslant & \| \mathcal{K}^{\gamma, \beta} F - \mathsf{K}
       (\mathcal{R}^{\gamma} F) \|_{\mathcal{G}^{(\tmmathbf{\bar{\alpha}} +
       \beta \wedge 0), \tilde{\tmmathbf{\alpha}}_{\gamma +
       \beta}}_{\bar{\tmmathbf{r}}} (U^{2 \rho}), K} + \| \mathsf{K}
       (\mathcal{R}^{\gamma} F) \|_{\mathcal{G}^{(\tmmathbf{\bar{\alpha}} +
       \beta \wedge 0), \tilde{\tmmathbf{\alpha}}_{\gamma +
       \beta}}_{\bar{\tmmathbf{r}}} (U^{2 \rho}), K}\\
       & \lesssim & \| F \|_{\mathcal{G}^{\tmmathbf{\bar{\alpha}},
       \tmmathbf{\alpha}_{\gamma}}_{\tmmathbf{r}} (U), \tilde{K}'},
     \end{array} \]
  where in the last line we exploited implicitly Equations
  \eqref{proofofmainresult5} and \eqref{proofofmainresult6}.
\end{proof}

To conclude the section, we highlight that all the results obtained can be
codified concisely in the following two commutative diagrams:
\[ \begin{array}{ccc}
     \mathcal{G}^{\tmmathbf{\alpha}_{\gamma}}_{\tmmathbf{r}} (U) &
     \xrightarrow{\mathcal{K}^{\gamma, \beta}} &
     \mathcal{G}^{\tilde{\tmmathbf{\alpha}}_{\gamma +
     \beta}}_{\bar{\tmmathbf{r}}} (U^{2 \rho})\\
     \longdownarrow \mathcal{R}^{\gamma} &  & \longdownarrow
     \mathcal{R}^{\gamma + \beta}\\
     \mathcal{D}' (U) & \xrightarrow{\Kappa} & \mathcal{D}' (U^{2 \rho})
   \end{array} \qquad \qquad \begin{array}{ccc}
     \mathcal{G}^{\tmmathbf{\bar{\alpha}},
     \tmmathbf{\alpha}_{\gamma}}_{\tmmathbf{r}} (U) &
     \xrightarrow{\mathcal{K}^{\gamma, \beta}} &
     \mathcal{G}^{(\tmmathbf{\bar{\alpha}} + \beta \wedge 0),
     \tilde{\tmmathbf{\alpha}}_{\gamma + \beta}}_{\bar{\tmmathbf{r}}} (U^{2
     \rho})\\
     \longdownarrow \mathcal{R}^{\gamma} &  & \longdownarrow
     \mathcal{R}^{\gamma + \beta}\\
     \mathcal{Z}^{\gamma} (U) & \xrightarrow{\Kappa} & \mathcal{Z}^{\gamma +
     \beta} (U^{2 \rho})
   \end{array} \]
   
\section{Germs of distributions on Riemannian
manifolds}\label{Sec:Germs-of-distributions-manifold}

Aim of this section is to extend the theory of germs of distributions to Riemannian smooth
manifolds, see Section \ref{Introduction} for a succinct overview of the key structures and notations. Since the definition of germs and the properties of coherence and
homogeneity on general smooth manifolds have already been discussed in
{\cite{RS21}}, henceforth we shall focus mainly on Schauder estimates on
Riemannian manifolds and on their interplay with the reconstruction theorem. We advise the reader to consult Appendix \ref{appendixB} and \ref{AppendixC}, as we will use extensively the results contained therein. 

As a preliminary step to achieve this goal, we introduce the notion of $\beta$-regularizing
kernel, generalizing the one in Definition~\ref{Def: Regularizing
Kernel}. In particular, in Proposition~\ref{Prop:kernel-and-exp} we prove that
the following definition is compatible with Definition~\ref{Def: Regularizing
Kernel} through the exponential map.

\begin{definition}[$\beta$-regularising kernel on Riemannian manifolds]
  \label{regularisingkernelonmanifold} Let $M$ be a Riemannian manifold, see Section \ref{Introduction}. Let $\beta > 0$, $m, r \in \mathbb{N}_0$
  and consider an exhaustion by compact sets of $M$, $\{\Omega_n\}_{n \in
  \mathbb{N}}$, as per Definition \ref{Def: Compact Exhaustion}. A function
  $\mathsf{K} : M \times M \to \mathbb{R}$ is called \textbf{$\beta$-regularising
  kernel subordinated to $\{\Omega_n\}_{n \in \mathbb{N}}$ of order $(m, r)$} if it abides by the following properties: 
  \begin{enumerate}
    \item The following decomposition holds true:
    \begin{equation}
      \label{kerneldecompositiononmanifold} \mathsf{K} (p, q) = \sum_{N =
      0}^{+ \infty} \mathsf{K}_n (p, q)  \hspace{0.27em} \hspace{0.27em}
      \hspace{0.27em} \hspace{0.27em} for \hspace{0.27em} a.e. \hspace{0.17em}
      \hspace{0.27em} p, q \in M,
    \end{equation}
    where $\mathsf{K}_n \in \mathcal{C}^{m, r}  (M \times M)$, \textit{i.e.}, it is
    $m$-times differentiable with respect to the first variable and $r$-times
    with respect to the second.
    
    \item For all $n \in \mathbb{N}_0$, it holds
    \begin{equation}
      \label{suppkernelmanifold} \tmop{supp} (\mathsf{K}_n) \subseteq \{(p, q)
      : p \in \Omega_n \setminus \Omega_{n - 1}, \hspace{0.27em} d_g (p, q)
      \leqslant 2^{- n} \rho (p)\},
    \end{equation}
    where $n$ is the smallest natural number such that $p \in \Omega_n$, while
    $d_g$ is the metric induced distance. In addition $\rho (p)$ is defined by
    \begin{equation}
      \label{rhon} \rho (p) = \rho_n, \hspace{0.27em} \hspace{0.27em}
      \hspace{0.27em} \text{if} \hspace{0.27em} \hspace{0.27em}
      \hspace{0.27em} p \in \Omega_n \setminus \Omega_{n - 1}, \qquad
      \tmop{such} \tmop{that} \qquad \rho_n \leqslant \frac{\mathcal{R}_C
      (\Omega_n)}{4}, \qquad \forall n,
    \end{equation}
    where $\mathcal{R}_C(\Omega_n)$ denotes the convexity radius as per
    Definition \ref{Def: Convexity Radius}.
    
    \item Consider a local chart $(U, \phi)$ of $M$. Denoting the push-forward
    by $\mathsf{K}^{\phi} \assign \phi_{\ast} \mathsf{K}$, for any compact set
    $K \subseteq \phi (U)$ there exists a constant $c^{\phi}_K > 0$ for which,
    given $k, l \in \mathbb{N}_0^d$ with $\lvert k \rvert \leqslant m$,
    $\lvert l \rvert \leqslant r$ we have, for $\phi (p), \phi (q)
    \hspace{0.17em} \in K$,
    \begin{equation}
      \label{kernelbound1onmanifold} | \partial_1^k \partial_2^l
      \mathsf{K}^{\phi}_n (\phi (p), \phi (q)) | \leqslant c^{\phi}_K 2^{(d -
      \beta + \lvert l \rvert + |k|) n},
    \end{equation}
    where $\partial_1^k$ denotes the partial derivative with respect to the
    first variable.
    
    \item Considering a coordinate chart $(U, \phi)$ as well a compact set $K
    \subset U$, given $k, l \in \mathbb{N}^d_0$ with $|k|, |l| \leqslant r$,
    and calling $x \assign \phi (p), y \assign \phi (q) \in K$, we have
    \begin{equation}
      \label{kernelbound2onmanifold} \left| \int_{\phi (U)} (x - y)^l
      \partial_2^k \mathsf{K}^{\phi}_n (x, y) \mathd x \right| \leqslant
      c^{\phi}_K 2^{- \beta n}.
    \end{equation}
  \end{enumerate}
\end{definition}

\begin{remark}
  We highlight that Definition \ref{regularisingkernelonmanifold} is slightly
  different from others which are used in the literature, \tmtextit{e.g.},
  {\cite[Sec. 7]{HS23}}. In particular, therein it is required that, for
  all $p \in M$ and, for all $n \in \mathbb{N}$ such that $\min_{d_g (p, q) \leqslant 1}
  \tmop{inj}_q \leqslant 2^{- n}$, it holds that
  \[ \mathsf{K}_n (0, \cdot) = \mathsf{K}_n (\cdummy, 0) = 0, \]
  where $\tmop{inj}_q$ is the injectivity radius at $q \in M$. In contrast, here
  we do not require any specific behavior of $\mathsf{K}_n$ near the singular
  points.
\end{remark}

\begin{proposition}
\label{Prop:kernel-and-exp}
In the aforementioned setting, the following two results hold true:
  \begin{enumerate}
    \item Let $\mathsf{K}$ be a regularising kernel as per Definition
    \ref{regularisingkernelonmanifold} and consider a compact exhaustion of
    $M$, $\{\Omega_n\}_{n \in \mathbb{N}}$ as per Definition \ref{Def: Compact Exhaustion}. For any geodesically convex set $U
    \subset \Omega_n$, with associated exponential map $\exp_p$ centered at $p \in U$, we have that $\exp^{\ast}_p \mathsf{K}$ is a regularising
    kernel of order $(m, r)$ and range $\rho_n /\mathfrak{C}_n$ on $\exp_p
    (U)$ as per Definition~\ref{Def: Regularizing Kernel}. Here
    $\mathfrak{C}_n$ is the constant in Equation
    \eqref{topologicalequivalencenormalcoordinateseq} relative to the compact
    set $K_n$, which is the $\frac{\mathcal{R}_C (\Omega_n)}{2}$ enlargement
    of $\bar{\Omega}_n$. Furthermore, $\rho_n$ is as per Equation \eqref{rhon}.
    
    \item Let $U \subset M$ be a geodesically convex set such that its closure
    $\bar{U}$ is still a geodesically convex set, and consider two local
    charts $(U, \exp_p^{- 1}), (U, \exp_q^{- 1})$ of $M$. If $\exp_{p
    \hspace{0.17em} \ast}^{- 1} \mathsf{K}$ abides by the properties $2, 3$ of Definition~\ref{regularisingkernelonmanifold}, then also $\exp_{q
    \hspace{0.17em} \ast}^{- 1} \mathsf{K}$ does, with $c_K^{\exp_{q \hspace{0.17em} \ast}^{- 1}} = \tmop{cst} c^{exp_{p
    \hspace{0.17em} \ast}^{- 1}}_K$ where $\tmop{cst} > 0$ is a constant
    depending only on $U$.
  \end{enumerate}
\end{proposition}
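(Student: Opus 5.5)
Both items come down to a change of variables through the exponential map. The geometric input is the bi-Lipschitz comparison of Proposition~\ref{topologicalequivalencenormalcoordinates} in Appendix~\ref{appendixB}. In normal coordinates centred at a point of a compact set, Euclidean and Riemannian distances are comparable up to the constant $\mathfrak{C}_n$. The metric coefficients $g_{ij}$, the Jacobian $\sqrt{\det g}$ and their derivatives are bounded above and below on the compact enlargement $K_n$.

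\textbf{Item 1.} Set $\mathsf{K}^{p}_n := \exp_p^{*}\mathsf{K}_n$, so that $\mathsf{K}^p_n(x,y)=\mathsf{K}_n(\exp_p x,\exp_p y)$. The decomposition $\mathsf{K}^p=\sum_n \mathsf{K}^p_n$ and the regularity $\mathsf{K}^p_n\in C^{m,r}$ hold because $\exp_p$ is a smooth diffeomorphism on the ball of radius $\mathcal{R}_C(\Omega_n)$. For the support condition, \eqref{suppkernelmanifold} gives $d_g(\exp_p x,\exp_p y)\le 2^{-n}\rho_n$. The comparison then yields $|x-y|\le \mathfrak{C}_n d_g \le 2^{-n}\rho_n\mathfrak{C}_n$; I will fix the convention for $\mathfrak{C}_n$ so that the range comes out as the stated $\rho_n/\mathfrak{C}_n$. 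The condition $\rho_n\le \mathcal{R}_C(\Omega_n)/4$ guarantees that every pair in the support lies inside $K_n$, where the exponential chart is valid.

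Properties 2 and 3 of Definition~\ref{Def: Regularizing Kernel} are essentially \eqref{kernelbound1onmanifold} and \eqref{kernelbound2onmanifold} applied with the chart $\phi=\exp_p^{-1}$. The one point to check is the measure: the moment condition of Definition~\ref{Def: Regularizing Kernel} integrates against Lebesgue measure $\mathd x$, while a manifold integral would carry $\sqrt{\det g}$. If \eqref{kernelbound2onmanifold} is already written in the chart with $\mathd x$, nothing needs doing. Otherwise I expand $\sqrt{\det g}(x)$ to finite order around $y$. Each extra factor $(x-y)^{j}$ turns an $\ell$-moment into an $(\ell+j)$-moment, which is still bounded by $c\,2^{-\beta n}$, and the remainder is controlled by the size bound times the volume $2^{-dn}$ of the support. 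This produces extra powers $2^{-|j| n}$, which only improve the estimate.

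\textbf{Item 2.} The transition map $\Phi:=\exp_q^{-1}\circ\exp_p$ is a diffeomorphism between the two normal-coordinate images. Since $\bar U$ is compact and geodesically convex, $\Phi$ and $\Phi^{-1}$ have all derivatives up to order $\max(m,r)$ bounded, with constants depending only on $U$. We have $\exp^{-1}_{q*}\mathsf{K}_n=\exp^{-1}_{p*}\mathsf{K}_n\circ(\Phi^{-1}\times\Phi^{-1})$.

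For the derivative bound \eqref{kernelbound1onmanifold}, I apply the Faà di Bruno formula. A derivative $\partial_1^k\partial_2^l$ in the new coordinates is a combination of derivatives $\partial_1^{k'}\partial_2^{l'}$ with $|k'|\le|k|$ and $|l'|\le|l|$. The coefficients are bounded by derivatives of $\Phi^{-1}$. Each term is at most $c_K 2^{(d-\beta+|k'|+|l'|)n}\le c_K 2^{(d-\beta+|k|+|l|)n}$, which gives the claim with $\mathrm{cst}$ depending on $U$.

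The support property transfers for a simpler reason. $d_g$ is intrinsic, so the support condition does not depend on the chart at all; likewise the compact $K$ is pulled back to the compact $\Phi(K)$.

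\textbf{Main obstacle.} The hardest step is the moment condition \eqref{kernelbound2onmanifold} in item 1 (and its chart-invariance, if that is needed). A change of variables distorts the monomial $(x-y)^l$ into $(\Phi^{-1}(x)-\Phi^{-1}(y))^l$ and introduces the Jacobian. I would Taylor expand both around $y$ up to order $r$. This writes them as polynomials in $(x-y)$ with coefficients bounded on $\bar U$, plus a remainder of order $|x-y|^{r+1}$. The polynomial part is handled by the moment bound for the original kernel. On the support $|x-y|\lesssim 2^{-n}$, so the remainder contributes at most $2^{(d-\beta+|k|)n}\cdot 2^{-dn}\cdot 2^{-(r+1)n}\lesssim 2^{-\beta n}$, using $|k|\le r$.
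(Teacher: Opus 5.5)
Your proposal is correct and follows the same overall strategy as the paper. In item 1 the support condition comes from \eqref{topologicalequivalencenormalcoordinateseq}, and properties 2--3 of Definition~\ref{Def: Regularizing Kernel} are read off from \eqref{kernelbound1onmanifold}--\eqref{kernelbound2onmanifold} in the chart $\exp_p^{-1}$. Note that \eqref{kernelbound2onmanifold} is already written with Lebesgue measure $\mathd x$ in the chart, so your $\sqrt{\det g}$ branch is unnecessary. In item 2 everything is transported through $\Phi=\exp_q^{-1}\circ\exp_p$.

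The differences are in execution, and they favour you.
\begin{itemize}
\item For the derivative bound, the paper writes a single product $\|\psi\circ\phi^{-1}\|_{C^{m+r}}\,|\partial_1^k\partial_2^l\phi_\ast\mathsf{K}_n|$. Your Fa\`a di Bruno expansion correctly gives a sum over lower orders $|k'|\le|k|$, $|l'|\le|l|$, each absorbed via $2^{(|k'|+|l'|)n}\le 2^{(|k|+|l|)n}$.
\item For the moment condition (which the paper's proof also transfers, although item 2 only lists properties 2 and 3), the paper's argument is not valid as written. It replaces $(\psi(p)-\psi(q))^l$ by $(\phi(p)-\phi(q))^l$, identifies $\partial_2^k$ in the two charts, and pulls $|J_{\psi\circ\phi^{-1}}|$ out of a signed integral by its sup norm.
\item Your route is what actually makes this transfer work. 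The chain-rule coefficients from $\partial_2^k$ depend only on $y$ and leave the $x$-integral. You then Taylor-expand the distorted monomial and the Jacobian to order $r$. The polynomial part is controlled by the moment bound, and the remainder by $2^{(d-\beta+|k|)n}\cdot 2^{-dn}\cdot 2^{-(r+1)n}\le 2^{-\beta n}$.
\item Keeping the factor $2^{-n}$ in the support is also right; the paper drops it. With the convention of \eqref{topologicalequivalencenormalcoordinateseq} the range really comes out as $\mathfrak{C}_n\rho_n$, so $\rho_n/\mathfrak{C}_n$ is only a matter of normalising the constant, as you say.
\end{itemize}

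One ingredient is missing in item 2. Compactness of $\bar U$ bounds the derivatives of $\Phi^{\pm 1}$ only for a \emph{fixed} pair of centres $p,q$. The statement, however, asks for $\mathrm{cst}$ depending only on $U$, and this uniformity in the base points is exactly what Part 1 of the proof of Theorem~\ref{maintheoremonmanifolds} uses. The paper gets it from the joint smoothness of $(q,s)\mapsto\exp_q^{-1}\circ\exp_p(s)$ in Lemma~\ref{Lem:smoothness of diffeomorphism}, then takes the supremum of the $C^{m+r}$ norms over $p,q\in\bar U$. You should add the same step.
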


\begin{proof}
  In order to prove the first statement it suffices to show that Equation
  \eqref{kernelbound1onmanifold} implies item $\textit{1}$ of Definition
  \ref{Def: Regularizing Kernel}, since item $\textit{1}$, $\textit{3}$ and
  $\textit{4}$ of Definition \ref{regularisingkernelonmanifold} automatically
  imply items $\textit{2}$ and $\textit{3}$ of Definition \ref{Def:
  Regularizing Kernel}. By Proposition
  \ref{topologicalequivalencenormalcoordinates}, it follows that, if $d_g (q,
  q') \leqslant \rho_n$, then $\lvert \exp_p (q) - \exp_p (q') \rvert \leq
  \rho_n /\mathfrak{C}_n$. As a consequence
  \begin{equation}
    \tmop{supp} (\exp_p^{\ast} \mathsf{K}) \subseteq \left\{ (x, y) \in
    \exp_p^{- 1} (U) \times \exp_p^{- 1} (U) : \lvert x - y \rvert \leq
    \frac{\rho_n}{\mathfrak{C}_n} \right\} .
  \end{equation}

  To prove the second statement we denote by $\phi \doteq \exp^{- 1}_p, \psi
  \doteq \exp^{- 1}_q$. Given a compact set $K \subset \phi (U)$, we \ exploit
  Equation (\ref{kernelbound1onmanifold}) applied to $\psi_{\ast}
  \mathsf{K}_n$ to get
  \begin{equation}
    \begin{array}{lll}
      \lvert \partial_1^k \partial_2^l (\psi_{\ast} \mathsf{K}_n (\psi (p),
      \psi (q))) \rvert & = & | \partial_1^k \partial_2^l (\psi \circ \phi^{-
      1})_{\ast} \phi_{\ast} \mathsf{K}_n (\psi (p), \psi (q)) |\\
      & \leqslant & \| \psi \circ \phi^{- 1} \|_{C^{m + r} (\phi (\bar{U}))} 
      \lvert \partial_1^k \partial_2^l (\phi^{\ast} \mathsf{K}_n (\phi (p),
      \phi (q))) \rvert\\
      & \leqslant & \| \psi \circ \phi^{- 1} \|_{C^{m + r} (\phi (\bar{U}))}
      c^{\phi}_K 2^{(d - \beta + \lvert l \rvert + |k|) n} .
    \end{array} \label{variables}
  \end{equation}
  We stress that in the first line of Equation \eqref{variables},
  $\partial_1^k, \partial_2^l$ are taken with respect to the variables in $x,
  y \in \psi (U)$, while in the second line the partial derivatives are taken
  with respect to the variables $x', y' \in \phi(U)$. Proposition~\ref{Lem:smoothness of
  diffeomorphism} entails that the function $\psi \circ \phi^{- 1} = \exp_q
  \circ \exp_p^{- 1}$ is a smooth function with respect to the base points $p$
  and $q$. Therefore $\lvert \lvert \exp_q \circ \exp_p^{- 1} \rvert
  \rvert_{C^{m + r} (\phi (\bar{U}))}$ is a continuous function in $p$ and
  $q$. As a consequence, $\sup_{p, q \in \bar{U}} \| \exp_q \circ \exp_p^{- 1}
  \|_{C^{m + r} (\phi (\bar{U}))} < + \infty$ and it depends only on the
  compact set $\bar{U}$. Equation (\ref{kernelbound2onmanifold}) follows now
  by exploiting a similar line of reasoning. Let $n \in \mathbb{N}$ be the
  smallest number such that $U \subset \Omega_n$. Then,
  \[ \begin{array}{lll}
       \left| \int_{\psi (U)} (\psi (p) - \psi (q))^l \partial_2^k
       (\psi_{\ast} \mathsf{K}_n (\psi (p), \psi (q))) \mathd x \right| & = &
       \left| \int_{\phi (U)} (\phi (p) - \phi (q))^l \partial_2^k
       (\phi_{\ast} \mathsf{K}_n (\phi (p), \phi (q))) \lvert J_{(\psi \circ
       \phi^{- 1})} \rvert \mathd x \right|\\
       & \leqslant & \| J_{(\psi \circ \phi^{- 1})} \|_{C^0 (\phi (\bar{U}))}
       c^{\phi}_K 2^{- \beta n},
     \end{array} \]
  where $J_{(\psi \circ \phi^{- 1})}$ is the Jacobian associated to $\psi
  \circ \phi^{- 1}$. By Proposition \ref{Lem:smoothness of diffeomorphism} we
  have $\lvert \lvert J_{(\psi \circ \phi^{- 1})} \rvert \rvert_{C^0
  (\phi (\bar{U}))} \leqslant \tmop{cst}_U$, where $\tmop{cst}_U$ is a
  positive constant depending only on $U$. Therefore, we obtain $c_K^{\psi} =
  \tmop{cst} c^{\phi}_K$, where 
  $$\tmop{cst} \doteq \max \{\sup_{p, q \in
  \bar{U}} \| \exp_q \circ \exp_p^{- 1} \|_{C^{m + r} (\phi (\bar{U}))} \}.$$
\end{proof}

As we anticipated, in this section we outline the theory of germs of
distributions on Riemannian manifolds building on the construction of Section
\ref{Sec: Germs of distributions on open sets}. In particular, we need to adapt
the underlying building blocks to a Riemannian manifold, employing an
intrinsically geometric language, see Section \ref{Introduction} for an account of some key geometric definitions. To this end we consider a Riemannian
manifold $(M,g)$ and a finite good cover $\{U_i \}_{i \in I}$ as
per Definition \ref{Rem: Good Covering}. Fixing $U \in \{U_i\}_{i \in I}$, for every compact set $K \subset
U$, we define
\begin{equation}
  \label{Eq: Distance on a Manifold} D_{K} \assign \tmop{dist} (\exp^{- 1} (K),
  \partial \exp^{- 1} (U)),
\end{equation}
where $\exp$ denotes the exponential map, which is well defined since each
$U$ in the finite good cover can be chosen to be a geodesically convex subset. 
Inspired by
Equation \eqref{Eq: B^m}, we introduce the set of functions
\begin{equation}
  \label{Bronmanifold} \mathcal{B}^m_p (U) \assign \{\varphi \in
  \mathcal{D}(M) : \hspace{0.27em} \tmop{supp} (\varphi) \subset \{q \in M :
  d_g (p, q) < \rho_U \assign \min \{\mathcal{R}_C (U) / 2, 1\}\},
  \hspace{0.27em} \| \varphi \|_{C^m} \leqslant 1\},
\end{equation}
where $\mathcal{R}_C$ is the convexity radius as in Definition \ref{Def:
Convexity Radius}, while $\| \cdummy \|_{C^m}$ denotes the seminorm as in
Equation \eqref{Eq: C^m norm}. We observe that in Equation \eqref{Bronmanifold} $U$ can be replaced by a compact set $K\subset M$.  
As a consequence, Equation \eqref{Eq: Rescaled
Test-Function} can be rewritten as follows: Given $U$ as above and
$\varphi \in \mathcal{B}^m_p (U)$, we set for $q \in M$ and $\lambda \in (0,1)$,
\begin{equation}
  \label{evaluationonmanifolds} \varphi_p^{\lambda} (q) \assign \left\{
  \begin{array}{ll}
    \lambda^{- d} (\varphi \circ \exp_p) \left( \frac{\exp_p^{- 1}
    (q)}{\lambda} \right) \qquad & \tmop{if} \frac{d_g (p, q)}{\lambda}
    <\mathcal{R}_C (U) \\
    0 & \tmop{otherwise} 
  \end{array} \right. 
\end{equation}

\begin{remark}
  We observe that the above definition is well-posed and it yields a smooth
  function for any $\lambda \in (0, 1)$. Indeed, if $\frac{d_g
  (p, q)}{\lambda} <\mathcal{R}_C (U)$, then $d_g (p, q)
  <\mathcal{R}_C (U)$ and therefore $\tmop{exp}_p \left( \frac{\exp_p^{- 1}
  (q)}{\lambda} \right)$ is well-defined. Furthermore,
  for $\frac{d_g (p, q)}{\lambda} \geqslant \mathcal{R}_C (U)$, \tmtextit{a
  priori} $\exp_p \left( \frac{\exp_p^{- 1} (q)}{\lambda} \right)$ might
  not be well-defined, that is why we extend it to zero in this regime.
  Nevertheless, we stress that this extension matches in the right way
  with the smoothness of the rescaled function. Indeed, by
  Equation~\eqref{Bronmanifold}, $\tmop{supp} (\varphi)$ is a compact set
  strictly contained in $B (p, \mathcal{R}_C (U))$. Therefore we have that
  there exists $\varepsilon > 0$ such that the function $(\varphi \circ
  \exp_p) \left( \frac{\exp_p^{- 1} (q)}{\lambda} \right)$ is vanishing for
  \[ \mathcal{R}_C (U) - \varepsilon < \frac{d_g (p, q)}{\lambda}
     <\mathcal{R}_C (U) . \]
  This grants the possibility of extending in a smooth way the function to
  zero for $\frac{d_g (p, q)}{\lambda} >\mathcal{R}_C (U)$.
\end{remark}

With these novel structures at our disposal we can set our attention towards germs, looking
for a generalization to the case in hand of Definitions \ref{Def: Coherence}
and \ref{Def: Homogeneity}. We start from the latter observing that Definition
\ref{Def: germ} does not need to be modified except replacing the open set with $M$. In
the following definitions, we shall denote by $M$ a Riemannian manifold, leaving implicit the metric $g$,
and we let $F = (F_p)_{p \in M}$ be a germ of distributions thereon.

\begin{definition}[Coherence on Riemannian manifolds]
  \label{Def: Coherence on M}Let $\gamma \in \mathbb{R}$ and consider three
  families
  \[ \tmmathbf{\alpha}_{\gamma} \assign (\alpha_K)_{K \subset M} \subset
     \mathbb{R}, \qquad \tmmathbf{r} \assign (r_K)_{K \subset M} \subset
     \mathbb{N}_0, \qquad \tmmathbf{R} \assign (R_K)_{K \subset M} \subset
     \mathbb{R}^+, \]
  indexed by compact sets $K \subset M$, with $\alpha_K \leqslant \min \{0,
  \gamma\}, \forall K \subset M$. We say that $F$ is \textbf{$(\tmmathbf{\alpha},
  \gamma)$-coherent of order $\tmmathbf{r}$ and range $\tmmathbf{R}$} if for
  every compact set $K \subset M$, there exists $C_K > 0$ such that, for any
  fixed $\bar{\lambda} \in (0, \rho_K)$,
  \begin{equation}
    \label{coherenceonmanifold1} \lvert (F_p - F_q) (\varphi^{\lambda}_q)
    \rvert \leqslant C_K \lambda^{\alpha_K} (\lambda + d_g (p, q))^{\gamma -
    \alpha_K},
  \end{equation}
  uniformly for $\lambda \in (0, \bar{\lambda}]$, $\varphi \in
  \mathcal{B}_p^{r_K} (K)$ and for all $p, q \in K$ such that $d_g (p, q)
  \leqslant R_K$. The vector space of $(\tmmathbf{\alpha}, \gamma)$-coherent
  germs on $M$ is denoted by
  $\mathcal{G}^{\tmmathbf{\alpha}_{\gamma}}_{\tmmathbf{r}, \tmmathbf{R}} (M)$
  and it is endowed with the family of seminorms
  \begin{equation}
    \label{Eq: Coherence norm on M} \| F
    \|_{\mathcal{G}^{\tmmathbf{\alpha}_{\gamma}}_{\tmmathbf{r}, \tmmathbf{R}}
    (M), K} \assign \sup_{\tmscript{\begin{array}{c}
      p, q \in K\\
      d (p, q) \leqslant R_K\\
      \lambda \in (0, \bar{\lambda}], \hspace{0.17em} \varphi \in
      \mathcal{B}_p^{r_K} (K)
    \end{array}}}  \hspace{0.27em} \frac{\lvert (F_p - F_q)
    (\varphi^{\lambda}_p) \rvert}{\lambda^{\alpha_K} (\lambda + d (p,
    q))^{\gamma - \alpha_K}} .
  \end{equation}
\end{definition}

\begin{definition}[Homogeneity on Riemannian manifolds]
  \label{Def: Homogeneity on M}Let
  \[ \bar{\tmmathbf{\alpha}} \assign (\bar{\alpha}_K)_{K \subset M} \subset
     \mathbb{R}, \qquad \tmmathbf{r} \assign (r_K)_{K \subset M} \subset
     \mathbb{N}_0, \]
  be two families indexed by compact sets $K \subset M$, we say that $F$ is
  \textbf{$\tmmathbf{\bar{\alpha}}$-homogeneous of order $\tmmathbf{r}$}, if for any $K
  \subset M$, and for any $\bar{\lambda} \in (0, \rho_K)$,
  \begin{equation}
    \label{homogeneityonmanifolds} \lvert F_p (\varphi^{\lambda}_p) \rvert
    \leqslant C_K \lambda^{\bar{\alpha}},
  \end{equation}
  uniformly for $p \in K$, $\lambda \in (0, \bar{\lambda}]$ and $\varphi \in
  \mathcal{B}_p^{r_K} (K)$. The space of $\tmmathbf{\bar{\alpha}}$-homogeneous
  germs of distributions on $M$ of order $\tmmathbf{r}$ is
  denoted by $\mathsf{G}^{\tmmathbf{\bar{\alpha}}}_{\tmmathbf{r}} (M)$ and it
  is endowed with the family of seminorms
  \begin{equation}
    \label{Eq: Homogeneity norm on M}
    \mathsf{G}^{\tmmathbf{\bar{\alpha}}}_{\tmmathbf{r}} (M) \ni F \mapsto \| F
    \|_{\mathsf{G}^{\tmmathbf{\bar{\alpha}}}_{\tmmathbf{r}} (M), K} \assign
    \sup_{\tmscript{\begin{array}{c}
      \lambda \in (0, \bar{\lambda}], \hspace{0.27em}\\
      p \in K, \varphi \in \mathcal{B}_p^{r_K} (K)
    \end{array}}} \frac{\lvert F_p (\varphi^{\lambda}_p)
    \rvert}{\lambda^{\bar{\alpha}_K}} .
  \end{equation}
\end{definition}

\noindent At last, we can merge the two previous definitions into the following one.

\begin{definition}[Coherence and homogeinity on Riemannian manifolds]
  \label{Def: Coherence and Homogeneity on M}We say that $F = (F_p)_{p \in M}$
  is a \textbf{$(\tmmathbf{\alpha}, \gamma)$-coherent with homogeneity
  $\tmmathbf{\bar{\alpha}}$} if $F \in \mathcal{G}^{\tmmathbf{\bar{\alpha}},
  \tmmathbf{\alpha}_{\gamma}}_{\tmmathbf{r}, \tmmathbf{R}} (M) \assign
  \mathsf{G}^{\tmmathbf{\bar{\alpha}}}_{\tmmathbf{r}} (M) \cap
  \mathcal{G}^{\tmmathbf{\alpha}_{\gamma}}_{\tmmathbf{r}, \tmmathbf{R}} (M)$.
  This space can be endowed with the family of seminorms
  \begin{equation}
    \label{Eq: Full Seminorms on M} \mathcal{G}^{\tmmathbf{\bar{\alpha}},
    \tmmathbf{\alpha}_{\gamma}}_{\tmmathbf{r}, \tmmathbf{R}} (M) \ni F \mapsto
    \| F \|_{\mathcal{G}^{\tmmathbf{\bar{\alpha}},
    \tmmathbf{\alpha}_{\gamma}}_{\tmmathbf{r}, \tmmathbf{R}} (M), K} \assign
    \| F \|_{\mathcal{G}^{\tmmathbf{\alpha}_{\gamma}}_{\tmmathbf{r},
    \tmmathbf{R}} (M), K} + \| F
    \|_{\mathsf{G}^{\tmmathbf{\bar{\alpha}}}_{\tmmathbf{r}} (M), K} .
  \end{equation}
\end{definition}

{\noindent}The next propositions relate Definitions \ref{Def: Coherence on M}
and \ref{Def: Homogeneity on M}, with the ones already present in the
literature {\cite[Def. 4 \& Prop. 29]{RS21}}. In particular, we prove that the
notion of coherence introduced in the present paper is compatible with the one
of {\cite{RS21}}. 

\begin{proposition}
\label{mineimplyrs}
Let $M$ be a Riemannian manifold and $\{U_i \}_{i
  \in I}$ a finite good cover as per Definition \ref{Rem: Good Covering}.
  \begin{enumerate}
    \item Given $F \in \mathcal{G}^{\tmmathbf{\alpha}_{\gamma}}_{\tmmathbf{r},
    \tmmathbf{R}} (M)$ as per Definition \ref{Def: Coherence on M}, for every
    $i \in I$, for every compact set $K \subset U_i$ and for every
    $\bar{\lambda} \in (0, D_K)$ as in Equation \eqref{Eq: Distance on a
    Manifold}, there exist a constant $C \equiv C (K, \bar{\lambda}) > 0$ and
    $\alpha_K \in \mathbb{R}$, such that
    \begin{equation}
      \label{coherenceboundonmanifold} | (\exp^{- 1}_{i \ast} F_p - \exp^{-
      1}_{i \ast} F_q) (\varphi^{\lambda}_{\exp^{- 1}_i (q)}) | \leqslant C
      (\lambda + \lvert \exp^{- 1}_i (p) - \exp^{- 1}_i (q) \rvert)^{\gamma -
      \alpha_K} .
    \end{equation}
    This bound is uniform over $\lambda \in (0, \bar{\lambda}]$, $p, q \in K$,
    $\varphi \in \mathcal{B}^r$, where $\mathcal{B}^r$ is as per Equation
    \eqref{Eq: B^m}.
    
    \item Given $F \in \mathsf{G}^{\tmmathbf{\bar{\alpha}}}_{\tmmathbf{r}}
    (M)$ as per Definition \ref{Def: Homogeneity on M}, for every $i \in I$,
    for every compact set $K \subset U_i$ and for every $\bar{\lambda} \in (0,
    D_K)$ as in Equation \eqref{Eq: Distance on a Manifold}, there exist a
    constant $C' \equiv C' (K, \bar{\lambda}) > 0$ and $\bar{\alpha}_K \in
    \mathbb{R}$, such that
    \[ | \exp^{- 1}_{i \ast} (F_q) (\varphi^{\lambda}_{\exp^{- 1}_i (q)}) |
       \leqslant C' \lambda^{\bar{\alpha}_K} . \]
    This bound is uniform over $\lambda \in (0, \bar{\lambda}]$, $\varphi \in
    \mathcal{B}^r$ and $q \in K$.
  \end{enumerate}
\end{proposition}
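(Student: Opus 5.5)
The plan is to transport test functions back and forth between the Euclidean chart $\exp_i^{-1}(U_i)$ and the manifold. Then the Euclidean bounds follow from the intrinsic bounds of Definitions \ref{Def: Coherence on M} and \ref{Def: Homogeneity on M}, at the cost of a loss in the $C^r$ norm and a change of constants.

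\textbf{Step 1: setup.} Fix $i\in I$ and write $\phi\assign\exp_i^{-1}$, where $\exp_i$ is the exponential map at a base point $p_i\in U_i$. Fix a compact $K\subset U_i$ and $\bar\lambda\in(0,D_K)$. For $\varphi\in\mathcal{B}^r$, $x=\phi(q)$ and $\lambda\le\bar\lambda$, the Euclidean test function $\varphi^\lambda_x$ has support in $B(x,\lambda)\subset\phi(U_i)$. The pull-back
\[ \eta\assign\phi^\ast\varphi^\lambda_x\,|J_\phi| \]
is a smooth function on $M$ with $(\phi_\ast F_p)(\varphi^\lambda_x)=F_p(\eta)$.

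\textbf{Step 2: rewrite $\eta$ as a manifold-rescaled function.} The goal is to show $\eta=c\,\tilde\varphi^{\lambda'}_q$, where:
\begin{itemize}
\item $\tilde\varphi\in\mathcal{B}^{r}_q(K)$ as in Equation \eqref{Bronmanifold} and the rescaling is as in Equation \eqref{evaluationonmanifolds};
\item $\lambda'=\mathfrak{C}\lambda$, with $\mathfrak{C}$ the equivalence constant between $d_g$ and the Euclidean distance from Proposition \ref{topologicalequivalencenormalcoordinates};
\item $c$ is bounded uniformly in $q\in K$ and $\lambda\le\bar\lambda$.
\end{itemize}
Explicitly, one takes $\tilde\varphi(v)\propto\lambda'^{d}\,\varphi^\lambda_x\bigl(\phi\circ\exp_q(\lambda' v)\bigr)\,|J|$ in the normal coordinates at $q$. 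Its $C^r$ norm is controlled by $\|\exp_i^{-1}\circ\exp_q\|_{C^{r}}$ on a compact neighbourhood and by the Jacobian, which are uniform in $q\in K$ by the smoothness in the base point (Proposition \ref{Lem:smoothness of diffeomorphism} and the Appendix \ref{AppendixC} tools for re-centering and scaling). The key point is that the map $v\mapsto\lambda^{-1}\bigl(\phi(\exp_q(\lambda v))-x\bigr)$ has $C^r$ norm bounded uniformly in $\lambda\in(0,1]$, because it is a rescaled smooth diffeomorphism fixing $0$. After dividing by that uniform constant, $\tilde\varphi$ lies in the unit ball of $C^{r}$.

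\textbf{Step 3: apply the intrinsic bounds.}
\begin{itemize}
\item Homogeneity: Definition \ref{Def: Homogeneity on M} gives $|F_q(\eta)|\le c\,C_K\lambda'^{\bar\alpha_K}\lesssim\lambda^{\bar\alpha_K}$, which is item 2.
\item Coherence: apply Definition \ref{Def: Coherence on M} to $(F_p-F_q)(\tilde\varphi^{\lambda'}_q)$. This gives $\lambda^{\alpha_K}(\lambda+d_g(p,q))^{\gamma-\alpha_K}$. Replacing $d_g(p,q)$ by $|\phi(p)-\phi(q)|$ is done through Proposition \ref{topologicalequivalencenormalcoordinates}, using $(a\lambda+b\,t)^{s}\lesssim(\lambda+t)^{s}$ for either sign of $s$.
\item Range restriction: coherence is only assumed for $d_g(p,q)\le R_K$. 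For pairs farther apart, in the compact $K$, I would chain through intermediate points $p=p_0,\dots,p_N=q$ with $N$ uniform, using geodesic convexity of $U_i$. Alternatively, I would bound each term separately, exploiting that $d_g(p,q)\ge R_K$ is bounded below, so the right-hand side is comparable to $\lambda^{\alpha_K}$. With $\lambda\le\bar\lambda$ bounded, this gives Equation \eqref{coherenceboundonmanifold}, with the factor $\lambda^{\alpha_K}$ absorbed into the constant or kept as stated.
\end{itemize}

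\textbf{Main obstacle.} The hard part is Step 2: the uniform $C^r$ control of the recentred and rescaled test function, uniformly in the base point $q\in K$ and in $\lambda$. It requires a Taylor expansion of $\exp_i^{-1}\circ\exp_q$ around the origin that is smooth jointly in $q$. The support condition $d_g<\rho_K$ must also be checked, which forces the constraint $\bar\lambda$ small relative to $\mathcal{R}_C(U_i)$. I would handle this by shrinking $\bar\lambda$ and then extending to the full range $(0,D_K)$ via a partition of unity, at the cost of a constant.
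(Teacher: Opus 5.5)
Your core argument matches the paper's: push the Euclidean test function forward to the manifold, apply Definition~\ref{Def: Coherence on M} (resp.\ \ref{Def: Homogeneity on M}), and compare $d_g$ with the chart distance via Proposition~\ref{topologicalequivalencenormalcoordinates}. Your Step 2 is an explicit version of Lemma~\ref{fromflattomanifold}, and a slightly cleaner one, since rescaling by $\lambda'=\mathfrak{C}\lambda$ costs only a uniform constant rather than $\bar\lambda^{-r-d}$. That part is sound. However, it only gives the bound for $\bar\lambda$ below a threshold of order $\rho_K/\mathfrak{C}$ and for $d_g(p,q)\le R_K$. It also gives it with the factor $\lambda^{\alpha_K}$ in front. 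That factor cannot be ``absorbed into the constant'', because $\alpha_K\le 0$ and $\lambda\to 0$. The displayed statement drops it, but the paper's proof, like yours, produces it.

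The genuine gap is in the two extension steps, and it cannot be closed without changing the statement. In your chain $p=p_0,\dots,p_N=q$, each term $(F_{p_j}-F_{p_{j+1}})(\varphi^\lambda_q)$ is tested near $q$, far from $p_j$ and $p_{j+1}$. To use coherence you must recentre at $p_{j+1}$, which produces a scale of about $\lambda+d_g(p_{j+1},q)$, outside the admissible range $(0,\rho_K)$. Bounding $F_p(\varphi^\lambda_q)$ separately fails for the same reason: nothing controls $F_p$ uniformly near points at distance larger than $R_K+\rho_K$ from $p$. The partition-of-unity extension of $\bar\lambda$ up to $D_K$ needs coherence between $p$ and centres at distance up to $\bar\lambda$, so it fails too.

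A counterexample makes this concrete. On flat $\mathbb{R}^2$ set $F_p:=g(p)\,\delta_0$, with $g(p):=\bigl||p|-5\bigr|^{-1}$ for $|p|>4$, $|p|\neq 5$, and $g(p):=0$ otherwise, and take $R_K:=1$ for all $K$.
Since $\rho_K\le 1$, every test function admitted in Definition~\ref{Def: Coherence on M} is supported within distance $2$ of $q$. Hence $(F_p-F_q)(\varphi^\lambda_q)=(g(p)-g(q))\,\varphi^\lambda_q(0)=0$ whenever $d_g(p,q)\le 1$, so $F$ is coherent (and homogeneous) for every choice of parameters.
Now take the good cover formed by $U_1=B(0,100)$ and the four half-planes $\{\pm x_j>50\}$, with $\exp_1$ centred at $0$, and $K=\overline{B(0,6)}$, so that $D_K=94$.
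For $q=0$, fixed $\lambda$ and $\varphi(0)\neq 0$, the quantity $(F_p-F_0)(\varphi^\lambda_0)=g(p)\lambda^{-2}\varphi(0)$ is unbounded as $|p|\to 5$, while the right-hand side of the claimed bound stays bounded.
With $p=(5+\varepsilon,0)$, $q=(5+2\varepsilon,0)$, $\lambda=11/2<D_K$ and a suitable $\varphi$, the left-hand side is of size $\varepsilon^{-1}$. So even close pairs fail at large chart scales, and the same germ violates item 2 at such scales.

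Both items therefore hold only with the restrictions $d_g(p,q)\le R_K$ and small $\bar\lambda$ kept in the conclusion. Alternatively, one needs an extra smallness assumption on $U_i$ relative to $R_K$ and $\rho_K$, which the paper tacitly uses later (``$\mathrm{diam}(U_j)\le R_K$'' in the proof of Theorem~\ref{reconstructiononmanifolds}). The paper's own proof performs exactly these two extensions by appeal to Propositions~\ref{relaxingbound11} and~\ref{relaxingbound2}. The example shows that step is invalid there as well, so the gap is shared with the paper rather than introduced by you.
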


\begin{remark}
In the proof of Proposition \ref{mineimplyrs}, we shall use extensively some of the results in Appendices \ref{appendixB} and \ref{AppendixC}. In particular, we will need Proposition~\ref{topologicalequivalencenormalcoordinates}. For sake of
readability, we recall here that, under the above assumptions, this result grants us that for any compact set $K \subset U_i$, with $U_i$ a
geodesically convex open set, there exist constants $\mathfrak{C}_K,
\mathfrak{C}_K' > 0$ such that
\[ \mathfrak{C}_K' d_g (q, r) \leqslant \lvert \exp^{- 1}_i (q) - \exp^{-
   1}_i (r) \rvert \leqslant \mathfrak{C}_K d_g (q, r), \qquad \forall q, r
   \in U_i . \]
\end{remark}

\begin{proof}
  We only prove the first statement, the second one following suit. In view of
  Definition \ref{Rem: Good Covering}, we can require each $U_i$ to be a
  geodesically convex, open subset of $M$. In addition, we fix a compact set $K
  \subset U_i$, $q \in K$ and $\bar{\lambda} \in (0, \rho_K)$, see Definition
  \ref{Def: Convexity Radius}. Setting $\bar{\lambda}_K \assign \min \{\rho_K,
  D_K \}$, we consider $p, q \in U_i$ so that $\lvert \exp_i^{- 1} (p) -
  \exp_i^{- 1} (q) \rvert \leqslant R_K \mathfrak{C}_K'$, where $| \cdummy |$
  is the Euclidean distance on $\exp_i^{- 1} (U_i) \subseteq \mathbb{R}^d$, $d
  = \dim M$, while $R_K$ is the range of the germ as per Definition \ref{Def:
  Coherence on M}. In view of Proposition
\ref{topologicalequivalencenormalcoordinates}, it descends that $d_g (p, q)
  \leqslant R_K$ where $d_g$ is the Riemannian distance.
  
  Given a test-function $\varphi \in \mathcal{B}^{r_K}$, as per Equation
  \eqref{Eq: B^m}, and $\lambda \in (0, 1]$, we exploit
  Lemma~\ref{fromflattomanifold} to infer that $\exp_{i \ast} \left(
  \varphi^{\lambda \hspace{0.17em} \bar{\lambda}}_{\exp_i^{- 1} (q)} \right) =
 \phi^{\lambda}_{[\lambda], q}$, with $[\lambda] \assign (\lambda, \bar{\lambda})$, for a suitable $ \phi_{[\lambda], q} \in
  \bar{\lambda}^{- r_K - d} \mathcal{B}^{r_K}_p (K)$, see
  Equation~\eqref{Bronmanifold}. 
  Here $r_K$ is the order of the germ. Hence, Lemma~\ref{fromflattomanifold} entails that the push-forward
  through $\exp_i$ sends $\varphi^{\lambda \hspace{0.17em}
  \bar{\lambda}}_{\exp_i^{- 1} (q)}$ to a good test-function on the manifold,
  up to a constant. Since $F \in
  \mathcal{G}^{\tmmathbf{\alpha}_{\gamma}}_{\tmmathbf{r}, \tmmathbf{R}} (M)$,
  we apply Equation \eqref{coherenceonmanifold1} with
  $\phi_{[\lambda], q}^{\lambda}$ as test function, obtaining
  \begin{equation}
    \begin{array}{lll}
      | (\exp^{- 1}_{i \ast} F_p - \exp^{- 1}_{i \ast} F_q) (\varphi^{\lambda
      \bar{\lambda}}_{\exp_i^{- 1} (p)}) | & = & \lvert (F_p - F_q)
      (\phi_{[\lambda], q}^{\lambda}) \rvert\\
      & \leqslant & \bar{\lambda}^{- r - d} \| F
      \|_{\mathcal{G}^{\tmmathbf{\alpha}_{\gamma}}_{\tmmathbf{r},
      \tmmathbf{R}} (M), K} \lambda^{\alpha_K} (\lambda + d_g (p, q))^{\gamma
      - \alpha_K}\\
      & \lesssim & \| F
      \|_{\mathcal{G}^{\tmmathbf{\alpha}_{\gamma}}_{\tmmathbf{r},
      \tmmathbf{R}} (M), K} (\lambda \bar{\lambda})^{\alpha_K} (\lambda
      \bar{\lambda} + d_g (p, q))^{\gamma - \alpha_K}\\
      & \lesssim & \| F
      \|_{\mathcal{G}^{\tmmathbf{\alpha}_{\gamma}}_{\tmmathbf{r},
      \tmmathbf{R}} (M), K} (\lambda \bar{\lambda})^{\alpha_K} (\lambda
      \bar{\lambda} + \lvert \exp^{- 1} (p) - \exp^{- 1} (q) \rvert)^{\gamma -
      \alpha_K},
    \end{array} \label{Equationabove}
  \end{equation}
  uniformly for $\lambda \in (0, 1]$, $\varphi \in \mathcal{B}^{r_K}$, $p, q
  \in K$, such that $\lvert \exp_i^{- 1} (p) - \exp_i^{- 1} (q) \rvert
  \leqslant R_K \mathfrak{C}_K'$. In the second line we applied the
  definition of coherence as per Equation \eqref{coherenceonmanifold1}, whilst in the
  third one we exploited the estimate $(A + B)^C \lesssim (AP + B)^C$, which holds true
  uniformly for $A, B > 0$ bounded from above and for $C, P > 0$ fixed. Finally, in the
  fourth line we used Equation \eqref{topologicalequivalencenormalcoordinateseq}. Equation
  \eqref{Equationabove} holds true for $\bar{\lambda} \in (0, \min \{\rho_K,
  D_K \})$, but we can extend its validity to all $\bar{\lambda} \in (0, D_K)$ using
  Proposition \ref{relaxingbound11}, and for all $p, q \in K$ using
  Proposition \ref{relaxingbound2}, see Equation \eqref{alphak}. Thus, Equation \eqref{coherenceboundonmanifold} descends.
\end{proof}

A natural question to ask at this stage is whether we can establish a converse of Proposition
\ref{mineimplyrs}, \tmtextit{i.e.}, if, given a coherent germ on a
Riemannian manifold $M$ as in {\cite[Def. 4]{RS21}}, one obtains a germ as
per Definition \ref{Def: Coherence on M} with a prescribed range $\tmmathbf{R}$. The answer is affirmative, as codified in the following proposition.

\begin{proposition}
  \label{rinaldiimpliesmine}Let $M$ be a Riemannian manifold and let
  $\{U_i\}_{i \in I}$ be a finite good cover of $M$. Let $\gamma \in \mathbb{R}$ and
  let $F$ be a germ of distributions on $M$.
  \begin{itemize}
    \item[1.] Assume that for every compact set $K \subset U_i \in \{U_j\}_{j \in
    I}$ and for every $\bar{\lambda} \in (0, D_K)$, where $D_K$ is as per
    Equation \eqref{Eq: Distance on a Manifold}, there exist constants
    $\tmop{cst}_K > 0$, $\alpha_K^{U_i} < \gamma$, for which the following
    bound holds true
    \begin{equation}
      \label{coherenceasrinaldisclavi} | (\exp^{- 1}_{i \ast} F_p - \exp^{-
      1}_{i \ast} F_q) (\varphi^{\lambda}_{\exp_i^{- 1} (p)}) | \leqslant
      \tmop{cst}_K \lambda^{\alpha_K} (\lambda + \lvert \exp^{- 1} (p) -
      \exp^{- 1} (q) \rvert)^{\gamma - \alpha_K},
    \end{equation}
    uniformly for $\lambda \in (0, \bar{\lambda}]$, $\varphi \in
    \mathcal{B}^{r_K}$, $p, q \in K$. Then, for any compact set $K \subset M$,
    there exist constants $R_K > 0$ and $\alpha'_K < \min \{ 0, \gamma \}$
    such that $F$ is an $(\alpha_K', \gamma)$-coherent germ of order $r_K$ and
    range $R_K$, as per Definition~\ref{Def: Coherence on M}. Additionally,
    $\| F \|_{\mathcal{G}^{\tmmathbf{\alpha}'_{\gamma}}_{\tmmathbf{r},
    \tmmathbf{R}} (M), K}$ depends linearly on $\tmop{cst}_K$.
    
    \item[2.] Assume that for every compact set $K \subset U_i \in \{U_j\}_{j \in
    I}$ and for every $\bar{\lambda} \in (0, D_K)$, where $D_K$ is as per
    Equation \eqref{Eq: Distance on a Manifold}, there exist constants
    $\tmop{cst}'_K > 0$ and $\bar{\alpha}_K \in \mathbb{R}$, for which the
    following bound holds true:
    \begin{equation}
      \label{coherenceasrinaldisclavi2} | (\exp^{- 1}_{i \ast} F_p)
      (\varphi^{\lambda}_{exp_i^{- 1} (p)}) | \leqslant \tmop{cst}_K'
      \lambda^{\bar{\alpha}_K},
    \end{equation}
    uniformly for $\lambda \in (0, \bar{\lambda}]$, $\varphi \in
    \mathcal{B}^r$, $p \in K$. Then for any compact set $K \subset M$ there
    exists a constant $r_K$ such that $F$ is an $\bar{\alpha}_K$-homogeneous
    germ of order $r_K$, as per Definition \ref{Def: Homogeneity on M}.
    Additionally, $\| F
    \|_{\mathsf{G}^{\tmmathbf{\bar{\alpha}}}_{\tmmathbf{r}} (M), K}$ depends
    linearly on $\tmop{cst}'_K$.
  \end{itemize}
\end{proposition}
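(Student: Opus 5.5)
This is the converse of Proposition \ref{mineimplyrs}. The plan is to push the chart estimate \eqref{coherenceasrinaldisclavi} forward from $\exp_i^{-1}(U_i)$ to $M$, and to handle a compact set $K\subset M$ that is not contained in a single $U_i$ through a Lebesgue-number argument. Only item 1 is treated in detail, since item 2 follows by the same argument with $p=q$.

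\emph{Step 1 (localisation).} Fix a compact $K\subset M$. Since $\{U_i\}_{i\in I}$ is a finite open cover, the Lebesgue number lemma applied to the compact metric space $(K,d_g)$ gives $\delta_K>0$ with the following property: any two points $p,q\in K$ with $d_g(p,q)\le\delta_K$ lie, together with a closed $d_g$-ball of radius $\delta_K$ around $p$, in a single $U_i$. I would set
\[
R_K:=\min\{\delta_K/2,\rho_K\}.
\]
Restricting to pairs with $d_g(p,q)\le R_K$ then ensures that $p$, $q$ and the support of every test function in $\mathcal{B}^{r_K}_p(K)$, rescaled by $\lambda\le\bar\lambda<\rho_K$, lie in the compact set
\[
K_i:=\overline{B_{d_g}(K\cap\overline{U_i'},\delta_K)}\subset U_i .
\]
Finitely many such $K_i$ suffice, one per index $i$.

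\emph{Step 2 (transfer of test functions).} Take $\varphi\in\mathcal{B}^{r_K}_p(K)$ and $\lambda\in(0,\bar\lambda]$. Using the rescaling tools of Appendix \ref{AppendixC}, in the direction opposite to Lemma~\ref{fromflattomanifold}, I would write
\[
\exp^{-1}_{i\ast}\varphi^\lambda_p=\psi^{\lambda}_{\exp_i^{-1}(p)}
\quad\text{with}\quad
\psi\in c_{K_i}\,\mathcal{B}^{r_K},
\]
where the constant $c_{K_i}$ comes from the $C^{r_K}$ norms of $\exp_i^{-1}\circ\exp_p$. These norms are uniform in $p\in K_i$ by Proposition~\ref{Lem:smoothness of diffeomorphism}, possibly after enlarging the radius to absorb a fixed dilation factor. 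Evaluating \eqref{coherenceasrinaldisclavi} on the compact $K_i$ then gives
\[
|(F_p-F_q)(\varphi^\lambda_p)|\le c_{K_i}\,\mathrm{cst}_{K_i}\,\lambda^{\alpha_{K_i}}\bigl(\lambda+|\exp_i^{-1}(p)-\exp_i^{-1}(q)|\bigr)^{\gamma-\alpha_{K_i}},
\]
which is linear in $\mathrm{cst}_{K_i}$.

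\emph{Step 3 (back to $d_g$).} Proposition~\ref{topologicalequivalencenormalcoordinates} gives
\[
|\exp_i^{-1}(p)-\exp_i^{-1}(q)|\le\mathfrak{C}_{K_i}\,d_g(p,q).
\]
I would then use $(\lambda+\mathfrak{C}a)^{s}\lesssim(\lambda+a)^{s}$, which holds for bounded quantities and $s$ of either sign because $\mathfrak{C}'$ gives the lower comparison. Finally, define
\[
\alpha'_K:=\min_i\min\{\alpha_{K_i},0\}-\epsilon
\]
over the finitely many $i$ involved, with $\epsilon>0$ small so that the inequality $\alpha'_K<\min\{0,\gamma\}$ is strict. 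Lowering the exponent is allowed because $\lambda\le\lambda+d_g(p,q)$ and both are bounded: one has
\[
\lambda^{\alpha}(\lambda+d)^{\gamma-\alpha}\lesssim\lambda^{\alpha'}(\lambda+d)^{\gamma-\alpha'}\quad\text{for }\alpha'\le\alpha,
\]
in the spirit of Proposition~\ref{relaxingbound11}. For the order, take $r_K:=\max_i r_{K_i}$. Taking the supremum yields the seminorm bound with a constant linear in the $\mathrm{cst}$'s.

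\emph{Main obstacle.} I expect Step 2 to be the hardest: transporting a test function centred at $p$ in normal coordinates at $p$ into a rescaled Euclidean test function centred at $\exp_i^{-1}(p)$ in the fixed chart $\exp_i^{-1}$. The nonlinear change of variables $\exp_i^{-1}\circ\exp_p(\lambda\,\cdot)$ does not commute exactly with dilation. I would handle it by expanding $\exp_i^{-1}\circ\exp_p(\lambda v)=\exp_i^{-1}(p)+\lambda A_p v+O(\lambda^2)$ and controlling the $C^{r_K}$ norm of the resulting function uniformly in $\lambda\le1$, exactly as in Appendix \ref{AppendixC}.
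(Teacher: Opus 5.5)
Your route is essentially the paper's. You transfer $\varphi^\lambda_p$ to the fixed chart $\exp_i^{-1}$; the paper does this with Lemma \ref{frommanifoldtoflat}, while your direct expansion of $\lambda^{-1}\exp_i^{-1}\circ\exp_p(\lambda v)$ gives the same uniform $C^{r}$ control more transparently. You then apply \eqref{coherenceasrinaldisclavi} on a compact set inside one $U_i$, return to $d_g$ with Proposition \ref{topologicalequivalencenormalcoordinates}, and minimise the exponents over finitely many charts. Your Lebesgue-number localisation is a cleaner version of the paper's gluing of $K_i\cup K_\ell$ with $R_K=\min\{R_{K_i},R_{K_\ell}\}$, iterated over the cover. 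The choice $r_K=\max_i r_{K_i}$ and the exponent lowering via $(\lambda/(\lambda+d))^{\alpha-\alpha'}\le 1$ are fine.

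The gap is at large scales.
\begin{itemize}
\item Definition \ref{Def: Coherence on M} requires the bound for every $\bar\lambda\in(0,\rho_K)$. But $\varphi^\lambda_p$ is supported in $\{d_g(p,\cdot)<\lambda\rho_K\}$, whose radius can approach $\rho_K^2$. The restriction $d_g(p,q)\le R_K$ says nothing about this radius.
\item $\delta_K$ is bounded by the size of the cover elements, which can be much smaller than $\rho_K^2$. So your Step 1 claim that all rescaled supports lie in $K_i$ is false in general. For $\lambda$ near $\rho_K$ the support can even leave $U_i$, so no single chart sees the test function.
\item Correspondingly, in Step 2 the chart scale after absorbing the dilation factor is $c\lambda$. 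Hypothesis \eqref{coherenceasrinaldisclavi} on $K_i$ applies only when $c\lambda<D_{K_i}$, roughly $\lambda\lesssim\delta_K$.
\end{itemize}
So Steps 1--3 prove the manifold bound only for $\bar\lambda$ below some threshold $\eta_K>0$.

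The paper does exactly this: it first restricts to $\bar\lambda<\min\{\rho_K,D_K\mathfrak{C}_K,\mathfrak{C}_K\}$. It then extends to all $\bar\lambda\in(0,\rho_K)$ by a separate argument, Proposition \ref{extendingthebound1onmanifold}. For $\lambda\ge\eta_K$, that argument splits $\varphi^\lambda_q$ with a partition of unity into finitely many pieces rescaled at $\eta_K$ and estimates each piece with the small-scale bound. It then uses that $\lambda^{\alpha_K}(\lambda+d_g(p,q))^{\gamma-\alpha_K}$ is bounded below on $[\eta_K,\rho_K)$.

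You need to add this extension step. Item 2 needs the analogous extension for the homogeneity bound. The step keeps the constant linear in $\mathrm{cst}_K$, but it does not follow from your localisation.
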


\begin{proof}
  We only prove item $1.$, as the proof of item $2.$ follows by an analogous argument. First of all, we show that
  Equation \eqref{coherenceonmanifold1} holds true for every compact set contained
  in a single geodesically convex open set $U_i$. Then, we shall
  generalize the result to all compact sets by restricting the range. \\
  \noindent Let us
  fix a compact set $K \subset U_i \in \{U_j\}_{j \in I}$, with the associated
  exponential map $\exp_i$. Let $p, q \in K$ and $\varphi \in \mathcal{B}^r_p
  (K)$ as per Equation~\eqref{Bronmanifold}. Let $\bar{\lambda} \in (0, \min
  \{\rho_K, D_K \mathfrak{C}_K, \mathfrak{C}_K \})$. Reasoning as in the proof of Proposition \ref{mineimplyrs}, Lemma \ref{frommanifoldtoflat} entails that $\forall \lambda
  \in (0, 1)$ we can write $\exp^{- 1}_{i \ast} (\varphi_q^{\lambda
  \bar{\lambda}}) = \psi_{[\lambda], \exp_i^{- 1} (q)}^{\lambda}$, for a
  suitable $\psi_{[\lambda]} \in \mathcal{B}^r$ with $[\lambda] \assign (\lambda, \bar{\lambda})$, see Equation \eqref{Eq: B^m}. Thus, 
  \begin{equation}
    \begin{array}{lll}
      \lvert (F_p - F_q) (\varphi_p^{\lambda \bar{\lambda}}) \rvert & = & |
      (\exp^{- 1}_{i \ast} F_p - \exp^{- 1}_{i \ast} F_q) (\exp^{- 1}_{i \ast}
      (\varphi_p^{\lambda \bar{\lambda}}) \nobracket |\\
      & = & | (\exp^{- 1}_{i \ast} F_p - \exp^{- 1}_{i \ast} F_q)
      (\psi_{[\lambda], \exp_i^{- 1} (q)}^{\lambda}) |\\
      & \leqslant & C_K \bar{\lambda}^{- r - d} \lambda^{\alpha_K} (\lambda +
      \lvert \exp_i^{- 1} (p) - \exp_i^{- 1} (q) \rvert)^{\gamma - \alpha_K}\\
      & \lesssim & C_K (\lambda \bar{\lambda})^{\alpha_K} (\lambda
      \bar{\lambda} + \lvert \exp_i^{- 1} (p) - \exp_i^{- 1} (q)
      \rvert)^{\gamma - \alpha_K}\\
      & \lesssim & C_K (\lambda \bar{\lambda})^{\alpha_K} (\lambda
      \bar{\lambda} + d_g (p, q))^{\gamma - \alpha_K},
    \end{array} \label{rinaldiimpliesmineeq}
  \end{equation}
  uniformily for $\lambda \in (0, 1]$, $p, q \in K$, $\varphi \in
  \mathcal{B}^r_p (K)$. In particular, in the third line we exploited
  Equation~\eqref{coherenceasrinaldisclavi}, whereas in the fourth and fifth
  lines we used Equation (\ref{topologicalequivalencenormalcoordinateseq}) together with $(A + B)^C \lesssim (AP + B)^C$, uniformly for $A,B > 0$ bounded from above and for $C, P > 0$, with $P = \bar{\lambda}$. We can now extend
  Equation (\ref{rinaldiimpliesmineeq}) to all $\bar{\lambda} \in (0, \rho_K)$
  by means of Proposition \ref{extendingthebound1onmanifold}.
  
  Equation \eqref{rinaldiimpliesmineeq} has been established for a compact set $K
  \subset U_i \in \{U_j\}_{j \in I}$. The next step is thus to consider a compact
  set $K$ which is covered by two sets $U_i, U_{\ell}$ of the finite good
  covering $\{U_j\}_{j \in I}$. We denote the associated exponential maps by
  $\exp_i$ and $\exp_{\ell}$, respectively. In this scenario, we define $K \assign K_i
  \cup K_{\ell}$, with $K_i$ and $K_{\ell}$ compact sets such that $K_i
  \subset U_i$ and $K_{\ell} \subset U_{\ell}$ and we omit the dependence on the indices $k, \ell$ for notational ease. 
  
  On account of what we already proved, we know that the thesis holds true
  whenever $p, q$ lie both in $K_i$ or in $K_{\ell}$. The only case to discuss
  is when, \tmtextit{e.g.}, $p \in K_i$ and $q \in K_{\ell}$. Calling $D_{K_i}
  = \tmop{dist} (\exp_i^{- 1} (K_i), \partial U_i)$, we define
  $R_{K_i} \assign \frac{D_{K_i}}{2\mathfrak{C}_{K_i}}$, where
  $\mathfrak{C}_{K_i}$ is the constant in
Equation~\eqref{topologicalequivalencenormalcoordinateseq} relative to the compact
  set $K_i$. Then, if we require that $d_g (p, q) \leqslant R_{K_i}$, by
  Proposition \ref{topologicalequivalencenormalcoordinates}, it follows that
  $\lvert \exp_i^{- 1} (p) - \exp_i^{- 1} (q) \rvert \leqslant D_{K_i} / 2$.
  Hence, these points lie in the $D_{K_i} / 2$ enlargement of $K_i \subset
  U_i$. We call such a set $K'_i$. We highlight that $K'_i$ is a compact set
  still contained in $U_i$. As a consequence, since $p, q \in K'_i$, we can apply the coherence condition to the
  compact set $K'_i \subset U_i$. We can now use the same argument for
  $K_{\ell}$, getting $R_{K_{\ell}}$ and $K'_{\ell}$. Thus, for $K \assign K_i \cup
  K_{\ell}$, choosing the parameters as
  \begin{equation}
    \label{alphacappa'} R_K \assign \min \{R_{K_i}, R_{K_{\ell}} \}, \qquad
    \alpha_K' \assign \min \{\alpha_{K'_i}, \alpha_{K'_{\ell}} \},
  \end{equation}
  we conclude the validity of Equation \eqref{rinaldiimpliesmineeq} for all
  $p, q \in K$ such that $d_g (p, q) \leqslant R_{K }$. The general case, namely
  for $K$ covered by a finite number of geodesic sets, can be proven by iterating
  this procedure.
  
Since we have shown that the coherence bound in Equation \eqref{rinaldiimpliesmineeq} holds true for
  all
  \[ \bar{\lambda} \in (0, \min \{\rho_K, D_K \mathfrak{C}_K, \mathfrak{C}_K
     \}) , \]
we can extend such bound using Proposition
  \ref{extendingthebound1onmanifold}, hence obtaining the sought result. 
\end{proof}

\noindent We conclude this section by stating a result equivalent to Proposition \ref{homcoherence} on
Riemannian manifolds.

\begin{proposition}[Necessity of coherence on manifolds]
  \label{homocoherenceonmanifolds} Let $M$ be a Riemannian manifold and
  let $F$ be a $\gamma$-homogeneous germ of order $\tmmathbf{r}$ on $M$, as
  per Definition~\ref{Def: Homogeneity on M}. Then, there exists a family $\tmmathbf{R} = (R_K)_{K \subset M} \subset
  \mathbb{R}_+$ indexed by compact subsets of $M$ so that $F$ is an
  $(\tmmathbf{\alpha}, \gamma)$-coherent germ of order $\tmmathbf{r}$ and
  range $\tmmathbf{R}$, where $\alpha_K = \min \{- r_K - d, \gamma\}$.
  Additionally, the coherence seminorm relative to the compact set $K$ is
  proportional to the homogeneity seminorm.
\end{proposition}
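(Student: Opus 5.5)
The plan is to reduce to the flat statement, Proposition \ref{homcoherence}, by working in exponential charts. We then transfer the resulting coherence bound back to $M$ exactly as in the proofs of Propositions \ref{mineimplyrs} and \ref{rinaldiimpliesmine}.

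\textbf{Step 1: flat homogeneity bound.} Fix a finite good cover $\{U_i\}_{i\in I}$ by geodesically convex sets and a compact $K\subset U_i$. Since $F$ is $\gamma$-homogeneous on $M$, item 2 of Proposition \ref{mineimplyrs} applies. It gives that the pushed-forward germ $\exp^{-1}_{i\ast}F$ is $\gamma$-homogeneous of order $r_K$ on $\exp_i^{-1}(U_i)\subset\mathbb{R}^d$, with
\[
\| \exp^{-1}_{i\ast} F\|_{\mathsf{G}^{\gamma}_{\tmmathbf{r}}(\exp_i^{-1}(U_i)),\exp_i^{-1}(K)} \lesssim_K \| F\|_{\mathsf{G}^{\gamma}_{\tmmathbf{r}}(M),K}.
\]
The constant comes from Lemma \ref{fromflattomanifold} and involves factors like $\bar\lambda^{-r_K-d}$.

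\textbf{Step 2: flat coherence.} We may take $\exp_i^{-1}(U_i)$ convex, shrinking if needed since normal balls within the convexity radius are convex. Proposition \ref{homcoherence} then shows that $\exp^{-1}_{i\ast}F$ is $(\alpha_K,\gamma)$-coherent of order $r_K$, with $\alpha_K=\min\{-r_K-d,\gamma\}$. Its coherence seminorm on $\exp_i^{-1}(K)$ is bounded by the homogeneity seminorm on the convex hull $\exp_i^{-1}(K)'$. The mechanism is the standard one: for $\lambda>|x-y|$ one re-centres $\varphi^\lambda_x$ at $y$ at scale $2\lambda$. For $\lambda\le|x-y|$ one uses the crude bound that $F_y(\varphi^\lambda_x)$ costs at most $\lambda^{-r_K-d}$ times a power of $|x-y|$, which produces the exponent $-r_K-d$. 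I take this as given from Appendix \ref{Sec: App A}.

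\textbf{Step 3: back to $M$.} Item 1 of Proposition \ref{rinaldiimpliesmine} now applies, with $\tmop{cst}_K$ proportional to $\|F\|_{\mathsf{G}^\gamma_{\tmmathbf r}(M),K'}$. It yields a range $R_K$, built as in \eqref{alphacappa'} via $R_{K_i}=D_{K_i}/(2\mathfrak{C}_{K_i})$ and iterated over the finitely many $U_i$ meeting $K$. With this range, $F\in\mathcal{G}^{\tmmathbf\alpha_\gamma}_{\tmmathbf r,\tmmathbf R}(M)$, and the coherence seminorm depends linearly on $\tmop{cst}_K$, hence is proportional to the homogeneity seminorm.

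\textbf{Main obstacle: keeping $\alpha_K$ exact.} Proposition \ref{rinaldiimpliesmine} in general outputs a possibly smaller $\alpha'_K$, formed as a minimum over the enlarged pieces $K'_i$. To retain exactly $\min\{-r_K-d,\gamma\}$, I would fix the order $r_K$ to be the same on each enlargement $K'_i$, which is allowed since orders may be taken monotone in $K$. The minimum in \eqref{alphacappa'} is then attained trivially. A second point to handle is the passage between $\bar\lambda$ ranges via Propositions \ref{relaxingbound11} and \ref{extendingthebound1onmanifold}; this only alters constants, not exponents, when $\lambda\le|x-y|$ is controlled by the range $R_K$.
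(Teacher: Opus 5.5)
Your route does not deliver the exponent in the statement, and the repair in your last paragraph does not work.

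First, Step 2 misquotes Proposition~\ref{homcoherence}. Item~2 there yields coherence with $\tilde\alpha_K$ from \eqref{alphak}, not with $\min\{-r_K-d,\gamma\}$. The exact exponent comes from Theorem~\ref{Prop: Necessity of Coherence} and holds only near the diagonal, for $|x-y|\le D^U_K/2$. It is then extended to all pairs in $K$ by $N_K$ iterations of Proposition~\ref{relaxingbound2}, each of which lowers it by $r+d$. Along the way it uses homogeneity on the convex hull rather than on $K$. Proposition~\ref{rinaldiimpliesmine} needs the flat bound for \emph{all} $p,q$ in each compact subset of $U_i$, so you inherit this degraded exponent.

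Second, the gluing in Proposition~\ref{rinaldiimpliesmine} returns $\alpha'_K=\min_i\alpha_{K'_i}$ over the enlargements $K'_i$, and these involve the orders $r_{K'_i}$. You cannot set $r_{K'_i}=r_K$. The family $\tmmathbf r$ is part of the hypothesis, and since $\mathcal{B}^{r'}\subset\mathcal{B}^{r}$ for $r'\ge r$, an order can only be raised, never lowered. Monotonicity in $K$ therefore points the wrong way. Raising $r_K$ to $\max_i r_{K'_i}$ instead changes the order in the conclusion. As written, you obtain coherence with some $\alpha'_K\le\min\{-r_K-d,\gamma\}$, which is weaker than what is claimed.

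The missing idea is that coherence on $M$ is only required for $d_g(p,q)\le R_K$, and you are free to choose $R_K$. Take it small, for instance below a Lebesgue number of the cover of $K$ by chart domains, leaving room for the supports. Then every admissible pair lies in a single chart. There, the near-diagonal re-centring you describe uses homogeneity only at the two points $p,q\in K$, with order $r_K$. Neither Proposition~\ref{relaxingbound2} nor the gluing is needed, so both $r_K$ and $\alpha_K$ survive intact.

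This is what the paper does, directly on $M$:
\begin{enumerate}
\item It bounds $|F_q(\varphi^\lambda_q)|\lesssim\lambda^\gamma\le\lambda^{\alpha_K}(\lambda+d_g(p,q))^{\gamma-\alpha_K}$.
\item It re-centres $\varphi^\lambda_q$ at $p$ using Lemma~\ref{recenteringonmanifolds}, at a scale comparable to $\lambda+d_g(p,q)$ and at a cost $(\lambda/(\lambda+d_g(p,q)))^{-r_K-d}$.
\item It applies homogeneity at $p$ and takes $R_K=\widetilde{\tmop{cst}}_K$ from that lemma.
\item Only at the end does it enlarge $\bar\lambda$ via Proposition~\ref{extendingthebound1onmanifold}, which changes constants but not exponents.
\end{enumerate}
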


\begin{proof}
  We prove that, given a compact set $K \subset M$ and $\bar{\lambda} \in (0,
  \rho_K)$, there exist $\alpha_K \leqslant \min \{0, \gamma\}$ and $R_K > 0$
  so that
  \begin{equation}
    \label{necessityonmanifoldeq1} \lvert (F_p - F_q) (\varphi^{\lambda}_q)
    \rvert \lesssim_K \lambda^{\alpha_K} (\lambda + d_g (p, q))^{\gamma -
    \alpha_K},
  \end{equation}
  uniformly for $\lambda \in (0, \bar{\lambda})$, $p, q \in K$ with $d_g (p, q)
  \leqslant R_K$, while $\varphi \in \mathcal{B}^{r_K}_q (K)$, as per
  Equation~\eqref{Bronmanifold}.
  
  We already know that, for any $\alpha \leqslant \gamma$,
  \begin{equation}
    \label{necessityonmanifoldeq1.5} \lvert F_q (\varphi^{\lambda}_q) \rvert
    \lesssim_K \lambda^{\gamma} \lesssim \lambda^{\alpha} (\lambda + d_g (p,
    q))^{\gamma - \alpha} .
  \end{equation}
  In order to find an estimate for $\lvert F_p (\varphi^{\lambda}_q) \rvert$,
  we can use Lemma~\ref{recenteringonmanifolds}. On account of this result, we
  can infer that there exists a $K$-dependent parameter $\lambda_K$ such that, if $\lambda \in (0, \lambda_K)$, we can recenter the function
  $\varphi^{\lambda}_q$ as $\varphi^{\lambda}_q = \xi^{\lambda''}_{[\lambda],
  \hspace{0.17em} q}$, with $\lvert \lvert \xi_{[\lambda], \hspace{0.17em} q}
  \rvert \rvert_{C^{r_K}} \lesssim_K \left( \frac{\lambda}{\lambda + d (p, q)}
  \right)^{- r_K - d}$. Here $\lambda''$ is a constant established in
  Lemma~\ref{recenteringonmanifolds}. We get
  \begin{equation}
    \lvert F_p (\varphi^{\lambda}_q) \rvert = \lvert F_p (\xi^{\lambda''}_p)
    \rvert \lesssim_K \left( \frac{\lambda}{\lambda + d (p, q)} \right)^{- r_K
    - d} \lambda^{'' \gamma}  \lesssim_K \lambda^{- r_K - d} (\lambda
    + d (p, q))^{\gamma + r_K + d}, \label{necessityonmanifoldeq2}
  \end{equation}
  uniformly for $\lambda \in (0, \lambda_K)$, $\varphi \in \mathcal{B}^r_q
  (K)$. Here $p, q \in K$ are such that $d_g (p, q) \leqslant
  \widetilde{\tmop{cst}}_K$, where, once more $\widetilde{\tmop{cst}}_K > 0$
  is a constant introduced in Lemma~\ref{recenteringonmanifolds}. Putting
  together Equation \eqref{necessityonmanifoldeq1.5} and Equation
  \eqref{necessityonmanifoldeq2}, we obtain
  \[ \lvert (F_p - F_q) (\varphi^{\lambda}_q) \rvert \lesssim_K
     \lambda^{\alpha_K} (\lambda + d (p, q))^{\gamma - \alpha_K}, \]
  with $\alpha_K = - r_K - d$. We can now extend the result to all
  $\bar{\lambda} \in (0, \rho_K)$ using
  Proposition~\ref{extendingthebound1onmanifold}.
\end{proof}

\section{Reconstruction Theorem on Riemannian
manifolds}\label{sectionReconstructionManifold}
Having introduced the notion of coherent and homogeneous germs of distributions on Riemannian manifolds, see Section \ref{Sec:Germs-of-distributions-manifold}, we shall now state a reconstruction result, tailored to this setting. 

In addition to the
existing literature {\cite[Thm. 18]{RS21}}, we prove the continuity of the
reconstruction operator and we assess the regularity of the reconstructed
distribution in the sense of negative H{\"o}lder-Zygmund spaces on Riemannian
manifolds, whose definition we recall here for the reader's convenience.

\begin{definition}[Negative H{\"o}lder-Zygmund spaces on Riemannian
manifolds]
  \label{holderzyngmundManifold}
  Let $M$ be a Riemannian manifold, and let
  $\gamma \in \mathbb{R}$. We define $\mathcal{Z}^{\gamma} (M)$ as the set of
  distributions $f \in \mathcal{D}' (M)$ such that
  \begin{equation}
    \label{zygmundholderspaceseq} \| f \|_{\mathcal{Z}^{\gamma}_{K,
    \bar{\lambda}}} = \sup_{\tmscript{\begin{array}{c}
      \lambda \in (0, \bar{\lambda}], \hspace{0.27em} p \in K\\
      \varphi \in \mathcal{B}^r_p (K)  \text{with} \hspace{0.27em} r = \lfloor
      - \gamma + 1 \rfloor
    \end{array}}} \frac{\lvert f (\varphi^{\lambda}_p)
    \rvert}{\lambda^{\gamma}} < + \infty,
  \end{equation}
  for all compact sets $K \subset M$ and for all $\bar{\lambda} \in (0,
  \rho_K)$, where $\rho_K$ and the spaces $\mathcal{B}^r_p (K)$ are as per
  Equation~\eqref{Bronmanifold}.
\end{definition}

\begin{remark}
  We define the H{\"o}lder-Zygmund spaces only in the case $\gamma \leqslant
  0$ as the definition for positive $\gamma$ would require additional
  structures such as the notion of jets. Nevertheless, since the homogeneity
  of the germ corresponds to the H{\"o}lder-Zygmund regularity of the
  reconstruction, see Theorem \ref{reconstructiononmanifolds}, and since germs
  with positive homogeneity have the identically zero distribution as
  reconstruction, see Remark \ref{negativehomogeneity}, the case $\gamma > 0$
  is not of interest for our goals.
\end{remark}

\begin{theorem}[Reconstruction theorem on Riemannian manifolds]
  \label{reconstructiononmanifolds}Let $M$ be a Riemannian manifold, and
  let $\{U_i\}_{i \in I}$ be a finite good cover of $M$ as per Definition \ref{Rem: Good Covering}. Consider an
  $(\tmmathbf{\alpha}, \gamma)$-coherent germ $F$ of order $\tmmathbf{r}$ and
  range $\tmmathbf{R}$ on $M$. If $\gamma > 0$, there exists a unique
  $\mathcal{R}^{\gamma} F \in \mathcal{D}' (M)$ such that, for every compact
  set $K \subset U_j$, with $j \in I$, and for every $\bar{\lambda} \in (0,
  \rho_K)$ there exists a compact set $\tilde{K}$ such that $K \subset
  \tilde{K} \subset U_j$ and
  \begin{equation}
    \label{reconstructionboundonmanifolds} \lvert (F_p -\mathcal{R}^{\gamma}
    F) (\varphi^{\lambda}_p) \rvert \lesssim \| F
    \|_{\mathcal{G}^{\tmmathbf{\alpha}_{\gamma}}_{\tmmathbf{r}, \tmmathbf{R}}
    (M), \tilde{K}} \lambda^{\gamma}
  \end{equation}
  uniformly over $\lambda \in (0, \bar{\lambda})$, $\varphi \in
  \mathcal{B}_p^{m_K} (K)$ and $p \in K$. Furthermore, it holds true that
  \begin{itemize}
    \item[1.] The map $F \mapsto \mathcal{R}^{\gamma} F$ is linear,
    
    \item[2.] If $F$ is also $\bar{\tmmathbf{\alpha}}$-homogeneous, as per
    Definition~\ref{Def: Homogeneity on M}, for each $U_j \in \{U_i\}_{i \in
    I}$, denoting by $K_j$ the closure of $U_j$, $\mathcal{R}^{\gamma} F \in
    \mathcal{Z}^{\alpha_{K_j}} (U_j)$, as per Definition
    \ref{holderzyngmundManifold}, and we have
    \begin{equation}
      \label{ReconstructionisholderManifold} \| \mathcal{R}^{\gamma} F
      \|_{\mathcal{Z}^{\gamma}_{K, \bar{\lambda}}} \lesssim_{\alpha_K,
      \bar{\lambda}, d} \| F
      \|_{\mathcal{G}^{\tmmathbf{\alpha}_{\gamma}}_{\tmmathbf{r},
      \tmmathbf{R}} (M), \tilde{K}} .
    \end{equation}
  \end{itemize}
\end{theorem}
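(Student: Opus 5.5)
The proof localises to the charts of the good cover, reconstructs there with Theorem \ref{Thm: Reconstruction Theorem}, and glues the local reconstructions by uniqueness.

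\textbf{Local construction.} Fix the finite good cover $\{U_i\}_{i\in I}$ by geodesically convex sets. On each $U_i$, push the germ forward to the open set $V_i \assign \exp_i^{-1}(U_i)\subset\mathbb{R}^d$, setting $F^{(i)}_x \assign \exp^{-1}_{i\ast}F_{\exp_i(x)}$. By Proposition \ref{mineimplyrs}, $F^{(i)}$ is an $(\tmmathbf{\alpha},\gamma)$-coherent germ on $V_i$ in the sense of Definition \ref{Def: Coherence}, with seminorm on $\exp_i^{-1}(K)$ controlled by $\|F\|_{\mathcal{G}^{\tmmathbf{\alpha}_\gamma}_{\tmmathbf{r},\tmmathbf{R}}(M),K'}$ for a slightly enlarged compact $K'\subset U_i$. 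Theorem \ref{Thm: Reconstruction Theorem} then gives a distribution $f_i\assign\mathcal{R}^\gamma F^{(i)}\in\mathcal{D}'(V_i)$, linear in $F$ and satisfying the bound \eqref{Eq: Reconstruction Theorem} with $\lambda^\gamma$ (recall $\gamma>0$). Set $g_i\assign \exp_{i\ast}f_i\in\mathcal{D}'(U_i)$.

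\textbf{The bound \eqref{reconstructionboundonmanifolds} and gluing.} For $p\in K\subset U_i$ and $\varphi\in\mathcal{B}^{m_K}_p(K)$, use Lemma \ref{frommanifoldtoflat} to write $\exp^{-1}_{i\ast}\varphi^{\lambda\bar\lambda}_p$ as a rescaled Euclidean test function centred at $\exp_i^{-1}(p)$. Its $C^{m_K}$ norm is bounded by a constant depending on $\bar\lambda$ and $K$. Applying the Euclidean estimate and then Proposition \ref{extendingthebound1onmanifold} to pass to all $\bar\lambda\in(0,\rho_K)$ yields \eqref{reconstructionboundonmanifolds} for $g_i$, with $\tilde K$ the pullback of the enlargement appearing in Theorem \ref{Thm: Reconstruction Theorem}. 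On an overlap $U_i\cap U_j$, both $g_i$ and $g_j$ satisfy $(g-F_p)(\varphi^\lambda_p)\to0$ as $\lambda\to0$. Pulling back through a single chart, the uniqueness in item 1 of Theorem \ref{Thm: Reconstruction Theorem} (valid since $\gamma>0$) forces $g_i=g_j$ on the intersection. Since distributions form a sheaf, the $g_i$ glue to a unique $\mathcal{R}^\gamma F\in\mathcal{D}'(M)$. The same uniqueness argument shows global uniqueness, and linearity follows because each local map is linear and the gluing is linear.

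\textbf{Hölder--Zygmund regularity and the main obstacle.} For item 2, I will not go through charts but argue intrinsically on $M$. Write
\[
\mathcal{R}^\gamma F(\varphi^\lambda_p)=F_p(\varphi^\lambda_p)+(\mathcal{R}^\gamma F-F_p)(\varphi^\lambda_p).
\]
Homogeneity bounds the first term by $\lambda^{\bar\alpha_K}$. The second term is at most $\lambda^\gamma\lesssim\lambda^{\alpha_K}$, because $\alpha_K\le\gamma$ and $\lambda\le1$. This is analogous to Corollary \ref{propertiesreconstruction}, which I can transport through Lemmas \ref{fromflattomanifold} and \ref{frommanifoldtoflat} if needed. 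The order mismatch, namely test functions in $\mathcal{B}^{\lfloor-\gamma+1\rfloor}$ versus $\mathcal{B}^{r_K}$, I will handle via \cite[Prop. B.1]{BCZ} as in Remark \ref{reconstructionishomogeneous1}. The main obstacle is the bookkeeping in the transfer between flat and curved test functions: the constants from Lemma \ref{frommanifoldtoflat} blow up like $\bar\lambda^{-r-d}$. I must check that these powers stay uniform over $p\in K$, so that the extension to the full range $\bar\lambda<\rho_K$ does not destroy the exponent $\gamma$. I will first try to absorb them into the $K$-dependent constant at fixed $\bar\lambda$, and then use Proposition \ref{extendingthebound1onmanifold}.
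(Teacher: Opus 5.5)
Your construction of $\mathcal{R}^\gamma F$ is the paper's proof. You get coherence in normal charts from Proposition \ref{mineimplyrs}, reconstruct locally with Theorem \ref{Thm: Reconstruction Theorem}, transfer the bound back to $M$, glue on overlaps by uniqueness for $\gamma>0$ plus the sheaf property, and get linearity chartwise. Your concern about the factors $\bar\lambda^{-r-d}$ is resolved as you suggest, by fixing $\bar\lambda$ and absorbing them into the $K$-dependent constant. One caveat: Proposition \ref{extendingthebound1onmanifold} is a coherence statement. For the reconstruction bound you need its homogeneity analogue, which the paper uses in the form of Proposition \ref{rinaldiimpliesmine}, item 2, applied to $F-\mathcal{R}^\gamma F$.

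You genuinely differ only in item 2. The paper invokes Corollary \ref{propertiesreconstruction} chartwise and transports the result back, which implicitly requires comparing the seminorms of Definition \ref{holderzyngmundManifold} with their Euclidean counterparts. You instead split $\mathcal{R}^\gamma F(\varphi^\lambda_p)=F_p(\varphi^\lambda_p)+(\mathcal{R}^\gamma F-F_p)(\varphi^\lambda_p)$ directly on $M$, the manifold analogue of Remark \ref{reconstructionishomogeneous1}. This is more transparent: it uses only the established bound \eqref{reconstructionboundonmanifolds} and the intrinsic homogeneity of $F$. Charts enter only to lower the order of the test functions via \cite[Prop. B.1]{BCZ}.

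Fix the exponent bookkeeping, though. Your splitting gives, for test functions of sufficiently high order,
\[ |\mathcal{R}^\gamma F(\varphi^\lambda_p)| \lesssim \bigl(\|F\|_{\mathsf{G}^{\bar{\tmmathbf{\alpha}}}_{\tmmathbf{r}}(M),K}+\|F\|_{\mathcal{G}^{\tmmathbf{\alpha}_\gamma}_{\tmmathbf{r},\tmmathbf{R}}(M),\tilde K}\bigr)\,\lambda^{\bar\alpha_K} \qquad (\bar\alpha_K\le 0<\gamma). \]
So the comparison you need is $\lambda^\gamma\lesssim\lambda^{\bar\alpha_K}$, not $\lambda^\gamma\lesssim\lambda^{\alpha_K}$. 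The first term is only controlled by $\lambda^{\bar\alpha_K}$. Hence you cannot land in $\mathcal{Z}^{\alpha_{K_j}}$ unless $\bar\alpha_K\ge\alpha_K$, and you cannot drop the homogeneity seminorm from the right-hand side.

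No argument can do better. A constant germ $F_p\equiv f$ is $(\tmmathbf{\alpha},\gamma)$-coherent for every admissible $\tmmathbf{\alpha}$ (e.g.\ $\alpha_K=0$) with vanishing coherence seminorm, and $\mathcal{R}^\gamma F=f$. Take $f$ a Dirac mass at a point of $U_j$, which is $(-d)$-homogeneous. This contradicts both $f\in\mathcal{Z}^{\alpha_{K_j}}(U_j)$ and \eqref{ReconstructionisholderManifold}. The left-hand side of the latter moreover carries the exponent $\gamma>0$, outside the range Definition \ref{holderzyngmundManifold} is meant for. So item 2 should read $\mathcal{R}^\gamma F\in\mathcal{Z}^{\bar\alpha_K}$, with the full seminorm \eqref{Eq: Full Seminorms on M} on the right. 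That is exactly what your argument proves, and also the most the paper's appeal to Corollary \ref{propertiesreconstruction} can yield.
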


\begin{proof}
  The strategy of the proof consists in exploiting the analogous results on
  open sets of $\mathbb{R}^d$, Theorem \ref{Thm: Reconstruction Theorem}, and
  then show that the exponential map yields a reconstruction on the whole
  manifold.
  
  Given a finite good cover $\{U_i\}_{i \in I}$ and taking $U_j \in \{U_i\}_{i \in I}$,
  by Proposition \ref{mineimplyrs} $\exp_{j \ast}^{- 1} F$ satisfies a
  coherence estimate for all points $\exp_j^{- 1} (p), \exp_j^{- 1} (q)$ such
  that $d_g (p, q) \leqslant R_K$. We can always assume that $\tmop{diam}
  (U_j) \leqslant R_K$, therefore we have that $\exp_{j \ast}^{- 1} F$ is an
  $(\tilde{\tmmathbf{\alpha}}, \gamma)$-coherent germ on $U_j' \assign
  \exp_j^{- 1} U_j$, as per Definition \ref{Def: Coherence}.
  
  By Theorem \ref{Thm: Reconstruction Theorem}, there exists a unique
  distribution which we denote by $f_j \assign \mathcal{R}^{\gamma} \exp_{j
  \ast}^{- 1} F \in \mathcal{D}' (U_j')$, such that, fixed a compact set $K
  \subset U_j$ and $\bar{\lambda} \in (0, \tmop{dist} (K, \partial U_j))$, it
  holds that
  \[ \lvert (\exp_{j \ast}^{- 1} F_p - f_j) (\varphi^{\lambda}_{\exp_j^{- 1}
     (p)}) \rvert \lesssim \| \exp_{j \ast}^{- 1} F
     \|_{\mathcal{G}^{\tmmathbf{\alpha}_{\gamma}}_{\tmmathbf{r}} (U'_j), K'_{3
     D_{K'} / 4}} \lambda^{\gamma}, \]
  uniformly for $p \in K$, $\varphi \in \mathcal{B}^{m_{K'}}$ and $\lambda \in
  (0, \bar{\lambda})$. Here we denote by $K' \assign \exp_j^{- 1} K$ a compact set contained in $U_j'$, while $K'_{3 D_{K'} / 4}$ is its
  $(3 D_{K'} / 4)-$enlargement with $D_{K'} \assign \tmop{dist} (K', \partial
  U_j')$. Recall that on account of Proposition \ref{mineimplyrs}, we have
  \[ \lvert \lvert \exp_{j \ast}^{- 1} F \rvert
     \rvert_{\mathcal{G}^{\tmmathbf{\alpha}_{\gamma}}_{\tmmathbf{r}} (U'_j),
     K'_{3 D_{K'} / 4} \tmcolor{red}{}} \leqslant \| F
     \|_{\mathcal{G}^{\tmmathbf{\alpha}_{\gamma}}_{\tmmathbf{r}, \tmmathbf{R}}
     (M), \tilde{K}}, \]
  where $\tilde{K}  \assign \exp_j
  (K'_{3 D_{K'} / 4}) \subset U_j$ is a compact set. Overall, this implies that
  \begin{equation}
    \lvert (\exp_{j \ast}^{- 1} F_p - f_j) (\varphi^{\lambda}_{\exp_j^{- 1}
    (p)}) \rvert \lesssim \| F
    \|_{\mathcal{G}^{\tmmathbf{\alpha}_{\gamma}}_{\tmmathbf{r}, \tmmathbf{R}}
    (M), \tilde{K}} \lambda^{\gamma} .
    \label{reconstructionboundonmanifoldsproof}
  \end{equation}
  We define the reconstruction of $F$ restricted to $U_j \subset M$ as
  \[ \mathcal{R}^{\gamma}_j F \assign \exp_{j \ast} f_j . \]
  So far, we have built a family of distributions $\{ \mathcal{R}^{\gamma}_j F \}_{j
  \in I}$ such that $\mathcal{R}^{\gamma}_j F \in \mathcal{D}' (U_j)$ for any
  $j \in I$. Our next goal is to prove that this family identifies a well
-defined global distribution $\mathcal{R}^{\gamma} F \in \mathcal{D}' (M)$.To this avail, given $U_{\ell} \in \{U_i\}_{i \in I}$, we have to establish the overlapping condition, \textit{i.e.},  $\mathcal{R}^{\gamma}_j F
  =\mathcal{R}^{\gamma}_{\ell} F$ on $U_j \cap U_{\ell}$ or, equivalently,
  $f_j = \exp_j^{\ast} \circ \exp_{\ell \ast} f_{\ell}$ on $U_j \cap
  U_{\ell}$, where we used that $\exp_j^{\ast} = \exp_{j \ast}^{- 1}$. By Theorem {\cite[Thm. 23]{RS21}}, this will entail the existence of a global
  distribution $\mathcal{R}^{\gamma} F \in \mathcal{D}' (M)$.
  
  The overlapping condition follows by uniqueness of the reconstruction on an
  open set when $\gamma > 0$, as per Theorem~\ref{Thm: Reconstruction
  Theorem}. Given a compact set $K \in U_j \cap U_{\ell}$, $\varphi \in
  \mathcal{B}^{m_K}$, $p \in K$ and $\bar{\lambda} \in (0, \tmop{dist} (K,
  \partial (U_j \cap U_{\ell})))$, it holds that
  \[ \begin{array}{lll}
       | (f_j - \exp_j^{\ast} \circ \exp_{\ell \ast} f_{\ell})
       (\varphi^{\lambda}_{\exp_{\ell}^{- 1} (p)}) | & \leqslant & | (f_j -
       \exp_j^{\ast} F_p) (\varphi^{\lambda}_{\exp_{\ell}^{- 1} (p)}) | +
       \lvert (\exp_j^{\ast} F_p - \exp_j^{\ast} \circ \exp_{\ell \ast}
       f_{\ell}) (\varphi^{\lambda}_{\exp_{\ell}^{- 1} (p)}) \rvert\\
       & \backassign & A + B.
     \end{array} \]
  On account of Equation~\eqref{reconstructionboundonmanifoldsproof}, it holds that $A
  \lesssim \lambda^{\gamma}$. In addition, we observe that
  \[ B = \lvert
     (F_p - \exp_{\ell \ast} f_{\ell})  (\exp_{j \ast}
     \varphi^{\lambda}_{\exp_{\ell}^{- 1} (p)}) \\= \lvert (\exp_{\ell}^{\ast}
     F_p - f_{\ell})  (\exp_{\ell}^{\ast} \circ \exp_{j \ast}
     \varphi^{\lambda}_{\exp_{\ell}^{- 1} (p)}) \lesssim \lambda^{\gamma}. \]
Being the map $\exp_{\ell}^{\ast} \circ \exp_{j \ast}$ bounded
  on $C^{m_K} (K)$, $\exp_{\ell}^{\ast} \circ \exp_{j \ast}
  \varphi_{\exp_{\ell}^{- 1} (p)} \in \tmop{cst} \mathcal{B}^{m_K}$, for a
  suitable constant $\tmop{cst} > 0$. Therefore, by uniqueness of the reconstruction, $f_j = \exp_j^{\ast} \circ
  \exp_{\ell \ast} f_{\ell}$ on $U_j \cap U_{\ell}$ and thus the existence of
  a global distribution $\mathcal{R}^{\gamma} F \in \mathcal{D}' (M)$ is
  proven.
  
  It remains to prove that this global distribution actually abides by Equation~\eqref{reconstructionboundonmanifolds}. Note that Equation \eqref{reconstructionboundonmanifoldsproof} entails that the germ of
  distributions $F -\mathcal{R}^{\gamma} F$ satisfies an homogeneity condition
  as per Proposition \ref{rinaldiimpliesmine}, item $2$, with
  proportionality constant given by $\| F
  \|_{\mathcal{G}^{\tmmathbf{\alpha}_{\gamma}}_{\tmmathbf{r}, \tmmathbf{R}}
  (M), \tilde{K}}$. Therefore, Proposition \ref{rinaldiimpliesmine}
  entails that $F -\mathcal{R}^{\gamma} F$ is a $\gamma$-homogeneous germ
  as per Definition \ref{Def: Homogeneity on M}, which is tantamount to saying
  that Equation \eqref{reconstructionboundonmanifolds} holds true.
  
  Linearity of the map $F \mapsto \mathcal{R}^{\gamma} F$ descends from that of
  $\exp_{j \ast}^{- 1} F \mapsto f_j, \hspace{0.27em} \hspace{0.27em} \forall
  j \in I$. The part of the statement concerning the regularity of the reconstruction follows by invoking locally Corollary~\ref{propertiesreconstruction}.

\end{proof}

\section{Schauder Estimates for germs on Riemannian
Manifolds}\label{sectionSchaudermanifold}

Goal of this section is to prove the analogous of Theorem~\ref{Main Result 1}
in the Riemannian manifold setting. This is achieved by merging the local
formulation of Section~\ref{Sec: Schauder Estimates Open Subsets} and the
construction detailed in Section~\ref{Sec:Germs-of-distributions-manifold}.

Similarly to what we did in the local setting, for the sake of readability, we
shall split our main result in two parts, dubbed \emph{Main Theorem 1} and \emph{2},
respectively. As usual, in the following $M$ denotes a Riemannian manifold, the metric $g$ left implicit for notational ease.

\begin{theorem}[Main Theorem 1]
  \label{maintheoremonmanifolds}Let $\{\Omega_n\}_{n \in \mathbb{N}}$ be an
  exhaustion by compact sets of $M$ as per Definition \ref{Def: Compact Exhaustion} and let $\gamma, \beta > 0$. Let $F$ be an
  $(\tmmathbf{\alpha}, \gamma)$-coherent germ of distributions of order
  $\tmmathbf{r}$ and range $\tmmathbf{R}$, as per Definition \ref{Def:
  Coherence on M} and $\Kappa$ a $\beta$-regularising kernel of order $(m, r)$
  and range $\rho_n$ subordinated to $\{\Omega_n\}_{n \in \mathbb{N}}$ as per Definition
  \ref{regularisingkernelonmanifold}. Assume, moreover, that, $\rho_n$ is small enough, see Equation~\eqref{Eq:rhoN}, and that, for
  every compact set $K \subset M$,
  \begin{equation}
    \label{MaintheoremconditionseqManifold} \alpha_K + \beta > 0,
    \hspace{0.27em} \hspace{0.27em} \hspace{0.27em} \gamma + \beta \nin
    \mathbb{N}_0, \hspace{0.27em} \hspace{0.27em} \hspace{0.27em} m > \gamma +
    \beta, \hspace{0.27em} \hspace{0.27em} \hspace{0.27em} r > - \alpha_K .
  \end{equation}
  Then, there exists a germ of distributions, denoted $\mathcal{K}^{\gamma,
  \beta} F$, such that
  \begin{equation}
    \label{maintheoremequation1Manifold} \mathcal{K}^{\gamma, \beta} F -
    \mathsf{K} (\mathcal{R}^{\gamma} F) \in \mathcal{G}^{\gamma + \beta,
    \tmmathbf{\alpha}'_{\gamma + \beta}}_{\bar{\tmmathbf{r}},
    \bar{\tmmathbf{R}}} (M),
  \end{equation}
  \textit{i.e.}, $\mathcal{K}^{\gamma, \beta} F - \mathsf{K} (\mathcal{R}^{\gamma} F)$
  is $(\gamma + \beta)$-homogeneous and $(\tmmathbf{\alpha}', \gamma +
  \beta)$-coherent on $M$, of order $\tmmathbf{\bar{r}}$, with $\bar{r}_K
  \assign \lfloor - \alpha_K \rfloor$, $\alpha_K' \assign \alpha_K - d$. Here $\bar{R}_K$ is a suitable parameter codifying the range of the new
  germ. Furthermore, the following estimate holds
  \begin{equation}
    \| \mathcal{K}F - \mathsf{K} (\mathcal{R}^{\gamma} F)
    \|_{\mathcal{G}^{\gamma + \beta, \tmmathbf{\alpha}'_{\gamma +
    \beta}}_{\bar{\tmmathbf{r}}, \bar{\tmmathbf{R}}} (M), K} \lesssim_n \| F
    \|_{\mathcal{G}^{\tmmathbf{\alpha }_{\gamma}}_{\tmmathbf{r },
    \tmmathbf{R}} (M), \overline{\Omega_{\tilde{n}}}} ..
    \label{maintheoremequation2Manifold}
  \end{equation}
\end{theorem}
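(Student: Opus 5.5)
The proof localises to geodesically convex charts, applies the Euclidean Schauder estimates of Theorem \ref{Main Result 1} there, and glues the local germs.

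\textbf{Local construction.} Fix a finite good cover $\{U_i\}_{i\in I}$ made of geodesically convex sets, and fix a compact set $K\subset M$. Let $n$ be the smallest integer with $K\subset\Omega_n$. Shrinking the cover if necessary, we may assume each $U_i$ meeting $\Omega_n$ lies in a set whose diameter is below both $R_K$ and the convexity radius $\mathcal{R}_C(\Omega_n)$.

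On each such $U_i$ we move to normal coordinates and set $U_i':=\exp_i^{-1}(U_i)$.
\begin{enumerate}
\item By Proposition \ref{mineimplyrs}, the germ $\exp^{-1}_{i\ast}F$ is an $(\tmmathbf{\alpha},\gamma)$-coherent germ on the convex bounded open set $U_i'$ in the sense of Definition \ref{Def: Coherence}, with seminorms controlled by $\|F\|_{\mathcal{G}^{\tmmathbf{\alpha}_\gamma}_{\tmmathbf{r},\tmmathbf{R}}(M),\tilde K}$.
\item By Proposition \ref{Prop:kernel-and-exp}, item 1, the kernel $\exp_i^\ast\mathsf{K}$ is a $\beta$-regularising kernel on $U_i'$ of order $(m,r)$ and range $\rho_n/\mathfrak{C}_n$.
\item The smallness condition on $\rho_n$ (Equation \eqref{Eq:rhoN}) is used to guarantee that $(U_i')^{2\rho_n/\mathfrak{C}_n}$ still covers $\exp_i^{-1}(K\cap U_i)$ after shrinking.
\item The hypotheses \eqref{MaintheoremconditionseqManifold} imply those of Theorem \ref{Main Result 1}: $\alpha_K+\beta>0$ gives $\alpha_K+\beta\neq0$, and $\gamma>0$ gives $\gamma\ne0$.
\item We then define the local germ
\[ (\mathcal{K}_i F)_x:=\exp^\ast_i\mathsf{K}\,\exp^{-1}_{i\ast}F-\mathcal{T}^{\gamma+\beta}\big(\exp_i^\ast\mathsf{K}\{\exp^{-1}_{i\ast}F-\mathcal{R}^\gamma\exp^{-1}_{i\ast}F\}\big)_x \]
exactly as in \eqref{maintheoremequationeq}.
\item Theorem \ref{Main Result 1} then gives that $\mathcal{K}_iF-\exp_i^\ast\mathsf{K}(\mathcal{R}^\gamma \exp^{-1}_{i\ast}F)$ is $(\gamma+\beta)$-homogeneous and $(\tilde{\tmmathbf\alpha},\gamma+\beta)$-coherent on $(U_i')^{2\rho}$, with the estimate \eqref{maintheoremequation2}.
\end{enumerate}
By the proof of Theorem \ref{reconstructiononmanifolds}, $\mathcal{R}^\gamma\exp^{-1}_{i\ast}F=\exp_i^\ast\mathcal{R}^\gamma F$ locally, since uniqueness holds for $\gamma>0$.

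\textbf{Gluing.} For $p\in U_i$ we set $(\mathcal{K}^{\gamma,\beta}F)_p:=\exp_{i\ast}(\mathcal{K}_iF)_{\exp_i^{-1}p}$. We must show this is independent of $i$. The term $\mathsf{K}F_p$ is intrinsic. The Taylor subtraction is \emph{not} coordinate invariant as a polynomial, so the cleanest route is to define the subtracted object intrinsically: it is the unique correction making $\mathsf{K}(F-\mathcal{R}^\gamma F)_p$ vanish at $p$ to order $\gamma+\beta$, namely the pointwise jet of the smooth-near-$p$ object. Since $\gamma+\beta\notin\mathbb{N}_0$, the difference between the jet corrections computed in two charts is a germ that is both $(\gamma+\beta)$-homogeneous and of homogeneity $\ge 0$ near $p$ built from polynomials. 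Both versions are $(\gamma+\beta)$-homogeneous, so their difference is too. A polynomial germ in one chart is a smooth function in the other whose Taylor part of order $<\gamma+\beta$ at $p$ must vanish, hence they agree up to terms killed by this argument. Concretely, I would instead fix one chart per point via a measurable partition of $M$ subordinate to the cover; measurability of $p\mapsto F_p$ is preserved, which avoids compatibility issues altogether.

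\textbf{Homogeneity and coherence on $M$.}
\begin{enumerate}
\item Proposition \ref{rinaldiimpliesmine}, item 2, transfers the local homogeneity bound to Definition \ref{Def: Homogeneity on M}.
\item For coherence, the germ $G=\mathcal{K}^{\gamma,\beta}F-\mathsf{K}\mathcal{R}^\gamma F$ is $(\gamma+\beta)$-homogeneous on $M$, so Proposition \ref{homocoherenceonmanifolds} yields coherence with $\alpha'_K=\min\{-\bar r_K-d,\gamma+\beta\}$, order $\bar r_K=\lfloor-\alpha_K\rfloor$ and some range $\bar R_K$. This yields $\alpha'_K=\alpha_K-d$ up to the integer part, which is the stated exponent.
\item The coherence seminorm is proportional to the homogeneity one. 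That seminorm is bounded via \eqref{maintheoremequation2} and Proposition \ref{mineimplyrs} by $\|F\|$ on a compact enlargement. Taking $\tilde n$ with that enlargement inside $\Omega_{\tilde n}$ gives \eqref{maintheoremequation2Manifold}, with constants depending on $n$ through $\mathfrak{C}_n$, $\rho_n$ and the chart-change constants of Proposition \ref{Prop:kernel-and-exp}, item 2.
\end{enumerate}

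\textbf{Main obstacle.} The delicate step is the gluing, together with the compatibility of the chart-dependent Taylor polynomial $\mathcal{T}^{\gamma+\beta}$. I expect the partition-of-charts definition, combined with the fact that homogeneity and coherence are verified pointwise or pairwise within single charts (using the small range $\bar R_K$), to suffice.
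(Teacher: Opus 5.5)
Your outline is sound. Its second half coincides with Parts 3--4 of the paper's proof:
\begin{enumerate}
\item the Euclidean estimate of Theorem \ref{Main Result 1} in normal coordinates;
\item transfer of the $(\gamma+\beta)$-homogeneity bound to $M$ via Proposition \ref{rinaldiimpliesmine};
\item coherence deduced from homogeneity via Proposition \ref{homocoherenceonmanifolds};
\item norm control via Proposition \ref{mineimplyrs}.
\end{enumerate}

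The genuine difference is how a single germ on $M$ is produced. The paper's proof never glues. For each $p$, with $n$ the layer of $p$, it cuts $F$ off by $\psi_p$ supported in $B(p,2\mathfrak{R}_n)$. It then runs the Euclidean construction $\bar{\mathcal{K}}_p$ for $\exp_p^\ast\mathsf{K}$ in normal coordinates \emph{centred at $p$ itself}, and sets $(\mathcal{K}F)_p:=\tilde{\mathcal{K}}_pF_p$, i.e.\ it evaluates the Euclidean germ at the origin. The price is Part 1: the kernel constants $c^{\exp_p^{-1}}_K$ must be bounded uniformly in $p\in\Omega_n$. The paper gets this from the smooth dependence of $\exp_p^{-1}\circ\exp_{p_j}$ on $p$ (Lemma \ref{Lem:smoothness of diffeomorphism}).

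Your measurable selection among finitely many fixed charts gets that uniformity for free. It also makes measurability of $p\mapsto(\mathcal{K}^{\gamma,\beta}F)_p$ transparent, which the paper does not discuss. In exchange, the germ depends on an arbitrary partition. And since $p$ is generally not the centre of its chart, you rely on the chart comparison of Lemma \ref{frommanifoldtoflat} (used inside Proposition \ref{rinaldiimpliesmine}). That lemma takes you from Euclidean scaling at $\exp_i^{-1}(p)$ to the $\exp_p$-scaled test functions of Definition \ref{Def: Homogeneity on M}; in the paper's centred chart these two scalings essentially coincide. The selection is legitimate for the reason you give. Homogeneity is a pointwise condition, and Proposition \ref{homocoherenceonmanifolds} produces coherence from homogeneity alone, so it does not matter that nearby points are computed in different charts.

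Several points need tightening.

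(i) You fix $K$ and then shrink the cover, so as written $\mathcal{K}^{\gamma,\beta}F$ depends on $K$. The charts and the partition must be chosen once for all of $M$, adapted to the layers $\Omega_n\setminus\Omega_{n-1}$, with sizes tied to $\mathfrak{R}_n$. Note that $\mathfrak{R}_n$ involves the range $R_{\overline{\Omega_{\tilde n}}}$ of an enlargement rather than $R_K$, because Proposition \ref{mineimplyrs} is applied on compact sets sticking out of $K$. Also take the charts to be geodesic balls. Theorem \ref{Main Result 1} needs $U_i'$ convex, and geodesic convexity of $U_i$ only makes $\exp_i^{-1}(U_i)$ star-shaped about $0$.

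(ii) Your item 3 cannot hold as stated. The set $(U_i')^{2\rho}$ omits a collar of $\partial U_i'$, so it misses part of $\exp_i^{-1}(K\cap U_i)$ when $K$ reaches $\partial U_i$. Moreover, the charts cannot be shrunk arbitrarily, since they must stay large compared with $\rho_n$. A workable choice is geodesic balls of radius $\mathfrak{R}_n$ centred at an $\mathfrak{R}_n/2$-net of the layer, with each point assigned to a ball whose centre lies within $\mathfrak{R}_n/2$. Then every assigned point is at Euclidean distance at least $\mathfrak{R}_n/2$ from $\partial U_i'$. This is exactly the margin \eqref{Eq:rhoN} is designed for (compare $\frac{\mathfrak{R}_n}{2}+\frac{2\rho_n}{\mathfrak{C}_{\overline{\Omega_{\tilde n}}}}\le\mathfrak{R}_n$ in Part 2 of the paper).

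(iii) Drop the intrinsic-jet argument. The Taylor corrections computed in two charts share the jet at $p$, so they differ by a function whose derivatives at $p$ of order at most $\lfloor\gamma+\beta\rfloor$ vanish. That difference is $(\gamma+\beta)$-homogeneous but not zero, so no chart-independent germ results. The partition, or the paper's centred chart, is genuinely needed.

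(iv) For the exponent, Proposition \ref{homocoherenceonmanifolds} gives $-\bar r_K-d=-\lfloor-\alpha_K\rfloor-d\ge\alpha_K-d$. Coherence with a larger $\alpha$ implies coherence with a smaller one, so you get exactly $\alpha'_K=\alpha_K-d$, not merely something ``up to the integer part''.
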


\begin{proof}
  For the sake of readability, we divide the proof in several parts, the first being denoted by $0$ since it serves the purpose of fixing the setting.
  
  \tmtextbf{Part 0:} We consider
  an atlas $\mathcal{A} \assign (U_i, \phi_i)$, $\phi_i \assign \exp_i^{- 1}$
  with $U_i \subset M$ a geodesically convex set. For the sake of clarity, when writing
  $\exp_i$ we mean $\exp_{p_i}$ for a suitable point $p_i \in U_i$ playing the
  role of center of the convex set $U_i$. Since $\Omega_n$ is a relatively
  compact set, there exists $\bar{n} \in \mathbb{N}$ for which $\Omega_n
  \subset \overline{\Omega_n} \subset \bigcup_{i = 1}^{\bar{n}} U_i$.
  
  In addition, without loss of generality, we require that $\forall i \in \{ 1,
  \ldots, \bar{n} \}$, $\tmop{diam} (U_i) = \max_{p, q \in U_i} d_g (p, q)
  <\mathcal{R}_C (\Omega_n) / 4$. We now introduce $\tilde{n}$ as the smallest
  integer such that $(\Omega_n)_{\mathcal{R}_C (\Omega_n)} \subset
  \Omega_{\tilde{n}}$, where $(\Omega_n)_{\mathcal{R}_C (\Omega_n)}$
  stands for the $\mathcal{R}_C (\Omega_n)-$enlargement of $\Omega_n$. By
  construction, it holds that $\bigcup_{i = 1}^{\bar{n}} U_i \subset
  \Omega_{\tilde{n}}$. We define
  \begin{equation}
    \label{RN} \mathfrak{R}_n \assign \min \left\{ \frac{\mathcal{R}_C
    (\Omega_n)}{4}, \frac{\pi}{8 \sqrt{\lvert \mathcal{S} \rvert}},
    \frac{R_{\overline{\Omega_{\tilde{n}}}}}{2} \right\},
  \end{equation}
  where $\lvert \mathcal{S} \rvert$ is the supremum norm of the sectional
  curvature tensor evaluated on $\Omega_n$. The norm
  $|{\mathcal{S}}|$ is finite since the curvature
  tensor is continuous with respect to the base point and $\Omega_n$ is a
  relatively compact set, {\it cf.} Proposition \ref{topologicalequivalencenormalcoordinates}. Finally,
  $R_{\overline{\Omega_{\tilde{n}}}}$ denotes the range of germ $F$ associated
  with the compact set $\overline{\Omega_{\tilde{n}}}$. Concerning the range
  $\rho_n$, we assume that
  \begin{equation}
    \rho_n \leqslant \min \{ \frac{\mathfrak{R}_n}{4},
    \frac{\mathfrak{C}_{\overline{\Omega_{\tilde{n}}}} \mathfrak{R}_n}{4} \},
    \label{Eq:rhoN}
  \end{equation}
  where $\mathfrak{C}_{\overline{\Omega_{\tilde{n}}}}$ is the constant
  associated with the compact set $\overline{\Omega_{\tilde{n}}}$ as per
  Proposition \ref{topologicalequivalencenormalcoordinates}.
  
  \
  
  \tmtextbf{Part 1:} Let now $\Omega_n$, with $n \in \mathbb{N}$, be fixed
  and let $\{ U_i \}_{i = 1}^{\bar{n}}$ be its covering as above. Let now $p
  \in \Omega_n$ and let $K \subset B (p, \mathfrak{R}_n)$ be a compact set.
  Then, on account of Definition \ref{regularisingkernelonmanifold}, in
  particular Equation~\eqref{kernelbound1onmanifold}, associated with $p$ and
  $K$ we have the constant $c_K^{\exp_p^{- 1}}$ for the kernel $\Kappa$.
  
  Our goal now is to prove that it is possible to choose such constants
  $c_K^{\exp_p^{- 1}}$in a way which is uniform with respect to $p \in
  \Omega_n$ and $K \subset B (p, \mathfrak{R}_n)$, thus obtaining a constant
  depending solely on $n$.
  
  For any $p \in \Omega_n$ there exists $j \in \{ 1 \ldots, \bar{n} \}$ such
  that $p \in U_j$. Proposition \ref{Prop:kernel-and-exp}, entails that, for
  any fixed $p_j \in U_j$,
  \begin{equation}
    c^{\exp^{- 1}_p}_K = c^{\exp^{- 1}_{p_j}}_K  \| \exp_p^{- 1} \circ
    \exp_{p_j} \|_{C^{m + r} (\bar{B} (0, \mathcal{R}_C (\Omega_n) / 4))} .
    \label{Eq:factorization-constant}
  \end{equation}
  It now suffices to prove that both factors appearing on the right hand side of
  Equation~\eqref{Eq:factorization-constant} can be bounded uniformly with respect to
  $p \in \Omega_n$ and $K \subset B (p, \mathfrak{R}_n)$ yielding a constant
  depending only on $n$.
  
  To handle the contribution coming from $\| \exp_p^{- 1} \circ \exp_{p_j}
  \|_{C^{m + r} (\bar{B} (0, \mathcal{R}_C (\Omega_n) / 4))}$, we use Lemma
  \ref{Lem:smoothness of diffeomorphism} to obtain smoothness with respect to $p$. Setting
  \[ \tmop{cst}_n \assign \max_{i = 1 \ldots, \bar{n}} \sup_{p \in
     \overline{U_i}} \| \exp_p^{- 1} \circ \exp_{p_i} \|_{C^{m + r} (\bar{B}
     (0, \mathcal{R}_C (\Omega_n) / 4))} < \infty, \]
  we get, for any $j \in \{ 1, \ldots, \bar{n} \}$ and for any $p \in
  \Omega_n$
  \[ \| \exp_p^{- 1} \circ \exp_{p_j} \|_{C^{m + r} (\bar{B} (0, \mathcal{R}_C
     (\Omega_n) / 4))} \leqslant \tmop{cst}_n . \]
  Focusing on the contribution coming from $c^{\exp^{- 1}_{p_j}}_K$, we
  observe that by construction $B (p, \mathfrak{R}_n) \subset B (p_j,
  \mathcal{R}_C (\Omega_n) / 2)$, and thus for any compact $K \subset B (p,
  \mathfrak{R}_n)$, we have $K \subset \overline{B (p_j, \mathcal{R}_C
  (\Omega_n) / 2)}$. We set now
  \[ \tmop{cst}_n' \assign \max_{i = 1, \ldots, \bar{n}} c_{\overline{B (p_i,
     \mathcal{R}_C (\Omega_n) / 2)}}^{\exp^{- 1}_{p_i}} . \]

  By monotonicity with respect to inclusions of compact sets of the constants
  $c^{\exp^{- 1}_{p_j}}_K$, \tmtextit{cf.} Equation~\eqref{kernelbound1onmanifold}, we have, for any $j \in \{ 1,
  \ldots, \bar{n} \}$ and for any $p \in \Omega_n$
  \[ c^{\exp^{- 1}_{p_j}}_K \leqslant \tmop{cst}_n' . \]
  As a consequence, for every point $p \in \Omega_n$ and for every compact set
  $K \subset B (p, \mathcal{R}_C (\Omega_n) / 4)$, we have
  \[ c^{\exp^{- 1}_p}_K \leqslant \tmop{cst}_n' \tmop{cst}_n \backassign c_n .
  \]
  Thus we proved that it is possible to write the bound as per
  Equation~\eqref{kernelbound1onmanifold} with constants only depending on
  $n$.
  
  \
  
  \tmtextbf{Part 2:} Let $p \in M$ and let $n$ be the smallest integer such
  that $p \in \Omega_n$. We want to localize $F_p \in \mathcal{D}' (M)$, so
  that it is a distribution supported in $B (p, 2\mathfrak{R}_n)$. To this
  end, let $\psi_p \in C^{\infty}_0 (M)$ be such that $\tmop{supp} (\psi_p)
  \subset B (p, 2\mathfrak{R}_n)$ and $\psi_p = 1$ on $B (p,
  \mathfrak{R}_n)$. Hence $\psi_p F$ has the desired localization property.
  
  On account of Proposition \ref{Prop:kernel-and-exp} the kernel
  $\exp_p^{\ast} \Kappa$ is a $\beta$-regularising kernel on the open set $B
  (0, \mathfrak{R}_n) \subset \mathbb{R}^d$ with range
  $\frac{\rho_n}{\mathfrak{C}_{\overline{\Omega_n}}}$. We now denote by
  $\bar{\mathcal{K}}_p$ the map between germs on open sets introduced in
  Equation (\ref{maintheoremequationeq}) associated with the kernel
  $\exp_p^{\ast} \Kappa$.
  
  By Theorem~\ref{mineimplyrs}, item $1.$, the germ $\exp_{p \ast}^{- 1}
  (\psi_p F)$ is $(\tmmathbf{\alpha}, \gamma)$-coherent of order $\tmmathbf{r}$ on $B (0, \mathfrak{R}_n) \subset
  \mathbb{R}^d$. In particular, we observe that as $\frac{\mathfrak{R}_n}{2} +
  \frac{2 \rho_n}{\mathfrak{C}_{\overline{\Omega_{\tilde{n}}}}} \leqslant
  \mathfrak{R}_n$, the germ $\exp_{p \ast}^{- 1} (\psi_p F)$ is
  $(\tmmathbf{\alpha}, \gamma)$-coherent of order
  $\tmmathbf{r}$ on $B (0, \frac{\mathfrak{R}_n}{2} + \frac{2
  \rho_n}{\mathfrak{C}_{\overline{\Omega_{\tilde{n}}}}}) \subset
  \mathbb{R}^d$.
  
  We are now in a position to define the map $\tilde{\mathcal{K}}_p$ acting on
  germs $(F_q)_{q \in M}$ on the manifold $M$ as
  \[ \tilde{\mathcal{K}}_p : F \rightarrow \tilde{\mathcal{K}}_p F, \qquad
     \tilde{\mathcal{K}}_p : F_q \mapsto (\tilde{\mathcal{K}}_p F)_q \assign
     \exp_{p \ast} \circ \bar{\mathcal{K}}_p \circ \exp_{p \ast}^{- 1} 
     (\psi_p (q) F_q) . \]
  On account of the above discussion, we have that $(\tilde{\mathcal{K}}_p
  F)_{q \in B (p, \frac{\mathfrak{R}_n}{2})}$ is a germ of distributions.
  
  \
  
  \tmtextbf{Part 3:} Since $\psi_p (q) = 1$ for any $q \in B (p,
  \mathfrak{R}_n)$, it follows that $\mathcal{R}^{\gamma} \exp_p^{\ast} 
  (\psi_p F) =\mathcal{R}^{\gamma} \exp_p^{\ast} (F)$ on the open set $B (0,
  \mathfrak{R}_n)$, and by Theorem \ref{Main Result 1} it follows that
  $\exp_p^{\ast} (\tilde{\mathcal{K}}_p F_q - \mathsf{K} \mathcal{R}^{\gamma}
  F)$ is a $(\gamma + \beta)-$homogeneous germ on $B (0, \mathfrak{R}_n / 2)$.
  Equation \eqref{maintheoremequation2} entails that for any $K \subset B (0,
  \frac{\mathfrak{R}_n}{2})$ compact
  \begin{gather}
    \| \exp_p^{\ast} (\tilde{\mathcal{K}}_p F_q - \mathsf{K}
    \mathcal{R}^{\gamma} F) \|_{\mathsf{G}^{\gamma +
    \beta}_{\bar{\tmmathbf{r}}} (B (0, \frac{\mathfrak{R}_n}{2})), K}
    \leqslant \| \mathcal{K}^{\gamma, \beta} F - \mathsf{K}
    (\mathcal{R}^{\gamma} F) \|_{\mathcal{G}^{\gamma + \beta,
    \tilde{\tmmathbf{\alpha}}_{\gamma + \beta}}_{\bar{\tmmathbf{r}}} (B (0,
    \frac{\mathfrak{R}_n}{2})), K} \notag\\
    \leqslant \tmop{cst} \| \exp_p^{\ast} F
    \|_{\mathcal{G}^{\tmmathbf{\alpha}_{\gamma}}_{\tmmathbf{r}} (B (0,
    \frac{\mathfrak{R}_n}{2} + \frac{2
    \rho_n}{\mathfrak{C}_{\overline{\Omega_{\tilde{n}}}}})), \tilde{K}'},
    \label{maintheoremequation2Manifold2}
  \end{gather}
  where $\tmop{cst}$ is a positive constant possibly dependent on $c^{exp_p^{-
  1}}_K, \rho_n, \gamma + \beta, \alpha_{\tilde{K}}, \bar{r}_{\tilde{K}},
  \bar{\lambda}, d$ and where $\tilde{K}' \subset B (0,
  \frac{\mathfrak{R}_n}{2} + \frac{2
  \rho_n}{\mathfrak{C}_{\overline{\Omega_{\tilde{n}}}}})$ is a compact set as
  per Theorem \ref{Main Result 1}.
  
  In Part $1$ we showed that, for every $p \in \Omega_n$ and $K$
  compact, the constant $c^{exp_p^{- 1}}_K$ relative to the kernel
  $\exp_p^{\ast} \mathsf{K}$ depends only on $n$. Additionally, Proposition
  \ref{rinaldiimpliesmine} -- in particular, its proof -- entails that the germ $\exp_p^{\ast} F$ is $(\gamma
  + \beta)-$homogeneous of order $\tmmathbf{r}$, on $B (0, \mathfrak{R}_n /
  2)$ for a suitable range $\overline{\tmop{cst}}_K$. Thus, invoking Proposition
  \ref{mineimplyrs}, we have that
  \begin{equation}
    \label{maintheoremequation3Manifold} \| \exp_p^{\ast} F
    \|_{\mathcal{G}^{\tmmathbf{\alpha}_{\gamma}}_{\tmmathbf{r}} (B (0,
    \frac{\mathfrak{R}_n}{2} + \frac{2
    \rho_n}{\mathfrak{C}_{\overline{\Omega_{\tilde{n}}}}})), \tilde{K}'}
    {\lesssim_{n, K}}  \| F \|_{\mathcal{G}^{\tmmathbf{\alpha }}_{\tmmathbf{r
    }, \tmmathbf{R}} (M), \exp_p (\tilde{K}')} \leqslant \| F
    \|_{\mathcal{G}^{\tmmathbf{{\alpha_{\gamma}} }}_{\tmmathbf{r },
    \tmmathbf{R}} (M), \overline{\Omega_{\tilde{n}}}}.
  \end{equation}
  Putting together Equations \eqref{maintheoremequation2Manifold2} and
  \eqref{maintheoremequation3Manifold}, we obtain
  \begin{equation}
    \label{maintheoremequation4Manifold} \| \exp_p^{\ast}
    (\tilde{\mathcal{K}}_p F_q - \mathsf{K} \mathcal{R}^{\gamma} F)
    \|_{\mathsf{G}^{\gamma + \beta}_{\bar{\tmmathbf{r}}} (B (0,
    \frac{\mathfrak{R}_n}{2})), K} {\lesssim_{n, K}}  \| F
    \|_{\mathcal{G}^{\tmmathbf{\alpha }}_{\tmmathbf{r }, \tmmathbf{R}} (M),
    \exp_p (\tilde{K}')} \leqslant \| F \|_{\mathcal{G}^{\tmmathbf{\alpha
    }_{\gamma}}_{\tmmathbf{r }, \tmmathbf{R}} (M),
    \overline{\Omega_{\tilde{n}}}},
  \end{equation}
  where the constant in the bound depends only on $n$, as we can exploit
  the monotonicity of the constants $\alpha_K$ with respect to compact sets to bound the parameters
  $\alpha_K$ and $r_K$ by those related to the compact set $\bar{\Omega}_n$.
  
  \
  
  \tmtextbf{Part 4:} Define the map
  \begin{equation}
    \label{shaudermaponmanifolds} \mathcal{K}: F_p \mapsto \mathcal{K}F_p
    \assign \tilde{\mathcal{K}}_p F_p .
  \end{equation}
  By Part 3 of this proof together with Proposition \ref{rinaldiimpliesmine},
  for any $p$, $\mathcal{K}F_p - \mathsf{K} (\mathcal{R}^{\gamma} F)$
  satisfies a $(\gamma + \beta)-$homogeneity bound of order
  $\bar{\tmmathbf{r}} = \lfloor -\tmmathbf{\alpha} \rfloor$ on $M$. Thanks to
  Part 1, the constant of the homogeneity estimate can be made dependent only
  on $n$, the smallest integer such that $p \in
  \Omega_n$. Thus, it does not depend pointwisely on $p$. Therefore, for a
  fixed compact set $K$, the bound is uniform over all points therein.
  Hence $\mathcal{K}F_p - \mathsf{K} (\mathcal{R}^{\gamma} F)$ is a $(\gamma +
  \beta)-$homogeneous germ as per Definition \ref{Def: Homogeneity on M}. By
  Proposition \ref{rinaldiimpliesmine} and Equation
  \eqref{maintheoremequation4Manifold}, we have that, for every compact set $K
  \subset B (p, \mathfrak{R}_N / 2) \subset \Omega_{\tilde{n}}$,
  \begin{equation}
    \label{maintheorem5manifold} \| \mathcal{K}F - \mathsf{K}
    (\mathcal{R}^{\gamma} F) \|_{\mathsf{G}^{\gamma +
    \beta}_{\bar{\tmmathbf{r}}} (M), K} \lesssim_n \| F
    \|_{\mathcal{G}^{\tmmathbf{\alpha }_{\gamma}}_{\tmmathbf{r },
    \tmmathbf{R}} (M), \overline{\Omega_{\tilde{n}}}} .
  \end{equation}
  By applying Proposition \ref{homocoherenceonmanifolds}, it follows that
  $\mathcal{K}F$ is a $(\tmmathbf{\alpha}', \gamma + \beta)-$coherent germ of
  order $\bar{\tmmathbf{r}} = \lfloor - \alpha_K \rfloor$, range
  $\overline{\tmmathbf{R}}$ where for each compact set $K \subset M$,
  $\bar{R}_K$ is the constant $\widetilde{\tmop{cst}}_K$ found in the proof of
  Lemma \ref{recenteringonmanifolds}, while $\alpha_K' \assign \alpha_K - d$.
  In addition, Proposition \ref{homocoherenceonmanifolds} entails that
  \begin{equation}
    \label{mattomatteo} \lvert \lvert \mathcal{K}F - \mathsf{K}
    \mathcal{R}^{\gamma} F \rvert
    \rvert_{\mathcal{G}^{\tmmathbf{\alpha'_{\gamma +
    \beta}}}_{\tmmathbf{\overline{r }}, \tmmathbf{\bar{R}}} (M), K} \lesssim_n
    \| F \|_{\mathcal{G}^{\tmmathbf{\alpha }_{\gamma}}_{\tmmathbf{r },
    \tmmathbf{R}} (M), K'} \leqslant \| F \|_{\mathcal{G}^{\tmmathbf{\alpha
    }_{\gamma}}_{\tmmathbf{r }, \tmmathbf{R}} (M),
    \overline{\Omega_{\tilde{n}}}} .
  \end{equation}
  As a consequence, overall we get
  \[ \| \mathcal{K}F - \mathsf{K} (\mathcal{R}^{\gamma} F)
     \|_{\mathcal{G}^{\gamma + \beta, \tmmathbf{\alpha}'_{\gamma +
     \beta}}_{\bar{\tmmathbf{r}}, \bar{\tmmathbf{R}}} (M), K} \assign \|
     \mathcal{K}F - \mathsf{K} (\mathcal{R}^{\gamma} F) \|_{\mathsf{G}^{\gamma
     + \beta}_{\bar{\tmmathbf{r}}, \overline{\tmmathbf{R}}} (M), K} + \lvert
     \lvert \mathcal{K}F - \mathsf{K} \mathcal{R}^{\gamma} F \rvert
     \rvert_{\mathcal{G}^{\tmmathbf{\alpha'_{\gamma +
     \beta}}}_{\overline{\tmmathbf{r }}, \overline{\tmmathbf{R}}} (M), K}
     \lesssim \| F \|_{\mathcal{G}^{\tmmathbf{\alpha }_{\gamma}}_{\tmmathbf{r
     }, \tmmathbf{R}} (M), \overline{\Omega_{\tilde{n}}}} \]
  
\end{proof}

\begin{theorem}[Main Theorem 2]
  \label{maintheoremonmanifolds2}Under the same assumptions of Theorem
  \ref{maintheoremonmanifolds}, if $\mathsf{K} (\mathcal{R}^{\gamma} F)$ is a
  well-defined distribution in $\mathcal{D}' (M)$, then $\mathcal{K}^{\gamma,
  \beta} F$ is a well defined germ on $M$ and it holds that:
  \begin{itemize}
    \item[(i)] $\mathcal{K}^{\gamma, \beta} F \in \mathcal{G}^{\gamma + \beta,
    \tmmathbf{\alpha'_{\gamma + \beta}}}_{\tmmathbf{\bar{r}}, \tmmathbf{R'}}
    (M)$ and
    \begin{equation}
      \label{maintheoremequation3Manifold4} \mathcal{R}^{\gamma + \beta}
      (\mathcal{K}^{\gamma, \beta} F) = \mathsf{K} (\mathcal{R}^{\gamma} F),
    \end{equation}
    on $M$, {\it i.e.}, $\mathsf{K} (\mathcal{R}^{\gamma} F)$ is the $(\gamma +
    \beta)-$reconstruction of the germ of distributions $\mathcal{K}^{\gamma,
    \beta} F$.
    
    \item[(ii)] If $F$ is $\bar{\tmmathbf{\alpha}}$-homogeneous, with
    \[ \bar{\alpha}_K \leqslant \gamma, \hspace{0.27em} \hspace{0.27em}
       \hspace{0.27em} \bar{\alpha}_K + \beta \neq 0, \]
    then $\mathcal{K}^{\gamma, \beta} F$ is $((\bar{\tmmathbf{\alpha}} +
    \beta) \wedge 0)-$homogeneous of a suitable order $\bar{\tmmathbf{r}}$,
    where $(\bar{\alpha}_K + \beta) \wedge 0 \assign \min \{ \bar{\alpha}_K +
    \beta, 0\}$.
    
    \item[(iii)] The map $F \mapsto \mathcal{K}^{\gamma, \beta} F$ is linear and
    continuous, both as
    \[ \mathcal{K}^{\gamma, \beta} :
       \mathcal{G}^{\tmmathbf{\alpha}_{\gamma}}_{\tmmathbf{r}, \tmmathbf{R}}
       (M) \to \mathcal{G}^{\tmmathbf{\alpha}'_{\gamma +
       \beta}}_{\bar{\tmmathbf{r}}, \overline{\tmmathbf{R}}} (M), \qquad
       \tmop{and} \qquad \mathcal{K}^{\gamma, \beta} :
       \mathcal{G}^{\tmmathbf{\bar{\alpha}},
       \tmmathbf{\alpha}_{\gamma}}_{\tmmathbf{r}} (M) \to
       \mathcal{G}^{(\tmmathbf{\bar{\alpha}} + \beta \wedge 0),
       \tmmathbf{\alpha}'_{\gamma + \beta}}_{\bar{\tmmathbf{r}},
       \overline{\tmmathbf{R}}} (M), \]
    abiding by the following continuity estimates: For a suitable compact $K'$
    such that $K \subset K'$ and for suitable parameters $\tmmathbf{\alpha}' 
    = \{ \alpha_K' \}_K$,
    \[ \lvert \lvert \mathcal{K}^{\gamma, \beta} F \rvert
       \rvert_{\mathcal{G}^{\tmmathbf{\alpha}'_{\gamma +
       \beta}}_{\bar{\tmmathbf{r}}, \overline{\tmmathbf{R}}} (M), K}
       \lesssim_n \hspace{0.17em} \lvert \lvert F \rvert
       \rvert_{\mathcal{G}^{\tmmathbf{\alpha}_{\gamma}}_{\tmmathbf{r},
       \tmmathbf{R}} (M), K'}, \qquad \lvert \lvert \mathcal{K}^{\gamma,
       \beta} F \rvert \rvert_{\mathcal{G}^{(\tmmathbf{\bar{\alpha}} + \beta
       \wedge 0), \tmmathbf{\alpha}'_{\gamma + \beta}}_{\bar{\tmmathbf{r}},
       \overline{\tmmathbf{R}}} (M), K} \lesssim_n \lvert \lvert F \rvert
       \rvert_{\mathcal{G}^{\tmmathbf{\bar{\alpha}},
       \tmmathbf{\alpha}_{\gamma}}_{\tmmathbf{r}} (M), K'}, \]
    where $n$ is the smallest integer such that $K \subset \Omega_n$. As a
    consequence the following diagram commutes
    \[ \begin{array}{ccc}
         \mathcal{G}^{\tmmathbf{\alpha}_{\gamma}}_{\tmmathbf{r}, \tmmathbf{R}}
         (M) & \xrightarrow{\mathcal{K}^{\gamma, \beta}} &
         \mathcal{G}^{\tmmathbf{\alpha}'_{\gamma +
         \beta}}_{\bar{\tmmathbf{r}}, \overline{\tmmathbf{R}}} (M)\\
         \longdownarrow \mathcal{R}^{\gamma} &  & \longdownarrow
         \mathcal{R}^{\gamma + \beta}\\
         \mathcal{D}' (M) & \xrightarrow{\Kappa} & \mathcal{D}' (M)
       \end{array} \]
  \end{itemize}
\end{theorem}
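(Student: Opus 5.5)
The plan is to mirror, on the manifold, the way Proposition \ref{Main Result 2} and Proposition \ref{Main Result 3} were deduced from Theorem \ref{Main Result 1} on open sets. Main Theorem 1 (Theorem \ref{maintheoremonmanifolds}) will be used as the black box. Throughout, $\mathsf{K}(\mathcal{R}^{\gamma} F)$ is regarded as a constant germ on $M$.

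\emph{Item (i).} Since $\mathsf{K}(\mathcal{R}^{\gamma}F)\in\mathcal{D}'(M)$, the germ $\mathcal{K}^{\gamma,\beta}F = (\mathcal{K}^{\gamma,\beta}F-\mathsf{K}(\mathcal{R}^{\gamma}F)) + \mathsf{K}(\mathcal{R}^{\gamma}F)$ is well defined and measurable in $p$. Its coherence seminorm on $K$ is bounded by that of the first summand, because a constant germ has vanishing coherence seminorm. Equation \eqref{maintheoremequation2Manifold} then gives the first continuity estimate of (iii), with $K'=\overline{\Omega_{\tilde n}}$. For the reconstruction identity, I would note that $(\mathcal{K}^{\gamma,\beta}F-\mathsf{K}(\mathcal{R}^{\gamma}F))_p(\varphi^\lambda_p)$ is $O(\lambda^{\gamma+\beta})$ uniformly on compacts, by the $(\gamma+\beta)$-homogeneity in Theorem \ref{maintheoremonmanifolds}. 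Since $\gamma+\beta>0$, uniqueness in Theorem \ref{reconstructiononmanifolds} forces $\mathcal{R}^{\gamma+\beta}(\mathcal{K}^{\gamma,\beta}F)=\mathsf{K}(\mathcal{R}^{\gamma}F)$. This uniqueness is checked chartwise through Proposition \ref{mineimplyrs} and the open-set uniqueness of Theorem \ref{Thm: Reconstruction Theorem}, exactly as in the overlap argument of Theorem \ref{reconstructiononmanifolds}. Commutativity of the diagram is then just this identity, and linearity follows from the linearity of $\mathcal{R}^\gamma$, of $\mathsf{K}$, of the Taylor subtraction, and of the cutoff construction $F\mapsto \psi_p F$ used in Part 2 of the previous proof.

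\emph{Item (ii).} Here I would work locally, as in Part 2 and Part 3 of the proof of Theorem \ref{maintheoremonmanifolds}. For $p\in\Omega_n$, pull back by $\exp_p$ onto $B(0,\mathfrak{R}_n/2)$ and apply Proposition \ref{Main Result 3} to $\exp_p^{\ast}(\psi_pF)$ with the regularizing kernel $\exp_p^\ast\mathsf{K}$; Proposition \ref{Prop:kernel-and-exp} guarantees that this pullback is a regularizing kernel. Proposition \ref{Main Result 3} yields $((\bar\alpha+\beta)\wedge 0)$-homogeneity of the local germ, with a bound in terms of the local homogeneous-coherent norm. Proposition \ref{mineimplyrs} (item 2) bounds the right-hand side by $\|F\|_{\mathcal{G}^{\bar{\boldsymbol\alpha},\boldsymbol\alpha_\gamma}_{\boldsymbol r}(M),\overline{\Omega_{\tilde n}}}$. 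Proposition \ref{rinaldiimpliesmine} (item 2) transfers the homogeneity bound back to $M$, and the order $\bar r_K$ is taken from that proposition.

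\emph{Main obstacle.} The difficult step is the uniformity of the constants in $p$ when passing back to $M$. Different base points $p$ use different charts and cutoffs $\psi_p$, so I must show that the constant in Proposition \ref{Main Result 3} depends only on $n$. I would reuse Part 1 of the previous proof, which makes the kernel constants $c_K^{\exp_p^{-1}}$ depend only on $n$. I would then use monotonicity in $K$ of $\alpha_K$, $\bar\alpha_K$ and $r_K$, bounding them by their values on $\overline{\Omega_{\tilde n}}$, and smooth dependence of $\exp_p$ on $p$ (Lemma \ref{Lem:smoothness of diffeomorphism}) to control the $C^{r}$ norms of the $\psi_p$ and of the test-function transfers uniformly. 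Once the homogeneity bound is uniform, Proposition \ref{homocoherenceonmanifolds} provides the coherence part with exponent $\boldsymbol\alpha'$ and range $\bar{\boldsymbol R}$. Adding it to the estimate from (i) by the triangle inequality gives the second continuity estimate in (iii).
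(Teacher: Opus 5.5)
Your argument for (i) is the paper's: coherence via the vanishing coherence seminorm of the constant germ $\mathsf{K}(\mathcal{R}^{\gamma}F)$, the first estimate in (iii) from \eqref{maintheoremequation2Manifold}, and \eqref{maintheoremequation3Manifold4} from $(\gamma+\beta)$-homogeneity of the difference plus uniqueness in Theorem~\ref{reconstructiononmanifolds}.

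For (ii) your route is different. The paper splits $\mathcal{K}^{\gamma,\beta}F=(\mathcal{K}^{\gamma,\beta}F-\mathsf{K}(\mathcal{R}^{\gamma}F))+\mathsf{K}(\mathcal{R}^{\gamma}F)$ on $M$.
\begin{itemize}
\item The first summand is already $(\gamma+\beta)$-homogeneous, with a bound uniform in $p$, by \eqref{maintheorem5manifold}. It is therefore $((\bar{\alpha}_K+\beta)\wedge 0)$-homogeneous, since $\bar{\alpha}_K\le\gamma$ and $\lambda\le 1$.
\item For the constant germ, the paper uses chart by chart that $\exp_p^{\ast}(\mathsf{K}\mathcal{R}^{\gamma}F)$ is the $(\gamma+\beta)$-reconstruction of the local $\mathcal{K}$-germ. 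Its homogeneity norm is then bounded through \eqref{reconstructionishomogeneouseq1}, \eqref{Eq: Homogeneous and coherent norm reconstruction} and Proposition~\ref{mineimplyrs}, and lifted to $M$ by Proposition~\ref{rinaldiimpliesmine}.
\item The triangle inequality concludes.
\end{itemize}

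You instead read the homogeneity of $(\mathcal{K}^{\gamma,\beta}F)_p$ directly from Proposition~\ref{Main Result 3}, evaluated at the origin of the $p$-centred chart. This matches the definition $(\mathcal{K}^{\gamma,\beta}F)_p=(\tilde{\mathcal{K}}_pF)_p$ and the rescaling \eqref{evaluationonmanifolds} of $\varphi^\lambda_p$ through $\exp_p$, and it skips the detour through the reconstruction.

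The paper's route gains two things. The uniformity in $p$ for the remainder is inherited from Main Theorem~1, so the only new object to control is one fixed distribution. It also shows, as a by-product, that $\mathsf{K}(\mathcal{R}^{\gamma}F)$ is itself $((\bar{\tmmathbf{\alpha}}+\beta)\wedge 0)$-homogeneous; you can recover this from your result by the triangle inequality.

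Your route costs you the uniformity bookkeeping for the full homogeneous--coherent estimate, which you rightly flag as the main obstacle:
\begin{itemize}
\item kernel constants, via Part~1 of the proof of Theorem~\ref{maintheoremonmanifolds};
\item both items of Proposition~\ref{mineimplyrs}, not only item~2, because the input norm of Proposition~\ref{Main Result 3} contains the coherence part;
\item the $C^{\bar r_K}$ norms of $\varphi\circ\exp_p$.
\end{itemize}

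One correction. The coherence half of the second estimate in (iii) does not come from Proposition~\ref{homocoherenceonmanifolds} applied to the new homogeneity bound. That would only give coherence at level $(\bar{\alpha}_K+\beta)\wedge 0$, not $\gamma+\beta$. It is simply the estimate already obtained in (i), so that sentence should be replaced by summing the bounds from (i) and (ii).
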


\begin{proof}
  The proof of this theorem is similar to the analogous results in the local
  framework of open sets, namely Propositions~\ref{Main Result 2}
  and~\ref{Main Result 3}.
  
  In order to prove the item $(i)$, we recall that the coherence bound in
  Equation~\eqref{coherenceonmanifold1} does not separate constant germs,
  \tmtextit{i.e.}, coherence is not affected by the addition of a
  distribution/constant germ. As a consequence, since by
  Theorem~\ref{maintheoremonmanifolds} we have that $\mathcal{K}F - \mathsf{K}
  (\mathcal{R}^{\gamma} F)$ is $(\tmmathbf{\alpha}', \gamma + \beta)-$coherent
  on $M$ and we are assuming that $\mathsf{K} (\mathcal{R}^{\gamma} F) \in
  \mathcal{D}' (M)$, we conclude that $\mathcal{K}F$ is a
  $(\tmmathbf{\alpha}', \gamma + \beta)-$coherent germ of order
  $\bar{\tmmathbf{r}} = \lfloor - \tmmathbf{\alpha} \rfloor$ and range
  $\overline{\tmmathbf{R}}$. In addition, it follows that
  \begin{equation}
    \lvert \lvert \mathcal{K}F \rvert
    \rvert_{\mathcal{G}^{\tmmathbf{\alpha'_{\gamma +
    \beta}}}_{\tmmathbf{\bar{r} }, \overline{\tmmathbf{R}}} (M), K} \lesssim_n
    \| F \|_{\mathcal{G}^{\tmmathbf{\alpha }_{\gamma}}_{\tmmathbf{r },
    \tmmathbf{R}} (M), \overline{\Omega_{\tilde{n}}}} .
    \label{Eq:coherence-continuity}
  \end{equation}
  Eventually, to conclude the proof of the item $(i)$, we need to show that
  $\mathcal{R}^{\gamma + \beta} \mathcal{K}F = \mathsf{K} \mathcal{R}^{\gamma}
  F$. We observe that, for $\gamma + \beta > 0$, existence and
  uniqueness of a reconstruction of the germ $\mathcal{K}F$ are granted by Theorem~\ref{reconstructiononmanifolds}. Additionally, we have already
  proven that the germ $\mathcal{K}F - \mathsf{K} (\mathcal{R}^{\gamma} F)$ is
  homogeneous of degree $\gamma + \beta$. Recalling now Equation~\eqref{reconstructionboundonmanifolds}, uniqueness of the
  reconstruction entails precisely that
  \begin{equation}
    \mathcal{R}^{\gamma + \beta} \mathcal{K}F = \mathsf{K}
    \mathcal{R}^{\gamma} F. \label{Eq:ref}
  \end{equation}
  We focus on the continuity bounds. We observe that
  the one involving coherence has already been proven in
  Equation~\eqref{Eq:coherence-continuity}. Recalling Definition~\ref{Def:
  Coherence and Homogeneity on M}, to conclude we need to prove a bound for
  the homogeneity norm.
  
  To this end, we show first that $\mathsf{K} \mathcal{R}^{\gamma} F$ is a
  $(\bar{\alpha} + \beta) \wedge 0-$homogeneous germ. Given $p \in \Omega_n$,
  using Equation~\eqref{maintheoremequation4Manifold} together with an
  argument analogous to the one we used to establish Equation~\eqref{Eq:ref}, we get
  \[ \exp_p^{\ast} (\mathsf{K} \mathcal{R}^{\gamma} F) =\mathcal{R}^{\gamma +
     \beta} \exp_p^{\ast} (\mathcal{K}F). \]  
By Equations~$\eqref{reconstructionishomogeneouseq1} $and~\eqref{Eq:
  Homogeneous and coherent norm reconstruction} it follows that, for every
  compact set $K \subset B (0, \mathfrak{R}_n / 2)$, $\exp_p^{\ast} \mathsf{K}
  \mathcal{R}^{\gamma} F$ is a $(\bar{\alpha}_K + \beta) \wedge 0-$homogeneous
  germ on $U \assign B (0, \mathfrak{R}_n / 2 + 2 \rho_n) \subset
  \mathbb{R}^d$, such that
  \begin{equation}
    \begin{array}{lll}
      \| \exp_p^{\ast} (\mathsf{K} \mathcal{R}^{\gamma} F)
      \|_{G^{(\tmmathbf{\bar{\alpha}} + \beta) \wedge 0}_{\tmmathbf{\bar{r}}}
      (U^{2 \rho_n}), K} & = & \| \mathcal{R}^{\gamma + \beta} \exp_p^{\ast}
      \mathcal{K}F \|_{G^{(\tmmathbf{\bar{\alpha}} + \beta) \wedge
      0}_{\tmmathbf{\bar{r}}} (U^{2 \rho_n}), K}\\
      & \lesssim^{\eqref{reconstructionishomogeneouseq1}} & \| \exp_p^{\ast}
      \mathcal{K}F \|_{\mathcal{G}^{(\tmmathbf{\bar{\alpha}} + \beta) \wedge
      0, \tmmathbf{\alpha'_{\gamma + \beta}}}_{\tmmathbf{\bar{r}},
      \tmmathbf{\bar{R}}} (U^{2 \rho_n}), \tilde{K}}\\
     & = & \| \bar{\mathcal{K}}^{\gamma, \beta} \exp_p^{\ast} F
      \|_{\mathcal{G}^{(\tmmathbf{\bar{\alpha}} + \beta) \wedge 0,
      \tmmathbf{\alpha'_{\gamma + \beta}}}_{\tmmathbf{\bar{r}},
      \tmmathbf{\bar{R}}} (U^{2 \rho_n}), \tilde{K}}\\
      & \lesssim^{\eqref{Eq: Homogeneous and coherent norm
      reconstruction}} & \| \exp_p^{\ast} F
      \|_{\mathcal{G}^{\tmmathbf{\bar{\alpha}},
      \tmmathbf{\alpha_{\gamma}}}_{\tmmathbf{\bar{r}}, \tmmathbf{\bar{R}}}
      (U), \tilde{K}}\\
      & \lesssim^{\ref{mineimplyrs}} & \| F
      \|_{\mathcal{G}^{\tmmathbf{\bar{\alpha}},
      \tmmathbf{\alpha_{\gamma}}}_{\tmmathbf{\bar{r}}, \tmmathbf{\bar{R}}}
      (M), \overline{\Omega_{\tilde{n}}}},
    \end{array} \label{Eq:last step}
  \end{equation}
  where in the last line we applied Proposition~\ref{mineimplyrs}. In this way
  we proved that $\exp_p^{\ast} (\mathsf{K} \mathcal{R}^{\gamma} F)$ is an
  $(\tmmathbf{\bar{\alpha}} + \beta) \wedge 0-$homogeneous germ in the sense
  of open sets and this, on account of Proposition~\ref{rinaldiimpliesmine},
  entails that $\mathsf{K} \mathcal{R}^{\gamma} F$ is an
  $(\tmmathbf{\bar{\alpha}} + \beta) \wedge 0-$homogeneous germ on the
  manifold $M$.
  
  This entails that $\mathcal{K}F = (\mathcal{K}F - \mathsf{K}
  \mathcal{R}^{\gamma} F) + \mathsf{K} \mathcal{R}^{\gamma} F$ is obtained by
  summing a $(\tmmathbf{\bar{\alpha}} + \beta)-$homogeneous germ and a
  $(\tmmathbf{\bar{\alpha}} + \beta) \wedge 0-$homogeneous germ. Equations
  \eqref{maintheorem5manifold} and \eqref{Eq:last step} together yield the
  sought result, \tmtextit{i.e.},
  \[ \lvert \lvert \mathcal{K}F \rvert \rvert_{G^{(\tmmathbf{\bar{\alpha}}  +
     \beta) \wedge 0}_{\tmmathbf{\bar{r} }} (M), K} \lesssim_n \lvert \lvert F
     \rvert \rvert_{\mathcal{G}^{\tmmathbf{\bar{\alpha}},
     \tmmathbf{\alpha}_{\gamma}}_{\tmmathbf{r}} (M),
     \overline{\Omega_{\tilde{n}}}} . \]
  
\end{proof}

\begin{acknowledgments*}
The work of BC has been supported by a fellowship of the University of Pavia and in part by a fellowship of the ``Progetto Giovani GNFM 2025" under the project {\em Hadamard states for linearized Yang-Mills theories}, fostered by Gruppo Nazionale di Fisica Matematica -- INdAM in collaboration with Simone Murro (University of Genova). BC, CD and PR acknowledge the support of the INFN Sezione di Pavia and of Gruppo Nazionale di Fisica Matematica, part of INdAM. Part of this work has appeared in the Master thesis of MS submitted on the 21/02/25 as a partial fulfillment of the requirement to obtain a Master's degree in Physics at the University of Pavia.
\end{acknowledgments*}

\appendix\section{Additional Properties of Homogeneous and Coherent
Germs}\label{Sec: App A}

Goal of this appendix is to prove a technical result which connects
homogeneity and coherence of a germ under suitable assumptions. This plays an
important role in the proof of Proposition \ref{Main Result 3}. As a starting point, we
state a transliteration to our setting of {\cite[Prop 6.2]{CZ20}}. The proof is identical to {\cite[Prop 6.2]{CZ20}} barring minor modifications, thus we omit it.  

\begin{theorem}
  \label{Prop: Necessity of Coherence}Let $U \subseteq \mathbb{R}^d$ and $f
  \in \mathcal{D}' (U)$. Consider a germ $(F_x)_{x \in U}$ as per Definition
  \ref{Def: germ}, and $\gamma \in \mathbb{R}$, such that, for every compact
  set $K \subset U$, there exist $r_K \in \mathbb{N}$ and $C_K > 0$, such that
  \begin{equation}
    \label{necessityiofcoherenceq} \lvert (f - F_x) (\varphi^{\lambda}_x)
    \rvert \leqslant C_K \lambda^{\gamma},
  \end{equation}
  uniformly for $x \in K$, $\lambda \in (0, D^U_K)$ and $\varphi \in
  \mathcal{B}^{r_K}$ see Equations \eqref{Eq: Rescaled Test-Function} and
  \eqref{Eq: B^m}. Then, given $\alpha_K \assign \min \{- r_K - d, \gamma\}$,
  it holds
  \begin{equation}
    \label{necessityiofcoherenceqoo} \lvert (F_x - F_y) (\varphi^{\lambda}_y)
    \rvert \leqslant 2 C_K \lambda^{\alpha_K} (\lambda + \lvert x - y
    \rvert)^{\gamma - \alpha_K},
  \end{equation}
  uniformly for $x, y \in K$, with $\lvert x - y \rvert \leqslant
  \frac{D^U_K}{2}$, $\lambda \in (0, \frac{D^U_K}{2})$ and $\varphi \in
  \mathcal{B}^r$.
\end{theorem}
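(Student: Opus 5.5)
The plan follows the classical argument of \cite[Prop.~6.2]{CZ20}. We write $F_x - F_y = (F_x - f) + (f - F_y)$ and estimate each term tested against $\varphi^{\lambda}_y$. Fix a compact $K \subset U$, points $x, y \in K$ with $|x-y| \leqslant D^U_K/2$, a scale $\lambda \in (0, D^U_K/2)$ and $\varphi \in \mathcal{B}^{r_K}$.

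The term $(f - F_y)(\varphi^\lambda_y)$ is controlled directly by \eqref{necessityiofcoherenceq}: it is at most $C_K \lambda^{\gamma}$. Since $\gamma - \alpha_K \geqslant 0$, we have
\[
\lambda^{\gamma} = \lambda^{\alpha_K}\lambda^{\gamma-\alpha_K} \leqslant \lambda^{\alpha_K}(\lambda+|x-y|)^{\gamma-\alpha_K}.
\]

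The term $(F_x - f)(\varphi^\lambda_y)$ requires recentering the test function at $x$. We set $\mu := \lambda + |x-y|$, which satisfies $\mu < D^U_K$. Because $B(y,\lambda) \subset B(x,\mu)$, we can write $\varphi^\lambda_y = \eta^{\mu}_x$ with
\[
\eta(z) := (\mu/\lambda)^{d}\,\varphi\bigl((\mu z - (y-x))/\lambda\bigr).
\]
The function $\eta$ is supported in $B(0,1)$ and satisfies $\|\eta\|_{C^{r_K}} \leqslant (\mu/\lambda)^{d+r_K}$. Applying \eqref{necessityiofcoherenceq} to the normalized function $\eta/\|\eta\|_{C^{r_K}} \in \mathcal{B}^{r_K}$ at scale $\mu$ gives
\[
|(F_x - f)(\varphi^\lambda_y)| \leqslant C_K (\mu/\lambda)^{r_K+d}\mu^{\gamma} = C_K\lambda^{-r_K-d}\mu^{\gamma+r_K+d}.
\]

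It remains to compare this with the target. If $\alpha_K = -r_K-d$, the bound is exactly $C_K\lambda^{\alpha_K}\mu^{\gamma-\alpha_K}$. If instead $\alpha_K = \gamma \leqslant -r_K-d$, the exponent $r_K+d+\gamma$ of $\mu/\lambda$ in $(\mu/\lambda)^{r_K+d}\mu^\gamma = \lambda^\gamma(\mu/\lambda)^{r_K+d+\gamma}$ is nonpositive. Since $\mu/\lambda \geqslant 1$, the term is then at most $\lambda^{\gamma} = \lambda^{\alpha_K}\mu^{0}$. Summing the two terms yields the constant $2C_K$.

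The main obstacle is the rescaling step, which needs two checks. First, the scale $\mu$ must stay below $D^U_K$, so that \eqref{necessityiofcoherenceq} applies; this is exactly why the hypotheses restrict to $|x-y| \leqslant D^U_K/2$ and $\lambda < D^U_K/2$. Second, the $C^{r_K}$ norm of the recentered function must be tracked carefully, since each derivative costs a factor $\mu/\lambda$ and the $L^1$ normalization costs $(\mu/\lambda)^d$.
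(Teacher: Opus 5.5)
Your proposal is correct and follows essentially the route the paper relies on. The paper omits the proof and defers to \cite[Prop.~6.2]{CZ20}, which uses the same ingredients as you: the triangle inequality through $f$, recentering $\varphi^{\lambda}_y$ at $x$ at scale $\mu = \lambda + \lvert x - y \rvert < D^U_K$ at a $C^{r_K}$-norm cost of $(\mu/\lambda)^{r_K+d}$ (the same recentering appears in the proof of Proposition~\ref{relaxingbound2}), and a case split on $\alpha_K$, with your handling of the admissible scale and of the constant $2C_K$ being accurate.
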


As we have seen in Definitions~\ref{Def: Coherence} and~\ref{Def:
  Homogeneity}, both the notions of coherence and homogeneity of a germ depend on a number
  of relevant parameters, including the scaling parameter $\lambda$. Hence, it seems natural to wonder what should be the regimes in which these parameters may vary. In particular, in Definitions~\ref{Def: Coherence} and~\ref{Def:
  Homogeneity}, we have that $\lambda \in (0, D^U_K)$, where $D^K_U \assign \text{dist} (K, \partial U)$. Proposition \ref{relaxingbound11} entails that, if
  one can establish a coherence bound uniformly over $\lambda \in (0, \eta_K)$ for
  some $\eta_K \in (0, D_K^U)$, then one can extend the bound to the whole
  interval $\lambda \in (0, D_K^U)$, thus justifying the choices made in these Definitions.

\begin{proposition}
  \label{relaxingbound11}Let $U \subseteq \mathbb{R}^d$ and let $F$ be a germ
  of distributions thereon as per Definition \ref{Def: germ}.
  \begin{enumerate}
    \item Assume that $U$ is convex and that, given $\gamma \in \mathbb{R}$,
    for any compact set $K \subset U$ there exist $\alpha_K < \gamma$, $\eta_K
    \in (0, D_K^U)$, $r_K \in \mathbb{N}$ as well as $C_K > 0$ such that
    \begin{equation}
      \label{relaxingbound1eq} \lvert (F_x - F_y) (\varphi^{\lambda}_x) \rvert
      \leqslant C_K \lambda^{\alpha_K} (\lambda + \lvert x - y \rvert)^{\gamma
      - \alpha_K},
    \end{equation}
    uniformly for $x, y \in K$, $\lambda \in (0, \eta_K]$ and $\varphi \in \mathcal{B}^{r_K}$ as per Equation \eqref{Eq:
    B^m}, while $\varphi^{\lambda}_x$ is as per Equation \eqref{Eq: Rescaled
    Test-Function}. Then $F \in \mathcal{G}^{\tmmathbf{\alpha}_{\gamma}}_{\tmmathbf{r}} (U)$, namely it is
    a coherent germ of distributions as per Definition \ref{Def: Coherence}
    and
    \begin{equation}
      \label{relaxingbound1eqnorm} \| F
      \|_{\mathcal{G}^{\tmmathbf{\alpha}_{\gamma}}_{\tmmathbf{r}} (U), K}
      \lesssim C_{\tilde{K}, \bar{\lambda}},
    \end{equation}
    where $C_{\tilde{K}, \bar{\lambda}}$ depends on $\bar{\lambda} \in (0,
    D^U_K)$ and on a suitable compact set $\tilde{K} \supseteq K$.
    
    \item Assume that for any compact set $K \subset U$ there exist $\eta_K
    \in (0, D_K^U)$, $r_K \in \mathbb{N}$, $C_K > 0$ and $\bar{\alpha}_K
    \leqslant 0$, such that
    \begin{equation}
      \label{relaxingbound1homogeneityeq1} \lvert (F_x) (\varphi^{\lambda}_x)
      \rvert \leqslant C_K \lambda^{\bar{\alpha}_K},
    \end{equation}
    uniformly for $x, y \in K$, $\lambda \in (0, \eta_K]$ and $\varphi \in \mathcal{B}^{r_K}$. Then $F \in \mathsf{G}^{\tmmathbf{\bar{\alpha}}}_{\tmmathbf{r}}
    (U)$, namely it is an homogeneous germ of distributions and
    \begin{equation}
      \label{relaxingbound1homogeneityeq1norm} \| F
      \|_{\mathsf{G}^{\tmmathbf{\bar{\alpha}}}_{\tmmathbf{r}} (U), K} \lesssim
      C_{\tilde{K}, \bar{\lambda}},
    \end{equation}
    where $C_{\tilde{K}, \bar{\lambda}}$ depends on $\bar{\lambda} \in (0,
    D^U_K)$ and on a suitable compact set $\tilde{K} \supseteq K$.
  \end{enumerate}
\end{proposition}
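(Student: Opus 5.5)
The idea is to bootstrap from small scales to all scales $\lambda\le\bar\lambda<D^U_K$ by writing a test function at a large scale as a finite superposition of test functions at a small scale $\eta$. We then apply the assumed bound at small scale, possibly on a slightly larger compact set $\tilde K$. Fix $K$ and $\bar\lambda\in(0,D^U_K)$. Let $\tilde K$ be an enlargement of $K$ contained in $U$, e.g.\ $\tilde K:=K_{(\bar\lambda+D^U_K)/2}$; it contains every ball $B(x,\lambda)$ with $x\in K$ and $\lambda\le\bar\lambda$. For $\lambda\le\eta_{\tilde K}$, and even for $\lambda\le\eta_K$, there is nothing to prove. So assume $\lambda\in(\eta,\bar\lambda]$ with $\eta:=\eta_{\tilde K}$.

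Choose a smooth partition of unity $\{\chi_j\}_{j=1}^N$ subordinate to a cover of $\overline{B(0,1)}$ by balls $B(z_j,\eta/(2\bar\lambda))$ with centres $z_j\in \overline{B(0,1)}$. The number $N$ and the $C^{r}$ norms of the $\chi_j$ depend only on $\eta/\bar\lambda$, $d$ and $r$. For $\varphi\in\mathcal{B}^{r_K}$ we write
\[ \varphi^\lambda_x=\sum_j (\chi_j\varphi)^\lambda_x=\sum_j c_j\,(\theta_j)^{\mu}_{x_j}, \qquad x_j:=x+\lambda z_j,\ \mu:=\eta/2, \]
where $\theta_j(w):=c_j^{-1}(\lambda/\mu)^{-d}(\chi_j\varphi)(z_j+\mu w/\lambda)$. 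The function $\theta_j$ is supported in $B(0,1)$, and the constants $c_j$ can be chosen, depending only on $\eta,\bar\lambda,d,r$, so that $\theta_j\in\mathcal{B}^{r}$. The points $x_j$ lie in $\tilde K$.

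\emph{Homogeneity (item 2).} For each $j$ we have $|F_x((\theta_j)^\mu_{x_j})|\le |F_{x_j}((\theta_j)^\mu_{x_j})|+|(F_x-F_{x_j})((\theta_j)^\mu_{x_j})|$. The first term is $\le C_{\tilde K}\mu^{\bar\alpha_{\tilde K}}$. Since $\lambda\ge\eta$, this is bounded by a constant times $\lambda^{\bar\alpha_K}$, and the constant depends on $\eta,\bar\lambda$; here we use $\bar\alpha\le 0$ and that $\lambda$ ranges in a compact interval away from $0$. The second term is the genuine obstacle. Item 2 only assumes homogeneity, so we cannot compare $F_x$ with $F_{x_j}$. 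For item 2 we therefore avoid recentring and stay at $x$. Cover the support by a \emph{single} chain of dilations instead: iterate $\varphi^{2\lambda}_x=\sum_{k}$ of finitely many translates. That still recentres, so instead we use that for fixed $x$ the map $\lambda\mapsto F_x(\varphi^\lambda_x)$ on $[\eta,\bar\lambda]$ is controlled by the order of $F_x$ as a distribution on $\tilde K$. Indeed, the family $\{\varphi^\lambda_x:\lambda\in[\eta,\bar\lambda],\varphi\in\mathcal{B}^{r}\}$ is bounded in $\mathcal{D}_{\tilde K}$. Uniformity in $x$ is then the issue. To get it, I would write $\varphi^\lambda_x=(\tilde\varphi)^{\eta}_x$ with $\tilde\varphi(w):=(\eta/\lambda)^{d}\varphi(\eta w/\lambda)$. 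This is supported in $B(0,\lambda/\eta)$, not in $B(0,1)$. So the key step is to decompose $\tilde\varphi$ via a Littlewood--Paley-type or partition argument into pieces centred at $x$ at scale $\eta$. A clean route is to use a smooth $\Phi$ with $\sum_{k\ge0}$ of scaled annular cut-offs; the pieces not containing the origin must be recentred, however. I therefore expect item 2 to require the recentring lemma together with the measurability/local boundedness built into the paper's setting, i.e.\ that the constant is allowed to depend on $\tilde K$. This is the step I would scrutinise most.

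\emph{Coherence (item 1).} Here recentring is available, and we use convexity of $U$. For $x,y\in K$, the difference $(F_x-F_y)((\theta_j)^\mu_{x_j})$ is split as $(F_x-F_{x_j})+(F_{x_j}-F_y)$ tested at $x_j$. Each term is estimated by hypothesis on $\tilde K$: it is $\le C_{\tilde K}\mu^{\alpha}(\mu+|x-x_j|)^{\gamma-\alpha}$, respectively with $|x_j-y|\le|x-y|+\lambda$. Since $\mu\asymp\lambda\asymp1$ on this range and $|x-x_j|\le\lambda$, both terms are $\lesssim_{\eta,\bar\lambda}\lambda^{\alpha_K}(\lambda+|x-y|)^{\gamma-\alpha_K}$. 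Here we use $\alpha_K<\gamma$, the boundedness of $|x-y|\le \mathrm{Diam}(K)$, and, if needed, $\alpha_K:=\min\{\alpha_K,\alpha_{\tilde K}\}$. Convexity of $U$ guarantees that the enlargement $\tilde K$ (via the convex hull of $K$) stays inside $U$, with a uniform distance to $\partial U$. Summing over the $N$ pieces gives \eqref{relaxingbound1eqnorm} with $C_{\tilde K,\bar\lambda}\propto C_{\tilde K}$.
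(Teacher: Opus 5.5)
\paragraph{Item 1.} Your argument is correct and is essentially the paper's. You chop $\varphi^\lambda_x$ into finitely many pieces at a scale $\mu\le\eta$ where the hypothesis is available, test each piece at its own centre $x_j\in\tilde K$, and split $F_x-F_y=(F_x-F_{x_j})+(F_{x_j}-F_y)$. You then use that for $\lambda\in[\eta,\bar\lambda]$ every weight $\lambda^{\alpha}(\lambda+|x-y|)^{\gamma-\alpha}$ is comparable to $1$, so the mismatch between $\alpha_{\tilde K}$ and $\alpha_K$ is harmless.

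The differences from the paper are minor but welcome. Your scale-covariant partition of $B(0,1)$ with mesh $\eta/(2\bar\lambda)$ covers the whole range $(\eta,\bar\lambda]$ in one step, whereas the paper restricts to $\bar\lambda\le 2\eta_K$ and iterates. Also, with $\tilde K=K_{(\bar\lambda+D^U_K)/2}$ you do not need convexity of $U$: that enlargement lies in $U$ for any open $U$.

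One caveat, shared with the paper: to apply the hypothesis on $\tilde K$ you need $\theta_j\in c\,\mathcal{B}^{r_{\tilde K}}$. The condition $\|\varphi\|_{C^{r_K}}\le1$ controls this only if $r_{\tilde K}\le r_K$. In general you get the bound on $K$ only for $\varphi\in\mathcal{B}^{\max\{r_K,r_{\tilde K}\}}$.

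\paragraph{Item 2.} Here there is a genuine gap: you give no argument. The way out you hope for is not available, because a germ is only assumed measurable in $x$, and letting constants depend on $\tilde K$ gives no uniformity in $x$. Your diagnosis is exactly right: for fixed $x$ the bound follows from the finite order of $F_x$, the whole issue is uniformity in $x$, and recentring would require comparing $F_x$ with $F_{x_j}$.

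In fact the gap cannot be closed, because item 2 is false as stated. Take $d=1$, $U=(-10,10)$, and $g(x)=1/|x|$ for $x\neq0$, $g(0)=0$. Set $F_x:=g(x)\,\delta_{x+1/2}$ for $x<19/2$ and $F_x:=0$ otherwise.
\begin{itemize}
\item The map $x\mapsto F_x(\psi)$ is measurable.
\item For $\lambda\le\frac12$ one has $F_x(\varphi^\lambda_x)=g(x)\lambda^{-1}\varphi(\frac{1}{2\lambda})=0$. So the hypothesis holds on every compact set, with $\eta_K=\min\{\frac12,D^U_K/2\}$ and arbitrary $\bar\alpha_K,r_K,C_K$.
\item Now take $K=[-1,1]$, where $D^U_K=9$, and $\lambda=\bar\lambda=1$. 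For any $\varphi\in\mathcal{B}^{r_K}$ with $\varphi(\frac12)\neq0$ one gets $\sup_{x\in K}|F_x(\varphi^1_x)|=|\varphi(\frac12)|\sup_K|g|=\infty$.
\end{itemize}

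The paper's own proof of item 2 breaks exactly where you got stuck. It writes $\varphi^{\eta_{K'}}_{x_i}=(\xi_i)^{\eta_{K'}}_x$ with $\xi_i=\varphi^{\lambda_2^i}_{w_i}$. The correct recentring identity, the one used in the proof of Proposition~\ref{relaxingbound2}, is $\varphi^{\eta_{K'}}_{x_i}=(\xi_i)^{\lambda_1^i}_x$ with $\lambda_1^i=|x_i-x|+\eta_{K'}$. For the outer pieces this scale is again comparable to $\lambda$, so the small-scale hypothesis cannot be used. Indeed, no identity $\varphi^{\eta}_{x_i}=\xi^{\eta}_x$ with $\mathrm{supp}\,\xi\subset B(0,1)$ can hold once $|x-x_i|\ge2\eta$, because the supports are disjoint.

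Item 2 becomes true under either of two extra assumptions, and your decomposition then finishes it at once.
\begin{itemize}
\item If $F$ is also coherent at small scales, split $F_x=F_{x_j}+(F_x-F_{x_j})$ on each piece, exactly as in item 1.
\item If $\{F_x\}_{x\in K}$ is uniformly of finite order near $K$, i.e.\ $|F_x(\psi)|\le C\|\psi\|_{C^{r}}$ for $x\in K$ and $\psi\in\mathcal{D}(U)$ supported in $\tilde K$, then $\|\varphi^\lambda_x\|_{C^r}\lesssim\eta^{-r-d}$ for $\lambda\ge\eta$. This gives the claim for $\varphi\in\mathcal{B}^{\max\{r,r_K\}}$.
\end{itemize}
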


\begin{proof}
  $1$. Consider a compact set $K \subset U$ and $\eta_K \in (0, D^U_K)$ as per hypothesis. The goal is to show that Equation \eqref{relaxingbound1eq} holds true uniformly for any $\lambda \in (0, \bar{\lambda}]$ with $\bar{\lambda} \in (0, D^U_K)$, see Definition \ref{Def: Coherence}. Observe that, if $\lambda \in (0, \eta_K]$, which corresponds to fixing $\bar{\lambda} \in (0, \eta_K]$, and if we let $ \lambda \in (0, \bar{\lambda}]$, there is nothing to prove. Equation \eqref{relaxingbound1eq} holds true uniformly in this regime by assumption. Hence, the only relevant case is
  $$\lambda \in (\eta_K, \bar{\lambda}), \, \, \, \, \, \bar{\lambda}
  \in (\eta_K, D^U_K) \, \, \text{such that} \, \,  \bar{\lambda} \leqslant 2\eta_{K}.$$
If we manage to prove the validity of Equation \eqref{relaxingbound1eq} in this scenario, we can conclude that it holds true $\forall \lambda \in (0, \bar{\lambda})$.

To this avail, let $K' \assign K_{\eta_K}$ be the
  $\eta_K-$enlargement of the compact $K$ as per Equation \eqref{Eq: K-enlargement}. Note that, since $K \subset K'$,
  $\eta_{K'} \leqslant \eta_K$. Consider a finite cover of $K'$ made of open balls $\{B (x_i,
  \eta_{K'})\}_{i = 1, \ldots N_K}$, where $x_i \in K_{\bar{\lambda} -
  \eta_{K'}} \subset K'$ and $N_K \assign \lfloor \mathrm{Diam} (K' / \eta_{K'}) + 1
  \rfloor$, see Equation \eqref{Eq: Diameter of K}. Let $\{\psi_i\}_{i = 1,
  \ldots N_K}$ be a partition of unity subordinated to such covering. Then,
  for all $\lambda \in (\eta_{K'}, \bar{\lambda})$, $\forall x \in K'$ and $\forall \varphi \in \mathcal{B}^{r_{K'}}$, we can write
  \begin{equation}
\label{extendingholderbound1coherence} \varphi^{\lambda}_x = \sum_{i =
    1}^{N_K} \varphi_{x_i}^{\eta_{K'}},
  \end{equation}
  where $\varphi_{x_i} \assign (\varphi^{\lambda}_x \psi_i)^{1 / \eta_{K'}}$.
  Given $r_{K'} \in \mathbb{N}$, relative to the compact set $K' \subset U$, we set $$C_{r_{K'} } \assign N_K \max_{i = 1, \ldots, N_K}
  \| \psi_i \|_{C^{r_{K'}} (U)}.$$ It descends that
  \begin{equation}
    \begin{array}{lll}
      \| \varphi_{x_i} \|_{C^{r_{K'}} (U)} = \| (\varphi^{\lambda}_x \psi_i)^{1 /
      \eta_{K'}} \|_{C^{r_{K'}} (U)} \leqslant (\eta_{K'})^{- r_{K'} - d} \|
      \varphi^{\lambda}_x \psi_i \|_{C^{r_{K'} } (U)} & \leqslant & (\eta_{K'})^{- r_{K'}  -
      d} 2^{r_{K'}} \| \psi_i \|_{C^{r_{K'}} (U)} \hspace{0.17em} \| \varphi^{\lambda}_x
      \|_{C^{r_{K'} } (U)}\\
      & \leqslant & 2^{r_{K'} } C_{r_{K'} } \left( \frac{\lambda}{\eta_{K'}} \right)^{- r_{K'}  -
      d} \| \varphi_x \|_{C^{r_{K'} } (U)}\\
      & \leqslant & 2^{r_{K'} } C_{r_{K'} } \| \varphi_x \|_{C^{r_{K'}} (U)},
    \end{array} \label{extendingholderbound1coherenceCr}
  \end{equation}
  where, in the last inequality, we used that $\left( \frac{\lambda}{\eta_{K'}} \right)^{- r_{K'} - d}
  \leqslant 1$ since $\lambda \geqslant \eta_{K'}$. Being $\varphi_x \in \mathcal{B}^{r_{K'}}$, it descends that $\varphi_{x_i} \in 2^{r_{K'}} C_{r_{K'}} \mathcal{B}^{r_{K'}}$. Letting $\alpha_{K'} < \gamma$ with $\gamma \in \mathbb{R}$, we have 
  \begin{equation*}
      \lambda^{\gamma - \alpha_{K'}} \leqslant (\lambda + |x-y|)^{\gamma - \alpha_{K'}}
  \end{equation*}
  and, multiplying both sides by $\lambda^{\alpha_{K'}} > 0$, we obtain 
  \begin{equation*}
      \lambda^{\alpha_{K'}} \lambda^{\gamma - \alpha_{K'}} \leqslant \lambda^{\alpha_{K'}}
    (\lambda + \lvert x - y \rvert)^{\gamma - \alpha_{K'}}. 
  \end{equation*}
  In addition, since $\lambda \in (\eta_{K'}, \bar{\lambda})$,
  \begin{equation*}
      \lambda^\gamma \geqslant \begin{cases}
          \eta_{K'}^{\gamma} \, \, \text{if $\gamma \ge 0$} \\
          \bar{\lambda}^{\gamma} \, \, \text{if $\gamma < 0$}
      \end{cases},
  \end{equation*}
  which entails $1 \lesssim_{\eta_{K'}, \bar{\lambda}} \lambda^{\gamma}$. Combining together these two results we get
  \begin{equation}
  \label{extendingholderbound2coherence} 1 \lesssim_{\eta_{K'}, \bar{\lambda}}
    \lambda^{\alpha_{K'}} \lambda^{\gamma - \alpha_{K'}} \leqslant \lambda^{\alpha_{K'}}
    (\lambda + \lvert x - y \rvert)^{\gamma - \alpha_{K'}},
  \end{equation}
  uniformly for $\lambda \in [\eta_{K'}, \bar{\lambda}]$ and $x, y \in K'$. By Equation \eqref{extendingholderbound1coherence}, we have that, for any $x,y \in K'$, $\lambda \in (\eta_{K'}, \bar{\lambda})$ and $\varphi \in \mathcal{B}^{r_{K'}}$
  \[ \lvert (F_x - F_y) (\varphi^{\lambda}_y) \rvert \leqslant \sum_{i = 1}^{N_K}
     \lvert (F_x - F_y) (\varphi_{x_i}^{\eta_{K'}}) \rvert . \]
     As a consequence, since  $\varphi_{x_i} \in 2^{r_{K'}} C_{r_{K'}} \mathcal{B}^{r_{K'}}$, we can apply Equation \eqref{relaxingbound1eq} to get
  that for any $i \in \{1, ..., N_K \}$
  \begin{equation}
    \begin{array}{lll}
      \lvert (F_x - F_y) (\varphi_{x_i}^{\eta_{K'}}) \rvert \leqslant \lvert
      (F_x - F_{x_i}) (\varphi_{x_i}^{\eta_{K'}}) \rvert + \lvert (F_{x_i} -
      F_y) (\varphi_{x_i}^{\eta_{K'}}) \rvert & \lesssim_{\eta_{K'},
      \bar{\lambda}} & C_{K'} \eta_{K'}^{\alpha} (2 \eta_{K'} + |x - x_i | +
      |x_i - y|)^{\gamma - \alpha_{K'}}\\
      & \lesssim_{\eta_{K'}, \bar{\lambda}} & C_{K'} \lambda^{\alpha} (\lambda +
      \lvert x - y \rvert)^{\gamma - \alpha_{K'}},
    \end{array} \label{extendingholderbound3coherence}
  \end{equation}
uniformly for $\lambda \in (\eta_{K'}, \bar{\lambda})$ and $x, y \in K'$, where
in the first inequality we used Equation \eqref{relaxingbound1eq}, whereas in the last one we exploited Equation \eqref{extendingholderbound2coherence}.

Therefore, we proved the sought result for
  $\bar{\lambda} \leqslant 2 \eta_K$. Note that there are two admissible scenarios: either $2\eta_K \geqslant D^U_K$ or $2\eta_K \leqslant D^U_K$. In the former, there is nothing left to prove and we can conclude by choosing $\tilde{K}$ to coincide with $K'$. The latter case instead requires further comments. Indeed, if $2\eta_K \leqslant D^U_K$ we can extend the
  range of validity of Equation \eqref{relaxingbound1eq} to all values $\bar{\lambda} \in (2\eta_K, D_K^U)$ by
  iterating the above procedure. In this way, we obtain a uniform bound over
  $\lambda \in (0, \bar{\lambda})$ with $\bar{\lambda} \in (0, 2 \eta_K)$. 
The maximum number of steps required in the iteration is given by $\lfloor
  D_K^U / \eta_K + 1 \rfloor$, and the constant in the estimate in Equation
  \eqref{relaxingbound1eqnorm} is still proportional to
  $C_{\tilde{K}}$, where $\tilde{K}$ is a suitable enlargement of $K$, which
  depends on the number of steps and on $\bar{\lambda}$. 
  
  $2$. As in the proof of item $1$, fixed a compact set $K \subset U$, we want to show that the estimate in Equation \eqref{relaxingbound1homogeneityeq1} holds true uniformly for any $\lambda \in (0, \bar{\lambda}]$ with $\bar{\lambda} \in (0, D^U_K)$. To this end, we shall follow a similar strategy. First of all, let us consider $K' \assign
  K_{\eta_K}$ with $\eta_K \in (0, D^U_K)$ given. We stress that $\eta_{K'} \leqslant \eta_K$,
  since $K \subset K'$. As before, fixing $\bar{\lambda} \in (\eta_K, D_K^U)$, such
  that $\bar{\lambda} \leqslant 2 \eta_K$, the only interesting case is the one in which $\lambda \in
  (\eta_K, \bar{\lambda})$. Using the same partition of unity $\{\psi_i\}_{i=1, \ldots, N_K}$ as in the proof of item $1$, it
  follows that, for an assigned $r_{K'} \in \mathbb{N}$, for any$\lambda \in (\eta_{K'},
  \bar{\lambda})$ and for $\varphi \in \mathcal{B}^{r_{K'}}$, we can write 
  \[ \varphi^{\lambda}_x = \sum_{i = 1}^N \varphi_{x_i}^{\eta_{K'}}, \]
  with $\varphi_{x_i} \assign (\varphi_x^{\lambda} \psi_i)^{1/\eta_{K'}} \in 2^{r_{K'}} C_{r_{K'}} \mathcal{B}^{r_{K'}}$. Re-centering the test function $\varphi_{x_i}^{\eta_{K'}}$
  at the point $x \in K$, we can write
  \begin{equation}
    \label{extendingholderboundhomogeneityeq1} \varphi^{\lambda}_x = \sum_{i =
    1}^{N_K} (\xi_i)_x^{\eta_{K'}},
  \end{equation}
  where $\xi_i \assign \varphi^{\lambda_2^i}_{w_i}$, $\lambda_1^i \assign
  \lvert x_i - x \rvert + \eta_{K'}$, while $\lambda_2^i \assign
  \frac{\eta_{K'}}{\lambda_1^i}$ and $w \assign \frac{x_i - x}{\lambda_1^i}$.
  Since for any test function $\varphi \in \mathcal{B}^{r_{K'}}$, any multi-index $k \in
  \mathbb{N}^d_0$ and any $\lambda \in \mathbb{R}$, $\partial^k
  \varphi^{\lambda} = (\lambda)^{- r_{K'} - d} (\partial^k \varphi)^{\lambda}$, we
  have that
  \[ \| \xi_i \|_{C^{r_{K'}}} = (\lambda_2^i)^{- r_{K'} - d} \| \varphi \|_{C^{r_{K'}}} \leqslant
     \eta_{K'}^{- r_{K'} - d}  (\mathrm{Diam} (K) + \eta_{K'})^{r_{K'} + d} =: c_{K, K'}, \]
  where $c_{K, K'} > 0$ and where we bounded $(\lambda_2^i)^{- r_{K'} - d} =
  \eta_{K'}^{- r_{K'} - d} (\lambda_1^i)^{r_{K'} + d} \leqslant \eta_{K'}^{- r_{K'} - d} 
  (\tmop{Diam} (K) + \eta_{K'})^{r_{K'} + d}$. Hence $\xi_i \in \tmop{cst} \mathcal{B}^{r_{K'}}$ and we can apply
  Equation \eqref{relaxingbound1homogeneityeq1} to Equation
  \eqref{extendingholderboundhomogeneityeq1}, thus obtaining
  \begin{equation}
\label{extendingholderbound1homoogeneity} \lvert F_x (\varphi^{\lambda}_x)
    \rvert \leqslant \sum_{i = 1}^{N_K} \lvert F_x ((\xi_i)_x^{\eta_{K'}})
    \rvert \leqslant C_{K', \bar{\lambda}} N_K \tmop{cst}_{K, \eta_{K'}}
    \eta_{K'}^{\bar{\alpha}_{K'}},
  \end{equation}
  uniformly for $\lambda \in (\eta_{K'}, \bar{\lambda}]$, $\varphi \in
  \mathcal{B}^{r_{K'}}$ and $x \in K'$. Hence, working now on the compact $K$, we have proven Equation \eqref{relaxingbound1homogeneityeq1} We can now extend the
  estimate in Equation \eqref{extendingholderbound1homoogeneity} via an iterative argument,
  by means of the same procedure outlined in the proof of item $1$.
\end{proof}

  In a similar fashion, the following proposition shows that to establish the
  coherence condition it suffices to check the validity of the bound close to the diagonal,
  namely for $| x - y |$ small. Indeed, as we shall see, in order to prove the
  coherence bound for any $x, y \in K$ with $K \subset U$ a compact set, it
  suffices to have the same bound for $| x - y | < R_K$ for some $R_K > 0$.
  The only price to pay is a reduction of the value of the parameters
  $\alpha_K$ entering in the coherence condition. Yet, this does not constitute
  a problem from the point of view of both the reconstruction theorem and Schauder
  estimates, see Sections \ref{reconstruction theorem section} and \ref{Sec: Schauder Estimates Open Subsets}.

\begin{proposition}
  \label{relaxingbound2}Let $U \subseteq \mathbb{R}^d$ be a convex, open set
  and let $(F_x)_{x \in U}$ be a germ of distributions thereon as per Definition \ref{Def: germ}. Given
  $\gamma \in \mathbb{R}$, assume that, for any compact set $K \subset U$,
  there exist $\alpha_K < \gamma$, $R_K, C_K > 0$ and $r_K \in \mathbb{N}$,
  such that, $\forall \bar{\lambda} \in (0, D_K^U)$,
  \begin{equation}
    \label{general scales} \lvert (F_x - F_y) (\varphi^{\lambda}_x) \rvert
    \leqslant C_{K, \bar{\lambda}} \lambda^{\alpha_K} (\lvert x - y \rvert +
    \lambda)^{\gamma - \alpha_K},
  \end{equation}
  uniformly for $\lambda \in (0, \bar{\lambda}]$, $\varphi \in
  \mathcal{B}^{r_K}$, see Equations \eqref{Eq: Rescaled Test-Function} and
  \eqref{Eq: B^m}, and for $x, y \in K$ such that $\lvert x - y \rvert
  \leqslant R_K$. Then $F \in
  \mathcal{G}^{\tilde{\tmmathbf{\alpha}}_{\gamma}}_{\tmmathbf{r}} (U)$ where,
 denoting by $K'$ the smallest convex compact set containing $K$,
  \begin{equation}
    \label{alphak} \tilde{\alpha}_K \equiv \tilde{\alpha}_K (R_K) \assign
    \alpha_K - N_K  (- r_K - d) \qquad \mathrm{with} \qquad N_K \assign
    \left\{\begin{array}{ll}
      \left\lfloor \frac{\mathrm{Diam} (K')}{\min \{R_K, D^U_{K'} / 4\}} + 1
      \right\rfloor, & \text{if} \hspace{0.27em} \hspace{0.27em}
      \hspace{0.27em} R_K < \mathrm{Diam} (K)\\
      0, & \text{if} \hspace{0.27em} \hspace{0.27em} \hspace{0.27em} R_K
      \geqslant \mathrm{Diam} (K)
    \end{array}\right.,
  \end{equation}
  In addition,
  \[ \| F \|_{\mathcal{G}^{\tilde{\tmmathbf{\alpha}}_{\gamma}}_{\tmmathbf{r}}
     (U), K} \lesssim_{D_{K'}^U, \bar{\lambda}} C_{\tilde{K}, \bar{\lambda}},
  \]
  where $\tilde{K} = K$ if $\bar{\lambda} \in (0, D_{K'}^U / 2]$, otherwise
  $\tilde{K} \supset K$ is as per Proposition \ref{relaxingbound11}.
\end{proposition}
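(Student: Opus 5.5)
The plan is a chaining argument along straight segments, which stay in $K'$ because $K'$ is convex. If $R_K \geqslant \mathrm{Diam}(K)$ there is nothing to prove ($N_K = 0$), so I assume $R_K < \mathrm{Diam}(K)$. Set $\delta \assign \min\{R_K, D^U_{K'}/4\}$. Given $x, y \in K$ with $|x-y| > R_K$, I place points $x = z_0, z_1, \ldots, z_N = y$ on the segment $[x,y] \subset K'$, equally spaced with $|z_{j}-z_{j+1}| \leqslant \delta$ and $N \leqslant N_K$. I first apply the hypothesis \eqref{general scales} on the compact $K'$ rather than $K$. This is harmless up to replacing $C_{K,\bar\lambda}$ by $C_{K',\bar\lambda}$ and $\alpha_K$ by $\alpha_{K'}$; these are the constants the statement implicitly tracks via $\tilde K$.

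Next I telescope:
\[ (F_x - F_y)(\varphi^\lambda_x) = \sum_{j=0}^{N-1} (F_{z_j} - F_{z_{j+1}})(\varphi^\lambda_x). \]
Each term is tested against a function centred at $x$, not at $z_j$, so I recentre. The recentring device is the one used in Proposition~\ref{relaxingbound11}, item 2, and in Theorem~\ref{Prop: Necessity of Coherence}. I write $\varphi^\lambda_x = \xi^{\lambda + |x - z_j|}_{z_j}$, where $\xi \assign \varphi^{\lambda/(\lambda+|x-z_j|)}_{(x-z_j)/(\lambda+|x-z_j|)}$. Then $\|\xi\|_{C^{r_K}} \leqslant \big(\tfrac{\lambda}{\lambda+|x-z_j|}\big)^{-r_K-d}$, and $\mathrm{supp}\,\xi \subset B(0,1)$.

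The obstacle is that the new scale $\lambda + |x-z_j|$ may exceed $\bar\lambda$, or leave the domain. Two things handle this. The choice $\delta \leqslant D^U_{K'}/4$ and convexity keep the supports inside $U$. Proposition~\ref{relaxingbound11} extends the bound to scales up to $D^U_{K'}$, at the price of an enlarged compact $\tilde K$ when $\bar\lambda > D^U_{K'}/2$; this is exactly the dichotomy in the statement. Applying \eqref{general scales} to the pair $z_j, z_{j+1}$ at scale $\mu_j \assign \lambda + |x-z_j|$ gives
\[ |(F_{z_j}-F_{z_{j+1}})(\varphi^\lambda_x)| \lesssim C_{\tilde K,\bar\lambda}\,\Big(\frac{\lambda}{\mu_j}\Big)^{-r_K-d} \mu_j^{\alpha_K}(\mu_j+\delta)^{\gamma-\alpha_K}. \]

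Since $\mu_j \leqslant \lambda + |x-y|$ and all quantities are bounded by $\mathrm{Diam}(K') + D^U_{K'}$, I bound this by $\lambda^{-r_K-d}(\lambda+|x-y|)^{\gamma + r_K + d}$ times a constant. For $|x-y| > R_K$, where $\lambda + |x-y| \gtrsim_{R_K} 1$, this is dominated by $\lambda^{\tilde\alpha}(\lambda+|x-y|)^{\gamma-\tilde\alpha}$ for any $\tilde\alpha \leqslant \min\{\alpha_K,-r_K-d\}$. I expect the delicate bookkeeping to sit here: each recentring/chaining step can lose a factor $\lambda^{-r_K-d}$. A crude iterative version, where each link is recentred from the previous point rather than from $x$, loses $N_K$ such factors. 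This yields precisely the exponent $\tilde\alpha_K = \alpha_K - N_K(-r_K-d)$ of \eqref{alphak}, which is what I aim to prove rather than the sharper value.

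Finally, for $|x-y| \leqslant R_K$ the hypothesis applies directly, and $\lambda^{\alpha_K}(\lambda+|x-y|)^{\gamma-\alpha_K} \lesssim \lambda^{\tilde\alpha_K}(\lambda+|x-y|)^{\gamma-\tilde\alpha_K}$ because $\tilde\alpha_K \leqslant \alpha_K$ and $\lambda \lesssim \lambda+|x-y|$. Summing the at most $N_K$ terms gives the seminorm estimate with constant $C_{\tilde K,\bar\lambda}$, depending on $D^U_{K'}$ and $\bar\lambda$.
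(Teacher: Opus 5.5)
Your proposal has a genuine gap: the one-shot recentring step fails. To bound $(F_{z_j}-F_{z_{j+1}})(\varphi^\lambda_x)$ you write $\varphi^\lambda_x=\xi^{\mu_j}_{z_j}$ with $\mu_j=\lambda+|x-z_j|$ and apply \eqref{general scales} at scale $\mu_j$. But $|x-z_j|$ runs up to $|x-y|\le\mathrm{Diam}(K')$, which is typically much larger than $D^U_{K'}$ (e.g.\ $U$ the unit ball and $K$ the closed ball of radius $0.9$). The hypothesis only controls scales below $D^U_{K'}$. Proposition~\ref{relaxingbound11} only extends a bound from $(0,\eta_K]$ to $(0,D^U_K)$, never beyond. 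The restriction $\delta\le D^U_{K'}/4$ controls the spacing $|z_j-z_{j+1}|$, not the recentring distance $|x-z_j|$, so it does not help.

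This is not mere bookkeeping. Once $|x-z_j|>D^U_{K'}+\lambda$, the support of $\varphi^\lambda_x$ is disjoint from every ball $B(z_j,\mu)$ with $\mu<D^U_{K'}$. So the hypothesis gives no information at all on $F_{z_j}-F_{z_{j+1}}$ tested there. Your telescoping produces exactly the quantities the hypothesis cannot reach: short-link differences tested far from both of their base points. The ``crude iterative version'' does not fix this as described. Recentring successively $x\to z_1\to z_2\to\cdots$ composes into a single recentring over distance $|x-z_j|$, with the same large scale.

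What is needed, and what the paper's proof does, is to peel off one short link at a time and induct on the separation. That way every long difference is only ever tested near one of its own endpoints. Take $\bar\lambda\le D^U_{K'}/2$ and $\eta\le\min\{R_{K'},D^U_{K'}/4\}$. Assume the coherence bound on $K'$ for all ordered pairs with $|x-z|\le R_{K'}+k\eta$. If $R_{K'}+k\eta<|x-z|\le R_{K'}+(k+1)\eta$, choose $y\in[x,z]\subset K'$ with $|y-z|=\eta$, so that $|x-y|\le R_{K'}+k\eta$, and split
\[ (F_x-F_z)(\varphi^\lambda_z)=(F_x-F_y)(\varphi^\lambda_z)+(F_y-F_z)(\varphi^\lambda_z). \]
The second term is a short pair tested at its endpoint $z$, so the hypothesis applies. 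In the first term, the long pair $(x,y)$ is covered by the inductive bound. The test function only has to be recentred from $z$ to $y$, over distance $\eta$: $\varphi^\lambda_z=\xi^{\lambda+\eta}_y$ with $\|\xi\|_{C^{r_{K'}}}\lesssim_{D^U_{K'}}\lambda^{-r_{K'}-d}$. The new scale is $\lambda+\eta\le 3D^U_{K'}/4$, inside the admissible range.

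After $N_K$ steps all pairs in $K$ are covered. Charging a loss of $r_{K'}+d$ per step, as the paper does, gives $\tilde\alpha_K=\alpha_K-N_K(r_K+d)$. Note that \eqref{alphak} as printed has the sign of this correction reversed, and your final claim $\tilde\alpha_K\le\alpha_K$ holds only for the corrected value.

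Incidentally, since $|x-z|>R_{K'}$ on each new range, the per-step losses can be absorbed and need not accumulate. So your sharper exponent $\min\{\alpha_{K'},-r_{K'}-d\}$ is in fact attainable, but through this induction, not through the one-shot recentring.
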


\begin{proof}
Consider a compact set $K \subset U$ and take the smallest compact convex set $K'$ containing $K$. Fix $\bar{\lambda} \in (0, \frac{D_{K'}^U}{2})$ and let $\eta_{K'} \in \left( 0,
  \frac{D^U_{K'}}{4} \right)$ be such that $\eta_{K'} \leqslant R_{K'}$, where $R_{K'} > 0$ is given by assumption. Recall that $K'$ is  a compact set such that $K' \subset U$. Given
  two arbitrary points $x, z \in K$, if $\lvert x - z \rvert < R_{K'}$, the sought 
  result is a direct consequence of Equation \eqref{general scales} applied to the compact set $K'$. Thus, the only interesting case is $R_{K'}
  \leqslant \lvert x - z \rvert \leqslant R_{K'} + \eta_{K'}$. Let $y \in K'$
  lie on the line segment connecting $x$ to $z$ so that $\lvert x - y
  \rvert = \eta_{K'} \leqslant R_{K'}$. Then, $\lvert y - z \rvert \leqslant
  R_{K'}$ and, by the triangle inequality, we have that, for any $\varphi \in \mathcal{B}^{r_{K'}}$, 
  \begin{equation}
    \label{extendingbound2eq} \lvert (F_x - F_z) (\varphi^{\lambda}_z)) \rvert
    \leqslant \lvert (F_x - F_y) (\varphi^{\lambda}_z)) \rvert + \lvert (F_y -
    F_z) (\varphi^{\lambda}_z)) \rvert = \lvert (F_x - F_y)
    (\xi^{\lambda_1}_y)) \rvert + \lvert (F_y - F_z) (\varphi^{\lambda}_z)
    \rvert,
  \end{equation}
  where we have re-centered the test function $\varphi^{\lambda}_z$ at $y$.
  More precisely, in Equation \eqref{extendingbound2eq}, $\xi \assign
  \varphi^{\lambda_2}_w$, $\lambda_1 \assign \lvert z - y \rvert + \lambda =
  \eta_{K'} + \lambda \in \left( 0, 3 \frac{D^U_{K'}}{4} \right]$, while
  $\lambda_2 \assign \frac{\lambda}{\lambda_1}$ and $w \assign \frac{y -
  z}{\lambda_1}$. As a consequence $\| \xi \|_{C^{r_{K'}} (U)} = \lambda_2^{- r_{K'} -
  d} \leqslant (3 D_{K'}^U / 4)^{r_{K'} + d} \lambda^{- r_{K'} - d} \lesssim_{D^U_{K'}} \lambda^{- r_{K'} - d}$, $ \xi \in \mathcal{B}^{r_{K'}}$
  and, calling $\bar{\alpha}_{K'} \assign \alpha_{K'} - r_{K'} - d$, we can apply the
  coherence bound in Equation \eqref{general scales} to each term appearing in Equation \eqref{extendingbound2eq}, thus obtaining
  \begin{equation}
    \begin{array}{lll}
      \lvert (F_x - F_z) (\varphi^{\lambda}_x)) \rvert & \leqslant & C_{K',
      \bar{\lambda}} [(3 D_{K'}^U / 4)^{r_{K'} + d} \lambda^{- r_{K'} - d}
      \lambda^{\alpha_{K'}} (\lambda + \lvert x - y \rvert)^{\gamma - \alpha_{K'}} +
      \lambda^{\alpha_{K'}} (\lambda + \lvert y - z \rvert)^{\gamma -
      \alpha_{K'}}]\\
      & \leqslant & C_{K', \bar{\lambda}} [(3D_{K'}^U / 4)^{r_{K'} + d}
      \lambda^{\alpha_{K'}} (\lambda + \lvert x - y \rvert)^{\gamma - \alpha_{K'}} +
      \lambda^{\alpha_{K'}} (\lambda + \lvert y - z \rvert)^{\gamma -
      \alpha_{K'}}]\\
      & \lesssim_{D_{K'}^U} & C_{K', \bar{\lambda}} \lambda^{\bar{\alpha}_{K'}}
      (\lambda + \lvert x - y \rvert + \lvert y - z \rvert)^{\gamma -
      \bar{\alpha}_{K'}}\\
      & = & C_{K', \bar{\lambda}} \lambda^{\bar{\alpha}_{K'}} (\lambda + \lvert
      x - z \rvert)^{\gamma - \bar{\alpha}_{K'}} .
    \end{array} \label{extendingbound2eq1}
  \end{equation}
Equation \eqref{extendingbound2eq1} holds true uniformly for $x, z \in K'$, $\lambda \in \left( 0,
  \frac{D_{K'}^U}{2} \right)$ and $\varphi \in \mathcal{B}^{r_{K'}}$. In particular, in the third
  line, we used that $A^C + B^C \lesssim (A + B)^C$, for $A, B, C$ real
  positive numbers. Hence, we proved the result for all $x, z \in K'$ such that
  $R_{K'} \leqslant \lvert x - z \rvert \leqslant R_{K'} + \eta_{K'}$ and for $\bar{\lambda}
  \in (0, \frac{D_{K'}}{2})$. The most general case, \tmtextit{i.e.}, $\lvert x -
  z \rvert$ not bounded from above, is obtained by iterating this procedure. Finiteness of the number of steps of the procedure is granted by the fact
  that $\tmop{Diam} (K') < \infty$ as $K'$ is itself a compact set. At each
  iteration, the parameter $\alpha_{K'}$, see Definition \ref{Def: Coherence},
  decreases by $- r_{K'} - d$. It descends that, on the compact set $K$,
  \[ \lvert (F_x - F_z) (\varphi^{\lambda}_x) \rvert \lesssim C_{K',
     \bar{\lambda}} \lambda^{\tilde{\alpha}_K} (\lambda + \lvert x - z
     \rvert)^{\gamma - \tilde{\alpha}_K}, \]
  uniformly for $\varphi \in \mathcal{B}^{r_{K}}$, $x, z \in K$ and $\lambda \in (0,
  \bar{\lambda}]$, where $\bar{\lambda} \in \left( 0, \frac{D^U_{K}}{2}
  \right]$. Here $\tilde{\alpha}_K$ is as per Equation
  \eqref{alphak}, in which $N_K$ corresponds to the maximum number of iterations. In
  order to obtain the bound for all $\bar{\lambda} \in (0, D^U_K)$, we can invoke
  Proposition \ref{relaxingbound11}.
\end{proof}

{\noindent} Starting from the results of Propositions \ref{relaxingbound11} and \ref{relaxingbound2}, we can prove the following.

\begin{proposition}
  \label{homcoherence}Let $U \subseteq \mathbb{R}^d$ be an open set. Then
  \begin{enumerate}
    \item If $F \in \mathcal{G}^{\tmmathbf{\alpha}_{\gamma}}_{\tmmathbf{r}}
    (U)$ as per Definition \ref{Def: Coherence}, then $F \in
    \mathsf{G}^{\tmmathbf{\bar{\alpha}}}_{\bar{\tmmathbf{r}}} (U)$, as per
    Definition \ref{Def: Homogeneity} where \ $\bar{\tmmathbf{r}} \assign
    (\bar{r}_K)_{K \subset U}$ is such that $\bar{r}_K = \max \{r_K, r'_K \}$,
    with $r_K$ being the order of the germ, while $r'_K$ is the order of the
    distribution $F_y$ on the compact $K$ at a fixed but arbitrary point $y \in K$. In addition $\tmmathbf{\bar{\alpha}} =
    (\bar{\alpha}_K)_{K \subset U}$ is such that $\bar{\alpha}_K \assign \min
    \{\alpha_K, - \bar{r}_K - d\}$.
    
    \item Let $U \subseteq \mathbb{R}^d$ be a convex open set and let $F \in
    \mathsf{G}^{\gamma}_{\bar{\tmmathbf{r}}} (U)$. Then $F \in
    \mathcal{G}^{\tilde{\tmmathbf{\alpha}}_{\gamma}}_{\tmmathbf{r}} (U)$,
    where $\tilde{\alpha}_K$ is defined in Equation \eqref{alphak} and it
    holds that, denoting by $K'$ the smallest convex compact set containing $K$,
    \[ \| F
       \|_{\mathcal{G}^{\tilde{\tmmathbf{\alpha}}_{\gamma}}_{\tmmathbf{r}}
       (U), K} \lesssim \| F \|_{\mathsf{G}^{\gamma}_{\bar{\tmmathbf{r}}} (U),
       K'} . \]
  \end{enumerate}
\end{proposition}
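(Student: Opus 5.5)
For item 1, I will fix a compact $K\subset U$ and a point $y\in K$, and split $F_x=(F_x-F_y)+F_y$. The difference term will be handled by the coherence bound \eqref{Eq: Coherent}. Since $x,y\in K$ and $\lambda\le\bar\lambda$, the factor $(\lambda+|x-y|)^{\gamma-\alpha_K}$ is bounded by a constant depending on $\mathrm{Diam}(K)$ and $\bar\lambda$. This gives $|(F_x-F_y)(\varphi^\lambda_x)|\lesssim_K \lambda^{\alpha_K}$ uniformly for $\varphi\in\mathcal B^{r_K}$.

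For the single distribution $F_y$, I will use the local-order estimate on the compact $\tilde K=K_{\bar\lambda}$. Choose $r'_K$ so that $|F_y(\phi)|\lesssim \|\phi\|_{C^{r'_K}}$ for $\phi$ supported in $\tilde K$. The scaling identity $\|\varphi^\lambda_x\|_{C^{r}}\le \lambda^{-r-d}\|\varphi\|_{C^r}$ then yields $|F_y(\varphi^\lambda_x)|\lesssim \lambda^{-r'_K-d}$ uniformly for $x\in K$. Taking test functions in $\mathcal B^{\bar r_K}\subset \mathcal B^{r_K}\cap\mathcal B^{r'_K}$ with $\bar r_K=\max\{r_K,r'_K\}$, and using $\lambda\le\bar\lambda$, the two bounds combine to $\lambda^{\min\{\alpha_K,-\bar r_K-d\}}=\lambda^{\bar\alpha_K}$. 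This proves item 1.

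For item 2, the plan is to combine Theorem \ref{Prop: Necessity of Coherence} with Proposition \ref{relaxingbound2}. The germ $F$ is $\gamma$-homogeneous, so I will apply Theorem \ref{Prop: Necessity of Coherence} with $f=0$. The hypothesis \eqref{necessityiofcoherenceq} is exactly homogeneity, with $C_K=\|F\|_{\mathsf G^\gamma_{\bar{\mathbf r}}(U),K}$. This gives coherence with exponent $\alpha_K=\min\{-r_K-d,\gamma\}$ for $|x-y|\le D^U_K/2$ and $\lambda<D^U_K/2$, with constant $2C_K$.

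Two mismatches then have to be removed, and the constant must be tracked through each step.

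\textbf{Centering.} Theorem \ref{Prop: Necessity of Coherence} tests at $\varphi^\lambda_y$, while Definition \ref{Def: Coherence} tests at $\varphi^\lambda_x$. This is symmetric, by swapping the roles of $x$ and $y$.

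\textbf{Restricted ranges.} To pass from the restriction $|x-y|\le R_K:=D^U_K/2$ to all $x,y\in K$, I will invoke Proposition \ref{relaxingbound2}, which uses the convexity of $U$ and works on the smallest convex compact $K'\supset K$. This produces the exponent $\tilde\alpha_K$ of \eqref{alphak}. The restriction $\lambda<D^U_K/2$ is lifted to all $\bar\lambda\in(0,D^U_K)$ by Proposition \ref{relaxingbound11}.

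The main obstacle is this bookkeeping of compact sets. Because the homogeneity input is needed on $K'$, rather than only on $K$, the final bound reads $\|F\|_{\mathcal G^{\tilde\alpha_\gamma}_{\mathbf r}(U),K}\lesssim\|F\|_{\mathsf G^\gamma_{\bar{\mathbf r}}(U),K'}$. I must also make sure that the enlargements in Propositions \ref{relaxingbound11} and \ref{relaxingbound2} only affect the implicit constant. If needed, I would restrict to $\bar\lambda\le D^U_{K'}/2$, where Proposition \ref{relaxingbound2} states $\tilde K=K$, and absorb the remaining range of $\bar\lambda$ into the constant.
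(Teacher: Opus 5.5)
Your proposal is correct and follows the paper's proof essentially step for step. Item 1 goes through the splitting $F_x=(F_x-F_y)+F_y$, using the coherence bound, the finite local order of $F_y$ and the scaling $\|\varphi^\lambda_x\|_{C^r}\lesssim\lambda^{-r-d}$; item 2 applies Theorem \ref{Prop: Necessity of Coherence} with $f=0$ and then Propositions \ref{relaxingbound2} and \ref{relaxingbound11}. Measuring the order of $F_y$ on $K_{\bar\lambda}$ rather than on $K$, and swapping $x,y$ to fix the centering, are slightly more careful than the paper, which glosses over the fact that $\varphi^\lambda_x$ need not be supported in $K$.
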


\begin{proof}
  $1$. Fixed a compact set $K \subset U$ and $y \in K$, the distribution $F_y$
  restricted to $K$ has a finite order, say $r'_K \in \mathbb{N}_0$.
  Therefore, denoting by $\bar{r}_K = \max \{r_K, r'_K \}$, for any
  $\bar{\lambda} \in (0, D_K^U)$, with $D_K^U = \mathrm{dist} (K, \partial U)$,
  \[ \begin{array}{lll}
       \lvert F_x (\varphi^{\lambda}_x) \rvert \leqslant \lvert (F_x - F_y)
       (\varphi^{\lambda}_x) \rvert + \lvert F_y (\varphi^{\lambda}_x) \rvert
       & \lesssim_K & \lambda^{\alpha}_K (\lambda + \lvert x - y
       \rvert)^{\gamma - \alpha_K} + \lambda^{- r_K - d}\\
       & \lesssim_K & \lambda^{\bar{\alpha}_K} ((\bar{\lambda} +
       \mathrm{Diam} (K))^{\gamma - \alpha_K} + 1) .
     \end{array} \]
  The estimate is uniform in $x \in K$, $\lambda \in (0, \bar{\lambda}]$,
  $\varphi \in \mathcal{B}^{\bar{r}_K}$, see Equation \eqref{Eq: B^m}, In
  addition we have set $\bar{\alpha}_K \assign \min \{\alpha_K, - \bar{r}_K -
  d\}$, and we exploited that, for any $n \in \mathbb{N}$, $\|
  \varphi^{\lambda} \|_{C^n (U)} = \lambda^{- n - d} \| \varphi \|_{C^n (U)}$,
  see Equation \eqref{Eq: C^m norm}.
  
  $2$. This is a direct application of Proposition \ref{Prop: Necessity of
  Coherence} setting $f = 0$, since the germ is homogeneous per hypothesis. We
  can thus extend the estimate in Equation (\ref{necessityiofcoherenceqoo})
  for all $\bar{\lambda} \in (0, D_K^U)$, and for all $x, y \in K$ by using
  first Proposition \ref{relaxingbound2} and subsequently Proposition
  \ref{relaxingbound11}.
\end{proof}

\section{Uniform estimates for the normal coordinates}\label{appendixB}

In this appendix we discuss how, given a Riemannian manifold $(M, g)$ and a
compact set $K \subset M$, the Riemannian distance is topologically equivalent
to the Euclidean counterpart when working with local coordinates. The main
result of this appendix is Proposition \ref{topologicalequivalencenormalcoordinates}. We start
by establishing an estimate on manifolds with constant curvature.

\begin{proposition}
\label{estimatesformanifoldswithconstantcurvature}Let $(M, g)$ be a simply
  connected $d-$dimensional Riemannian manifold with constant sectional curvature
  $\mathcal{S}$ and let $p \in M$. Then
  \begin{enumerate}
    \item if $M$ has positive sectional curvature, it is locally isometric to
    an $d$-sphere $\mathbb{S}^d$,
    
    \item if $M$ has negative sectional curvature, it is locally isometric to
    an $d$-dimensional hyperbolic hyperplane $\mathbb{H}^d$,
    
    \item if $M$ has vanishing sectional curvature, it is locally isometric to
    the hyperplane $\mathbb{R}^d$.
  \end{enumerate}
  In addition, for every $p \in M$ and every $q, r \in B (p,
  R_{\mathcal{S}})$, with $R_{\mathcal{S}} \assign \frac{\pi}{8 \hspace{0.17em}
  \sqrt{\lvert \mathcal{S} \rvert}}$, there exist $C, C' > 0$ depending both
  on $p$ and on $\mathcal{S}$ such that,
  \begin{equation}
    \label{boundformanifold with constant curvature} \left\{\begin{array}{ll}
      & \lvert \exp^{- 1}_p (q) - \exp^{- 1}_p (r) \rvert \leqslant C'  d_{\mathbb{H}^d} (q,
      r)  \hspace{0.27em} \hspace{0.27em} \text{for the hyperbolic space}
      \hspace{0.27em} \mathbb{H}^d,\\
      & \lvert \exp^{- 1}_p (q) - \exp^{- 1}_p (r) \rvert \geqslant C d_{\mathbb{S}^d} (q, r)
      \hspace{0.27em} \hspace{0.27em} \text{for the sphere} \hspace{0.27em}
      \mathbb{S}^d.
    \end{array}\right. ,
  \end{equation}
where $d_{\mathbb{H}^d}$ and $d_{\mathbb{S}^d}$ denote the Riemannian distances on the hyperbolic space and on the sphere respectively.
\end{proposition}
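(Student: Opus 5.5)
The first three items are classical and I would only cite them. By the Killing--Hopf theorem, a simply connected complete manifold of constant sectional curvature $\mathcal{S}$ is isometric to one of the three model spaces. Without completeness, Cartan's theorem (equality of the metric coefficients in normal coordinates, which are determined by the curvature through the Jacobi equation) gives local isometry to a geodesic ball in the round sphere of radius $1/\sqrt{\mathcal{S}}$, in $\mathbb{H}^d$ with curvature $\mathcal{S}$, or in $\mathbb{R}^d$, according to the sign of $\mathcal{S}$. After rescaling the metric we may assume $\mathcal{S}\in\{1,-1\}$, and the constants rescale accordingly by powers of $\sqrt{|\mathcal{S}|}$. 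This reduces the inequalities in Equation \eqref{boundformanifold with constant curvature} to explicit computations in the model spaces inside the ball $B(p,R_{\mathcal{S}})$.

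For the estimates I would use polar normal coordinates at $p$. In these coordinates the metric reads $g=\mathrm{d}t^2+f(t)^2 g_{\mathbb{S}^{d-1}}$, with $f(t)=\sin t$ on the sphere and $f(t)=\sinh t$ on hyperbolic space, while the Euclidean pullback metric via $\exp_p^{-1}$ is $\mathrm{d}t^2+t^2 g_{\mathbb{S}^{d-1}}$.

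In the hyperbolic case we have $\sinh t\geq t$, so $g\geq g_{\mathrm{Eucl}}$ as quadratic forms on $B(0,R_{\mathcal{S}})$. The length of any curve measured in $g$ therefore dominates its Euclidean length. Since $\exp_p^{-1}(B(p,R_{\mathcal{S}}))$ is a Euclidean ball, hence convex, the Euclidean distance is realized by the segment. Taking the minimal $g$-geodesic between $q$ and $r$, which stays in a convex ball because $R_{\mathcal{S}}$ is below the convexity radius (in fact $\mathbb{H}^d$ is globally convex), we get $|\exp^{-1}_p(q)-\exp^{-1}_p(r)|\leq d_{\mathbb{H}^d}(q,r)$. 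Thus one can take $C'=1$, up to the rescaling factor.

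In the spherical case we have $\sin t\geq \frac{2}{\pi}t$ for $t\leq \pi/2$, and a fortiori on $t<\pi/8$. Hence $g\geq (\sin(\pi/8)/(\pi/8))^2 g_{\mathrm{Eucl}}$ on the ball, which gives $|\exp^{-1}_p(q)-\exp^{-1}_p(r)|\geq C\,d_{\mathbb{S}^d}(q,r)$. For this, compare the $g$-length of the image of the Euclidean segment from $\exp_p^{-1}(q)$ to $\exp_p^{-1}(r)$, which stays in the ball by convexity, with the upper bound $f(t)\leq t$. Concretely, $g\leq g_{\mathrm{Eucl}}$ because $\sin t\leq t$, so $d_{\mathbb{S}^d}(q,r)\leq$ the $g$-length of that segment, which is at most its Euclidean length. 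So in fact $C=1$ works here too. The lower bound $\sin t\geq \frac{2}{\pi}t$ is only needed for a reverse inequality if one wants two-sided equivalence.

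The only delicate point is that the radius restriction $R_{\mathcal{S}}=\pi/(8\sqrt{|\mathcal{S}|})$ ensures two things. First, $\exp_p$ is a diffeomorphism on the ball, since the ball lies well inside the conjugate radius $\pi/\sqrt{\mathcal{S}}$. Second, the ball is convex in both metrics, so the minimizing curves used in the comparison remain inside the region where the pointwise metric comparison holds. I expect checking convexity and injectivity to be the main, if mild, obstacle; I would handle it by citing the standard convexity radius bound $\pi/(2\sqrt{\mathcal{S}})$ for curvature bounded above by $\mathcal{S}$. The dependence of the constants on $p$ and $\mathcal{S}$ then enters only through this rescaling.
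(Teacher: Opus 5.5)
Your argument is correct but follows a different route from the paper. After citing \cite{GHF} for the classification, the paper proves only the hyperbolic inequality, working in the Poincar\'e disc model. It uses the closed-form $\cosh^{-1}$ expression for $d_{\mathbb{H}^d}$, together with boundedness of the ball, to bound the Euclidean distance between the disc coordinates of $q,r$ by $d_{\mathbb{H}^d}(q,r)$, and declares the spherical case analogous. You instead write both models in geodesic polar coordinates as $\mathrm{d}t^2+f(t)^2g_{\mathbb{S}^{d-1}}$, compare $f$ with $t$ pointwise ($\sinh t\geq t$, $\sin t\leq t$), and then compare lengths of curves.

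Your route has several advantages:
\begin{itemize}
\item You work with $\exp_p^{-1}$ itself. Disc coordinates centred at $p$ are not normal coordinates: for $\mathcal{S}=-1$ they differ by the radial map $x\mapsto\log\frac{1+|x|}{1-|x|}\,\frac{x}{|x|}$. So the paper's final step tacitly relies on this map being bi-Lipschitz on the small ball.
\item You get the sharp constants $C=C'=1$. The inequalities are scale invariant when $|\cdot|$ is the $g_p$-norm, so no rescaling factor actually appears.
\item You handle both the hyperbolic and spherical cases explicitly.
\item Your argument is the same comparison mechanism the paper later uses, via Rauch's theorem, in Proposition~\ref{topologicalequivalencenormalcoordinates}.
\end{itemize}
The paper's route is more hands-on but tied to the explicit model formula.

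Two minor fixes. First, the sentence claiming that $g\geq(\sin(\pi/8)/(\pi/8))^2g_{\mathrm{Eucl}}$ gives $|\exp_p^{-1}(q)-\exp_p^{-1}(r)|\geq C\,d_{\mathbb{S}^d}(q,r)$ has the implication backwards: a lower bound on $g$ controls the Euclidean distance from above. Drop it; the $g\leq g_{\mathrm{Eucl}}$ argument right after it is the correct one. Second, the obstacle you anticipate is essentially absent. In $\mathbb{H}^d$, $\exp_p$ is a global diffeomorphism and $\sinh t\geq t$ holds everywhere, so no convexity is needed. In $\mathbb{S}^d$ you only need the Euclidean convexity of $B(0,R_{\mathcal{S}})\subset B(0,\pi/\sqrt{\mathcal{S}})$, so that the segment stays where polar normal coordinates are valid.
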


\begin{proof}
  As the first part of the Theorem is a standard result in differential
  geometry, see {\cite[Thm. 3.82]{GHF}}, we only prove the first inequality
  in Equation~\eqref{boundformanifold with constant curvature}, the second one
  descending from a similar argument. The first bound in Equation
  \eqref{boundformanifold with constant curvature} can be proven by working on the
  Poincar{\'e} disc model of the hyperbolic space with line element $\mathd
  s^2 = \frac{4}{\sqrt{\lvert \textrm{$\mathcal{S}$} \rvert}  (1 - r^2)^2} 
  (\mathd r^2 + r^2 \mathd \Omega_{d - 1})$, where $\mathd \Omega_{d - 1}$ is
  the volume element of the unit $(d - 1)$-dimensional sphere. The distance
  between two points $q$ and $r$ reads
  \[ d_{\mathbb{H}^d} (q, r) = \frac{1}{\sqrt{\lvert \mathcal{S} \rvert}} \cosh^{- 1} \left(
     1 + 2 \frac{\lvert q - r \rvert^2}{(1 - \lvert q \rvert) (1 - \lvert r
     \rvert)} \right), \]
  where we denoted by $\lvert \cdummy \rvert$ the Euclidean distance.
  Considering $q, r \in B \left( p, \frac{\pi}{8 \hspace{0.17em} \sqrt{\lvert
  \mathcal{S} \rvert}} \right)$, it holds that
  \[ \cosh \left( \sqrt{\lvert \mathcal{S} \rvert} \lvert q - r \rvert \right)
     \lesssim_{\mathcal{S}} 1 + 2 \lvert q - r \rvert^2 \lesssim_\mathcal{S} \left( 1 + 2
     \frac{\lvert q - r \rvert^2}{(1 - \lvert q \rvert) (1 - \lvert r \rvert)}
     \right), \]
  which entails
  \[ \lvert q - r \rvert \lesssim_{\mathcal{S}} \cosh^{- 1} \left( 1 + 2
     \frac{\lvert q - r \rvert^2}{(1 - \lvert q \rvert) (1 - \lvert r \rvert)}
     \right) . \]
  The sought identity descends noting that, being the Poincar{\'e} disk a
  subset of $\mathbb{R}^d$, the exponential map acts therein as the identity.
\end{proof}

The following theorem allows us to compare, under suitable assumptions, the arc
lengths of curves in two different Riemannian manifolds. The proof can be found in
{\cite[Cor. 1.35]{CE}}.

\begin{theorem}
  \label{corollaryofrauch1}Let $(M,g), (M_0, g_0)$ be Riemannian manifolds with $\dim M_0
  \geqslant \dim M$ and let $m \in M$ while $m_0 \in M_0$. Assume that for all
  pair of vectors $v, w \in T_m M$, $v_0, w_0 \in T_{m_0} M_0$,
  $\mathcal{S}_{M_0} (v_0, w_0) \geqslant \mathcal{S}_M (v, w)$ where
  $\mathcal{S}_{\cdot}$ denotes the sectional curvature. Let $\rho > 0$ be such
  that $\exp_m$ restricted to the ball $B (0, \rho)$ is an embedding while
  $\exp_{m_0}$ restricted to $B (0, \rho)$ is non-singular. Let $\iota : T_m M \to
  T_{m_0} M$ be a linear injection preserving inner products. Then for any
  curve $c : [0, 1] \to \exp_m (B (0, \rho))$, we have
  \begin{equation}
    \label{rauch1eq} L [c] \geqslant L [\exp_{m_0} \circ \iota \circ \exp_m^{- 1}
    (c)] \assign L [c_0 (t)],
  \end{equation}
  where we denoted the arc length of a curve $\gamma$ with $L [\gamma (t)]$.
\end{theorem}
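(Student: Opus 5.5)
The statement is the Rauch comparison corollary (Theorem \ref{corollaryofrauch1}). The plan is to compare lengths infinitesimally: writing the curve in polar form and applying the classical Rauch comparison to Jacobi fields along radial geodesics.

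\textbf{Setup.} Set $\tilde c(t)\assign \exp_m^{-1}(c(t))\in B(0,\rho)\subset T_mM$. Since $\exp_m$ restricted to $B(0,\rho)$ is an embedding, $\tilde c$ is a well-defined piecewise $C^1$ curve. Then $c=\exp_m\circ\tilde c$ and $c_0=\exp_{m_0}\circ\iota\circ\tilde c$.

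Because $L[c]=\int_0^1|\dot c(t)|\,\mathd t$ and similarly for $c_0$, it suffices to prove the pointwise inequality
\[ \lvert (\mathd\exp_m)_{\tilde c(t)}\dot{\tilde c}(t)\rvert_g \;\geqslant\; \lvert (\mathd\exp_{m_0})_{\iota\tilde c(t)}\,\iota\dot{\tilde c}(t)\rvert_{g_0} \qquad \text{for a.e. } t. \]

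\textbf{Radial–tangential splitting.} Fix $v=\tilde c(t)\neq0$ and $w=\dot{\tilde c}(t)$. Decompose $w=w^{\parallel}+w^{\perp}$, with $w^{\parallel}$ parallel to $v$ and $w^{\perp}\perp v$. By the Gauss lemma, $\mathd\exp_m(w^{\parallel})$ and $\mathd\exp_m(w^{\perp})$ are orthogonal, and $\lvert \mathd\exp_m(w^{\parallel})\rvert=\lvert w^{\parallel}\rvert$. The same holds on $M_0$, since $\iota$ preserves inner products and hence maps the decomposition to the corresponding one at $\iota v$. The pointwise claim therefore reduces to
\[ \lvert \mathd\exp_m(w^{\perp})\rvert \;\geqslant\; \lvert \mathd\exp_{m_0}(\iota w^{\perp})\rvert . \]

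\textbf{Jacobi fields and Rauch.} Write $\mathd\exp_m(w^{\perp})=J(1)$, where $J$ is the Jacobi field along the geodesic $s\mapsto\exp_m(sv)$ with $J(0)=0$ and $J'(0)=w^{\perp}$. Likewise $\mathd\exp_{m_0}(\iota w^{\perp})=J_0(1)$, with $J_0$ the Jacobi field along $s\mapsto\exp_{m_0}(s\,\iota v)$ with $J_0(0)=0$ and $J_0'(0)=\iota w^{\perp}$. These fields have equal initial data norms and are both orthogonal to their geodesics.

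The main obstacle is checking the hypotheses of the Rauch comparison theorem:
\begin{itemize}
\item The curvature comparison is needed along the whole geodesics, not merely at $m,m_0$. I will read the hypothesis $\mathcal S_{M_0}\geqslant\mathcal S_M$ as holding for all pairs of planes at corresponding points, as in \cite{CE}.
\item The geodesic $s\mapsto\exp_{m_0}(s\iota v)$, $s\in[0,1]$, must have no conjugate points. This follows from the non-singularity of $\exp_{m_0}$ on $B(0,\rho)$, since $\lvert\iota v\rvert<\rho$.
\end{itemize}
Under these hypotheses Rauch yields $\lvert J(s)\rvert\geqslant\lvert J_0(s)\rvert$ for $s\in[0,1]$, and taking $s=1$ gives the reduced inequality.

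\textbf{Degenerate times and conclusion.} At times with $\tilde c(t)=0$ we have $\mathd\exp_m(0)=\mathrm{id}$, so both sides equal $|w|$. Integrating the pointwise inequality in $t$ gives \eqref{rauch1eq}.
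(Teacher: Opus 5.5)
Your argument is correct and is essentially the proof behind the citation. The paper gives no proof of its own and refers to \cite[Cor. 1.35]{CE}, where the result follows from the same steps you take: compare $|\mathd\exp_m(w)|$ with $|\mathd\exp_{m_0}(\iota w)|$ via the Rauch comparison theorem along radial geodesics (Gauss lemma for the radial part, non-singularity of $\exp_{m_0}$ to exclude conjugate points), then integrate along the curve. You were also right to read the curvature hypothesis as a comparison at all corresponding points rather than only at $m,m_0$; this is the form Cheeger--Ebin use, and it is how the paper applies the result with constant-curvature model spaces in Proposition \ref{topologicalequivalencenormalcoordinates}.
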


{\noindent}Theorem \ref{corollaryofrauch1} entails the following useful estimate.

\begin{proposition}
  \label{topologicalequivalencenormalcoordinates}Let $(M, g)$ be a
  $d$-dimensional Riemannian manifold. Consider $\{\Omega_n\}_{n \in \mathbb{N}}$, an exhaustion by compacts of $M$ as per Definition \ref{Def: Compact Exhaustion}. Endow $M$ with a finite good cover $(U_i,
  p_i)_{i \in I}$, see Definition \ref{Rem: Good Covering}, so that each $U_i$
  is a geodesically convex open subset with $\tmop{diam} (\exp_{p_i}^{- 1}
  (U_i)) \leqslant \min \{ \frac{\mathcal{R}_C (p)}{2}, \frac{\pi}{8
  \sqrt{\lvert \mathcal{S} \rvert}} \}$. Here $\lvert \mathcal{S} \rvert$ is
  the operator norm of the sectional curvature tensor evaluated in the compact
  set $\overline{\Omega_{\bar{n}}}$, where $\bar{n}$ is the smallest integer $n \in \mathbb{N}$ such
  that $U_i \subset \Omega_n$. Fixed $j \in I$, and a compact set $K \subset
  M$ such that $U_j \subset K$, there exist two constants $\mathfrak{C}_K, \mathfrak{C}_K' > 0$, depending
  solely on $\bar{n}$ and $K$, \ such that
  \begin{equation}
    \label{topologicalequivalencenormalcoordinateseq} \mathfrak{C}_K' d_g (q, r)
    \leqslant \lvert \exp^{- 1}_{p_j} (q) - \exp^{- 1}_{p_j} (r) \rvert \leqslant
    \mathfrak{C}_K d_g (q, r), \qquad \forall q, r \in U_j .
  \end{equation}
\end{proposition}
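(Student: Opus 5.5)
The strategy is to derive both inequalities in \eqref{topologicalequivalencenormalcoordinateseq} from Rauch comparison, Theorem \ref{corollaryofrauch1}. On the compact set $\overline{\Omega_{\bar n}}$ the sectional curvature is bounded, $-|\mathcal{S}| \le \mathcal{S}_M \le |\mathcal{S}|$, which lets us compare $M$ with the two constant-curvature models: the hyperbolic space $\mathbb{H}^d$ of curvature $-|\mathcal{S}|$ and the sphere $\mathbb{S}^d$ of curvature $|\mathcal{S}|$. The estimates in the models are already available from Proposition \ref{estimatesformanifoldswithconstantcurvature}. The restriction $\mathrm{diam}(\exp_{p_j}^{-1}(U_j)) \le \pi/(8\sqrt{|\mathcal{S}|})$ keeps every point inside the balls $B(p,R_{\mathcal{S}})$ on which those model estimates hold. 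It also keeps everything below the conjugate radius $\pi/\sqrt{|\mathcal{S}|}$ of the sphere, so $\exp$ is non-singular there. The bound by $\mathcal{R}_C/2$ makes $\exp_{p_j}$ an embedding onto the geodesically convex set $U_j$.

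\emph{Upper bound.} Take $M_0=\mathbb{S}^d$ with curvature $|\mathcal{S}| \ge \mathcal{S}_M$, so that $M$ plays the role of the less curved manifold in Theorem \ref{corollaryofrauch1}. Fix $q,r\in U_j$ and let $c$ be the minimizing geodesic joining them; it lies in $U_j$ by convexity, so $L[c]=d_g(q,r)$. Push $c$ to the sphere through $\exp_{m_0}\circ\iota\circ\exp_{p_j}^{-1}$. By \eqref{rauch1eq},
\[ d_g(q,r) = L[c] \ge L[c_0] \ge d_{\mathbb{S}^d}(q_0,r_0). \]
The second inequality in \eqref{boundformanifold with constant curvature} then gives $d_{\mathbb{S}^d}(q_0,r_0) \ge C\,|\exp^{-1}(q_0)-\exp^{-1}(r_0)|$. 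Since $\iota$ is an isometry, the right-hand side equals $C\,|\exp_{p_j}^{-1}(q)-\exp_{p_j}^{-1}(r)|$. This yields the upper bound with $\mathfrak{C}_K=1/C$.

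\emph{Lower bound.} Here we reverse the roles. Take $M$ as the comparison space $M_0$ and $\mathbb{H}^d$ of curvature $-|\mathcal{S}|\le \mathcal{S}_M$ as the less curved manifold, using that $\exp$ is an embedding on all of $\mathbb{H}^d$. Let $c$ be the hyperbolic geodesic between the preimages of $q$ and $r$. Its image in $M$ is some curve joining $q$ and $r$, and \eqref{rauch1eq} gives
\[ d_g(q,r) \le L[\text{image}] \le L[c] = d_{\mathbb{H}^d}. \]
The first inequality in \eqref{boundformanifold with constant curvature} goes the wrong way to finish from here. Instead we use the explicit hyperbolic formula: on balls of radius $R_{\mathcal{S}}$, $d_{\mathbb{H}^d}$ is also bounded above by a constant times the Euclidean distance of the normal coordinates, since $\cosh^{-1}(1+x)\lesssim\sqrt{x}$ for bounded $x$ and the denominators stay away from $0$. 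This yields $\mathfrak{C}_K'$.

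\emph{Uniformity.} The constants depend only on $|\mathcal{S}|$ over $\overline{\Omega_{\bar n}}$ and on the radius $R_{\mathcal{S}}$. They therefore depend only on $\bar n$ and $K$, and not on $j$ or on the points $q,r$.

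\emph{Main obstacle.} The hardest step is checking the hypotheses of Theorem \ref{corollaryofrauch1} carefully: which space is the embedding, which is the non-singular one, and the direction of the curvature inequality in each application. A second delicate point is the reverse hyperbolic Euclidean bound, which is needed because Proposition \ref{estimatesformanifoldswithconstantcurvature} only records one direction in each model. If the direct comparison becomes awkward, I would instead fall back on compactness: $d(\exp_{p})_v$ is continuous in $(p,v)$ and invertible on the closed ball of radius $\mathcal{R}_C/2$, so its operator norm and that of its inverse are uniformly bounded. Combined with convexity, this gives both Lipschitz bounds.
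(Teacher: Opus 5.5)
Your route is the paper's: Rauch comparison (Theorem~\ref{corollaryofrauch1}) against a round sphere for the upper bound, and against $\mathbb{H}^d$ with the roles reversed for the lower bound. Your bookkeeping of which space must be an embedding and which merely non-singular is correct.

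The gap is in the upper bound, at the step where you say the second inequality in \eqref{boundformanifold with constant curvature} gives $d_{\mathbb{S}^d}(q_0,r_0)\geq C\lvert \exp^{-1}(q_0)-\exp^{-1}(r_0)\rvert$. That inequality states the opposite, $\lvert \exp^{-1}_p(q)-\exp^{-1}_p(r)\rvert\geq C\, d_{\mathbb{S}^d}(q,r)$. This is the cheap direction: positive curvature focuses geodesics, so $\exp_{p_0}$ on the sphere is $1$-Lipschitz. Paired with $d_g(q,r)\geq d_{\mathbb{S}^d}(q_0,r_0)$ it gives nothing, since both inequalities only bound $d_{\mathbb{S}^d}(q_0,r_0)$ from above. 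You spotted this phenomenon in the hyperbolic model, where Proposition~\ref{estimatesformanifoldswithconstantcurvature} records only the cheap direction (Euclidean $\lesssim$ hyperbolic), but you misread which direction is recorded for the sphere. The paper's own proof makes the same reversal in both steps: it invokes \eqref{boundformanifold with constant curvature} for spherical $\gtrsim$ Euclidean and for Euclidean $\gtrsim$ hyperbolic, the opposites of what is stated. So your lower bound is more careful than the paper's, while your upper bound inherits its slip.

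\emph{Repairing the upper bound.} The missing estimate is true and short. On the sphere of curvature $\kappa=\lvert\mathcal{S}\rvert$, Jacobi fields give $\lvert d(\exp_{p_0})_v w\rvert\geq \frac{\sin(\sqrt{\kappa}\lvert v\rvert)}{\sqrt{\kappa}\lvert v\rvert}\lvert w\rvert\geq \frac{\sin(\pi/8)}{\pi/8}\lvert w\rvert$ for $\lvert v\rvert\leq \pi/(8\sqrt{\kappa})$. That closed ball is geodesically convex on the sphere. Hence the spherical minimizing geodesic from $q_0$ to $r_0$ pulls back to a curve from $x=\iota\exp_{p_j}^{-1}(q)$ to $y=\iota\exp_{p_j}^{-1}(r)$ of Euclidean length at most $\frac{\pi/8}{\sin(\pi/8)}\,d_{\mathbb{S}^d}(q_0,r_0)$, which closes the upper bound.

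\emph{A caveat on the lower bound.} The explicit $\cosh^{-1}$ formula is written in Poincar\'e-ball coordinates, not normal coordinates. The two differ by the radial map $s\mapsto\tanh(\sqrt{\kappa}s/2)$, which is bi-Lipschitz on bounded balls, so your conclusion survives after this change of variables. More directly, you can apply $\lvert d(\exp_{p_0})_v w\rvert\leq\frac{\sinh(\sqrt{\kappa}\lvert v\rvert)}{\sqrt{\kappa}\lvert v\rvert}\lvert w\rvert$ along the segment $[x,y]$.

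\emph{Your fallback.} The compactness argument is a complete proof on its own, and the most robust one. For base points in $\overline{\Omega_{\bar n}}$ and $v$ in a closed ball below the injectivity radius, $\|d\exp\|$ and $\|(d\exp)^{-1}\|$ are uniformly bounded. Pushing the Euclidean segment $[x,y]$ forward gives $\mathfrak{C}'_K$, and pulling the minimizing geodesic in the geodesically convex $U_j$ back gives $\mathfrak{C}_K$. Use the convexity of the Euclidean ball containing $\exp_{p_j}^{-1}(U_j)$, not of $\exp_{p_j}^{-1}(U_j)$ itself, which need not be convex. This route needs neither curvature comparison nor the $\pi/(8\sqrt{\lvert\mathcal{S}\rvert})$ restriction.
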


\begin{proof}
For the sake of clarity, we shall prove separately the two inequalities in Equation \eqref{topologicalequivalencenormalcoordinates}. 

\vspace{.2cm}

\noindent $(\leqslant)$ -- Fix $j \in I$ and consider the geodesically convex open set $U_j \subset \Omega_n$ for a given $n \in \mathbb{N}$. We start by showing that, for any $q,r \in U_j$, 
  \[ \lvert \exp^{- 1}_{p_j} (q) - \exp^{- 1}_{p_j}
     (r) \rvert \leqslant \mathfrak{C}_K d_g (q, r) . \]
  We fix $p_0 \in \mathbb{S}^d$ and a point $p_j \in U_j \subset \Omega_n$. For the sake of notational ease, we
  denote
  \[ \exp_{p_j} \equiv \exp_j . \]
 The sectional curvature restricted to
  $\overline{\Omega_n}$ is bounded from above. Since $\overline{\Omega_n}$ is
  compact, we choose the radius of the model manifold $\mathbb{S}^d$ so that
  $\mathcal{S}_{\mathbb{S}^d} \geqslant \mathcal{S}_{\Omega_n}$ in order for
  the hypotheses of Theorem \ref{corollaryofrauch1} to be abode by. Considering
  any bijection $\iota : T_{p_j} M \to T_{p_0} \mathbb{S}^d$, which preserves the
  inner product, the map $\varphi \assign \exp_{p_0} \circ \iota \circ \exp_j^{-
  1} : U_j \to \exp_{p_0} (\iota \circ \exp_j^{- 1} (U_j))$ is a diffeomorphism
  between geodesically convex sets. This is due to the fact that the sets $\{U_i\}_{i \in I}$
  are chosen as in Definition \ref{Rem: Good Covering} and the diameter
  of the geodesic neighbourhood $\exp_{j}^{- 1}
  (U_j)$ is smaller than $\frac{\pi}{8 \sqrt{\lvert
  \mathcal{S} \rvert}}$ per hypothesis. Hence, given two points $q, r \in U_j$, we consider a
  minimizing geodesic $c : [0, 1] \to U_j$, so that $L [c] = d_g (q, r)$. As a
  consequence of Theorem \ref{corollaryofrauch1},
  \begin{equation}
    \label{comparison result 0.1} d_g (q, r) = L [c] \geqslant L [c_0] \geqslant
    d_0 (\varphi (q), \varphi (r))
  \end{equation}
  where $d_0$ is the Riemannian distance on $\mathbb{S}^d$, $c_0 \assign
  \varphi \circ c$, while in the last inequality we used that, by definition, $d_0 (\varphi
  (q), \varphi (r))$ is the value minimizing the length of a curve on
  $\mathbb{S}^d$ connecting the corresponding two points. By Equation
  \eqref{boundformanifold with constant curvature},
  \[ d_0 (\varphi (q), \varphi (r)) \gtrsim \lvert \exp_{p_0}^{- 1} (\varphi
     (q)) - \exp_{p_0}^{- 1} (\varphi (r)) \rvert, \]
  and, recalling that $\varphi = \exp_{p_0} \circ \iota \circ \exp_j^{- 1}$, we
  have
  \[ \lvert \exp_{p_0}^{- 1} (\varphi (q)) - \exp_{p_0}^{- 1} (\varphi (r))
     \rvert = \lvert \exp_j^{- 1} (q) - \exp_j^{- 1} (r) \rvert. \]
  Combining these two equations together, we obtain
  \begin{equation}
    \label{comparison result 0.2} \lvert \exp_j^{- 1} (q) - \exp_j^{- 1} (r)
    \rvert \lesssim d_0 (\varphi (q), \varphi (r)) .
  \end{equation}
  Plugging now Equation \eqref{comparison result 0.2} into Equation
  \eqref{comparison result 0.1}, we get 
  \[ d_g (q, r) \gtrsim \lvert \exp^{- 1}_j (q) - \exp^{- 1}_j (r) \rvert . \]

\vspace{0.2cm}

\noindent $(\geqslant)$. Adopting a similar strategy, we shall now prove the first inequality in Equation
\eqref{topologicalequivalencenormalcoordinateseq}. Once again, we shall invoke Theorem \ref{corollaryofrauch1}, although this time the role of $M$ and the model manifold $M_0 = \mathbb{H}^d$ should be reversed. 

Let $p_0 \in \mathbb{H}^d$ and $p_j \in U_j \subset \Omega_n$, for a given $n \in \mathbb{N}$. By hypothesis, we know that the sectional curvature of $U_j$ is bounded from below by a negative constant depending on $n$. We choose the metric on the hyperbolic space $\mathbb{H}^d$ in such a way it has a negative curvature given by this constant and we fix two arbitrary points $q, r \in U_j$. Denoting by $\iota : T_p M \to T_{p_0} \mathbb{H}^d$ any bijection preserving the inner product, let us consider the diffeomorphism $\varphi \assign \exp_{p_0} \circ \iota \circ \exp_j^{- 1} : U_j \to \exp_{p_0} (\iota \circ \exp_j^{- 1} (U_j))$. We now choose $c : [0, 1] \to U_j$, so that $c_0 \assign \varphi \circ c$ is a minimizing geodesic between the points $\varphi (q)$ and $\varphi (r)$. Theorem \ref{corollaryofrauch1}
  entails that
  \begin{equation}
    \label{comparison result 0.11} d_0 (\varphi (q), \varphi (r)) = L [c_0]
    \geqslant L [c] \geqslant d_g (p, q),
  \end{equation}
  where, in the last inequality, we used that the geodesic distance is the minimum
  value of the length of a curve connecting the two endpoints. Equation
  \eqref{boundformanifold with constant curvature} implies that
  \[ \lvert \exp_{p_0}^{- 1} (\varphi (q)) - \exp_{p_0}^{- 1} (\varphi (r))
     \rvert \gtrsim d_0  (\varphi (q), \varphi (r)), \]
  and, since $\varphi = \exp_{p_0} \circ \iota \circ \exp_j^{- 1}$, we
  have
  \[ \lvert \exp_{p_0}^{- 1} (\varphi (q)) - \exp_{p_0}^{- 1} (\varphi (r))
     \rvert = \lvert \exp_j^{- 1} (q) - \exp_j^{- 1} (r) \rvert, \]
  hence, combining the previous two equations, 
  \begin{equation}
    \label{comparison result 0.21} \lvert \exp_j^{- 1} (q) - \exp_j^{- 1} (r)
    \rvert \gtrsim d_0  (\varphi (q), \varphi (r)) .
  \end{equation}
  Finally plugging Equation \eqref{comparison result 0.21} into Equation
  \eqref{comparison result 0.11}, we obtain
  \[ \lvert \exp_j^{- 1} (q) - \exp_j^{- 1} (r) \rvert \gtrsim d_g (q, r) . \]
\end{proof}

{\noindent}The following lemma is a direct consequence of {\cite[Thms. I.3.1,
I.3.2]{Chav}}, thus we omit the proof.  

\begin{lemma}
  \label{Lem:smoothness of diffeomorphism}Let $(M,g)$ be a Riemannian manifold and let $K
  \subset M$ be a compact subset. Denote by $\mathcal{R}_C (K)$ the convexity radius of $K$
  and fix a point $p \in K$. For $q \in B (p, \mathcal{R}_C (K) / 4)$, consider the
  diffeomorphism $\exp_q^{- 1} \circ \exp_p : B (0, \mathcal{R}_C (K) / 4) \to
  B (0, \mathcal{R}_C (K) / 2)$, and consider the transition map as a function
  of $q, s \in B (p, \mathcal{R}_C (K) / 4)$
  \begin{equation}
    \exp_{\cdot}^{- 1} \circ \exp_p (\cdot) : (q, s) \mapsto \exp_q^{- 1}
    \circ \exp_p (s),
  \end{equation}
  Then this map is smooth with respect to $q, s \in B (p, \mathcal{R}_C (K) / 4)$.
\end{lemma}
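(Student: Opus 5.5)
The plan is to view the transition map $(q,s)\mapsto \exp_q^{-1}\circ\exp_p(s)$ as a composition of maps that are each smooth jointly in all their variables, and to control the domains so that every composition is defined. First fix $p\in K$ and let $\delta\assign\mathcal{R}_C(K)$. By Definition~\ref{Def: Convexity Radius}, for every $q$ with $d_g(p,q)<\delta/4$ the set $\exp_q(B(0,\delta))$ is geodesically convex. Hence $\exp_q$ restricted to $B(0,\delta)$ is a diffeomorphism onto its image. Now take $s\in B(0,\delta/4)\subset T_pM$. Then $d_g(q,\exp_p(s))<\delta/2$ by the triangle inequality, so $\exp_p(s)$ lies in the region where $\exp_q^{-1}$ is defined and takes values in $B(0,\delta/2)$. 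This proves that the map in the statement is well defined with the stated range.

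Next I would globalize the exponential map. Consider $E:TM\supset\mathcal{U}\to M\times M$, $E(q,v)\assign(q,\exp_q v)$, defined on a neighbourhood of the zero section. Smooth dependence of geodesics on initial data, via the ODE theory in \cite[Thms. I.3.1, I.3.2]{Chav}, shows that $E$ is smooth. Its differential at each point $(q,v)$ with $|v|<\delta$ is invertible, because the fibre derivative is $d(\exp_q)_v$. That derivative is nonsingular since $\exp_q$ is a diffeomorphism on $B(0,\delta)$, as shown above. Moreover $E$ is injective on $\mathcal{V}\assign\{(q,v): d_g(p,q)<\delta/4,\ |v|<\delta\}$, since its first component is $q$ and $\exp_q$ is injective on $B(0,\delta)$. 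The inverse function theorem together with injectivity makes $E$ a diffeomorphism from $\mathcal{V}$ onto its open image. So $(q,x)\mapsto (q,\exp_q^{-1}x)=E^{-1}(q,x)$ is smooth jointly in $(q,x)$.

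The map then factors as
\[
(q,s)\mapsto (q,\exp_p(s))\mapsto E^{-1}(q,\exp_p(s)) \mapsto \exp_q^{-1}\exp_p(s).
\]
The first arrow is smooth because $\exp_p$ is. The second is smooth by the previous step, and the last is a projection. One subtlety is that $\exp_q^{-1}(\cdot)$ lands in $T_qM$, which varies with $q$. To read the output in $\mathbb{R}^d$ I would fix a smooth orthonormal frame on $B(p,\delta/4)$, for instance by parallel transport along radial geodesics from $p$, and identify each $T_qM$ with $\mathbb{R}^d$ smoothly in $q$. The same trivialization is used implicitly in Proposition~\ref{Prop:kernel-and-exp}. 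With this identification the composite is smooth in $(q,s)$.

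The main obstacle is the global injectivity and invertibility of $E$ on $\mathcal{V}$. Pointwise nonsingularity alone yields only a local diffeomorphism. To fix this I would use the convexity-radius assumption uniformly in $q$: it gives injectivity of each $\exp_q$ on $B(0,\delta)$, and it is precisely why the radii $\delta/4$ and $\delta/2$ are chosen.
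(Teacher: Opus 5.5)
Your route is the one behind the paper's proof, which is only a citation of Chavel's results on smooth dependence of geodesics on initial data and on the exponential map. You make $E(q,v)=(q,\exp_q v)$ an injective local diffeomorphism on a set $\mathcal{V}\subset TM$, invert it, and compose with $(q,s)\mapsto(q,\exp_p s)$ and a smooth trivialisation of $TM$ near $p$; you are spelling out what the paper cites.

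One step does not hold as you justify it. Since $\mathcal{R}_C(K)=\inf_{x\in K}\mathcal{R}_C(x)$, it controls only base points in $K$. A point $q\in B(p,\mathcal{R}_C(K)/4)$ need not lie in $K$; for $K=\{p\}$, say, nothing prevents $\mathcal{R}_C(q)<\mathcal{R}_C(p)$ for $q$ arbitrarily close to $p$. So the definition does not give you that $\exp_q(B(0,\delta))$ is convex, nor that $\exp_q$ is injective with nonsingular differential on $B(0,\delta)$. You rely on exactly this at the step you yourself call the main obstacle, the injectivity of $E$ and invertibility of $dE$ on $\mathcal{V}$. Moreover, $|v|<\delta$ is more than you need: only $|v|<\delta/2$ is ever used.

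The repair is to take
$\mathcal{V}\assign\{(q,v): d_g(p,q)<\delta/4,\ v\in T_qM,\ |v|<\delta/2\}$. Then use as input that, for each such $q$, $\exp_q$ restricted to $B(0,\delta/2)$ is a diffeomorphism onto an open set containing $\exp_p(B(0,\delta/4))$. The lemma already presupposes this when it speaks of ``the diffeomorphism $\exp_q^{-1}\circ\exp_p : B(0,\mathcal{R}_C(K)/4)\to B(0,\mathcal{R}_C(K)/2)$''. You must either take it from the statement or prove it separately; it does not follow from $\mathcal{R}_C(q)\geq\delta$, which is unavailable. The paper's citation does not address these specific radii either.

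With that input, the rest of your argument goes through unchanged. $E$ is injective on $\mathcal{V}$ because its first component is $q$. $dE$ is invertible because it is block-triangular with fibre block $d(\exp_q)_v$. The inverse function theorem then makes $E$ a diffeomorphism onto the open set $E(\mathcal{V})$, which contains $(q,\exp_p s)$ for all relevant $(q,s)$.
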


\section{Scaling of test functions on manifolds}\label{AppendixC}

In this appendix we develop the tools needed to 
scale and recenter test functions on a Riemannian manifold $(M,g)$. Henceforth, with a slight abuse of notation, we shall denote $(M,g)$ simply by $M$, leaving implicit the Riemannian metric $g$. In addition, throughout this section, we shall denote by $U$ a geodesically convex open
neighbourhood of $M$ with associated exponential map $\exp \equiv \exp_p$ centered at a point $p \in M$, and by $K \subset U$ a compact
set. In particular, we define $D_K^U \assign \mathrm{dist} (K, \partial U)$
and we indicate by $\mathcal{R}_C (K)$ the convexity radius of $K$ as per Definition \ref{Def:
Convexity Radius}.

We begin by proving two lemmata which yields a systematic way to perform the rescaling of test functions in the Riemannian setting. The idea is to relate the scaling properties of test functions defined on open subsets of the tangent space -- which is diffeomorphic to $\mathbb{R}^d$ -- with those of test functions living on geodesically convex open sets of $M$. 

\begin{lemma}
  \label{fromflattomanifold} In the aforementioned setting, fix a point $q \in K$ and let
  $\varphi \in \mathcal{B}^r$ for a given $r \in \mathbb{N}$ as in Equation \eqref{Eq: B^m}. Given any but
  fixed $\bar{\lambda} \in \left( 0, \min \{ \frac{\mathcal{R}_C
  (K)}{\mathfrak{C}'_K}, D^U_K \} \right)$ and $\lambda \in (0, 1)$, where
  $\mathfrak{C}'_K$ is the constant appearing in Equation \eqref{topologicalequivalencenormalcoordinateseq}, there exists
  $\phi_{[\lambda]} \in \bar{\lambda}^{- r - d}\mathcal{B}^r_{ q}(K)$ such
  that
  \begin{equation}
      \label{AppC: expeq}
      \exp_{\ast} \left( \varphi^{\lambda \hspace{0.17em}
     \bar{\lambda}}_{\exp^{- 1} (q)} \right) = \phi^{\lambda}_{[\lambda], q}.
  \end{equation}
Here the exponential map $\exp^{-1}$ is assumed to be centered at a fixed point $p \in M$. Moreover, the function $\phi_{[\lambda], q}^{\lambda}$ on the right hand-side is as per Equation \eqref{evaluationonmanifolds}. 
\end{lemma}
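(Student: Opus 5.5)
The plan is to define $\phi_{[\lambda]}$ explicitly by inverting the manifold rescaling \eqref{evaluationonmanifolds}, and then to check its support and its $C^r$ norm. Set $x_q \assign \exp^{-1}(q)$, where $\exp$ is centered at $p$. The left-hand side of \eqref{AppC: expeq} is the function
\[ q' \mapsto (\lambda\bar\lambda)^{-d}\,\varphi\bigl((\lambda\bar\lambda)^{-1}(\exp^{-1}(q')-x_q)\bigr), \]
up to the Jacobian factor of $\exp$, depending on whether $\exp_\ast$ acts on test functions as densities or as scalars. I will treat the Jacobian as a smooth bounded factor on $K$ that can be absorbed into the constant. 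Its support is contained in $\exp(B(x_q,\lambda\bar\lambda))$. Because $\bar\lambda< D^U_K$, this set lies inside $U$, so the push-forward is well defined.

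Following \eqref{evaluationonmanifolds}, the natural candidate is, for $v$ in a neighbourhood of the origin in $T_qM$,
\[ (\phi_{[\lambda]}\circ\exp_q)(v) \assign \lambda^{d}\,\Bigl(\exp_\ast\varphi^{\lambda\bar\lambda}_{x_q}\Bigr)\bigl(\exp_q(\lambda v)\bigr), \]
extended by zero outside. With this choice, $\phi^{\lambda}_{[\lambda],q}(q') = \lambda^{-d}(\phi_{[\lambda]}\circ\exp_q)(\exp_q^{-1}(q')/\lambda)$ reproduces the left-hand side by construction. Unwinding the definition, $\phi_{[\lambda]}\circ\exp_q(v)=\bar\lambda^{-d}\varphi\bigl(\bar\lambda^{-1}\Theta_\lambda(v)\bigr)$, where
\[ \Theta_\lambda(v)\assign \lambda^{-1}\bigl(\exp^{-1}\circ\exp_q(\lambda v)-\exp^{-1}\circ\exp_q(0)\bigr). \]
This is the blow-up of the transition map $\exp_p^{-1}\circ\exp_q$ at the origin.

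For the support, $\Theta_\lambda(v)\in \bar\lambda B(0,1)$ forces $|\exp^{-1}(q')-x_q|\le\lambda\bar\lambda$, where $q'=\exp_q(\lambda v)$. Proposition~\ref{topologicalequivalencenormalcoordinates} then gives $d_g(q,q')\le\lambda\bar\lambda/\mathfrak{C}'_K$. Since $\bar\lambda<\mathcal{R}_C(K)/\mathfrak{C}'_K$, this yields $|v|=d_g(q,q')/\lambda<\mathcal{R}_C(K)$. This is precisely why the hypothesis on $\bar\lambda$ is imposed, and it ensures that $\phi_{[\lambda]}$ is well defined and smooth, as in the remark following \eqref{evaluationonmanifolds}. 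Reaching the sharper radius $\rho_K=\min\{\mathcal{R}_C(K)/2,1\}$ in \eqref{Bronmanifold} may require shrinking $\bar\lambda$ or adjusting constants; I expect this to be handled by the definition of $\mathfrak{C}'_K$.

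The main step is the uniform bound $\|\phi_{[\lambda]}\|_{C^r}\lesssim\bar\lambda^{-r-d}$ in $\lambda\in(0,1)$ and $q\in K$. By the chain rule (Faà di Bruno), each derivative of $\varphi(\bar\lambda^{-1}\Theta_\lambda)$ of order at most $r$ yields at most $r$ factors of $\bar\lambda^{-1}$. Together with the prefactor $\bar\lambda^{-d}$, this is multiplied by derivatives of $\Theta_\lambda$ and by $\|\varphi\|_{C^r}\le1$. The factor $\bar\lambda^{-r}$ needs $\bar\lambda\le1$ to dominate lower powers, which holds after shrinking $\bar\lambda$ if necessary. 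Finally, $\partial^k\Theta_\lambda(v)=\lambda^{|k|-1}(\partial^k T_q)(\lambda v)$ with $T_q\assign\exp_p^{-1}\circ\exp_q$. For $|k|\ge1$ this is bounded by $\sup\|T_q\|_{C^r}$ uniformly in $\lambda\le1$, and Lemma~\ref{Lem:smoothness of diffeomorphism} gives uniformity in $q\in K$. I expect this uniformity, together with the Jacobian factor, to be the only real obstacle. The remaining steps are bookkeeping, and they give $\phi_{[\lambda]}\in C\,\bar\lambda^{-r-d}\mathcal{B}^r_q(K)$, with the constant $C$ absorbed as in the statement.
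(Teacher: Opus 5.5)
Your proposal follows the paper's proof: you define $\phi_{[\lambda]}$ by undoing the rescaling \eqref{evaluationonmanifolds}, which is exactly the paper's formula $\phi_{[\lambda],q}(r)=\lambda^d\,\exp_{q\ast}(\exp_\ast\varphi^{\lambda\bar\lambda}_{\exp^{-1}(q)})(\lambda\exp_q^{-1}(r))$, you control the support via Proposition~\ref{topologicalequivalencenormalcoordinates}, and you bound the $C^r$ norm, where your blow-up of $\exp_p^{-1}\circ\exp_q$ together with the chain rule and Lemma~\ref{Lem:smoothness of diffeomorphism} actually justifies the estimate $\|\phi_{[\lambda]}\|_{C^r}\lesssim\bar\lambda^{-r-d}$ that the paper only asserts. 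One correction: $d_g(q,q')\le\lambda\bar\lambda/\mathfrak{C}'_K$ and $\bar\lambda<\mathcal{R}_C(K)/\mathfrak{C}'_K$ give only $|v|<\mathcal{R}_C(K)/(\mathfrak{C}'_K)^2$, not $|v|<\mathcal{R}_C(K)$ (and $\mathfrak{C}'_K\le 1$, since radial distances from the centre of a normal chart are preserved), so reaching the support radius $\rho_K$ requires shrinking the admissible range of $\bar\lambda$, e.g.\ to $\bar\lambda<\mathfrak{C}'_K\rho_K$ (which also gives your $\bar\lambda\le 1$), rather than anything in the definition of $\mathfrak{C}'_K$ --- but this is the same looseness the paper's proof hides behind ``$R\lesssim_K\lambda\mathcal{R}_C(K)$'', and it is harmless downstream, where $\bar\lambda$ may be taken small and the bounds then extended via Propositions~\ref{relaxingbound11} and~\ref{extendingthebound1onmanifold}.
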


\begin{proof}
  Since $q \in K$, it holds that $\exp^{- 1} (q) \in T_p M$ and, therefore,
  by Equation \eqref{Eq: Rescaled Test-Function}, we can define
  $\varphi_{\exp^{- 1} (q)}^{\tilde{\lambda}}$ where $\varphi \in
  \mathcal{B}^r$ while $\tilde{\lambda} \assign \lambda \bar{\lambda}$. Note that $\text{supp} \, (\varphi_{\exp^{- 1} (q)}^{\tilde{\lambda}}) \subset B(\exp^{-1}(q), \tilde{\lambda})$.  Proposition \ref{topologicalequivalencenormalcoordinates} entails that the support of
  $\exp_{\ast} \varphi_{\exp^{- 1} (q)}^{\tilde{\lambda}}$ is contained in the geodesic open ball $B(q, \frac{\tilde{\lambda}}{\mathfrak{C}'_K})$ and, hence, in $B(q, R)$ with $R \lesssim_{K} \lambda \mathcal{R}_C (K)$. Thus, for each such $\tilde{\lambda}$ and for any $r \in U$, we define $\phi_{[\lambda], q}^{\lambda} (r)
  \assign [\exp_{\ast} \varphi_{\exp^{- 1} (q)}^{\lambda \bar{\lambda}}] (r)$ and 
   \begin{equation}\label{Eq: Aux1}
       \phi_{[\lambda], q} (r) \assign \lambda^d \exp_{q \ast} (\exp_{\ast}
       \varphi_{\exp^{- 1} (q)}^{\lambda \bar{\lambda}}) (\lambda \exp_q^{- 1}
       (r)) .
     \end{equation} 
It follows that $\text{supp}(\phi_{[\lambda], q}) \subset B
  (q, \rho_K)$, with $\rho_K := \min\{\frac{R_C(K)}{2}, 1 \}$ as per Equation~\eqref{Bronmanifold}. 
  Observe that $\exp$ denotes the exponential map centered at $p \in M$ associated with a local
  trivialization $U$, whilst $\exp_q$ that yielding local coordinates centered at $q \in K$. Note in addition
  that $\| \phi_{[\lambda], q} \|_{C^r} = \| \lambda^d \exp_{\ast}
  \varphi_{\exp^{- 1} (q)}^{\lambda \bar{\lambda}} \|_{C^r} \leqslant
  \bar{\lambda}^{- r - d}$.
\end{proof}

\begin{lemma}
  \label{frommanifoldtoflat} Under the same assumptions of Lemma \ref{fromflattomanifold}, let
  $K \subset U$ be a compact set and fix a point $q \in K$. Consider a test function $\varphi_p \in
  \mathcal{B}^r_{p}(K)$. Set $\rho_K \assign \frac{\mathcal{R}_C (K)}{2}$ and
  $\mathfrak{C}_K$ the constant appearing in Equation \eqref{topologicalequivalencenormalcoordinateseq}. Then, fixing
  $\bar{\lambda} \in (0, \min \{\rho_K, D_K^U \mathfrak{C}_K, \mathfrak{C}_K
  \})$, for every $\lambda \in (0, 1)$ there exists a test function
  $\psi_{[\lambda]} \in \bar{\lambda}^{- r - d} \mathcal{B}^r$ such that
  \[ \exp^{- 1}_{\ast} (\varphi^{\lambda \bar{\lambda}}) = (\psi_{[\lambda]})_{\exp^{- 1}(q)}^{\lambda}. \]
  Note that, on the left hand-side, the scaling is done with respect to
  the the map $\exp_q$, while, on the right hand-side, with respect to the
  map $\exp$ centered at a point $p \in M$.
\end{lemma}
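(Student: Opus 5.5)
The plan is to mirror the proof of Lemma~\ref{fromflattomanifold}, reversing the roles of the manifold and the tangent space. Define $\psi_{[\lambda]}$ directly by pulling back and rescaling:
\[
\psi_{[\lambda]}(x) \assign \lambda^{d}\,\bigl(\exp^{-1}_{\ast}\varphi^{\lambda\bar{\lambda}}_q\bigr)\bigl(\exp^{-1}(q)+\lambda x\bigr), \qquad x\in\mathbb{R}^d .
\]
Here $\varphi^{\lambda\bar\lambda}_q$ is the manifold rescaling of Equation~\eqref{evaluationonmanifolds}, built with $\exp_q$. With this definition the identity $\exp^{-1}_{\ast}(\varphi^{\lambda\bar\lambda}_q)=(\psi_{[\lambda]})^{\lambda}_{\exp^{-1}(q)}$ holds tautologically by Equation~\eqref{Eq: Rescaled Test-Function}. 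What remains is to verify two facts: $\operatorname{supp}\psi_{[\lambda]}\subset B(0,1)$, and $\|\psi_{[\lambda]}\|_{C^r}\lesssim\bar\lambda^{-r-d}$.

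\textbf{Support.} By construction, $\operatorname{supp}\varphi^{\lambda\bar\lambda}_q\subset B(q,\lambda\bar\lambda\rho_K)$ in the distance $d_g$. Proposition~\ref{topologicalequivalencenormalcoordinates} gives $|\exp^{-1}(s)-\exp^{-1}(q)|\le\mathfrak{C}_K\, d_g(s,q)$, so the pulled-back function is supported in the Euclidean ball of radius $\mathfrak{C}_K\lambda\bar\lambda\rho_K$ around $\exp^{-1}(q)$. After dividing by $\lambda$, this radius becomes $\mathfrak{C}_K\bar\lambda\rho_K$. The restrictions on $\bar\lambda$ are chosen precisely to make this radius at most $1$, and also to keep the support inside $\exp^{-1}(U)$; the latter uses $\bar\lambda<D^U_K\mathfrak{C}_K$. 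If the constants do not line up exactly, the plan is to absorb the discrepancy by shrinking $\bar\lambda$ and then restore the full range afterwards via Proposition~\ref{extendingthebound1onmanifold}.

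\textbf{$C^r$ bound.} Write the composite map
\[
x\mapsto \exp_q\!\Bigl(\tfrac{1}{\lambda\bar\lambda}\exp_q^{-1}\bigl(\exp(\exp^{-1}(q)+\lambda x)\bigr)\Bigr),
\]
so that $\psi_{[\lambda]}(x)=\lambda^d(\lambda\bar\lambda)^{-d}\,\varphi\circ(\text{that map})$. The inner transition $\exp_q^{-1}\circ\exp$ is smooth, and by Lemma~\ref{Lem:smoothness of diffeomorphism} its derivatives are bounded uniformly in $q\in K$. Scaling its argument by $\lambda$ and its output by $(\lambda\bar\lambda)^{-1}$ leaves the first derivative of order $\bar\lambda^{-1}$, while higher derivatives acquire additional positive powers of $\lambda$, which are harmless. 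By the chain rule (Faà di Bruno), each derivative of order $k\le r$ is then bounded by $\bar\lambda^{-d-k}\|\varphi\|_{C^r}\le\bar\lambda^{-r-d}$, up to a constant depending only on $K$.

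I expect the main obstacle to be this uniform control of the transition-map derivatives. It requires that the $C^r$ norms of $\exp_q^{-1}\circ\exp$ on the relevant ball stay bounded as $q$ ranges over $K$, and that the rescaled arguments remain inside the domain where Lemma~\ref{Lem:smoothness of diffeomorphism} applies. I would first check that $\lambda\bar\lambda\rho_K<\mathcal{R}_C(K)/4$, so the lemma's domain is respected. I would then take the supremum of the transition-map norms over $q\in K$, using continuity in $q$ and compactness of $K$, exactly as in the proof of Proposition~\ref{Prop:kernel-and-exp}. Any constant that arises from this supremum is absorbed into the normalisation $\bar\lambda^{-r-d}$.
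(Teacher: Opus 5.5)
Your proposal follows the same route as the paper. You define $\psi_{[\lambda]}$ by undoing the Euclidean rescaling of $\exp^{-1}_{\ast}(\varphi^{\lambda\bar\lambda}_q)$; your explicit formula is the correct version of the paper's garbled one. You then control the support through Proposition~\ref{topologicalequivalencenormalcoordinates} and bound the $C^r$ norm, and your chain-rule argument with uniform transition-map bounds supplies the justification the paper only asserts.

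Two minor points, both shared with the paper's proof:
\begin{itemize}
\item Your estimate actually gives $\psi_{[\lambda]}\in C_K\bar\lambda^{-r-d}\mathcal{B}^r$ with a $K$-dependent constant, not exactly $\bar\lambda^{-r-d}$. This is harmless downstream.
\item The support step needs $\mathfrak{C}_K\bar\lambda\rho_K\le 1$, which the stated range $\bar\lambda<\min\{\rho_K,D^U_K\mathfrak{C}_K,\mathfrak{C}_K\}$ does not guarantee. The fix belongs in the range of $\bar\lambda$ (something like $\bar\lambda\lesssim\min\{1,D^U_K\}/(\mathfrak{C}_K\rho_K)$). Appealing to Proposition~\ref{extendingthebound1onmanifold} does not work, since it extends coherence bounds of germs and says nothing about representing test functions.
\end{itemize}
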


\begin{proof}
  We follow a scaling procedure analogous to the one
  used in Lemma \ref{fromflattomanifold} but we reverse the steps. We work
  with $\lambda \in (0, 1)$. Hence, by Proposition
  \ref{topologicalequivalencenormalcoordinates}, $\exp^{- 1}_{\ast}
  (\varphi_q^{\lambda\bar{\lambda}})$ is supported in a
  geodesic ball of radius $R \leqslant \lambda \min \{D_K^U,
  1\}$, centered in $\exp^{- 1} (q)$. Thus, we define $(\psi_{[\lambda]})_{\exp^{- 1}(q)}^{\lambda} \assign \exp^{- 1}_{\ast} (\varphi_q^{\lambda
  \bar{\lambda}})$ and, for each $\lambda \in (0, 1)$, we consider the
  function
  \[ \psi_{[\lambda]} \assign \lambda^d (\exp^{- 1}_{\ast} (\varphi_q^{\lambda
     \bar{\lambda}}))^{\lambda^{- 1}}_{- \exp^{- 1} (q)}. \]
 Observe that $\|
  \psi_{[\lambda]} \|_{C^r} = \| \lambda^d \exp^{- 1}_{\ast}
  (\varphi_q^{\lambda \bar{\lambda}}) \|_{C^r} \leqslant
  (\bar{\lambda})^{- r - d}$, therefore $\psi_{[\lambda]} \in
  \bar{\lambda}^{- r - d} \mathcal{B}^r$.
\end{proof}

\noindent Finally, let us discuss 

\begin{lemma}[Recentering a test function]
  \label{recenteringonmanifolds} 
Let $M$ be a Riemannian manifold, and let $U
  \subset M$ be a geodesic convex set with associated exponential map
  $\exp$. Let $K \subset U$ be a compact set. There exist positive real
  constants $\tmop{cst}_K, \tmop{cst}_K', \widetilde{\tmop{cst}}_K$, such that,
for every $q \in K$ with $d_g (p, q) \leqslant \widetilde{\tmop{cst}}_K$, every $\phi_p \in
  \mathcal{B}^r_{p}(K)$ can be recentered as
  \[ \phi^{\lambda}_p = \xi^{\lambda''}_{[\lambda], \hspace{0.17em} q}, \]
  with $\lambda'' = \lambda_1 / \tmop{cst}_K'$, $\lambda_1 = \lvert \exp^{- 1}
  (p) - \exp^{- 1} (q) \rvert + \lambda / \tmop{cst}_K$ while $\xi_{[\lambda],
  \hspace{0.17em} q} \in (\tmop{cst}_K' \lambda / \lambda_1)^{- r - d}
  \mathcal{B}^r_q (K)$. In addition $\lambda'' \lesssim 1$, and, consequently, $(\tmop{cst}_K' \lambda / \lambda_1)^{- r - d} \lesssim_K \left(
  \frac{\lambda}{\lambda + d_g (p, q)} \right)^{- r - d}$.
\end{lemma}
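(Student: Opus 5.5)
The plan is to work entirely in normal coordinates centred at the fixed base point of $U$. We first transport the problem to the tangent space, then apply the Euclidean recentering used in the proofs of Propositions \ref{relaxingbound2} and \ref{relaxingbound11}, and finally transport back.

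\textbf{Step 1: pass to the tangent space.} Write $x \assign \exp^{-1}(p)$ and $y \assign \exp^{-1}(q)$. By Lemma \ref{frommanifoldtoflat}, for $\lambda$ below a $K$-dependent threshold we can write $\exp^{-1}_{\ast}\phi^{\lambda}_p = \psi^{\lambda/\tmop{cst}_K}_{x}$. Here $\psi$ lies in a fixed multiple of $\mathcal{B}^r$, and $\tmop{cst}_K$ absorbs the factor $\bar\lambda$ and the constants $\mathfrak{C}_K,\mathfrak{C}'_K$ of Proposition \ref{topologicalequivalencenormalcoordinates}.

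\textbf{Step 2: recentre in the Euclidean picture.} As in Proposition \ref{relaxingbound2}, set $\lambda_1 \assign |x-y| + \lambda/\tmop{cst}_K$, $\lambda_2 \assign (\lambda/\tmop{cst}_K)/\lambda_1 \le 1$ and $w \assign (x-y)/\lambda_1$. Then
\[ \psi^{\lambda/\tmop{cst}_K}_x = (\psi^{\lambda_2}_w)^{\lambda_1}_y . \]
The new profile $\psi^{\lambda_2}_w$ is supported in $B(0,|w|+\lambda_2) = B(0,1)$. Its $C^r$ norm is at most $\lambda_2^{-r-d}\|\psi\|_{C^r}$, which is comparable to $(\lambda/\lambda_1)^{-r-d}$.

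\textbf{Step 3: return to the manifold.} Push forward with Lemma \ref{fromflattomanifold}, now centred at $q$. This writes $\exp_\ast\big((\psi^{\lambda_2}_w)^{\lambda_1}_y\big) = \xi^{\lambda''}_{[\lambda],q}$, where $\lambda'' = \lambda_1/\tmop{cst}'_K$. The constant $\tmop{cst}'_K$ is chosen so that $\lambda''$ lies in the admissible range $(0,1)$ of that lemma.

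For this to be legitimate, $\lambda_1$ must be small, and that is what fixes $\widetilde{\tmop{cst}}_K$. We choose it so that $d_g(p,q)\le \widetilde{\tmop{cst}}_K$ forces $|x-y| \le \mathfrak{C}_K\widetilde{\tmop{cst}}_K$ to be below $\tfrac12\min\{\mathcal{R}_C(K)/\mathfrak{C}'_K, D_K^U\}$, and similarly for $\lambda/\tmop{cst}_K$. This yields $\lambda''\lesssim 1$. The norms multiply, giving $\xi_{[\lambda],q} \in (\tmop{cst}'_K\lambda/\lambda_1)^{-r-d}\mathcal{B}^r_q(K)$ up to harmless constants, which can be absorbed by enlarging $\tmop{cst}'_K$.

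\textbf{Step 4: the final comparison.} Proposition \ref{topologicalequivalencenormalcoordinates} gives $\lambda_1 \le \mathfrak{C}_K d_g(p,q) + \lambda/\tmop{cst}_K \lesssim_K \lambda + d_g(p,q)$. Since the exponent $-r-d$ is negative, it follows that $(\lambda/\lambda_1)^{-r-d} \lesssim_K \big(\lambda/(\lambda+d_g(p,q))\big)^{-r-d}$.

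\textbf{Main obstacle.} The delicate part is the bookkeeping of supports in Step 3. One must check that the recentred profile, once pulled back via $\exp_q$ and rescaled by $\lambda''$, is supported in $\{d_g(q,\cdot)<\rho_K\}$ as Equation \eqref{Bronmanifold} requires, uniformly in $p,q\in K$. For this I would use the smoothness of $\exp_q^{-1}\circ\exp$ in $q$ (Lemma \ref{Lem:smoothness of diffeomorphism}). It gives $C^{r}$ bounds on the transition map that are uniform over $q\in K$, and these control both the support radius and the extra constant in the $C^r$ norm. Both are then absorbed into $\tmop{cst}'_K$ and $\widetilde{\tmop{cst}}_K$.
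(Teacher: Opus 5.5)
Your proposal is correct and is essentially the paper's proof. You transport $\phi^\lambda_p$ to normal coordinates with Lemma \ref{frommanifoldtoflat} at scale $\tmop{cst}_K$, recentre in $\mathbb{R}^d$ with the same $\lambda_1,\lambda_2,w$ as in Proposition \ref{relaxingbound2}, and pull back at $q$ with Lemma \ref{fromflattomanifold} at scale $\tmop{cst}'_K$, choosing $\widetilde{\tmop{cst}}_K$ and the range of $\lambda$ so that $\lambda_1<\tmop{cst}'_K$. Your Step 4 makes explicit the final comparison that the paper leaves implicit.

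One small correction: no constants need absorbing. With the bounds of those two lemmas the factors combine exactly, $\tmop{cst}_K^{-r-d}\lambda_2^{-r-d}(\tmop{cst}'_K)^{-r-d}=(\tmop{cst}'_K\lambda/\lambda_1)^{-r-d}$. In any case, enlarging $\tmop{cst}'_K$ would decrease this factor rather than absorb a constant into it.
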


\begin{proof}
  We shall denote $\tmop{cst}_K \assign \min \{\rho_K, D^U_K
  \mathfrak{C}_K, \mathfrak{C}_K \}$ and $\tmop{cst}_K' \assign \min \{
  \frac{\rho_K}{\mathfrak{C}_K'}, D_K^U \}$, where $\rho_K$, $\mathfrak{C}_K$
  and $\mathfrak{C}_K'$ are as per Definition
  \ref{evaluationonmanifolds} and Proposition
  \ref{topologicalequivalencenormalcoordinates}, respectively. We start by
  fixing $\lambda \in (0, \min \{\tmop{cst}_K, \tmop{cst}_K' \}/ 2)$, and $p,
  q \in K$ such that $d_g (p, q) \leqslant \frac{\tmop{cst}_K'}{2\mathfrak{C}_K}
  \assign \widetilde{\tmop{cst}}_K$. Additionally, denote by $\lambda' \assign
  \lambda / cst_K$. Then, fixing $\varphi \in \mathcal{B}^r_{p}(K)$ and using
  Lemma \ref{frommanifoldtoflat}, we have that
  \begin{equation}
    \label{recenteringonmanifold1} \exp^{- 1}_{\ast} (\varphi^{\lambda'
    \tmop{cst}_K}) = \psi_{[\lambda'], \exp^{- 1} (q)}^{\lambda'},
  \end{equation}
  with $\psi_{[\lambda']} \in \tmop{cst}_K^{- r - d} \mathcal{B}^r$. Using
  the results on the recentering of test functions on $\mathbb{R}^d$, see {\cite[Prop. 6.2]{CZ20}} and the proof of Proposition \ref{relaxingbound2}, we have 
  \begin{equation}
    \label{recenteringonmanifold2} \psi_{[\lambda], \exp^{- 1} (q)}^{\lambda'}
    = \zeta^{\lambda_1}_{\exp^{- 1} (q)},
  \end{equation}
  where $\zeta \assign \varphi^{\lambda_2}_w$, $\lambda_1 \assign \lvert
  \exp^{- 1} (p) - \exp^{- 1} (q) \rvert + \lambda'$, $\lambda_2 \assign
  \frac{\lambda'}{\lambda_1}$ and $w \assign \frac{\exp^{- 1} (p) - \exp^{- 1}
  (q)}{\lambda_1}$. Note that, since $\lambda_2 + \lvert w \rvert \leqslant
  1$, $\zeta \in (\tmop{cst}_K  \hspace{0.17em} \frac{\lambda'}{\lambda_1})^{-
  r - d}  \hspace{0.27em} \mathcal{B}^r$. We now apply Proposition
  \ref{fromflattomanifold} to pull back the test function on the manifold
  via the map $\exp$. To this end, we need
  $\lambda_1 \leqslant \tmop{cst}_K'$, which is guaranteed by the choice of the
  bounds on $d_g (p, q)$ and $\lambda$. Setting $\lambda'' \assign \lambda_1 /
  \tmop{cst}_K'$ we have
  \begin{equation}
    \label{recenteringonmanifold3} (\tmop{cst}_K \lambda' / \lambda_1)^{- r -
    d} \exp_{\ast} (\zeta^{\lambda'' \tmop{cst}_K'}_{\exp^{- 1} (q)}) = \left(
    \hspace{0.17em} \frac{\lambda}{\lambda_1} \right)^{- r - d}
    \eta_{[\lambda], q}^{\lambda''}.
  \end{equation}
  Renaming $\xi^{\lambda''}_{[\lambda], \hspace{0.17em} q} \assign \left(
  \frac{\lambda}{\lambda_1} \right)^{- r - d} \eta_{[\lambda''] 
 q}^{\lambda''}$, we have that $\xi_{[\lambda],
  \hspace{0.17em} q} \in (\tmop{cst}_K' \lambda / \lambda_1)^{- r - d}
  \mathcal{B}^r_{q}(K)$. Combining Equations \eqref{recenteringonmanifold1},
  \eqref{recenteringonmanifold2} and \eqref{recenteringonmanifold3} we obtain the sought result, namely
  \[ \phi^{\lambda}_p = \xi^{\lambda''}_{[\lambda], q}. \]
\end{proof}

{\noindent}We conclude this appendix by proving a generalization of Proposition
\ref{relaxingbound11} to the Riemannian scenario. 

\begin{proposition}[Extending the bound on Riemannian manifolds]
  \label{extendingthebound1onmanifold} Let $M$ be a Riemannian manifold and
  let $F$ be a germ of distributions thereon, see Section \ref{Sec:Germs-of-distributions-manifold}. Assume that, for every compact $K \subset M$, there exist
  $\eta_K \leqslant \rho_K$ with $\rho_K := \min \{\frac{R_C(K)}{2}, 1\}$, $R_K, C_K > 0$, $\gamma \in \mathbb{R}$ and $\alpha_K <
  \gamma$ such that, for $\bar{\lambda} \in (0, \eta_K)$
  \begin{equation}
    \label{almostcoherence} \lvert (F_p - F_q) (\varphi_q^{\lambda}) \rvert
    \leqslant C_K \lambda^{\alpha_K} (\lambda + d_g (p, q))^{\gamma - \alpha_K},
  \end{equation}
  uniformly for $p, q \in K$ such that $d_g (p, q) \leqslant R_K$, $\lambda \in
  (0, \bar{\lambda}]$, $\varphi_q \in \mathcal{B}^r_{q} (K)$. Then $F \in
  \mathcal{G}^{\text{\tmtextbf{$\alpha$}}_{\gamma}}_{\tmmathbf{r},
  \text{\tmtextbf{$R$}}} (M)$.
\end{proposition}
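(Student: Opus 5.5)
The plan is to mimic the proof of Proposition \ref{relaxingbound11}, item 1, on the manifold, working throughout in normal coordinates. The bound \eqref{almostcoherence} is already assumed for $\bar{\lambda}\in(0,\eta_K)$. Definition \ref{Def: Coherence on M} instead requires it for every $\bar{\lambda}\in(0,\rho_K)$, with $p,q\in K$ and $d_g(p,q)\leqslant R_K$. So the only task is to cover the regime $\lambda\in[\eta,\bar{\lambda}]$, where $\eta$ is a fixed fraction of $\eta_K$ and $\bar{\lambda}<\rho_K$. In this regime both $\lambda^{\alpha_K}$ and $(\lambda+d_g(p,q))^{\gamma-\alpha_K}$ are bounded above and below by constants depending only on $K,\eta_K,\rho_K$. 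Hence, exactly as in Equation \eqref{extendingholderbound2coherence}, it suffices to prove
\[
\lvert (F_p-F_q)(\varphi^\lambda_q)\rvert \lesssim_{K,\eta_K} C_{K'}
\]
uniformly over such $\lambda$, where $K'$ is a compact enlargement of $K$.

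To get this bound, fix $q\in K$ and push $\varphi^\lambda_q$ to the tangent space via $\exp_q^{-1}$. By Equation \eqref{evaluationonmanifolds}, this is a Euclidean test function supported in a ball of radius $\lambda<\rho_K$ around the origin, with $C^r$ norm bounded by a power of $\eta^{-1}$. Then take a finite cover of the compact set $K'$, the $\rho_K$-enlargement of $K$ (still inside the convexity region), by geodesic balls $B(q_i,\eta/\mathfrak{C}_{K'})$, $i=1,\dots,N$. Choose a subordinate smooth partition of unity $\{\chi_i\}$ and write
\[
\varphi^\lambda_q=\sum_{i=1}^N \chi_i\varphi^\lambda_q .
\]
Each summand is supported in a geodesic ball of radius $\lesssim\eta$ around $q_i$. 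By Lemma \ref{frommanifoldtoflat} and Lemma \ref{fromflattomanifold} (pull to $\exp_{q_i}$ normal coordinates, rescale, push back) it can be written as $c_i\,(\phi_i)^{\eta}_{q_i}$ with $\phi_i\in\mathcal{B}^r_{q_i}(K')$. The constants $c_i$ depend only on $\eta$, $\lVert\chi_i\rVert_{C^r}$, and the $C^{r}$ norms of the transition maps $\exp_{q_i}^{-1}\circ\exp_q$. By Lemma \ref{Lem:smoothness of diffeomorphism}, the latter are uniformly bounded for $q,q_i\in K'$.

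Next, on each piece insert $F_{q_i}$:
\[
(F_p-F_q)(\chi_i\varphi^\lambda_q)=c_i\big[(F_p-F_{q_i})-(F_q-F_{q_i})\big]\big((\phi_i)^\eta_{q_i}\big).
\]
Only indices with $\operatorname{supp}\chi_i\cap B(q,\lambda)\neq\emptyset$ contribute, so $d_g(q,q_i)\lesssim\rho_K$ and $d_g(p,q_i)\leqslant R_K+\rho_K$. At this point \eqref{almostcoherence} on $K'$ applies only if these distances are at most $R_{K'}$. When they are not, chain along a minimizing geodesic from $p$ (resp.\ $q$) to $q_i$ through finitely many intermediate points $y_0,\dots,y_M$ with $d_g(y_j,y_{j+1})\leqslant R_{K'}$, and telescope $F_p-F_{q_i}=\sum_j(F_{y_j}-F_{y_{j+1}})$. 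The number of steps $M$ is bounded in terms of $\mathrm{Diam}(K')/R_{K'}$. Since the test function is centered at $q_i$ rather than at $y_{j+1}$, each step needs a recentering, done with Lemma \ref{recenteringonmanifolds} as in the proof of Proposition \ref{relaxingbound2}. Because the scale $\eta$ is fixed from below, the factors $(\lambda/(\lambda+d_g))^{-r-d}$ produced by recentering are harmless constants. Each term is then $\lesssim C_{K'}\eta^{\alpha_{K'}}(\eta+\mathrm{Diam}K')^{\gamma-\alpha_{K'}}$, and summing over $i$ and $j$ gives the displayed bound.

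If $\eta_K$ is so small compared with $\rho_K$ that this single step does not reach $\bar{\lambda}$, iterate the doubling procedure of Proposition \ref{relaxingbound11}. This takes at most about $\rho_K/\eta_K$ steps, each enlarging the compact set and worsening the constant by a bounded factor. The main obstacle is this recentering and chaining step: the hypothesis only controls $F_p-F_q$ tested at $q$ with $d_g(p,q)\leqslant R_K$. One has to verify that every recentered test function stays in a multiple of $\mathcal{B}^r_{y}(K')$ with support inside the convexity radius, so that the normal-coordinate comparisons of Proposition \ref{topologicalequivalencenormalcoordinates} remain valid. I would first try to avoid chaining altogether by taking $\eta\leqslant R_{K'}/4$ and splitting into the two cases $d_g(p,q)\leqslant R_K/2$ and $R_K/2<d_g(p,q)\leqslant R_K$, handling the second case with a single intermediate point on the geodesic, exactly as in Proposition \ref{relaxingbound2}.
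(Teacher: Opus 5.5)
Your skeleton (partition of unity at a fixed small scale, insertion of $F_{q_i}$, absorption into the right-hand side because $\lambda$ is bounded below, then iteration) is exactly the paper's argument. The step you flag as the main obstacle is where it breaks, and your proposed repair does not work.

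In a telescoped term $(F_{y_j}-F_{y_{j+1}})\big((\phi_i)^{\eta}_{q_i}\big)$, the hypothesis \eqref{almostcoherence} controls $F_{y_j}-F_{y_{j+1}}$ only against test functions centred at $y_j$ or $y_{j+1}$, and only \emph{at scales below} $\eta_{K'}$. Recentering $(\phi_i)^{\eta}_{q_i}$ at $y_{j+1}$ with Lemma \ref{recenteringonmanifolds} produces a test function at scale $\lambda''\sim d_g(y_{j+1},q_i)+\eta$. (That lemma also requires $d_g(y_{j+1},q_i)\leqslant\widetilde{\mathrm{cst}}_{K'}$.) This scale exceeds $\eta_{K'}$ for every link not within distance $\sim\eta_{K'}$ of $q_i$. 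So you would be invoking the bound precisely at the large scales you are trying to reach, which is circular. The size of the factor $(\lambda/(\lambda+d_g))^{-r-d}$ is not the issue; the admissible scale is.

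The one-intermediate-point trick of Proposition \ref{relaxingbound2} works only because its hypothesis holds at all scales up to $D^U_K$. Your two-case variant fails for the same reason. Even when $d_g(p,q)\leqslant R_K/2$, the centres $q_i$ lie at distance of order $\lambda$ from $q$, with $\lambda$ up to $\rho_K$, which may be much larger than $R_{K'}$.

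No other repair is available, because with a range restriction the statement is false for merely measurable germs. The paper's proof overlooks the same point: it applies \eqref{almostcoherence} to $(F_p-F_{q_j})(\varphi_j^{\eta_K})$ without checking $d_g(p,q_j)\leqslant R$. Here is a counterexample.
Take $M=\mathbb{R}^d$ flat, so $\rho_K=1$ and $\varphi^\lambda_q$ is supported in $B(q,\lambda)$. Fix $|a-b|=3/4$ and let $F_x:=g(x)\,\delta_a$, where $g=0$ on $B(a,1/2)\cup\{b\}$ and $g(x)=|x-b|^{-1}$ elsewhere.
Choose $\eta_K=R_K=1/8$, $\gamma=1$, $\alpha_K=0$ for every $K$. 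Suppose $\lambda<1/8$, $|p-q|\leqslant 1/8$ and $(F_p-F_q)(\varphi^\lambda_q)=(g(p)-g(q))\varphi^\lambda_q(a)\neq 0$. Then $|q-a|<1/8$ and $|p-a|<1/4$, so $g(p)=g(q)=0$, a contradiction. Hence the hypothesis holds trivially for every $K$.
Now take $K=\overline{B(a,2)}$, $q=b$, $\lambda=9/10$, and $\varphi\in\mathcal{B}^r_b(K)$ with $\varphi^{9/10}_b(a)\neq 0$. The left-hand side equals $|p-b|^{-1}|\varphi^{9/10}_b(a)|$, which tends to $\infty$ as $p\to b$. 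The right-hand side stays bounded, whatever range and constant the conclusion allows.

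Some extra hypothesis is therefore needed. For instance, suppose $\{F_p\}_{p\in K}$ is bounded as a family of distributions of order $r$ on each compact $K$. Then the regime $\lambda\geqslant\eta_K$ is trivial: there the test functions $\varphi^\lambda_q$ are bounded in $C^r$ with supports in a fixed compact, and the right-hand side is bounded below.
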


\begin{proof}
  Define $\tilde{C}_K \assign \min \{R_K, \rho_K \}$. We first prove
  that Equation \eqref{almostcoherence} holds true for all $\bar{\lambda} \in
  (0, \eta_K + \tilde{C}_K)$. The general case then follows by iteration of this procedure. For the sake of clarity, we divide the proof in three steps. 
  \begin{enumerate}
    \item Denote by $\tilde{K}$ the $\tilde{C}_K-$enlargement of
    $K$ as per Equation \eqref{Eq: K-enlargement}, which is still a compact set. As a matter of fact, we can cover
    $K$ with a finite number of balls $\{B_i\}_{i \in I}$ such that $ B_i \assign B(p_i, \frac{3}{4}
    \mathcal{R}_C (K))$ with $p_i \in K$. Let $\{K_i\}_{i \in I}$ be a collection of compact sets such that $K_i \subset B_i$ and $\cup_{i \in I} K_i = K$. We
    call $\tilde{K}_i$ the $\tilde{C}_K-$ enlargement of $K_i$. These are
    compact sets, since they can be constructed via the action of the exponential map on the compact sets $(\exp^{- 1} (K_i)_{\tilde{C}_K}) \subset \mathbb{R}^d$, \textit{i.e.}, $\tilde{K}_i = \exp
    (\exp^{- 1} (K_i)_{\tilde{C}_K})$. Hence, $\tilde{K} = \cup_{i \in
    I} \tilde{K}_i$.
    
    \item If $\lambda \in (0, \eta_K)$, there is nothing to prove. Thus, let $\lambda \in (\eta_K, \eta_K + \tilde{C}_K)$. Consider 
    $\{U_j\}_{j \in J}$, a covering of $K$, made of balls of radius $\eta_K$ centered at
    points $q_j \in \tilde{K}$ and a partition of unity $\{\psi_j\}_{j \in J}$ subordinated to it. Then, for any $q \in K$, we can write
    \[ \varphi^{\lambda}_q = \sum_{j \in J} \varphi_j^{\eta_K}, \]
    where $\varphi_j := (\varphi^{\lambda}_q \psi_{q_j})^{1/\eta_K}$. 
    We stress that $\mathcal{R}_C (\tilde{K}) \geqslant \rho_K$ and that the
    number of points $q_j$ depends only on $\eta_K$ and, hence, on $K$.
    
    \item Finally, in the same spirit of the proof of Proposition \ref{relaxingbound2}, observe that, given $\lambda \in (\eta_K, \eta_K +
    \tilde{C}_K)$, it holds $1 \lesssim \lambda^{\alpha_K} (\lambda + d_g (p,
    q))^{\gamma - \alpha_K}$
    \[ \begin{array}{lll}
         \lvert (F_p - F_q) (\varphi_q^{\lambda}) \rvert = \left| (F_p - F_q)
         \left( \sum_{j \in J} \varphi_j^{\eta_K} \right) \right| & \leqslant &
         \sum_{j \in J} (\lvert (F_p - F_{q_j}) (\varphi_j^{\eta_K}) \rvert +
         \lvert (F_{q_j} - F_q) (\varphi_j^{\eta_K}) \rvert)\\
         & \lesssim_{K, \eta_K} & C_K\\
         & \lesssim & C_K \lambda^{\alpha_K} (\lambda + d_g (p, q))^{\gamma -
         \alpha_K} .
       \end{array} \]
  \end{enumerate}
  
\end{proof}


\begin{thebibliography}{10}
	
\bibitem[BCD11]{Bahouri}
H. Bahouri, J.-Y. Chemin, R. Danchin,
\textit{``Fourier Analysis and Nonlinear Partial Differential Equations''},
 Fourier Analysis and Nonlinear Partial Differential Equations, (2011)
Springer, 540p.
	
\bibitem[Ber03]{Ber}
M. Berger,
\textit{``A Panoramic View of Riemannian Geometry''},
(2003), Springer, 824p.

\bibitem[BCD+24]{BCD24}
A. Bonicelli, B. Costeri, C. Dappiaggi and P. Rinaldi, 
 \textit{``A microlocal investigation of stochastic partial differential equations for spinors with an application to the Thirring model''}, Math. Phys. Anal. and Geom. \textbf{27}, 3 (2024) 16.

\bibitem[BDR23]{BDR23}
A. Bonicelli, C. Dappiaggi and P. Rinaldi,
 \textit{``An algebraic and microlocal approach to the stochastic nonlinear Schrödinger equation}, Ann. Henri Poinc. \textbf{27}, 7 (2023)  2443.

\bibitem[BDR24]{BDR24}
 A. Bonicelli, C. Dappiaggi and P. Rinaldi, 
 \textit{``On the stochastic Sine-Gordon model: an interacting field theory approach"}, Commun. Math. Phys. \textbf{405}, 12 (2024) 288.
	
\bibitem[Bou23]{B}
N. Boumal,
\textit{``An Introduction to Optimization on Smooth Manifolds''},
(2023), Cambridge University Press 338p.

\bibitem[BCZ24]{BCZ}
L. Broux, F. Caravenna, L. Zambotti,
\textit{`` Hairer's Multilevel Schauder Estimates Without Regularity Structures''},
Trans. Amer. Math. Soc. \tmtextbf{377} (2024), 6981.

\bibitem[BFD+15]{BFDY}
R. Brunetti, C. Dappiaggi, K. Fredenhagen and Y. Yngvanson, 
 \textit{``Advances in Algebraic Quantum Field Theory''}, Mathematical Physics Studies (2015) Springer, 455p. 

\bibitem[BF09]{BF09}
R. Brunetti and K. Fredenhagen, 
\textit{``Advances in Algebraic Quantum Field Theory''}, Lecture Notes in Phys., Vol. \textbf{786} Springer (2009), pp. 129-155
 
 \bibitem[CZ20]{CZ20}
F. Caravenna, L. Zambotti,
 \textit{`` Hairer's reconstruction theorem without regularity structures''},
 EMS Surv. Math. Sci \tmtextbf{7} (2020) 207.
 
  \bibitem[CE08]{CE}
 J. Cheeger and D. G. Ebin,
 \textit{``Comparison Theorems in Riemannian Geometry''},
 American mathematical Society (2008), 161p.
 
\bibitem[Chav06]{Chav}
 I. Chavel,
 \textit{``Riemannian Geometry, a modern introduction''},
 Cambridge studies in advanced mathematics, 2nd edition (2006).

\bibitem[DDR+20]{DDRZ}
C. Dappiaggi, N. Drago, P. Rinaldi and L. Zambotti, 
\textit{``A microlocal approach to renormalization in stochastic PDEs"}, Comm. Cont. Math. \textbf{24}, (2020), 7.

\bibitem[FR16]{FR}
K. Fredenhagen and K. Rejzner, 
\textit{``Quantum field theory on curved spacetimes: Axiomatic framework and examples"}, J. Math. Phys. \textbf{57}, 3 (2016), 031101.
  
 \bibitem[GHF04]{GHF}
S. Gallot, D. Hulin and J. Lafontaine,
\textit{``Riemannian Geometry''},
Springer (2004), 322p.

\bibitem[GIP15]{GIP}
M. Gubinelli, P. Imkeller, N. Perkowski, \textit{``Paracontrolled Distributions and singular PDEs"},
Forum of Mathematics, Pi, Volume 3 (2015) e6.

\bibitem[Hai14]{Hai} 
M. Hairer, \textit{``A Theory of Regularity Structures.''} 
Invent. math. {\bf 198} (2014), 269.

  \bibitem[HS23]{HS23}
M. Hairer, H. Singh
 \textit{``Regularity Structures on Manifolds and Vector Bundles''},
 arXiv:2308.05049 [math.PR].

 \bibitem[KPZ86]{KPZ} 
M. Kardar, G. Parisi, Y. Zhang. \textit{``Dynamic Scaling of Growing Interfaces.''} 
 Phys. Rev. Lett. {\bf 56} (1986), 889. 
 
 \bibitem[Lee12]{LeeS}
 J. Lee,
 \textit{``Introduction to Smooth Manifolds''},
 Springer (2012).
 
  \bibitem[Lee18]{LeeR}
 J. Lee,
 \textit{``Introduction to Riemannian Manifolds''},
 Springer (2018).
  
 \bibitem[RS21]{RS21}
P. Rinaldi, F. Sclavi,
\textit{``Reconstruction theorem for germs of distributions on smooth manifolds''},
Jour. of Math. Analysis and Appl. \tmtextbf{501} (2021)
125215.
\end{thebibliography}
\end{document}